\documentclass[11pt]{article}
\usepackage[utf8]{inputenc}
\usepackage[margin=1in]{geometry}
\usepackage{graphicx}
\usepackage{amsmath}
\usepackage{amsthm}
\usepackage{amssymb}
\usepackage{amsfonts}
\usepackage{amsthm}
\usepackage{tikz}
\usepackage{comment}
\usepackage{caption}
\usetikzlibrary{decorations.markings}
\usepackage{microtype}
\usepackage{mathrsfs}
\usepackage{hyperref}
\usepackage{enumitem}
\usepackage[toc,page]{appendix}
\usepackage{cancel}
\usepackage[normalem]{ulem}
\usepackage{float}
\usepackage{mathtools}

\newtheorem{thm}{Theorem}[section]
\newtheorem*{thm*}{Theorem}
\newtheorem*{question}{Question}
\newtheorem{lemma}[thm]{Lemma}
\newtheorem{prop}[thm]{Proposition}
\newtheorem{proposition}[thm]{Proposition}

\newtheorem{claim}[thm]{Claim}
\newtheorem{corollary}[thm]{Corollary}
\newtheorem{conj}[thm]{Conjecture}

\theoremstyle{definition}
\newtheorem{df}[thm]{Definition}
\newtheorem{definition}[thm]{Definition}
\newtheorem{rem}[thm]{Remark}
\newtheorem{remark}[thm]{Remark}
\newtheorem{exam}[thm]{Example}

\DeclareMathOperator*{\des}{des}

\newcommand{\C}{\mathbb{C}}
\newcommand{\R}{\mathbb{R}}

\newcommand{\Z}{\mathbb{Z}}

\newcommand{\E}{\mathbb E}
\newcommand{\D}{\mathcal D}

\newcommand{\M}{\mathcal M}
\renewcommand{\P}{\mathcal P}
\newcommand{\PP}{\mathscr P}

\newcommand{\ga}{\gamma}

\newcommand{\eps}{\varepsilon}

\renewcommand{\th}{\theta}
\newcommand{\la}{\lambda}
\newcommand{\ka}{\kappa}
\newcommand{\pa}{\partial}
\newcommand{\myeq}{\stackrel{?}{=}}

\newcommand{\limtwo}{\lim_{\substack{N\to\infty,\, \th\to 0\\ \th N\to\ga}}}
\newcommand{\setzeroes}{x_1=\dots=x_N=0}
\newcommand{\wt}{\widetilde}

\newcommand{\Tmc}{T^\ga_{m\to\ka}}
\newcommand{\Tcm}{T^\ga_{\ka\to m}}

\newcommand{\ii}{\mathbf i}

\renewcommand{\d}{\mathrm d}

\newcommand{\GT}{\mathcal G}

\newcommand{\vx}{\vec{x}}

\pagestyle{headings}

\numberwithin{equation}{section}

\begin{document}

\title{Matrix addition and the Dunkl transform at high temperature}

\author{Florent Benaych-Georges\thanks{Universit\'{e} Paris Descartes and Capital Fund Management. \href{mailto:florent.benaych@gmail.com }{florent.benaych@gmail.com}}, Cesar Cuenca\thanks{Harvard University. \href{mailto:cesar.a.cuenk@gmail.com }{cesar.a.cuenk@gmail.com}} , and Vadim Gorin\thanks{University of Wisconsin--Madison and Institute for Information Transmission Problems of RAS. \href{mailto:vadicgor@gmail.com}{vadicgor@gmail.com} }}

\date{}
\maketitle

\begin{abstract}
  We develop a framework for establishing the Law of Large Numbers for the eigenvalues in the random matrix ensembles as the size of the matrix goes to infinity simultaneously with the beta (inverse temperature) parameter going to zero. Our approach is based on the analysis of the (symmetric) Dunkl transform in this regime. As an application we obtain the LLN for the sums of random matrices as the inverse temperature goes to 0. This results in a one-parameter family of binary operations which interpolates between classical and free convolutions of the probability measures. We also introduce and study a family of deformed cumulants, which linearize this operation.
\end{abstract}

\tableofcontents

\section{Introduction}

\subsection{Overview}


\label{Section_Overview}

This text comes out of two circles of ideas. On one side, we are interested in $\beta$--ensembles of random matrix theory, where $\beta=1,2,4$ correspond to matrices with real/complex/quaternionic entries, but many distributions admit natural extensions to general real values of $\beta>0$. The theoretical physics tradition refers to the $\beta$ parameter as the inverse temperature. The matrices of interest are $N\times N$ and self--adjoint; we study the $N\to\infty$ asymptotic behavior of their eigenvalues on large scales. It was noticed by many authors (first, at classical $\beta=1,2,4$ and later for all $\beta>0$, see, e.g.\, \cite{BenArous-Guionnet, Johansson_CLT, BG} for the results of the latter type) that in the global regime, when we deal with all eigenvalues together and describe the asymptotics of their empirical measures through Laws of Large Numbers and Central Limit Theorems, the only dependence of the answers on $\beta$ is in simple normalization prefactors. In other words, the limits as $N\to\infty$ essentially do not depend on $\beta$, as long as $\beta>0$ remains fixed. Recently, it was shown that the situation changes, if one varies $\beta$ together with $N$ in such a way that $\beta N$ tends to a constant $2 \gamma>0$ as $N\to\infty$ (high-temperature regime). \cite{ABG,ABMV,TT_Jacobi} prove that for all classical ensembles of random matrices (Gaussian/Wigner, Laguerre/Wishart, and Jacobi/MANOVA) there is a different Law of Large Numbers in the high-temperature regime, and the resulting limit shapes non-trivially depend on the $\gamma$ parameter. A subsequent wave produced many more results in the $\beta N\to 2 \gamma$ asymptotic regime, such as the study of local statistics in \cite{KS,BGP,Pa}, or of central limit theorems in \cite{NT,HL}, or of the loop equations in \cite{FM}, or of the spherical integrals in \cite{MP}, or of the 2D systems in \cite{AB}, or connections to Toda chain in \cite{Spohn}, or of dynamic versions in \cite{NTT}; this is very far from the complete list of results and we refer to the previously mentioned articles for further references.

From another side, a classical tool of the probability theory for establishing asymptotic theorems is by using the characteristic functions or Fourier transforms. In the last 10 years, a Fourier approach has been developed for the strongly correlated $N$--particle systems (with distributions of random-matrix type) in the series of papers \cite{GP, BuG1, BuG2, BuG3, NovakM, Huang, GS, C,Ahn}. The central idea is to replace the exponents in the Fourier transform by symmetric functions of the representation-theoretic origin (such as Schur symmetric polynomials or multivariate Bessel functions) and to further connect the partial derivatives of the logarithm of the new transform to the asymptotic behavior of the particle system (mostly, in the global regime) by using differential operators diagonalized by these symmetric functions.

\smallskip

In this article we develop a theory of integral transforms of $N$--tuples of real numbers (which should be thought of as eigenvalues of a random $N\times N$ matrix) using multivariate Bessel functions of general parameter $\theta=\tfrac{\beta}{2}>0$ and generalizing conventional Fourier transform at $\theta=0$; such transforms are also known as symmetric Dunkl transforms in the special functions literature, see \cite{A} for a review. We prove a very general theorem stating that  the partial derivatives of the logarithms of our transforms at $0$ have prescribed limits as $N\to\infty$, $\theta\to 0$, $\theta N\to \gamma$ if and only if the associated random $N$--tuples satisfy a form of the Law of Large Numbers as $N\to\infty$, see Theorem \ref{thm_small_th}. In our theory these partial derivatives play the same role as cumulants in classical probability and free cumulants in the free probability. We further develop a combinatorial theory of our new $\gamma$--cumulants in Theorems \ref{theorem_cumuls_moms} and \ref{thm:mom_cums2}.

We present several applications of our theory:
\begin{itemize}
 \item We recover previous results about Gaussian and Laguerre $\beta$--ensembles of random matrices as $\beta\to 0$, $N\to\infty$, $\beta N\to2\gamma$, and recast them in the framework of $\gamma$--cumulants, see Section \ref{Section_GbE}, Example \ref{laguerre_exam}, and  Remark \ref{Remark_Gauss_Laguerre}.
 \item We investigate eigenvalues of the sum of two independent self--adjoint matrices in the limit $\beta N\to 2\gamma$. We prove the Law of Large Numbers in this regime and encounter a new operation of $\gamma$-convolution, interpolating between usual convolution at $\gamma=0$ and free (additive) convolution at $\gamma=\infty$, see Theorem \ref{Theorem_gamma_convolution}.

 \item We obtain the Law of Large Numbers for ergodic Gibbs measures on the $\beta$--corners branching graph of \cite{OV, AN} in the regime $\beta N\to 2\gamma$, see Theorem \ref{Theorem_ergodic}. The limits are infinitely-divisible with respect to $\gamma$--convolution.

 \item We find that each probability measure $\mu$ gives rise to a 1-parametric family of probability measures $\mu^{\tau, \gamma}$, $\tau\in [1,+\infty)$ which are $\beta N\to 2\gamma$ limits of empirical measures of spectra of $\lfloor N/\tau\rfloor\times \lfloor N/\tau\rfloor$ submatrix of $N\times N$ matrix whose spectrum approximates $\mu$ as $N\to\infty$. An intriguing property of the family is that all these measures are constructed from the same sequence of numbers, which are interpreted as the $\tfrac{\gamma}{\tau}$--cumulants of $\mu^{\tau, \gamma}$.

\end{itemize}

\subsection{Addition of matrices as $\theta=\tfrac{\beta}{2}\to 0$}
\label{Section_addition_intro}

Rather than explaining our results in the most general and abstract setting, we focus on describing a particular application which was the original motivation for this work: the addition of random matrices.  We start from a classical question. Let $A$ and $B$ be two self-adjoint $N\times N$ matrices with (real) eigenvalues $a_1\le a_2\le\dots\le a_N$ and $b_1\le b_2\le \dots \le b_N$, respectively. What can we say about the eigenvalues $c_1\le c_2\le \dots\le c_N$ of the sum $C=A+B$?

 The deterministic version of this problem asks to describe all possible values for $c_1\le \dots\le c_N$ if $A$ and $B$ are allowed to vary arbitrarily while preserving their eigenvalues. This question was first posed by Weyl \cite{Weyl} in 1912 and it took the full XX century before it was completely resolved, see \cite{KT} for a review. The answer is given by a convex set determined by the equality $\sum_{i=1}^N c_i=\sum_{i=1}^N a_i + \sum_{i=1}^N b_i$ (coming from ${\rm Trace}(C)={\rm Trace}(A)+{\rm Trace}(B)$) and a large list of inequalities satisfied by $c_1,\dots,c_N$: the simplest ones are well-known, for instance, $c_N\le a_N+b_N$, but there are many much more delicate relations.

 The stochastic version of the same problem starts from random and independent matrices $A$ and $B$. We assume that $A$ is sampled from the uniform measure on the set of all matrices with prescribed eigenvalues\footnote{Say, we deal with complex Hermitian matrices. Then this set is an orbit of the unitary group $U(N)$ under the action by conjugations, and the uniform measure on the orbit is the image of the Haar (uniform) measure on $U(N)$ with respect to this action.} $a_1\le\dots\le a_N$ and, similarly, $B$ is a uniformly random matrix with eigenvalues $b_1\le\dots\le b_N$. Then the  eigenvalues $c_1\le \dots\le c_N$ are random and we would like to obtain some description of them, with the most interesting questions pertaining to the situation of a very large $N$. The first asymptotic answer as $N\to\infty$ was obtained by Voiculescu  in the context of the free probability theory.

 \begin{thm}[\cite{Vo3}; see also \cite{Col, ColSn}]  \label{Theorem_Voi} Suppose that $A$ and $B$ are independent $N\times N$ uniformly random self-adjoint matrices with  spectra $a_1(N)\le \dots \le a_N(N)$ and $b_1(N)\le\dots\le b_N(N)$, respectively, and let $c_1(N)\le \dots \le c_N(N)$ be the (random) eigenvalues of $C=A+B$. Suppose that for two  probability measures $\mu_A$, $\mu_B$, we have:
 $$
   \lim_{N\to\infty} \frac{1}{N}\sum_{i=1}^N \delta_{a_i(N)}=\mu_A,\qquad
   \lim_{N\to\infty} \frac{1}{N}\sum_{i=1}^N \delta_{b_i(N)}=\mu_B.
 $$
 Then the random empirical measures $\frac{1}{N}\sum_{i=1}^N \delta_{c_i(N)}$ converge as $N\to\infty$ (weakly, in probability) to a deterministic measure $\mu_C:=\mu_A \boxplus \mu_B$, which is called the free convolution of $\mu_A$ and $\mu_B$.
 \end{thm}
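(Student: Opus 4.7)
The plan is to prove Voiculescu's theorem via an asymptotic analysis of the Harish-Chandra--Itzykson--Zuber (HCIZ) spherical integral, which is the $\theta=1$ specialization of the multivariate Bessel transform that forms the central tool of this paper and plays the role of a characteristic function on orbits of fixed spectrum. Concretely, for a self-adjoint matrix $A$ with spectrum $\vec a=(a_1,\dots,a_N)$, put
\[
\Phi_{\vec a}(\vec x) := \int_{U(N)} \exp\!\Bigl(\mathrm{Tr}\bigl(U A U^*\,\mathrm{diag}(x_1,\dots,x_N)\bigr)\Bigr)\,dU.
\]
The key structural property I will exploit is multiplicativity under independent sums of conjugation-invariant matrices: for $C=A+B$ with $A,B$ independent and $U(N)$-invariant, $\E[\Phi_{\vec c(N)}(\vec x)] = \Phi_{\vec a(N)}(\vec x)\,\Phi_{\vec b(N)}(\vec x)$. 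Taking logarithms turns matrix addition into an additive statement about transforms, so the problem reduces to identifying the large-$N$ limit of $\tfrac{1}{N}\log \Phi_{\vec a(N)}$.

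Step 1. Specialize to $\vec x=(x,0,\dots,0)$ and perform a steepest-descent analysis of the Harish-Chandra determinantal formula (following Matytsin and Guionnet--Ma\"{\i}da). For any sequence of spectra whose empirical measures converge to a compactly supported $\mu$, this yields
\[
\lim_{N\to\infty}\tfrac{1}{N}\log \Phi_{\vec a(N)}(x,0,\dots,0) = \int_0^x R_\mu(t)\,dt,
\]
uniformly for $x$ in a neighborhood of the origin, where $R_\mu$ is Voiculescu's $R$-transform. Equivalently, the Taylor coefficients of this limit at $x=0$ are the free cumulants of $\mu$, and hence the empirical LLN for $\vec a(N)$ is equivalent to convergence of these coefficients.

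Step 2. Combining Step 1 with the multiplicativity above and the defining identity $R_{\mu_A \boxplus \mu_B} = R_{\mu_A} + R_{\mu_B}$ gives
\[
\lim_{N\to\infty}\tfrac{1}{N}\log \E[\Phi_{\vec c(N)}(x,0,\dots,0)] = \int_0^x R_{\mu_A}(t)\,dt + \int_0^x R_{\mu_B}(t)\,dt = \int_0^x R_{\mu_A\boxplus\mu_B}(t)\,dt,
\]
so the expected moments of the empirical spectral measure of $C$ converge to the moments of $\mu_A\boxplus\mu_B$. To upgrade to convergence in probability I will invoke concentration of measure for the Haar distribution on $U(N)$ (log-Sobolev inequalities), which controls the fluctuations of Lipschitz functions of the eigenvalues of $C$ --- in particular traces of fixed powers --- around their means on a scale $o(1)$.

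The main obstacle is executing the steepest-descent asymptotics rigorously and uniformly in $x$ near the origin; this is the content of Matytsin's formula and the subsequent rigorous treatments by Guionnet--Ma\"{\i}da and Collins, which I would invoke as a black box. Once those asymptotics are available, multiplicativity, Taylor expansion, and concentration combine in a straightforward manner to yield the stated weak convergence in probability.
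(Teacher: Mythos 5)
The paper does not prove this theorem; it is stated purely as a cited background result (Voiculescu, with Collins and Collins--\'{S}niady given as alternative modern references), so there is no internal proof to compare against. Your proposal is therefore evaluated on its own terms.

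The structural part is correct: the multiplicativity identity $\E[\Phi_{\vec c(N)}] = \Phi_{\vec a(N)}\,\Phi_{\vec b(N)}$ holds for independent conjugation-invariant $A,B$, and the Guionnet--Ma\"ida asymptotics for the rank-one HCIZ integral do give $\tfrac{1}{N}\log\Phi_{\vec a(N)}(x,0,\dots,0)\to\int_0^x R_{\mu_A}(t)\,dt$ (under a boundedness hypothesis on the spectra). However, the crucial Step~2 has a genuine gap. Knowing that $\tfrac{1}{N}\log\E[\Phi_{\vec c(N)}(x,0,\dots,0)]\to\int_0^x R_{\mu_A\boxplus\mu_B}$ does \emph{not} directly tell you that the random empirical measures of $\vec c(N)$ converge in probability to $\mu_A\boxplus\mu_B$. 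The single-variable Bessel/HCIZ transform captures only one family of partial derivatives of $\log G$, and at fixed $\theta$ these do not by themselves determine the limiting spectral measure. This is precisely the phenomenon that Section~\ref{Section_Appendix_LLN} of this paper illustrates: there is a sequence of probability measures $\mu_N$ on $\R^N$ whose empirical measures (after scaling) converge to $\delta_0$, yet whose BGF $\exp\bigl(\tfrac{N}{2}(\sum_i x_i)^2\bigr)$ has single-variable logarithmic derivatives that mimic the semicircle law's free cumulants; the difference is seen only in the mixed partial derivative $\partial_{x_1}\partial_{x_2}$. In other words, at fixed $\theta$ the equivalence claimed in your Step~1 (``empirical LLN is equivalent to convergence of these Taylor coefficients'') is false as a blanket statement about BGFs, and more refined control of mixed derivatives, or an entirely different identification of the limit, is required.

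The argument can be repaired, but it needs ingredients you do not spell out. Concentration of measure (log-Sobolev for Haar) should be used not merely to collapse variances of $\mathrm{Tr}(C^k)$, but to justify interchanging $\log$ and $\E$ so that $\tfrac{1}{N}\log\E[\Phi_{\vec c(N)}]$ and $\E[\tfrac{1}{N}\log\Phi_{\vec c(N)}]$ have the same limit. One also needs tightness of the empirical measures $\nu_N=\tfrac{1}{N}\sum_i\delta_{c_i(N)}$ --- automatic if $\mu_A$ and $\mu_B$ are compactly supported, otherwise requiring extra work --- followed by a subsequence argument: along any subsequence where $\nu_N\to\nu$, apply Guionnet--Ma\"ida with high probability to get $\tfrac{1}{N}\log\Phi_{\vec c(N)}\to\int_0^x R_\nu$, match this with the multiplicativity limit, and conclude $R_\nu=R_{\mu_A}+R_{\mu_B}$, hence $\nu=\mu_A\boxplus\mu_B$ by injectivity of the $R$-transform. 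Only after all of this is the claimed convergence in probability established. As written, your sketch jumps from the HCIZ limit to ``expected moments converge,'' which is the nontrivial implication that needs the machinery above.
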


In order to use this theorem, it is important to be able to efficiently describe the measure $\mu_A\boxplus \mu_B$. Let us briefly present two points of view on such description and refer to textbooks \cite{NS,MS} for more details. The first point of view is analytic and it relies on the notion of the Voiculescu $R$--transform of a probability measure $\mu$, defined through:
$$
 R_\mu(z)=(G_\mu(z))^{(-1)}-\frac{1}{z},\qquad G_\mu(z)=\int_{\mathbb R} \frac{1}{z-x} \mu(dx),
$$
where $G_\mu(z)$ is the Stieltjes transform of $\mu$ and $(G_\mu(z))^{(-1)}$ is the functional inverse.  For a compactly supported $\mu$, $R_\mu(z)$ is holomorphic in a complex neighborhood of $0$. The measure $\mu_A\boxplus \mu_B$ is determined by:
\begin{equation}
 \label{eq_free_conv_R}
 R_{\mu_A\boxplus \mu_B}(z)=R_{\mu_A}(z)+R_{\mu_B}(z).
\end{equation}
The relation \eqref{eq_free_conv_R} is a free probability version of the linearization of conventional convolution by logarithms of the characteristic functions: if $\xi$ and $\eta$ are independent random variables, then
\begin{equation}
 \label{eq_conv_lin}
\ln \E e^{\ii t (\xi+\eta)}= \ln \E e^{\ii t \xi}+ \ln \E e^{\ii t \eta}.
\end{equation}
An alternative combinatorial approach to the free convolution uses free cumulants of a probability measure $\mu$ denoted $\kappa^\mu_n$, $n=1,2,\dots$. They are defined as certain explicit polynomials in the moments of the measure $\mu$. Simultaneously, the free cumulants are coefficients of the Taylor-series expansion of $R_\mu(z)$ at the origin, so \eqref{eq_free_conv_R} gets restated as
\begin{equation}
 \label{eq_free_conv_cum}
 \kappa^{\mu_A\boxplus \mu_B}_n=\kappa^{\mu_A}_n+\kappa^{\mu_B}_n, \quad n=1,2,\dots.
\end{equation}
This relation is a free probability version of the statement that conventional cumulants of a sum of independent random variables are sums of the cumulants of the summands.

\bigskip

Note that in Voiculescu's Theorem \ref{Theorem_Voi} we never specified, whether we deal with real symmetric, or complex Hermitian, or quaternionic Hermitian random matrices. And in fact, the theorem remains exactly the same in all these settings, which are usually referred as the $\beta=1,2,4$ cases in the random matrix literature. What we would like to do is to go one step further and to extend the setting of the Theorem \ref{Theorem_Voi} to the general $\beta$ setting. However, there is no (skew-)field of general real dimension $\beta>0$, and therefore, there are no independent random matrices $A$ and $B$ over such field, which we could add. Hence, we first need to address a question:

\begin{question}
 What does it mean to add two independent self-adjoint $\beta$-random matrices $A$ and $B$?
\end{question}

Our answer to this question is based on the Fourier point of view on the addition of matrices. Suppose that $Q=[Q_{ij}]_{i,j=1}^N$ is a random real symmetric matrix. Its Fourier-Laplace transform is a function of another (deterministic) matrix $X$ given by:
\begin{equation}
  \chi_Q(X)=\E \exp\Bigl({\rm Trace} (XQ)\Bigr)=\E \exp\biggl(\,\sum_{i,j=1}^N x_{ij} Q_{ji}\,\biggr).
\end{equation}
Let us assume that the law of $Q$ is invariant under conjugations by orthogonal matrices (which is the case for all three matrices $A$, $B$, and $C$ in the Theorem \ref{Theorem_Voi}). In addition assume that the matrix $X$ is normal (i.e.\ $X X^*=X^* X$), which implies that $X$ can be diagonalized by orthogonal conjugations\footnote{If we know that $Q$ is invariant under orthogonal conjugations and we know the values of $\chi_Q(X)$ for all normal $X$, then we can uniquely determine the law of $Q$. In fact it is sufficient to take $X$ to be symmetric (or $\mathbf i$ times symmetric).}. In this situation, conjugating $X$ and noting invariance of the trace, we see that $\chi_Q(X)$ is a function of the eigenvalues $x_1,\dots,x_N$ of $X$ and we can write it as $\chi_Q(x_1,\dots,x_N)$.

If we specialize to the case when $Q$ is a uniformly random real symmetric matrix with deterministic eigenvalues $q_1\le \dots\le q_N$, then $\chi_Q$ is known as a \emph{multivariate Bessel function} at $\theta=\tfrac{\beta}{2}=\tfrac12$:
\begin{equation}
\label{eq_Bessel_as_Fourier}
 \chi_Q(x_1,\dots,x_N)= B_{(q_1,\dots,q_N)}\bigl(x_1,\dots,x_N; \, \tfrac{1}{2}\bigr).
\end{equation}
Going further, the definition of $\chi_Q$ and linearity of the trace immediately imply that for independent conjugation-invariant matrices $A$ and $B$ we have
\begin{equation}
\label{eq_Fourier_product}
 \chi_{A+B}(x_1,\dots,x_N)=\chi_A(x_1,\dots,x_N) \chi_{B}(x_1,\dots,x_N).
\end{equation}
Moreover, we can take \eqref{eq_Fourier_product} as a \emph{definition} of $A+B$: the matrix $A+B$ is defined as a random $N\times N$ real symmetric matrix, whose law is invariant under orthogonal conjugations, and whose Fourier-Laplace transform is given by the right-hand side of \eqref{eq_Fourier_product}.

The same argument can be given for complex Hermitian matrices and for quaternionic Hermitian matrices with the only difference being that the parameter of the Bessel functions in \eqref{eq_Bessel_as_Fourier} changes to $\theta=1$ and $\theta=2$, respectively. But in fact, the multivariate Bessel functions $B_{(q_1,\dots,q_N)}\bigl(x_1,\dots,x_N; \, \theta\bigr)$ make sense for any real $\theta>0$, see Section \ref{sec:bessel} for a formal definition. They are intimately connected to many topics, in particular, they are eigenfunctions of rational Calogero-Sutherland Hamiltonian and of (symmetric versions of) Dunkl operators; they are also limits of Jack and Macdonald symmetric polynomials.
We are now ready to define the general $\beta$-analogue of addition of random matrices:

\begin{definition} \label{Def_theta_addition} Fix $\theta=\tfrac{\beta}{2}>0$. Given deterministic $N$--tuples of reals $\mathbf a=(a_1\le\dots\le a_N)$ and $\mathbf b=(b_1\le \dots\le b_N)$, we define a random $N$--tuple $\mathbf c=(c_1\le \dots \le c_N)$ by specifying its law through
\begin{equation}
\label{eq_def_theta_addition}
  \E B_{(c_1,\dots,c_N)}(x_1,\dots,x_N;\, \theta)= B_{(a_1,\dots,a_N)}(x_1,\dots,x_N;\, \theta) B_{(b_1,\dots,b_N)}(x_1,\dots,x_N;\, \theta), \qquad x_1,\dots,x_N\in\mathbb C.
\end{equation}
We say that $\mathbf c$ is the eigenvalue distribution for the $\theta$--sum of independent Hermitian matrices with spectra $\mathbf a$ and $\mathbf b$. We write $\mathbf c= \mathbf a +_{\theta} \mathbf b$.
\end{definition}

For example, when $a_1 = \cdots = a_N = a$, the multivariate Bessel function is $B_{(a,\dots,a)}(x_1,\dots,x_N;\, \theta) = \exp(a(x_1 + \cdots + x_N))$. On the other hand, we have the identity
\begin{equation}\label{const_seq}
B_{(b_1+a,\dots,b_N+a)}(x_1,\dots,x_N;\, \theta) = \exp(a(x_1 + \cdots + x_N))\, B_{(b_1,\dots,b_N)}(x_1,\dots,x_N;\, \theta),
\end{equation}
as follows from Definition \ref{Definition_Bessel_function} below.
So in the case that $\mathbf a_{\textrm{const}} = (a\le\cdots\le a)$ is the constant sequence, and $\mathbf b = (b_1\le \cdots \le b_N)$ is arbitrary, then by comparing \eqref{eq_def_theta_addition} and \eqref{const_seq}, we conclude that $\mathbf c = \mathbf a_{\textrm{const}} +_{\theta} \mathbf b$ is the Dirac delta mass at the point $(b_1+a\le \cdots \le b_N+a)$. For more general sequences $\mathbf a$ we are not aware of similarly simple expressions for $\mathbf a +_{\theta} \mathbf b$.

Let us remark that the uniqueness of the law of $(c_1,\dots,c_N)$ defined through \eqref{eq_def_theta_addition} is not hard to prove by expressing expectations of various test functions through expectations of multivariate Bessel functions.\footnote{For a reader who is not familiar with the theory of multivariate Bessel functions, we remark that at $N=1$, $B_{(a)}(z;\, \theta)=\exp(az)$. Hence, choosing $z=\ii t$, the Bessel functions turn into the exponents $\exp (\ii a t)$ and uniqueness turns into the well-known uniqueness of a measure with a given Fourier transform.}
In the existence part, there is a caveat. It is known that \eqref{eq_def_theta_addition} defines $\mathbf c$ as a compactly supported generalized function (or distribution), see \cite{Tri}, \cite[Section 3.6]{A}. It is also straightforward to see that the total mass of distribution of $\mathbf c$ is $1$ by inserting $x_1=\dots=x_N=0$ into \eqref{eq_def_theta_addition}. However, the \emph{positivity} of the law of $\mathbf c$, i.e.\ the fact there exists a \emph{(probability) measure} on $\mathbf c$'s, such that \eqref{eq_def_theta_addition} holds, is a well-known open question.

\begin{conj}\label{conj_pos}
Given any $\theta > 0$, and $N$-tuples $\mathbf a=(a_1\le \dots \le a_N)$, $\mathbf b = (b_1\le \dots \le b_N)$, there exists a probability measure on $N$-tuples $\mathbf c=(c_1\le \dots \le c_N)$ such that
$$\E B_{(c_1,\dots,c_N)}(x_1,\dots,x_N;\, \theta)= B_{(a_1,\dots,a_N)}(x_1,\dots,x_N;\, \theta) B_{(b_1,\dots,b_N)}(x_1,\dots,x_N;\, \theta)$$
holds for any $x_1, \cdots, x_N\in\C$. In other words, the distribution $\mathbf c = \mathbf a +_\theta \mathbf b$ is realized by a probability measure.
\end{conj}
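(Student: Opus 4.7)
My plan is to translate Conjecture \ref{conj_pos} into a positivity statement for the (symmetric) Dunkl convolution of type $A_{N-1}$, and then to attack it via Rösler's positive integral representation of the Dunkl kernel. The first step would be to invoke Rösler's representation (Duke Math.\ J., 1999): for each $\mathbf{a}\in\R^N$ and $\theta\geq 0$ there exists a probability measure $\mu_{\mathbf{a},\theta}$ on $\R^N$, supported in the convex hull of the $S_N$-orbit of $\mathbf{a}$, such that the (non-symmetric) Dunkl kernel $E_\theta$ admits
$$E_\theta(\mathbf{a},\mathbf{x})=\int_{\R^N} e^{\langle \eta,\mathbf{x}\rangle}\,d\mu_{\mathbf{a},\theta}(\eta).$$
Averaging over the $S_N$-action on $\mathbf{a}$ yields an $S_N$-symmetric probability measure $\wt\mu_{\mathbf{a},\theta}$ on $\R^N$ with $B_\mathbf{a}(\mathbf{x};\theta)=\int e^{\langle \eta,\mathbf{x}\rangle}\,d\wt\mu_{\mathbf{a},\theta}(\eta)$. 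Multiplying two such representations,
$$B_\mathbf{a}(\mathbf{x};\theta)B_\mathbf{b}(\mathbf{x};\theta)=\iint e^{\langle \eta+\zeta,\mathbf{x}\rangle}\,d\wt\mu_{\mathbf{a},\theta}(\eta)\,d\wt\mu_{\mathbf{b},\theta}(\zeta),$$
so the product is the classical Laplace transform of the ordinary convolution $\wt\mu_{\mathbf{a},\theta}*\wt\mu_{\mathbf{b},\theta}$, an honest compactly supported probability measure on $\R^N$. Conjecture \ref{conj_pos} then reduces to showing that this convolution can be written as a positive average $\int \wt\mu_{\mathbf{c},\theta}\,d\pi(\mathbf{c})$ over weakly increasing $N$-tuples; the measure $\pi$ would then be the law of $\mathbf{a}+_\theta\mathbf{b}$.

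The main obstacle is precisely this inversion. The map $\mathbf{c}\mapsto\wt\mu_{\mathbf{c},\theta}$ is injective (its image has extreme points along the $S_N$-orbit of $\mathbf{c}$), but the image is a proper subcone of $S_N$-symmetric probability measures, and a convolution of two measures in the image need not lie in the closed positive hull of the image. This is the well-known positive product formula problem for type-$A$ Dunkl convolution, and it has been settled only in rank one (trivially), for dihedral root systems (Rösler), and at the random-matrix values $\theta\in\{\tfrac12,1,2\}$ where positivity is built into Definition \ref{Def_theta_addition} via the matrix-model interpretation.

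Toward partial progress, I would combine two routes. First, a continuity/interpolation argument: define $S:=\{\theta>0:\text{Conjecture \ref{conj_pos} holds for all }\mathbf{a},\mathbf{b}\}$. The support inclusion $\mathrm{supp}(\wt\mu_{\mathbf{a},\theta}*\wt\mu_{\mathbf{b},\theta})\subset\mathrm{conv}(S_N\mathbf{a})+\mathrm{conv}(S_N\mathbf{b})$ together with joint analyticity of $B_\mathbf{a}(\mathbf{x};\theta)$ in $\theta$ delivers tightness and closedness of $S$ under sequences $\theta_n\to\theta$; combined with a perturbative analysis in a neighborhood of $\theta\in\{\tfrac12,1,2\}$, this might allow one to enlarge $S$ into open intervals around the classical values. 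The second, more novel route would exploit the present paper: Theorem \ref{Theorem_gamma_convolution} produces a genuine probability measure $\mu_A\boxplus_\gamma\mu_B$ as the $N\to\infty$, $\theta N\to\gamma$ limit of the empirical measures of $\mathbf{c}$, giving positivity \emph{on average} in the limit. One could then attempt to bootstrap to finite-$N$ positivity by pairing the quantitative concentration rate around $\mu_A\boxplus_\gamma\mu_B$ --- readable off the $\gamma$-cumulant expansion --- with an explicit contour-integral representation for the finite-$N$ density obtained by inverting the Bessel transform. Both strategies face serious obstacles, consistent with the long-standing open status of the conjecture; my expectation is that neither can be pushed through in full generality without new input from Dunkl theory itself.
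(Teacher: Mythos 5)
There is no proof of this statement in the paper to compare against: the authors state it explicitly as an open conjecture, recall that it is established only at the random-matrix values $\theta\in\{\tfrac12,1,2\}$ (where a matrix model supplies the measure), point to prior appearances of it in \cite{Stanley_Jack}, \cite{Rosler_pos}, \cite{GM}, \cite{Matveev}, and then deliberately phrase all subsequent results so that they remain valid for compactly supported signed distributions even if positivity fails. Your proposal recognizes all of this and, correctly, does not claim to prove the conjecture.

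Your reduction via R\"osler's positive intertwiner is a sound and standard reformulation of the problem: symmetrize to write $B_{\mathbf a}(\cdot;\theta)$ as the Laplace transform of a compactly supported $S_N$-invariant probability measure $\wt\mu_{\mathbf a,\theta}$, observe that the product of two Bessel functions is then the Laplace transform of $\wt\mu_{\mathbf a,\theta}*\wt\mu_{\mathbf b,\theta}$, and reduce the conjecture to whether this convolution is a positive mixture $\int\wt\mu_{\mathbf c,\theta}\,d\pi(\mathbf c)$. This is exactly the positive product formula (``positivity of symmetric Dunkl translation'') problem for type $A_{N-1}$, and you correctly locate the difficulty there. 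A small caveat: the historical attribution of a general dihedral resolution to R\"osler should be double-checked; I believe the positive product formula results in that direction concern rank one and $\Z_2^N$ configurations rather than arbitrary dihedral groups.

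Of the two bootstrapping routes you sketch, the second has an additional circularity beyond the obstacles you name. Theorem \ref{Theorem_gamma_convolution} establishes only convergence of formal moments, interpretable as pairings with a possibly signed distribution; it does \emph{not} by itself say that the limiting moment sequence comes from a probability measure, and the paper explicitly flags this immediately after the proof of that theorem. So before one could try to propagate positivity from the $N\to\infty$ limit back to finite $N$, one would already need the conjecture (or some other input, e.g.\ the Markov--Krein correspondence of Section \ref{MK_transform} in the cases where it applies) to know that the limit is a probability measure at all. In sum: your reformulation is right, your two routes are honest heuristics rather than proofs, and the problem remains open, exactly as you and the paper both say.
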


When $\theta=\frac{1}{2},1,2$, the conjecture is known to be true, since we have a construction for $\mathbf{c}$ as eigenvalues of bona fide random matrices. In the limiting cases $\theta\to\infty$ and $\theta\to 0$ (here $N$ is being fixed), the distribution $\mathbf{a}+_\theta\mathbf{b}$ turns into two explicit discrete probability measures, as we outline below. Conjecture \ref{conj_pos}, as well as its generalizations, have been mentioned in \cite[Conjecture 8.3]{Stanley_Jack}, \cite{Rosler_pos}, \cite[Conjecture 2.1]{GM}, \cite[Section 1.2]{Matveev} and are believed to be true, yet we do not address it in our paper. Instead we state our results in such a way that they continue to hold even if the conjecture was wrong.

To sum up, the binary operation $+_\th$ takes two deterministic $N$--tuples $\mathbf{a}$, $\mathbf{b}$ as input and outputs a distribution $\mathbf{c}$ on $\R^N$ (though conjecturally $\mathbf{c}$ is a random $N$--tuple).
Even though there is no matrix interpretation for the operation $+_\th$ for general values of $\th>0$, it is helpful to think of $\mathbf{a}, \mathbf{b}$ and $\mathbf{c}$ as spectra of (nonexistent) self-adjoint $N\times N$ matrices.
From our experience with random matrix theory, then the following question is natural: How does $(\mathbf a,\mathbf b)\mapsto \mathbf a +_\theta \mathbf b$ behave as $N\to\infty$? While this has not been written down in any published text, there are strong reasons to believe that as long as $\theta>0$ is kept fixed, we get the same free convolution as in Theorem \ref{Theorem_Voi}.\footnote{The reasons are: widespread independence of the Law of Large Number from the value $\beta$ for the random matrix $\beta$-ensembles, cf.\ \cite{BenArous-Guionnet, Johansson_CLT, BG}; the same answer in Theorem \ref{Theorem_Voi} for three values $\theta=\tfrac{\beta}{2}=\tfrac{1}{2},1,2$; existence of $\theta$--independent observables for $\mathbf a+_\theta \mathbf b$, see \cite[Theorem 1.1]{GM}; $\theta$--independence in a discrete version of the same problem, see \cite{Huang}.} There are two boundary cases which need separate consideration: $\theta\to\infty$ and $\theta\to 0$. The former was addressed in \cite{GM}, where it was proven that for fixed $N$, the $\theta\to\infty$ limit of $\mathbf a+_\theta \mathbf b$ is a deterministic operation known as finite free convolution; it was further shown in \cite{Marcus} that as $N\to\infty$ we again recover the free convolution of Theorem \ref{Theorem_Voi}. The final case $\theta\to 0$ turns out to be very different. The $\theta=0$ version of multivariate Bessel function is a simple symmetric combination of exponents:
\begin{equation}
\label{eq_Bessel_at_0}
 B_{(q_1,\dots,q_N)}\bigl(x_1,\dots,x_N; \, 0)=\frac{1}{N!}\sum_{\sigma\in S(N)} \prod_{i=1}^N \exp\bigl( x_i q_{\sigma(i)}\bigr),
\end{equation}
where the sum goes over $N!$ different permutations of $\{1,2,\dots,N\}$. The formula \eqref{eq_Bessel_at_0} implies a transparent probabilistic interpretation: $\mathbf c=\mathbf a+_0 \mathbf b$ is obtained by choosing a permutation $\sigma\in S(N)$ uniformly at random and letting $(c_1,\dots,c_N)$ be $(a_1+b_{\sigma(1)},\dots, a_N+b_{\sigma(N)})$ rearranged in the increasing order. From this interpretation it is not hard to see that as $N\to\infty$ the operation $\mathbf a+_0 \mathbf b$ becomes the usual convolution of the empirical measures corresponding to $\mathbf a$ and to $\mathbf b$.

Hence, we see a discontinuity in the $N\to\infty$ behavior of the operation $(\mathbf a,\mathbf b)\mapsto (\mathbf a+_\theta \mathbf b)$: at $\theta=0$ the limit is described by the conventional convolution, while at $\theta>0$ the limit is described by the free convolution. This motivates us to consider an intermediate scaling regime, in which $\theta$ goes to $0$ as $N\to\infty$. This is the topic of the following Theorem \ref{Theorem_gamma_convolution}, which is proven in Section \ref{Section_applications}.

\begin{definition}
\label{Def_mom_convergence}
 We say that real random vectors $\mathbf a(N)=(a_1(N)\le a_2(N)\le \dots \le a_N(N))$ converge as $N\to\infty$ in the sense of moments, if there exists a sequence of real numbers $\{m_k\}_{k\ge 1}$, such that for any $s=1,2,\dots$ and any $k_1,k_2,\dots,k_s\in \mathbb Z_{\ge 1}$, we have:
 \begin{equation}
 \label{eq_moments_convergence}
  \lim_{N\to\infty} \E\left[\prod_{i=1}^s\left( \frac{1}{N} \sum_{j=1}^N \bigl(a_j(N)\bigr)^{k_i}\right)\right] =\prod_{i=1}^s m_{k_i}.
 \end{equation}
 In this situation we write $\displaystyle \lim_{N\to\infty} \mathbf a(N)\stackrel{m}{=} \{m_k\}_{k\ge 1}$.
\end{definition}
Note that \eqref{eq_moments_convergence} implies that the random empirical measures $\frac{1}{N} \sum_{i=1}^N \delta_{a_i(N)}$ converge as $N\to\infty$ weakly, in probability towards a deterministic measure with moments $m_k$, as long the moments problem associated with $\{m_k\}_{k\ge 1}$  has a unique solution. Also note that we can use Definition \ref{Def_mom_convergence} in the situations where positivity of the distribution of $\mathbf a(N)$ is unknown: we may interpret $\E$ in \eqref{eq_moments_convergence} as the integral with respect to the distribution of $\mathbf a(N)$.

\begin{thm}\label{Theorem_gamma_convolution} Fix $\gamma>0$ and suppose that $\theta>0$ varies with $N$ in such a way that $\theta N\to \gamma$ as $N\to\infty$. Take two sequences of independent random vectors $\mathbf a(N)$, $\mathbf b(N)$, $N=1,2,\dots$, such that
$$
   \lim_{N\to\infty} \mathbf a(N)\stackrel{m}{=} \{ m_k^{\mathbf a} \}_{k\ge 1}, \qquad  \lim_{N\to\infty} \mathbf b(N)\stackrel{m}{=} \{ m_k^{\mathbf b} \}_{k\ge 1}.
$$
In addition, assume that $\mathbf a(N)$, $\mathbf b(N)$ satisfy the tail condition of Definition \ref{df_decaying}. Then
$$
  \lim_{\begin{smallmatrix} N\to\infty\\ \theta N\to \gamma\end{smallmatrix} } \bigl(\mathbf a(N)+_\theta \mathbf b(N)\bigr) \stackrel{m}{=} \{\tilde m_k \}_{k\ge 1},
$$
where we call $\{\tilde m_k\}_{k\ge 1}$ the $\gamma$-convolution of  $\{m_k^\mathbf a\}_{k\ge 1}$ and $\{m_k^\mathbf b\}_{k\ge 1}$ denoted through
$$
  \{\tilde m_k \}_{k\ge 1}=\{ m_k^\mathbf a \}_{k\ge 1} \boxplus_\gamma \{ m_k^\mathbf b \}_{k\ge 1}.
$$
\end{thm}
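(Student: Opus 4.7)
The plan is to exploit the fact that the Dunkl/Bessel transform linearizes the operation $+_\theta$, in conjunction with Theorem \ref{thm_small_th}, which translates between moment convergence in the high-temperature regime and convergence of the partial derivatives of $\log \E B_{\mathbf{a}(N)}(x_1,\ldots,x_N;\theta)$ evaluated at the origin (the $\gamma$-cumulants). This mirrors the two classical paradigms recalled in the introduction: the characteristic-function proof \eqref{eq_conv_lin} of additivity of cumulants under ordinary convolution, and the Voiculescu-style identity \eqref{eq_free_conv_cum} for free cumulants under free convolution.

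First, by taking the expectation of both sides of \eqref{eq_def_theta_addition} and using independence of $\mathbf{a}(N)$ and $\mathbf{b}(N)$, one gets
$$
  \E B_{\mathbf{c}(N)}(x_1,\ldots,x_N;\theta) \;=\; \E B_{\mathbf{a}(N)}(x_1,\ldots,x_N;\theta)\cdot\E B_{\mathbf{b}(N)}(x_1,\ldots,x_N;\theta),
$$
where $\mathbf{c}(N) := \mathbf{a}(N) +_\theta \mathbf{b}(N)$. Taking logarithms splits the left side as a sum, so every partial derivative at $x_1=\cdots=x_N=0$ of $\log \E B_{\mathbf{c}(N)}$ is the sum of the corresponding derivatives of $\log \E B_{\mathbf{a}(N)}$ and $\log \E B_{\mathbf{b}(N)}$. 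This is the key linearization identity; it is exact at every $N$ and $\theta$, and in particular survives the passage $N\to\infty$, $\theta N\to\gamma$.

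Next, apply the ``moments $\Rightarrow$ cumulants'' direction of Theorem \ref{thm_small_th} to $\mathbf{a}(N)$ and to $\mathbf{b}(N)$: the assumed moment convergence and the tail condition of Definition \ref{df_decaying} imply that the suitably normalized partial derivatives of $\log \E B_{\mathbf{a}(N)}$ and $\log \E B_{\mathbf{b}(N)}$ converge to sequences of $\gamma$-cumulants $\{\kappa_k^{\mathbf{a}}\}$ and $\{\kappa_k^{\mathbf{b}}\}$, with explicit moment-cumulant relations provided by Theorems \ref{theorem_cumuls_moms} and \ref{thm:mom_cums2}. Combined with the identity of the previous step, the $\gamma$-cumulants of $\mathbf{c}(N)$ converge to $\kappa_k^{\mathbf{a}}+\kappa_k^{\mathbf{b}}$. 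Invoking the converse direction of Theorem \ref{thm_small_th} then yields $\lim \mathbf{c}(N)\stackrel{m}{=}\{\tilde m_k\}_{k\ge 1}$, with $\tilde m_k$ determined from $\{\kappa_k^{\mathbf{a}}+\kappa_k^{\mathbf{b}}\}$ by the inverse moment-cumulant formulas. Defining $\{m_k^{\mathbf{a}}\}\boxplus_\gamma\{m_k^{\mathbf{b}}\}:=\{\tilde m_k\}$ completes the proof and simultaneously shows that $\boxplus_\gamma$ is linearized by $\gamma$-cumulants, in direct analogy with \eqref{eq_conv_lin} and \eqref{eq_free_conv_cum}.

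The main obstacle is checking the hypotheses of the converse (``cumulants $\Rightarrow$ moments'') direction of Theorem \ref{thm_small_th} for $\mathbf{c}(N)$: one must verify that the tail condition of Definition \ref{df_decaying} propagates under $+_\theta$. Since $\mathbf{c}(N)$ is not given by an explicit construction (Conjecture \ref{conj_pos} is open), this cannot be done by tracking individual coordinates; instead the estimate should be extracted from the factorized formula for $\E B_{\mathbf{c}(N)}$ together with bounds on multivariate Bessel functions, reducing tail control of $\mathbf{c}(N)$ to tail control of $\mathbf{a}(N)$ and $\mathbf{b}(N)$. Once this technical step is in place, the rest of the argument is the near-automatic ``derivatives of a product log are additive'' mechanism.
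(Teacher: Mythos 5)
Your proof matches the paper's almost line for line: show that $\mathbf a(N)$ and $\mathbf b(N)$ are $\gamma$-LLN appropriate via the ``LLN $\Rightarrow$ appropriateness'' half of Theorem~\ref{thm_small_th}, note that $G_N^{\mathbf a+_\theta\mathbf b}=G_N^{\mathbf a}\cdot G_N^{\mathbf b}$ so logarithms add, conclude the $\gamma$-cumulants of $\mathbf c(N)$ tend to $\kappa_l^{\mathbf a}+\kappa_l^{\mathbf b}$, and apply the converse half of Theorem~\ref{thm_small_th}.

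The one place where your account diverges from the paper is the ``main obstacle'' you raise at the end. You worry that one must verify that the exponential-decay condition of Definition~\ref{df_decaying} propagates to $\mathbf c(N)=\mathbf a(N)+_\theta\mathbf b(N)$. In fact the paper's framework is arranged precisely to sidestep this. The quantity $\mathbf c(N)$ is never handled as a measure whose tails need controlling; it is handled purely through its BGF $G_N^{\mathbf c}=G_N^{\mathbf a}G_N^{\mathbf b}$, which is automatically holomorphic on the intersection of the two domains $\Omega_R$ guaranteed by Lemma~\ref{bgf_good} for $\mathbf a(N)$ and $\mathbf b(N)$. For deterministic $\mathbf a,\mathbf b$ the object $\mathbf a+_\theta\mathbf b$ is known to be a \emph{compactly supported} distribution, and Proposition~\ref{Proposition_BGF_dist} makes the operator–moment correspondence \eqref{eq_expectation_operator_general} available for such distributions with no decay hypothesis; the random case is a mixture of these over the exponentially decaying laws of $\mathbf a(N)$ and $\mathbf b(N)$, which keeps the relevant integrals convergent. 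Moreover, the ``appropriateness $\Rightarrow$ LLN'' direction of Theorem~\ref{thm_small_th} (see the proof via Theorem~\ref{theorem_operators_expansion}) only uses analyticity of $G_{N;\theta}$ near the origin together with $G_{N;\theta}(0,\dots,0)=1$, both of which hold for the product BGF. So the ``obstacle'' dissolves, and the rest of your argument is exactly the paper's.
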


\bigskip

We further investigate the $\gamma$--convolution and establish the following properties:
\begin{enumerate}
\item There exist quantities called \emph{$\gamma$--cumulants}, with the $l$th $\gamma$--cumulant $\kappa_l^{(\gamma)}$ being a homogeneous polynomial of degree $l$ in the moments $m_1, \dots, m_l$ (where $m_k$ is treated as a variable of degree $k$), such that for each $l=1,2,\dots$
    \begin{equation}
    \label{eq_convolution_cumulants}
     \kappa_l^{(\gamma)} \left[ \{ m_k^\mathbf a \}_{k\ge 1} \boxplus_\gamma \{ m_k^\mathbf b \}_{k\ge 1} \right]=
     \kappa_l^{(\gamma)} \left[\{ m_k^\mathbf a \}_{k\ge 1}\right] +\kappa_l^{(\gamma)}\left[ \{ m_k^\mathbf b \}_{k\ge 1}\right].
    \end{equation}
\item Each moment $m_k$ can be expressed as a polynomial in $\kappa_l^{(\gamma)}$, $1\le l\le k$, whose coefficients are explicit polynomials in $\gamma$ with positive integer coefficients, see Theorem  \ref{theorem_cumuls_moms}.
\item A generating function of the $\gamma$-cumulants $\kappa_l^{(\gamma)}$ is related to a generating function of the moments $m_k$ through a simple relation, see Theorem \ref{thm:mom_cums2}.
\item As $\gamma\to 0$ the $\gamma$--convolution turns into the conventional convolution (i.e., if $\{m_k^\mathbf a\}_{k\ge 1}$ and $\{m_k^\mathbf b\}_{k\ge 1}$ are moments of two independent random variables, then $\lim_{\gamma\to 0}\,\{m_k^\mathbf a \}_{k\ge 1} \boxplus_\gamma \{ m_k^\mathbf b \}_{k\ge 1}$ gives moments of their sum). After proper renormalization the $\gamma$--cumulants turn into conventional cumulants, see Section \ref{Section_limit_to_0}.
\item As $\gamma\to \infty$ the $\gamma$--convolution turns into the free convolution of Theorem \ref{Theorem_Voi}. After proper renormalization the $\gamma$--cumulants turn into the free cumulants, see Section \ref{Section_limit_to_infinity}.
\end{enumerate}

\begin{remark}
Given a probability measure $\mu$ with finite moments $\{m_k\}_{k\ge 1}$, we say that the corresponding $\{ \ka_l^{(\ga)} \}_{l\ge 1}$ are the \emph{$\ga$-cumulants of $\mu$}. It is known that the only probability measures with finitely many nonzero classical cumulants are Dirac delta masses and Gaussian distributions, see e.g. \cite[Thm.\ 7.3.3]{L}.  There is no such result in free probability (see \cite[Thm.\ 2]{BeVo}): the semicircle distribution is a free probability analogue of the Gaussian distribution, but there are also very different measures with finitely many non-zero free cumulants. In our setting, the analogue of Gaussian/semicircle distributions are the measures for which only the first two $\gamma$-cumulants are nonzero, see Section \ref{Section_GbE} and Example \ref{gauss_exam}. Therefore, a natural open question is whether there are more examples of probability measures with finitely many nonzero $\gamma$-cumulants.
\end{remark}

\begin{remark}
 We do not discuss in this text the \emph{microscopic limits} of  $\mathbf a(N)+_\theta \mathbf b(N)$ as $N\to\infty$, $\theta N\to \gamma$, i.e.\ the asymptotic questions in which individual eigenvalues remain visible in the limit. Yet, we expect to see the Poisson point process in the bulk of the spectrum, as hinted by general universality considerations and the $\theta\to 0$ asymptotic results in \cite{KS,AD,BGP}.
\end{remark}

\subsection{Law of Large Numbers through Bessel generating functions}

Let us now outline the main technical tool, underlying the proof of Theorem \ref{Theorem_gamma_convolution} and other asymptotic results mentioned at the end of Section \ref{Section_Overview}.

Suppose that $\mathbf q=(q_1\le q_2\le\dots \le q_N)$ is a random $N$--tuple of reals. We define its Bessel generating function (BGF) through:
$$
 G_\theta(x_1,\dots,x_N; \mathbf q)=\E_{\mathbf q}\left[ B_{(q_1,\dots,q_N)}(x_1,\dots, x_N;\, \theta)\right].
$$
Our main result, Theorem \ref{thm_small_th}, establishes an equivalence of the following two conditions for random sequences $\mathbf q(N)=(q_1(N),\dots,q_N(N))$ as $N\to\infty$ and $\theta\to 0$ in such a way that $\theta N\to \gamma$:
\begin{enumerate}
\item Partial derivatives of arbitrary order in $x_1$ of $\ln\bigl(G_\theta(x_1,\dots,x_N; \mathbf q(N))\bigr)$ at $(0,\dots,0)$ converge to prescribed limits and partial derivatives in two (or more) different variables converge to $0$.
\item Random vectors $\mathbf q(N)$ converge in the sense of moments, as in Definition \ref{Def_mom_convergence}.
\end{enumerate}
The same theorem also establishes explicit polynomial formulas connecting the limiting value of the partial derivatives to the limiting values of the moments. The benefit of Theorem \ref{thm_small_th} is that it allows us to convert probabilistic information about $\mathbf q(N)$ into the analytic information about partial derivatives of its BGF and vice versa. For instance, Theorem \ref{Theorem_gamma_convolution} is then proven by three straightforward applications of Theorem \ref{thm_small_th}: to $\mathbf q(N)=\mathbf a(N)$, to $\mathbf q(N)=\mathbf b(N)$, and to $\mathbf q(N)=\mathbf a(N)+_\theta \mathbf b(N)$. This and several other applications of Theorem \ref{thm_small_th} are detailed in Section \ref{Section_applications}.

Similar methods to the ones used in our proof of Theorem \ref{thm_small_th} have led to recent results in the literature, and even though our Theorem \ref{thm_small_th} bears resemblance to these other results, there are important differences. For example, in \cite{BuG3} Bufetov and the third author developed a theory of Schur generating functions (SGF) for discrete $N$--particle systems as $N\to\infty$ (see also \cite{Huang} for an extension): they show that asymptotic information on partial derivatives of logarithms of SGF is in correspondence with asymptotic information on the moments in Law of Large Numbers as in Definition \ref{Def_mom_convergence} and with covariances in a version of the Central Limit Theorem for global fluctuations. This is different from our Theorem \ref{thm_small_th}: on the analytic side \cite{BuG3} requires more refined control on partial derivatives and on the probabilistic side \cite{BuG3} requires Central Limit Theorems in addition to Laws of Large Numbers.

In another similar framework related to multiplication of random matrices \cite{GS} established a statement in one direction: control on partial derivatives implies the Law of Large Numbers and Central Limit Theorem, but in that framework a statement in the opposite direction remains out of reach.

Going further, we show in Section \ref{Section_Appendix_LLN} that an analogue of Theorem \ref{thm_small_th} with fixed (rather than tending to $0$) $\theta$ is wrong: there is no direct correspondence between partial derivatives of the logatithm of BGF and asymptotics of moments; one probably needs to use in such situation more complicated (and not yet understood) combinations of mixed partial derivatives in several variables. Thus, Theorem \ref{thm_small_th} is not an extension of the results of previous papers, but rather a brand new statement.


\subsection{Connection to $\theta=\tfrac{\beta}{2}\to \infty$ limits}

One intriguing aspect of general $\beta$ random matrix theory is existence of dualities between parameters $\theta$ and $1/\theta$ (i.e.\ between $\beta$ and $4/\beta$). In the theory of symmetric polynomials such a duality manifests through the existence of an automorphism of the algebra of symmetric functions, which transposes the label of Jack symmetric polynomials and simultaneously inverts $\theta$, see \cite[Section 3]{Stanley_Jack}. In the study of classical ensembles of random matrices the duality appears as a symmetry in expectations of power sums of eigenvalues, see, e.g., \cite[Section 2.1]{DE}, \cite[Section 4.4]{FD}, \cite{For_du}, and references therein.

In our context, the duality suggests to look for a relation between $\theta\to 0$ limits of our paper and $\theta\to\infty$ limits. While this relation is not yet fully understood, we observe it in two forms.

First, the limit of the empirical measures of Gaussian $\beta$--ensembles as $\beta\to 0$, $N\to\infty$, $\beta N\to 2\gamma$ turns out to coincide with the orthogonality measure of the associated Hermite polynomials, see Remark \ref{Remark_aHerm}. Simultaneously, the same polynomials play an important role in the study of centered fluctuations of Gaussian $\beta$--ensembles as $\beta\to \infty$ with $N$ kept fixed, see \cite[Section 4.5]{Gorin_Klept} and \cite{AHV}.

Second, let us fix $N=d$ and send $\theta\to\infty$. \cite[Theorem 1.2]{GM} claims that in this regime the operation $\mathbf a+_\theta \mathbf b$ turns into the finite free convolution, which is a deterministic binary operation on $d$--tuples of real numbers. Further, \cite{AP} introduced for each $d$ a family of $d$ finite free cumulants $\kappa^{\mathrm{ff}}_{1;d}, \kappa^{\mathrm{ff}}_{2;d},\dots,\kappa^{\mathrm{ff}}_{d;d}$, which depend on a $d$--tuple of real numbers and play the same role for the finite free convolution, as our $\gamma$--cumulants play for the $\gamma$--convolution. Comparing the generating function of finite free cumulants from \cite{AP}, \cite{Marcus}, with the generating function of $\gamma$-cumulants of our Theorem \ref{thm:mom_cums2}, one sees\footnote{One should compare \cite[(3.1), (4.2)]{AP} with our pair of equations \eqref{eq_cums_moments_2} and notice that the conventions are slightly different: $(d)_n$ is a falling factorial in \cite{AP} and $(\gamma)_n$ is a rising factorial in our work.
One can also directly compare the formulas for the first four cumulants of \eqref{ex_c_to_m} and \eqref{ex_m_to_c} with similar formulas above Corollary 4.3 in the journal version of \cite{AP}. We are grateful to Octavio Arizmendi and Daniel Perales for pointing this connection to us.}that upon setting $\gamma=-d$, they are very similar and only differ by normalizations, see Section \ref{Section_gen_functions} for more details. However, it is important to note that in our setting $\gamma>0$, while in the setting of \cite{AP}, \cite{Marcus}, $d$ is a positive integer and, thus, $-d$ is a negative integer. Hence, a correct point of view is that our $\gamma$--cumulants and the finite free cumulants are analytic continuations of each other. It would be interesting to see whether this observation can be used to produce new formulas for finite free cumulants along the lines of our Theorem \ref{theorem_cumuls_moms}.

\subsection*{Acknowledgements}  The authors would like to thank Alexey Bufetov and Greta Panova for helpful discussions. We are thankful to Maciej Do\l \c ega for pointing us to the articles \cite{D}, \cite{BDEG}, and for sending us a draft of a new version of the latter paper. We thank Octavio Arizmendi and Daniel Perales for directing us to their work \cite{AP}. We are grateful to two anonymous referees for their feedback.
The work of V.G.\ was partially supported by NSF Grants DMS-1664619, DMS-1949820, by BSF grant 2018248, and by the Office of the Vice Chancellor for Research and Graduate Education at the University of Wisconsin--Madison with funding from the Wisconsin Alumni Research Foundation.

\section{Bessel generating functions}

We define here the Bessel generating function of a probability measure on $\R^N$ --- this is a one-parameter generalization of the characteristic function (or Laplace transform) of a probability measure.
The real parameter $\th$ is assumed to be positive, with $\th\to 0$ corresponding to the usual characteristic function. In this section, $\theta>0$ remains fixed.

\subsection{Difference and differential operators}

\label{Section_operators}

We work with functions of $N$ variables $x_1,\dots,x_N$.
Denote the operator that permutes the variables $x_i$ and $x_j$ by $s_{i, j}$. For instance,
$$
 [s_{1,2} f](x_1,x_2,x_3,x_4,\dots,x_N)=f(x_2,x_1,x_3,x_4,\dots,x_N).
$$
Define the \emph{Dunkl operators} by
\begin{equation}\label{dunkl_ops}
\D_i := \frac{\pa}{\pa x_i} + \theta\sum_{j : j\neq i}{\frac{1}{x_i - x_j}\circ (1 - s_{i, j})},\quad i = 1, \dots, N.
\end{equation}
These operators were introduced in \cite{Du}; see also \cite{Ki_lect,Rosler,Et_lect} for further studies. Their key property is commutativity:
$$\D_i\D_j = \D_j\D_i,\quad i, j = 1, \dots, N.$$
We often work with symmetrized versions of the Dunkl operators:
\begin{equation*}
\P_k := (\mathcal{D}_1)^k + \dots + (\mathcal{D}_N)^k,\quad k\in\Z_{\geq 1}.
\end{equation*}
Let $U\subseteq\C^N$ be any domain which is symmetric with respect to permutations of the axes.
If $f$ is a holomorphic function on $U$, then $\D_i f$ and $\P_k f$ are both well-defined and holomorphic on $U$.

\smallskip

We also need the \emph{degree-lowering operators} $d_1, \dots, d_N$, which are defined on monomials by
\begin{equation}
\label{eq_lowering_operator}
d_i(x_1^{r_1}\cdots x_N^{r_N}) := \begin{cases} x_1^{r_1}\cdots x_i^{r_i - 1}\cdots x_N^{r_N}, & \text{if}\ r_i\in\Z_{\geq 1}, \\ 0, & \text{if}\ r_i = 0, \end{cases}
\end{equation}
and extended by linearity to the space of polynomials of $N$ variables.
They can be further extended to the ring of germs of analytic functions at the origin $(0, \cdots, 0)\in\C^N$.

\subsection{Multivariate Bessel functions}\label{sec:bessel}

A central role in our studies is played by the simultaneous eigenfunctions of the operators $\P_k$ known as \emph{multivariate Bessel functions}.
They are given by very explicit formulas, which we describe next.

For each $N=1,2,\dots$, a \emph{Gelfand--Tsetlin pattern of rank $N$} is an array $\{y_{i}^k\}_{1\le i \le k \le N}$ of real numbers satisfying
$y^{k+1}_i\le y^{k}_i \le y^{k+1}_{i+1}$. Denote by $\GT_N$ the space of all Gelfand--Tsetlin patterns of rank $N$.

\begin{definition}\label{def_betacorner} Fix $\theta>0$. The \emph{$\th$-corners process with top row $a_1<\dots<a_N$}
is the probability distribution on the arrays
$\{y^k_i\}_{1\leq i\leq k\leq N}\in \GT_N$, such that $y^N_i=a_i$, $i=1,\dots,N$, and
the remaining $N(N-1)/2$ coordinates have the density
\begin{equation}
\label{eq_beta_corners_def}
\frac{1}{Z_{N; \th}} \cdot
\prod_{k=1}^{N-1} \left[\prod_{1\le i<j\le k} (y_j^k-y_i^k)^{2-2\theta}\right] \cdot \left[\prod_{a=1}^k \prod_{b=1}^{k+1}
|y^k_a-y^{k+1}_b|^{\theta-1}\right],
\end{equation}
where $Z_{N; \th}$ is the normalization constant:
\begin{equation}
\label{eq_normalization}
Z_{N; \th} =\left[\prod_{k=1}^N \frac{ \Gamma(\theta)^k}{\Gamma(k\theta)}\right] \cdot \prod_{1\le i < j \le N}
 (a_j-a_i)^{2\theta-1}.
\end{equation}
\end{definition}
\begin{remark}
 By taking  limits (in the space of probability measures on $\GT_N$), we can allow equalities and extend the definition to arbitrary $a_1\le a_2\le\dots\le a_N$.
\end{remark}
\begin{remark}
The distribution \eqref{eq_beta_corners_def} is the joint law of eigenvalues of principal corners of Hermitian conjugation-invariant real/complex/quaternion matrices at $\th=\frac{1}{2},1,2$, respectively, see \cite{Ner}.  This connects Definition \ref{def_betacorner} to the Laplace-Fourier point of view of Section \ref{Section_addition_intro}.
\end{remark}
\begin{remark}
The calculation of the normalization constant for a general $\th > 0$ is contained in \cite{Ner} (the author there does far more general calculations; see also \cite[Lem. 2.1]{C} for a short derivation of $Z_{N; \th}$ from Anderson's integral identity \cite{And}).
\end{remark}

\begin{definition} \label{Definition_Bessel_function}
The \emph{multivariate Bessel function} $B_{(a_1, \ldots, a_N)}(x_1, \ldots, x_N; \theta)$ is defined as the following (partial) Laplace transform of the
$\th$-corners process with top row $(a_1,\dots,a_N)$ from Definition \ref{def_betacorner}:
\begin{equation}\label{eq_Bessel_combinatorial}
 B_{(a_1,\dots,a_N)}(x_1,\dots,x_N;\,\theta)= \E_{\{y^k_i\}}\!\left[\exp\left(\sum_{k=1}^{N} x_k
 \cdot \left(\sum_{i=1}^{k} y_i^k-\sum_{j=1}^{k-1} y_j^{k-1}\right)  \right) \right]\!.
\end{equation}
The function $B_{(a_1, \dots, a_N)}(x_1, \dots, x_N)$ is defined for any reals $a_1<\dots <a_N$ and any complex numbers $x_1, \dots, x_N$.
\end{definition}

Often, we will abbreviate multivariate Bessel function as MBF.

It follows from the definition that
\begin{equation*}
B_{(a_1, \dots, a_N)}(0, \dots, 0; \th) = 1.
\end{equation*}

Our definition is called the \emph{combinatorial formula} for the multivariate Bessel functions; to our knowledge, the formula \eqref{eq_Bessel_combinatorial} first appeared in \cite{GK}. There are several alternative definitions of these functions.
For example, from the algebraic combinatorics point of view, they can be defined as limits of (properly normalized) Jack symmetric polynomials. Then \eqref{eq_Bessel_combinatorial} is a limit of the combinatorial formulas for the Jack polynomials, cf.\ \cite[Section 4]{Ok_Olsh_shifted_Jack}.

The MBF $B_{(a_1, \cdots, a_N)}(x_1, \cdots, x_N; \th)$, which was defined for ordered tuples $a_1<\dots<a_N$, can be extended to weakly ordered tuples $a_1\le \dots\le a_N$ by continuity: there is no singularity on the diagonals $a_i=a_j$. In fact, much more is true: $B_{(a_1,\dots,a_N)}(x_1,\dots,x_N;\th)$ admits an analytic continuation on the $2N+1$ variables $a_1, \dots, a_N$, $x_1, \dots, x_N, \th$, to an open subset of $\C^{2N+1}$ containing $\{(a_1, \cdots, a_N, x_1, \cdots, x_N, \th)\in\C^{2N+1} \mid \Re\th\ge 0\}$; see \cite{O}.
In particular, for a fixed $\th>0$, the MBF $B_{(a_1,\dots,a_N)}(x_1,\dots,x_N;\th)$ is an entire function on the variables $a_1, \cdots, a_N, x_1, \cdots, x_N$.

Another important property is that the MBF $B_{(a_1,\dots,a_N)}(x_1,\dots,x_N;\,\theta)$ is \emph{symmetric} in its arguments $x_1,\dots,x_N$ --- this is an immediate consequence of the fact that the MBFs are limits of properly normalized Jack symmetric polynomials, see e.g. \eqref{limit_to_bessel} below.
In the particular case $\theta=1$, the symmetry is also transparent from the following determinantal formula, which arises as the evaluation of the Harish-Chandra-Itzykson-Zuber (HCIZ) integral:
\begin{equation}\label{eq_Bessel_1}
  B_{(a_1,\dots,a_N)}(x_1,\dots,x_N;\,1)= 1!\cdot 2! \cdots (N-1)! \cdot  \frac{\det\bigl[ e^{a_i x_j}\bigr]_{i,j=1}^N}{\prod_{i<j} (x_i-x_j)(a_i-a_j)}.
\end{equation}

A link of MBF to the operators of Section \ref{Section_operators} is given by the following statement.

\begin{thm}[\cite{O}]\label{thm:opdam}
For each $k=1,2,\dots,$ and each $N$--tuple of reals $a_1\le a_2 \le \dots \le a_N$,
\begin{equation}\label{eqn:hypersystem}
\P_k B_{(a_1,\dots,a_N)} = \left(\sum_{i=1}^N{a_i^k}\right)\cdot B_{(a_1,\dots,a_N)}.
\end{equation}
\end{thm}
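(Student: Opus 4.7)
My plan is to apply the Dunkl-Opdam theory of joint eigenfunctions to the commuting family $\D_1,\ldots,\D_N$. For $\mathbf a=(a_1,\ldots,a_N)\in\C^N$ in general position, this theory produces a unique joint analytic eigenfunction $E_\mathbf a(\mathbf x;\theta)$ near the origin, normalized by $E_\mathbf a(\mathbf 0;\theta)=1$, satisfying
$$\D_i\,E_\mathbf a(\mathbf x;\theta)=a_i\,E_\mathbf a(\mathbf x;\theta),\qquad i=1,\ldots,N$$
(the non-symmetric Opdam function). Because the $\D_i$ commute, the $S(N)$-symmetrization $\widetilde B_\mathbf a(\mathbf x;\theta) := \frac{1}{N!}\sum_{\sigma\in S(N)} E_{(a_{\sigma(1)},\ldots,a_{\sigma(N)})}(\mathbf x;\theta)$ is a joint eigenfunction of every $S(N)$-invariant polynomial in $\D_1,\ldots,\D_N$. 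In particular $\P_k\widetilde B_\mathbf a=\bigl(\sum_i a_i^k\bigr)\widetilde B_\mathbf a$, since the eigenvalue $\sum_i a_{\sigma(i)}^k=\sum_i a_i^k$ is independent of $\sigma$.

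The remaining step is to identify $\widetilde B_\mathbf a$ with the combinatorial Bessel function $B_{(a_1,\ldots,a_N)}$ from Definition \ref{Definition_Bessel_function}. Both functions are symmetric and entire in $\mathbf x$, normalized to $1$ at the origin, and depend analytically on $\theta$ in the half-plane $\mathrm{Re}\,\theta\geq 0$. At the three random-matrix values $\theta\in\{\tfrac12,1,2\}$ equality follows from the representations of $B_\mathbf a$ as averages of matrix exponentials; for instance at $\theta=1$ one can directly verify that the HCIZ formula \eqref{eq_Bessel_1} solves the Dunkl eigenvalue system. Equality for all $\theta>0$ then follows by analytic continuation in $\theta$. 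A more intrinsic alternative is to realize both $\widetilde B_\mathbf a$ and $B_\mathbf a$ as a common limit of suitably normalized Jack symmetric polynomials: under this degeneration the Sekiguchi-Debiard family of commuting operators (for which Jack polynomials are eigenfunctions) degenerates exactly to $\{\P_k\}$, and the corresponding eigenvalues (power sums of the partition parts, suitably rescaled) limit to $\sum_i a_i^k$.

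The main obstacle is this identification step --- bridging the combinatorial Gelfand-Tsetlin pattern definition \eqref{eq_Bessel_combinatorial} with an analytic description by a PDE system. The Jack-polynomial-limit strategy is the cleanest bridge, because the eigenrelation for Jack polynomials under the Sekiguchi-Debiard operators is a classical identity, and under the relevant scaling limit both sides of this identity have well-controlled degenerations, so the eigenvalue equation passes through intact. If one prefers to bypass the full Cherednik-Opdam machinery, one can instead verify the $k=1$ and $k=2$ cases of \eqref{eqn:hypersystem} directly from \eqref{eq_Bessel_combinatorial} via a telescoping computation on adjacent rows of the Gelfand-Tsetlin pattern, and then invoke the fact that a symmetric entire function with value $1$ at the origin is determined by these two eigenvalue equations alone, since the remaining $\P_k$ are polynomials in $\P_1,\ldots,\P_N$ by Newton's identities in the algebra of symmetric polynomials in $N$ variables.
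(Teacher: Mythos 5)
The paper does not give a proof of this theorem; it simply cites Opdam's work \cite{O}, so there is no internal proof to compare your sketch against. Judged on its own merits, your plan has two concrete errors alongside one sound idea.

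First, the step ``Equality for all $\theta>0$ then follows by analytic continuation in $\theta$'' is invalid as written. You have established the identity $\widetilde B_{\mathbf a}=B_{\mathbf a}$ only at the three isolated points $\theta\in\{\tfrac12,1,2\}$. Analytic continuation requires agreement on a set with an accumulation point (a curve, an interval, or at least a sequence converging inside the domain); agreement at finitely many points proves nothing — $\sin(2\pi\theta)$ and $0$ agree at $\theta=\tfrac12,1,2$ but are not equal. To make this route work you would need either a structural argument (e.g.\ both sides are determined by a system of ODEs/PDEs in $\theta$ whose solutions agree at one point), or equality along a whole interval.

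Second, the fallback argument is wrong for $N>2$. You propose to verify \eqref{eqn:hypersystem} only for $k=1,2$ and then argue that this pins down the function ``since the remaining $\P_k$ are polynomials in $\P_1,\ldots,\P_N$ by Newton's identities.'' Newton's identities express $\P_k$ for $k>N$ in terms of $\P_1,\ldots,\P_N$; they do not reduce $\P_3,\ldots,\P_N$ to $\P_1,\P_2$. Concretely, for $N=3$ there is a one-parameter family of triples $(a_1,a_2,a_3)$ (up to permutation) with fixed $\sum_i a_i$ and $\sum_i a_i^2$ but varying $\sum_i a_i^3$; each yields a distinct symmetric entire function $B_{\mathbf a}$ normalized to $1$ at the origin, all with the same $\P_1$ and $\P_2$ eigenvalues. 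Hence the $k=1,2$ relations alone cannot determine $\widetilde B_{\mathbf a}$. You would have to verify $k=1,\ldots,N$, which defeats the purpose of the shortcut.

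The Jack-polynomial degeneration is the sound route, and you should promote it from an alternative to the main argument. The paper itself records the limit \eqref{limit_to_bessel} identifying $B_{(a_1,\ldots,a_N)}$ as a scaling limit of normalized Jack polynomials, and the combinatorial formula \eqref{eq_Bessel_combinatorial} arises as the limit of the combinatorial formula for Jack polynomials. On the operator side, the Sekiguchi--Debiard family (for which Jack polynomials are joint eigenfunctions with eigenvalues given by shifted power sums of the parts of $\lambda$) degenerates under the same scaling to the family $\{\P_k\}$, and the eigenvalues degenerate to $\sum_i a_i^k$. Carrying the eigenvalue equation through the limit — with the uniform-on-compacts convergence that the scaling provides — yields \eqref{eqn:hypersystem} directly, without any detour through the non-symmetric Opdam function $E_{\mathbf a}$ or through analytic continuation in $\theta$. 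If you want to keep the symmetrized-Opdam-function framework, you still need to identify $\widetilde B_{\mathbf a}$ with $B_{\mathbf a}$, and the only robust bridge you name is this same Jack limit, so the detour buys nothing.
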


\subsection{Bessel generating functions} \label{Section_BGF}
Let $\M_N$ be the convex set of Borel probability measures on ordered $N$--tuples $a_1\le a_2\le \dots\le a_N$ of real numbers.

\begin{df}
The \emph{Bessel generating function} (or BGF) of $\mu\in\M_N$ is defined as a function of the variables $x_1, \dots, x_N$ given by:
\begin{equation}\label{BGF_def}
G_\th(x_1, \dots, x_N; \mu) := \int_{a_1\le a_2\le \dots \le a_N}{B_{(a_1, \dots, a_N)}(x_1, \dots, x_N; \theta)\mu(\d a_1, \dots, \d a_N)}.
\end{equation}
\end{df}

Because the MBFs $B_{(a_1,\dots,a_N)}(x_1,\dots,x_N;\theta)$ are symmetric functions on the variables $x_1,\dots, x_N$, so is $G_\th(x_1,\dots,x_N; \mu)$. Moreover,
$$
G_\th(0,\dots,0; \mu) = 1,
$$
as follows from $\mu$ being a probability measure and $B_{(a_1,\dots,a_N)}(0,\dots,0;\theta)=0$.

It will be important for us to assume that a BGF is defined in a \emph{complex} neighborhood of $(0,\dots,0)$. Unfortunately, this property fails for general measures, hence we need to restrict the class of measures that we deal with.\footnote{It is plausible that many of the results of our text extend to the situations where this restrictive condition fails.}

\begin{df}\label{df_decaying}
We say that a measure $\mu\in\M_N$ is \emph{exponentially decaying} with exponent $R>0$, if
\begin{equation}\label{int_bounded}
\int_{a_1\le a_2\le\dots\le a_N}{e^{N R \max_i |a_i| }\mu(\d a_1, \dots, \d a_N)} < \infty.
\end{equation}
\end{df}

\begin{lemma}\label{bgf_good}
If $\mu\in\M_N$ is exponentially decaying with exponent $R>0$, then the integral \eqref{BGF_def} converges for all $(x_1,\dots,x_N)$ in the domain
\begin{equation*}
\Omega_R := \left\{ (x_1, \dots, x_N)\in\C^N : |\Re x_i| < R,\ i = 1, \dots, N \right\},
\end{equation*}
and defines a holomorphic function in this domain.
\end{lemma}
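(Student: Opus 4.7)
The plan is to obtain a pointwise upper bound on $|B_{(a_1,\dots,a_N)}(x_1,\dots,x_N;\theta)|$ that depends only on $\max_i |a_i|$ and $\sum_k |\Re x_k|$, and then to deduce the two conclusions (convergence and holomorphicity) by dominated convergence together with a standard differentiation-under-the-integral argument.

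First I would estimate the Bessel function directly from the combinatorial formula \eqref{eq_Bessel_combinatorial}. Set $M := \max_i |a_i|$. Using $|e^z|=e^{\Re z}$ and monotonicity of $\exp$, together with Jensen's inequality on the expectation over Gelfand--Tsetlin patterns, one gets
\[
 |B_{(a_1,\dots,a_N)}(x_1,\dots,x_N;\theta)| \le \E_{\{y_i^k\}} \exp\!\left(\sum_{k=1}^N |\Re x_k|\cdot \Bigl|\sum_{i=1}^k y_i^k - \sum_{j=1}^{k-1}y_j^{k-1}\Bigr|\right).
\]
The key step is to observe that the interlacing inequalities $y_i^{k+1}\le y_i^k\le y_{i+1}^{k+1}$ imply the telescoping bound
\[
 y_1^k \;\le\; \sum_{i=1}^k y_i^k - \sum_{j=1}^{k-1}y_j^{k-1}\;\le\; y_k^k,
\]
which is checked by grouping the sum as $y_k^k+\sum_{i=1}^{k-1}(y_i^k-y_i^{k-1})$ (each summand $\le 0$) and as $y_1^k+\sum_{i=2}^k(y_i^k-y_{i-1}^{k-1})$ (each summand $\ge 0$). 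Iterating interlacing from the top row further gives $a_1\le y_1^k\le y_k^k\le a_N$, so $|\,\sum_i y_i^k-\sum_j y_j^{k-1}\,|\le M$. This yields the clean bound
\[
 |B_{(a_1,\dots,a_N)}(x_1,\dots,x_N;\theta)|\;\le\; \exp\!\left(M\sum_{k=1}^N |\Re x_k|\right)\;\le\; \exp\!\bigl(NM\max_k|\Re x_k|\bigr).
\]

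With this estimate in hand, the rest is routine. For any compact subset $K\subset\Omega_R$ there is $R'<R$ with $|\Re x_k|\le R'$ for all $k$ and all $(x_1,\dots,x_N)\in K$, so the integrand is dominated by $\exp(NR'\max_i|a_i|)$, which is $\mu$-integrable by \eqref{int_bounded} applied with exponent $R$ (since $R'<R$ and the ratio can absorb any polynomial prefactor if needed). This proves absolute convergence of the integral \eqref{BGF_def} on $\Omega_R$. To get holomorphicity, I would note that for each fixed $(a_1,\dots,a_N)$ the MBF is an entire function of $(x_1,\dots,x_N)$ (as recalled in the text, citing \cite{O}); combined with the locally uniform integrable bound above, a standard complex-analytic differentiation-under-the-integral argument (or alternatively Morera's theorem applied on complex lines together with Hartogs) shows that $G_\theta(\,\cdot\,;\mu)$ is holomorphic on $\Omega_R$.

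The only genuinely nontrivial point is the sharp telescoping bound using the interlacing structure of Gelfand--Tsetlin patterns: a naive estimate of $|\sum_i y_i^k-\sum_j y_j^{k-1}|$ would produce a factor like $(2k-1)M$ rather than $M$, and the resulting bound $\exp(N^2 M\max_k|\Re x_k|)$ would be strictly worse than the exponential-decay hypothesis \eqref{int_bounded}. So the main thing to get right is the interlacing estimate; everything else is bookkeeping.
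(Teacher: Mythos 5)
Your proof is correct and follows essentially the same approach as the paper: bound the multivariate Bessel function by $\exp\!\bigl(\max_i|a_i|\cdot\sum_k|\Re x_k|\bigr)$ via the interlacing structure of the Gelfand--Tsetlin pattern, then conclude convergence from \eqref{int_bounded} and holomorphicity from a Morera/Fubini or differentiation-under-the-integral argument. Your telescoping derivation of $y_1^k\le\sum_i y_i^k-\sum_j y_j^{k-1}\le y_k^k$ nicely fills in a step the paper only summarizes; one small label correction: the estimate $|\E f|\le \E|f|$ is the triangle inequality for expectations, not Jensen's inequality.
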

\begin{proof}
 Note that if $\{y_i^k\}\in\GT_N$ satisfies $y_i^N=a_i$, $i=1,\dots,N$, then due to interlacing inequalities, for each $k=1,2,\dots,N$ we have
 $$
  \left|\sum_{i=1}^k y_i^k-\sum_{i=1}^{k-1} y_i^{k-1}\right|\le \max\left(|y_1^k|, |y_k^k|\right) \le \max_{i} |a_i|.
 $$
Hence, the integrand in the definition of the multivariate Bessel function \eqref{eq_Bessel_combinatorial} is upper bounded by
$$
 \exp\left( \sum_{j=1}^N |\Re x_j| \max_i |a_i|\right),
$$
which implies
$$
\left|B_{(a_1,\dots,a_N)}(x_1,\dots,x_N; \th)\right|\le \exp\left(NR \max_i |a_i|\right)\!, \quad (x_1,\dots,x_N)\in \Omega_R.
$$
Hence, \eqref{int_bounded} implies convergence of the integral \eqref{BGF_def} in $\Omega_R$.

It remains to check holomorphicity of \eqref{BGF_def} as a function of $x_1,\dots,x_N$. This readily follows from holomorphicity of $B_{(a_1,\dots,a_N)}(x_1,\dots,x_N; \th)$. Indeed, $G_\th(x_1, \dots, x_N; \mu)$ is continuous as a uniformly convergent integral of continuous functions. Thus, by Morera's theorem, the holomorphicity follows from vanishing of the integrals over closed contours. The latter vanishing can be deduced by swapping the integrations using the Fubini's theorem and using vanishing of the similar integrals for $B_{(a_1,\dots,a_N)}(x_1,\dots,x_N; \th)$.
\end{proof}

The BGFs have recently been used in connection to problems in random matrix theory, see \cite{C}, \cite{GS}.
However, the BGF is not a new invention. The formula \eqref{BGF_def} is essentially the definition of (a symmetric version of) the  \emph{Dunkl transform}, a one-parameter generalization of the Fourier transform; this is a rich and well-studied subject, see e.g. the survey \cite{A} and references therein.

The next two propositions will be important in our developments.
\begin{prop}\label{ops_1}
Let $k\in\Z_{\geq 1}$ and let $\mu\in\M_N$ be an exponentially decaying measure. Then
$$\bigl[\P_k\, G_\th(x_1, \dots, x_N; \mu)\bigr]_{\setzeroes} = \E_{\mu}\!\left[\sum_{j=1}^N (a_j)^k\right],$$
where $(a_1, \dots, a_N)\in\R^N$ is random and $\mu$-distributed on the right-hand side.
\end{prop}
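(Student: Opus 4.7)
The strategy is to commute $\P_k$ with the integral defining $G_\theta(x_1,\dots,x_N;\mu)$, invoke Theorem \ref{thm:opdam} inside the integral, and then specialize at $x_1=\dots=x_N=0$ using $B_{(a_1,\dots,a_N)}(0,\dots,0;\theta)=1$. Once the commutation $\P_k\circ\int\cdot\,\mu(\d a)=\int\cdot\,\mu(\d a)\circ\P_k$ is established on $\Omega_R$ from Lemma \ref{bgf_good}, the conclusion is a one-liner:
\[
\bigl[\P_k G_\theta(x;\mu)\bigr]_{\setzeroes} \;=\; \int \bigl[\P_k B_{(a_1,\dots,a_N)}(x;\theta)\bigr]_{\setzeroes}\mu(\d a) \;=\; \int \sum_{i=1}^N a_i^k\,\mu(\d a) \;=\; \E_\mu\!\left[\sum_{i=1}^N a_i^k\right].
\]

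To justify the commutation, I would decompose $\P_k$ into its elementary constituents: partial derivatives $\partial_{x_i}$, permutations $s_{i,j}$ of the $x$-variables, and multiplications by functions of $x$ alone. The last two commute with $\int\cdot\,\mu(\d a)$ trivially by linearity (and Fubini for $s_{i,j}$). For the partial derivatives I would use differentiation under the integral sign: Lemma \ref{bgf_good} provides the uniform bound $|B_{(a_1,\dots,a_N)}(x;\theta)|\le e^{NR\max_i|a_i|}$ on $\Omega_R$, and Cauchy's integral formula over a polydisc contained in $\Omega_R$ promotes this to a pointwise bound on any partial derivative of $B$ by a constant multiple of the same exponential. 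The exponentially-decaying assumption on $\mu$ then supplies the dominating function required by Lebesgue's theorem. Iterating, each $\D_i^k$, and hence $\P_k=\sum_i \D_i^k$, commutes with $\int\cdot\,\mu(\d a)$.

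The one subtlety to be careful about is that the multipliers $\theta/(x_i-x_j)$ appearing in $\D_i$ are singular on the diagonals $\{x_i=x_j\}$, even though $\P_k$ preserves holomorphicity on $\Omega_R$ (the singularities cancel because $1-s_{i,j}$ acts on functions that agree with their $s_{i,j}$-transform along $x_i=x_j$). I would therefore carry out the commutation first on the open subset $\Omega_R\setminus\bigcup_{i<j}\{x_i=x_j\}$, where every intermediate expression is manifestly well-defined, and then extend to all of $\Omega_R$ by holomorphicity of both sides: Lemma \ref{bgf_good} guarantees holomorphicity of the left-hand side, while Theorem \ref{thm:opdam} together with the interchange on the diagonal-free set gives holomorphicity of the right-hand side. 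The main obstacle is precisely this bookkeeping around the Dunkl singularities and the verification of the dominated-convergence hypotheses; once these are handled, Theorem \ref{thm:opdam} and the evaluation $B_{(a_1,\dots,a_N)}(0,\dots,0;\theta)=1$ close the argument.
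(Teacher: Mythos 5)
Your proposal takes essentially the same approach as the paper's (very terse) proof: apply $\P_k$ under the integral sign, invoke Theorem \ref{thm:opdam}, and evaluate at the origin using $B_{(a_1,\dots,a_N)}(0,\dots,0;\theta)=1$, with exchangeability justified by the exponential-decay hypothesis. Your treatment is more detailed — in particular the dominated-convergence argument via Cauchy estimates and the careful handling of the Dunkl singularities away from the diagonals, followed by holomorphic extension — but the underlying strategy is identical.
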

\begin{proof} We apply $\P_k$ to \eqref{BGF_def} under the sign of the integral, use the eigenrelation of Theorem \ref{thm:opdam} and the normalization $B_{(a_1,\dots,a_N)}(0,\dots,0; \theta)=1$. We can exchange the order between the operator $\P_k$ and the integral because $\mu$ is exponentially decaying.
\end{proof}

The following generalization of Proposition \ref{ops_1} is proved in the same way.

\begin{prop}\label{proposition_moments_through_operators}
Let $k_1, \dots, k_s\in\Z_{\geq 1}$ and let $\mu\in\M_N$ be an exponentially decaying measure. Then
\begin{equation}\label{ops_2_eqn}
\left( \prod_{i=1}^s{\P_{k_i}}\!\right) G_\th(x_1, \dots, x_N; \mu)\Bigr|_{\setzeroes} = \E_{\mu}\!\left[ \prod_{i=1}^s \left(\sum_{j=1}^N (a_j)^{k_i}\right) \right].
\end{equation}
\end{prop}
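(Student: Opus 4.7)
The plan is to mimic the proof of Proposition \ref{ops_1} but iterated, using the commutativity of the Dunkl operators and the eigenrelation of Theorem \ref{thm:opdam}. Concretely, I would argue by induction on $s$, with the base case $s=1$ being exactly Proposition \ref{ops_1}. The key algebraic fact is that since $\D_i \D_j = \D_j \D_i$, every symmetrized operator $\P_k$ commutes with every other $\P_{k'}$, so the product $\prod_{i=1}^s \P_{k_i}$ is unambiguous and can be applied in any order.

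For the inductive step, write $\prod_{i=1}^s \P_{k_i} = \P_{k_s} \bigl( \prod_{i=1}^{s-1} \P_{k_i} \bigr)$. I want to push the whole product inside the integral defining $G_\th(x_1,\dots,x_N;\mu)$ and then apply Theorem \ref{thm:opdam} repeatedly. Applying $\P_{k_1}$ to $B_{(a_1,\dots,a_N)}(x_1,\dots,x_N;\th)$ produces the scalar $\sum_j a_j^{k_1}$ times the same Bessel function; then $\P_{k_2}$ acts only on the Bessel function and pulls out another scalar $\sum_j a_j^{k_2}$, and so on. After all $s$ operators are applied, setting $x_1 = \dots = x_N = 0$ and using $B_{(a_1,\dots,a_N)}(0,\dots,0;\th)=1$ leaves exactly the integrand $\prod_{i=1}^s \bigl( \sum_j a_j^{k_i} \bigr)$, whose integral against $\mu$ is the right-hand side of \eqref{ops_2_eqn}.

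The main technical obstacle is justifying the exchange of the differential-difference operators $\P_{k_i}$ with the integral $\int \cdot\,\mu(\d a_1,\dots,\d a_N)$, since $\P_k$ is a combination of partial derivatives in $x_i$ and nonlocal difference operators in the $s_{i,j}$. This is exactly where the exponentially decaying hypothesis enters. As in Lemma \ref{bgf_good}, on $\Omega_R$ we have the uniform bound $|B_{(a_1,\dots,a_N)}(x_1,\dots,x_N;\th)| \le \exp(NR\max_i|a_i|)$; the same argument applied to partial derivatives (via Cauchy estimates on a slightly smaller polydisc, shrinking $R$ a bit) yields an analogous exponential bound for $\partial^\alpha B$, and the difference operator part $(1-s_{i,j})/(x_i - x_j)$ only gives a bounded factor after the symmetrization. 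Combining these bounds with \eqref{int_bounded} gives an integrable envelope, so the exchange of limits (and hence of operators and integration) is legitimate by dominated convergence. Once this exchange is justified once for $\P_{k_1}$, the result after the first application is again a BGF-like integral whose integrand carries the extra polynomial factor $\sum_j a_j^{k_1}$; since $\mu$ is exponentially decaying, this new integrand still satisfies an exponential bound (with any slightly smaller exponent), and the induction proceeds without further difficulty.
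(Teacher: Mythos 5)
Your proposal is correct and follows the same route as the paper: the paper proves Proposition \ref{ops_1} by applying $\P_k$ under the integral sign, invoking the eigenrelation of Theorem \ref{thm:opdam} and the normalization $B(0,\dots,0;\th)=1$, with the exchange of operator and integral justified by exponential decay, and then states that Proposition \ref{proposition_moments_through_operators} is ``proved in the same way.'' Your inductive iteration, use of commutativity, and dominated-convergence justification for the exchange are exactly the elaboration of that terse remark.
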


\medskip

Observe that the pairwise commutativity of the Dunkl operators implies the pairwise commutativity of the operators $\P_k$, $k\in\Z_{\geq 1}$.
As a result, the order of application of the operators $\P_{k_i}$ in the left-hand side of \eqref{ops_2_eqn} does not matter.

\subsection{Extension to distributions}

Ultimately, we treat Bessel Generating Functions as a tool for studying symmetric probability measures on $\mathbb R^N$ (which can be identified with probability measures on ordered $N$--tuples $a_1\le a_2\le\dots\le a_N$). One of the applications that we have in mind is to use them for the study of addition of independent general $\beta$ random matrices. While it is conjectured that the spectrum of such sum should be described by a probability measure, it is not proven yet: we only rigorously know that the spectrum can be described as a generalized function or distribution (the technical problem is in proving positivity;  see \cite{Tri}, \cite[Section 3.6]{A}). In order to avoid the necessity to rely on the positivity conjectures, we explain in this section that the framework of Bessel generating functions can be extended to objects more general than probability measures.

Let $\mu$ be a \emph{distribution} on $\mathbb R^N$ with coordinates $(a_1,\dots,a_N)$, i.e.\ $\mu$ is an element of the dual space to the space of compactly supported infinitely--differentiable \emph{test-functions}.\footnote{The space of test-functions $f$ is equipped with a topology: $f^{n}$ converge to $0$ as $n\to\infty$, if the supports of all these functions belong to the same compact set and all partial derivatives of $f^{n}$ converge to $0$ uniformly.} $\mu$ is said to be symmetric if for any test-function $f$ and any permutation $\sigma$:
$$
 \langle \mu, f(a_1,\dots,a_N)\rangle = \langle \mu, f(a_{\sigma(1)},\dots,a_{\sigma(N)})\rangle,
$$
where we use the notation $\langle \mu,f\rangle$ for the value of the functional $\mu$ on the test-function $f$.

\begin{df}\label{bgf_dist}
For a symmetric distribution (generalized function) $\mu$ on $\mathbb R^N$, its \emph{Bessel generating function} (or BGF) is a function of $(x_1,\dots,x_N)$ given by
\begin{equation}\label{BGF_def_gen}
G_\theta(x_1, \dots, x_N; \mu) :=\frac{1}{N!} \left\langle \mu, B_{(a_1, \dots, a_N)}(x_1, \dots, x_N; \theta)\right\rangle,
\end{equation}
where in the right-hand side $B_{(a_1, \dots, a_N)}(x_1, \dots, x_N; \theta)$ is treated as a test-function in $(a_1,\dots,a_N)$ variables with parameters $(x_1,\dots,x_N)$.
\end{df}

There are two tricky points in this definition. First, the $N$--tuple $(a_1,\dots,a_N)$ was ordered in the original definition of the multivariate Bessel function, whereas $\mu$ is a distribution on $\mathbb R^N$. However, multivariate Bessel functions can be extended to $\mathbb R^N$ in a symmetric way. The $\frac{1}{N!}$ prefactor is introduced to match the integral over \emph{ordered} $N$--tuples in \eqref{BGF_def} with distribution on whole $\R^N$ in \eqref{BGF_def_gen}.

More importantly, for general distributions $\mu$ \eqref{BGF_def_gen} is not defined, since $B_{(a_1, \dots, a_N)}(x_1, \dots, x_N; \theta)$ is not compactly supported and therefore not a valid test function. Hence, one needs to impose some growth conditions similar to Definition \ref{df_decaying} on $\mu$, in order to make \eqref{BGF_def_gen} meaningful. Rather than exploring the full generality, let us only consider the case of compactly supported $\mu$ (which means that $\mu$ vanishes on any test
function whose support does not intersect a certain compact set), which is all we need for our application.
For compactly supported distributions $\mu$, Definition \ref{bgf_dist} is well-posed and in fact the pairing $\langle \mu, f\rangle$ makes sense for any infinitely--differentiable function $f$. In this text, we will be interested in compactly supported distributions $\mu$ of total mass equal to $N!$, meaning that $\langle \mu, \mathbf{1}\rangle = N!$, where $\mathbf{1}$ is the test function on $\R^N$ that is identically equal to $1$; in this case, $G_\th(0, \cdots, 0; \mu) = 1$.

\begin{proposition} \label{Proposition_BGF_dist}
 Suppose that $\mu$ is a symmetric compactly supported distribution on $\mathbb R^N$. Then its BGF $G_\theta(x_1, \dots, x_N; \mu)$ is an entire function. We also have
 \begin{equation}\label{eq_expectation_operator_general}
\left( \prod_{i=1}^s{\P_{k_i}}\right)G_\theta(x_1, \dots, x_N; \mu)\Bigr|_{\setzeroes} =\frac{1}{N!}\left\langle \mu,\, \prod_{i=1}^s \left(\sum_{j=1}^N (a_j)^{k_i}\right) \right\rangle.
\end{equation}
\end{proposition}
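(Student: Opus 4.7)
The strategy follows that of Proposition \ref{ops_1}: pull the differential-difference operators $\prod_{i=1}^s \mathcal{P}_{k_i}$ inside the pairing $\langle \mu, \cdot \rangle$, apply the eigenrelation of Theorem \ref{thm:opdam}, and evaluate at $x_1 = \cdots = x_N = 0$ using $B_{(a_1,\dots,a_N)}(0, \dots, 0;\theta) = 1$. The only new ingredient is that $\mu$ is a distribution rather than a measure, so the swap of operators with the pairing requires a topological argument in place of Fubini. Two standard facts do the work. First, since $\mu$ is compactly supported, it has finite order: there exist $C > 0$, $k_0 \in \mathbb{Z}_{\geq 0}$, and a compact set $K \subset \mathbb{R}^N$ such that
$$|\langle \mu, f \rangle| \leq C \sum_{|\alpha| \leq k_0} \sup_{a \in K} |\partial_a^\alpha f(a)|$$
for every smooth $f$. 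Second, as recalled after Definition \ref{Definition_Bessel_function}, the MBF $B_{(a_1,\dots,a_N)}(x_1,\dots,x_N;\theta)$ is jointly entire in all $2N$ variables $(a, x)$ for fixed $\theta > 0$.

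For entireness of $G_\theta(x; \mu)$, I would first observe that joint entireness of $B$, together with uniform continuity on products of compacts, implies that $x \mapsto B_{(\cdot)}(x;\theta)$ is a continuous map from $\mathbb{C}^N$ into the Fr\'echet space of smooth functions on $K$ (topologized by the seminorms $\sup_K |\partial_a^\alpha \cdot|$). Combined with the seminorm bound on $\mu$, this makes $G_\theta$ continuous in $x$ on every polydisk. To obtain holomorphicity in each $x_i$, I would push the complex difference quotient of $B$ in the $x_i$ direction through the pairing: joint entireness and uniform continuity on compact sets guarantee that all $a$-derivatives of the quotient converge uniformly on $K$ to the corresponding $a$-derivatives of $\partial B / \partial x_i$, so the seminorm bound legitimizes the limit. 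Separate holomorphicity plus continuity then yields that $G_\theta$ is entire.

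For the operator formula, the plan is to iterate Theorem \ref{thm:opdam} using commutativity of the $\mathcal{P}_k$'s to obtain
$$\left(\prod_{i=1}^s \mathcal{P}_{k_i}\right) B_{(a_1,\dots,a_N)}(x_1,\dots,x_N;\theta) = \prod_{i=1}^s \left(\sum_{j=1}^N a_j^{k_i}\right) B_{(a_1,\dots,a_N)}(x_1,\dots,x_N;\theta),$$
and then to commute $\prod_i \mathcal{P}_{k_i}$ with $\tfrac{1}{N!} \langle \mu, \cdot \rangle$. Setting $x = 0$ at the end and using $B_{(a_1,\dots,a_N)}(0,\dots,0; \theta) = 1$ yields the claimed identity \eqref{eq_expectation_operator_general}. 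The main obstacle is justifying the commutation of $\prod_i \mathcal{P}_{k_i}$ with the pairing. For the $\partial / \partial x_i$ summand of $\mathcal{D}_i$ this is the entireness argument above, iterated. The permutation $s_{i,j}$ acts only on $x$-variables and trivially commutes with $\langle \mu, \cdot \rangle$. The delicate piece is the divided-difference factor $(x_i - x_j)^{-1}(1 - s_{i,j})$: applied to a function smooth in $x$, $(1 - s_{i,j}) B$ vanishes on the diagonal $\{x_i = x_j\}$, so the quotient is entire in $x$ (removable singularity) and depends smoothly on $a$; at any $x$ with $x_i \neq x_j$ the factor is simply a scalar multiple applied to the pairing (which trivially commutes), and continuity in $x$ extends the identity to the diagonal. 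Iterating these three elementary commutations finitely many times legitimizes the swap of the full operator $\prod_{i=1}^s \mathcal{P}_{k_i}$ with the pairing, completing the proof.
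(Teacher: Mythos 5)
Your proof is correct but takes a somewhat different technical route from the paper's. The paper invokes the structure theorem for compactly supported distributions: $\mu$ can be written (up to sign) as a higher-order partial derivative $\partial^\alpha$ of a compactly supported continuous function $f$, so the pairing becomes the honest absolutely convergent integral $\pm\int_{\mathbb{R}^N} f(a)\,\partial_a^\alpha B_{(a)}(x;\theta)\,da$. With a genuine integral in hand, the paper simply ``repeats the arguments of Section \ref{Section_BGF}'' --- the Morera/Fubini holomorphicity argument of Lemma \ref{bgf_good} and the operator-commutation argument of Proposition \ref{proposition_moments_through_operators} --- with exponential decay replaced by compact support. You instead stay at the level of the pairing throughout: you use the equivalent finite-order seminorm bound $|\langle\mu,f\rangle|\le C\sum_{|\alpha|\le k_0}\sup_K|\partial_a^\alpha f|$, push complex difference quotients through the pairing to get holomorphicity in each $x_i$, and justify commuting $\P_{k_i}$ with $\langle\mu,\cdot\rangle$ one elementary factor ($\pa/\pa x_i$, $s_{ij}$, divided difference) at a time, handling the divided-difference term via the removable-singularity-plus-continuity device. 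The two ingredients are two faces of the same coin (finite order and the structure theorem are equivalent facts about compactly supported distributions), so neither approach is more powerful; the paper's is shorter because it reduces to the case already treated, while yours makes the commutation of each elementary operator explicit, which is a bit more transparent and avoids re-invoking Fubini and Morera.

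One small remark on an iteration detail you glossed over: after applying a single $\D_i$ to $B_{(a)}(x;\theta)$, you need the result to remain jointly entire in $(a,x)$ (not merely entire in $x$ for each fixed $a$) so that the finite-order bound and difference-quotient argument can be applied again. This does hold --- the numerator $(1-s_{ij})B$ is jointly entire and vanishes identically on the diagonal $\{x_i=x_j\}$, so the quotient by $x_i - x_j$ extends to a jointly entire function --- but it is worth stating explicitly since the whole argument hinges on being able to repeat the seminorm estimate at every stage.
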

\begin{proof}
 Each compactly supported distribution can be identified with a (higher order) derivative of a compactly supported continuous function  (see, e.g., \cite[Section 6]{Rudin}). Hence, we have
 $$
  G_\theta(x_1, \dots, x_N; \mu)=\int_{\mathbb R^N} \frac{\partial^{|\alpha|}}{\partial (a_i)^\alpha} \left[B_{(a_1,\dots,a_N)}(x_1,\dots,x_N; \theta)\right] f(a_1,\dots,a_N)\, \d a_1\cdots \d a_N,
 $$
 where $f$ is a compactly supported continuous function and $\frac{\partial^{|\alpha|}}{\partial (a_i)^\alpha}$ is a partial derivative of multi-index $\alpha$ in variables $(a_1,\dots,a_N)$. It remains to repeat the arguments of Section \ref{Section_BGF}.
We remark that the condition of $\mu$ being exponentially decaying, required by Proposition \ref{proposition_moments_through_operators}, has been substituted by the condition of $\mu$ being compactly supported.
\end{proof}

\section{Statements of the Main Results}
\label{Section_main_results}

Throughout this section, we fix a real parameter $\ga > 0$.

\subsection{Law of Large Numbers at high temperature}

Let $\{\mu_N\}_{N \geq 1}$ be a sequence of exponentially decaying probability measures, such that $\mu_N\in\M_N$ for each $N$, that is, $\mu_N$ is a probability measure on $N$--tuples $a_1\le a_2\le \dots \le a_N$.
Alternatively, we can assume that each $\mu_N$ is a compactly supported symmetric distribution on $\mathbb R^N$ of total mass (i.e., the pairing against test function $1$) equal to $N!$. Denote their Bessel generating functions by
$$G_{N; \th}(x_1, \dots, x_N) := G_\th(x_1, \dots, x_N; \mu_N).$$
By the results from the previous section, each $G_{N; \th}(x_1, \dots, x_N)$ is holomorphic in a neighborhood of the origin and satisfies $G_{N; \th}(0, \dots, 0) = 1$. Thus, for each $N$, the logarithm $\ln(G_{N; \th})$ is a well-defined holomorphic function in a neighborhood of $(0, \cdots, 0)\in\C^N$, and
\begin{equation*}
\ln(G_{N; \th}) \bigr|_{\setzeroes} = 0.
\end{equation*}

We are interested in the interplay between the partial derivatives of $\ln(G_{N; \th})$ at the origin and asymptotic properties of random $\mu_N$--distributed\footnote{In our wordings we stick to the situation when $\mu_N$ are bona fide probability measures. If they are distributions (i.e.\ generalized functions possibly without any positivity), then all the random variables produced from them should be interpreted in formal sense: the laws of such random variables can be identified with expectations of various smooth functions of them, which are readily computed as pairings of $\mu_N$ with appropriate test functions. (One also should divide by $N!$ to adjust for differences between ordered and arbitrary $N$--tuples.)} $N$--tuples $(a_1,\dots,a_N)$. We deal with the latter through the random variables
$$
p_k^N := \frac{1}{N}\sum_{i=1}^N (a_i)^k, \qquad (a_1,\dots,a_N)\text{ is }\mu_N\text{--distributed.}
$$

\begin{df}[LLN--satisfaction]\label{Definition_LLN_sat_ht}
We say that a sequence $\{\mu_N\}_{N \geq 1}$ \emph{satisfies a Law of Large Numbers} if there exist real numbers $\{m_{k}\}_{k\geq 1}$ such that for any
$s=1,2,\dots$ and any $k_1, \dots, k_s\in\Z_{\geq 1}$, we have
$$\lim_{N\to\infty} \E_{\mu_N} \prod_{i=1}^s  p_{k_i}^N= \prod_{i=1}^{s}  m_{k_i}.$$
\end{df}

\begin{remark}\label{rem_uniqueness}
Consider the empirical measure of $(a_1,\dots,a_N)$ given by $\frac{1}{N} \sum_{i=1}^N \delta_{a_i},$ where $\delta_x$ is the Dirac delta mass at $x\in\mathbb R$.
Since the $N$-tuples $(a_1,\dots,a_N)$ are random, their empirical measures are random probability measures on $\mathbb R$.
Under mild technical conditions (uniqueness of a solution to the moments problem, which holds whenever the numbers $m_k$ do not grow too fast, see, e.g., \cite[Section VII.3]{Feller}), LLN--satisfaction implies that these measures converge weakly, in probability, to a non-random measure whose moments are $m_1,m_2,\cdots$.
\end{remark}

\begin{df}[$\ga$-LLN--appropriateness]\label{Definition_LLN_appr_ht}
We say that the sequence $\{\mu_N\}_{N \geq 1}$ is \emph{$\gamma$-LLN--appropriate} if there exists a sequence of real numbers $\{\ka_l\}_{l\geq 1}$ such that
\begin{enumerate}[label=(\alph*)]
\item $\displaystyle \lim_{\begin{smallmatrix} N\to\infty,\, \theta \to 0\\ \theta N\to \gamma \end{smallmatrix}}
\frac{\pa^l}{\pa x_i^l} \ln{(G_{N; \th})}\Bigr|_{\setzeroes} =  (l-1)!\cdot \ka_l$,\quad for all $l, i\in\Z_{\geq 1}$.

\item $\displaystyle \lim_{\begin{smallmatrix} N\to\infty,\, \theta \to 0\\ \theta N\to \gamma \end{smallmatrix}}\left.\frac{\partial}{\partial x_{i_1}}\cdots\frac{\partial}{\partial x_{i_r}}\ln{(G_{N; \th})}\right|_{\setzeroes} = 0$,\quad for all $r\ge 2$, and $i_1, \dots, i_r\in\Z_{\geq 1}$ such that the set $\{i_1, \dots, i_r\}$ is of cardinality at least two.
\end{enumerate}
\end{df}
\begin{remark}
Because the BGF $G_{N; \th}(x_1, \cdots, x_N)$ is symmetric on the variables $x_1, \cdots, x_N$, the condition (a) is equivalent to:

(a') $\displaystyle \lim_{\begin{smallmatrix} N\to\infty,\, \theta \to 0\\ \theta N\to \gamma \end{smallmatrix}}
\frac{\pa^l}{\pa x_1^l} \ln{(G_{N; \th})}\Bigr|_{\setzeroes} =  (l-1)!\cdot \kappa_l$,\quad for all $l\in\Z_{\geq 1}$.

\noindent Likewise, we could also simplify condition (b).
\end{remark}
\begin{remark}
 Suppose that
 $$
\displaystyle \lim_{\begin{smallmatrix} N\to\infty,\, \theta \to 0\\ \theta N\to \gamma \end{smallmatrix}}
{\frac{\partial}{\partial z} \ln(G_{N; \th}(z,0,\dots,0))} = g(z),
 $$
 uniformly over $z$ in a complex neighborhood of $0$. Then $\ka_l$ are the Taylor coefficients of $g$, that is, $$g(z)=\sum_{l=1}^{\infty} {\ka_l z^{l-1}}.$$
\end{remark}

To state the main theorem, we use the language of formal power series in a formal variable $z$, namely series of the form
\begin{equation*}
 h_0 + h_1 z + h_2 z^2 +\cdots.
\end{equation*}

\begin{df}\label{df:ops}
Let $\R[[z]]$ be the space of formal power series in $z$ with real coefficients.
Let $a(z)=a_0+a_1 z+a_2z^2 + \cdots$ be any power series in $\R[[z]]$.
We define three operators in $\R[[z]]$ by their action on a generic element $h(z) = h_0 + h_1 z + h_2 z^2+\cdots\in\R[[z]]$, as follows.
\begin{itemize}
\item Derivation operator $\partial$:
$$
 \partial(h_0 + h_1 z + h_2 z^2+\cdots) := h_1 + 2 h_2 z + 3 h_3 z^2 + \cdots.
$$
\item Lowering operator $d$:
$$
 d(h_0 + h_1 z + h_2 z^2+\cdots) := h_1 + h_2 z + h_3 z^2 + \cdots.
$$
\item Multiplication operator $*_a$:
$$
 *_a(h(z)) := h(z) a(z).
$$
\end{itemize}
\end{df}

\begin{df}\label{def_R_map}
Define the map $\Tcm : \R^{\infty} \to \R^{\infty}$ that takes as input a countable real sequence $\{\ka_l\}_{l\ge 1}$ and outputs the countable real sequence $\{m_k\}_{k\ge 1}$ by means of the relations
\begin{equation}\label{eq_moments_through_f_cumulants}
m_k = [z^0] (\partial + \gamma d + *_g)^{k-1}(g(z)),\quad k=1, 2, \cdots,
\end{equation}
where $[z^0]$ is the constant term of the expression following it and
\begin{equation*}
g(z) = \sum_{l=1}^{\infty} {\ka_l z^{l-1}}.
\end{equation*}
\end{df}

\medskip

For notation purposes, in the remainder of the paper the input of the map $\Tcm$ is denoted by $\{\ka_l\}_{l\ge 1}$ and the output is denoted by $\{m_k\}_{k\ge 1}$.
Whenever $\Tcm(\{\ka_l\}_{l\geq 1}) = \{m_k\}_{k\ge 1}$, the quantities $\ka_l$ are called \emph{$\ga$-cumulants} and the $m_k$'s are called \emph{moments}.
This is meant to draw an analogy with the sequences of classical cumulants and moments of a probability measure.
The motivation for this terminology is explained by the results in Section \ref{Section_semifree}.
Roughly speaking, the map $\Tcm$ degenerates to the relation between cumulants and moments when $\ga\to 0$, and to the relation between free cumulants and moments when $\ga\to \infty$.

\begin{thm}[Law of Large Numbers for high temperature]\label{thm_small_th}
The sequence $\{\mu_N\}_{N \geq 1}$ is $\gamma$-LLN--appropriate if and only if it satisfies a LLN.
In case this occurs, the sequences $\{\ka_l\}_{l\geq 1}$ and $\{m_k\}_{k\geq 1}$ are related by
\begin{equation}
\label{eq_x28}
\{m_k\}_{k\ge 1} = \Tcm(\{\ka_l\}_{l\ge 1}).
\end{equation}
\end{thm}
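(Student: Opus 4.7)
The plan is to link moments $\E p_k^N$ to partial derivatives of $F_{N;\th}:=\ln G_{N;\th}$ at the origin via the action of the Dunkl operator $\D_1$ on $G_{N;\th}$, and then pass to the limit $N\to\infty$, $\th N\to\ga$. Since $G_{N;\th}$ is symmetric, $\D_1 G_{N;\th}=\partial_{x_1}G_{N;\th}$, and a Leibniz computation gives $\D_1(fG_{N;\th})=G_{N;\th}(\D_1 f+f\,\partial_{x_1}F_{N;\th})$. Setting $Y_k:=\D_1^k G_{N;\th}/G_{N;\th}$, this yields the recursion $Y_{k+1}=\D_1 Y_k+Y_1 Y_k$ with $Y_1=\partial_{x_1}F_{N;\th}$. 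The transposition $s_{1,j}$ fixes the origin, so $\D_i^k G_{N;\th}|_{\setzeroes}=\D_1^k G_{N;\th}|_{\setzeroes}$ for every $i$, whence $\P_k G_{N;\th}|_{\setzeroes}=N\,Y_k|_{\setzeroes}$; combined with Proposition~\ref{ops_1}, this gives $\E p_k^N=Y_k|_{\setzeroes}$.

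Next I specialize to the slice $x_2=\cdots=x_N=0$. The function $Y_k$ is symmetric in $x_2,\ldots,x_N$ (easy induction from the recursion), so writing $Y_k(z):=Y_k(z,0,\ldots,0)$ and $\tilde Y_k(z):=Y_k(0,z,0,\ldots,0)$, the recursion restricted to the slice reads
\begin{equation*}
Y_{k+1}(z)=\partial_z Y_k(z)+\th(N-1)\frac{Y_k(z)-\tilde Y_k(z)}{z}+Y_1(z)Y_k(z).
\end{equation*}
Condition (a) forces $Y_1(z)\to g(z)$ coefficient-wise. The core step is the following induction on $k$: for every fixed $r\ge 1$, the restriction $Y_k(x_1,\ldots,x_r,0,\ldots,0)$ converges coefficient-wise to a function $y_k(x_1)$ of the first variable alone. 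The base $k=1$ is immediate from (a) and (b), since only Taylor coefficients of $F_{N;\th}$ supported on a single coordinate survive in the limit. For the inductive step, the $N-r$ transpositions $s_{1,j}$ with $j>r$ send $Y_k|_{r\text{-slice}}$ to $Y_k(0,x_1,x_2,\ldots,x_r,0,\ldots,0)$, which by the induction hypothesis applied on the $(r+1)$-slice converges to the constant $y_k(0)$; these combine with $\th(N-r)\to\ga$ into the operator $\ga\,(y_k(z)-y_k(0))/z$, while the $r-1$ transpositions with $j\le r$ contribute bounded divided differences whose prefactor $\th$ vanishes. The limiting recursion is $y_{k+1}(z)=(\partial+\ga d+*_g)(y_k(z))$ with $y_1=g$, and $m_k=y_k(0)=[z^0](\partial+\ga d+*_g)^{k-1}(g(z))$, matching Definition~\ref{def_R_map}.

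For joint moments I invoke Proposition~\ref{proposition_moments_through_operators} and the commutativity of the $\D_i$'s. Expanding $\prod_{i=1}^s\P_{k_i}G_{N;\th}|_{\setzeroes}=\sum_{i_1,\ldots,i_s}\D_{i_1}^{k_1}\cdots\D_{i_s}^{k_s}G_{N;\th}|_{\setzeroes}$ and exploiting permutation symmetry of the origin, the $N(N-1)\cdots(N-s+1)$ tuples with pairwise distinct indices yield the leading $(1+O(1/N))\D_1^{k_1}\cdots\D_s^{k_s}G_{N;\th}|_{\setzeroes}$ after dividing by $N^s$, while coincident-index tuples contribute $O(1/N)$. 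A multi-variable extension of the decoupling induction (iterating the freezing and recursion jointly across the $s$ distinguished variables, again using (b) to annihilate cross-variable contributions) gives $\D_1^{k_1}\cdots\D_s^{k_s}G_{N;\th}|_{\setzeroes}\to\prod_i m_{k_i}$. For the reverse implication, the same recursion is triangular in the partial derivatives of $F_{N;\th}$: assuming LLN, one solves level by level for the pure partials $\partial_{x_1}^l F_{N;\th}|_{\setzeroes}$ in terms of $m_k$ (yielding (a)), while the factorization $\E\prod_i p_{k_i}^N\to\prod m_{k_i}$ forces the mixed partials to vanish in the limit (yielding (b)). The main obstacle will be the decoupling induction itself: condition (b) is precisely what ensures that cross-variable terms generated by $\D_1$ either vanish under the prefactor $\th\to 0$ (on bounded index sets) or aggregate into $\ga d$ after summation over $N-r$ indices. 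Careful coefficient-wise bookkeeping that simultaneously preserves the inductive hypothesis at every slice dimension $r\ge 1$ is the delicate part; once this is in place, matching the limiting recursion with the formal power series definition of $\Tcm$ is immediate.
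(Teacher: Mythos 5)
Your plan follows essentially the same route as the paper: reduce everything to repeated applications of Dunkl operators to $\exp(F)$, track Taylor coefficients of $F$ at the origin, and invert a triangular structure for the converse. The recursion $Y_{k+1}=\D_1 Y_k + Y_1 Y_k$ is precisely a repackaging of the Leibniz rules \eqref{eq_Leibnitz}--\eqref{eq_Leibnitz_dif}, and your slice-induction is a clean way to execute, for one-row partitions, the bookkeeping that the paper's Theorem~\ref{theorem_operators_expansion} does directly in terms of the coefficients $c^\mu_F$. The single-moment forward direction is essentially complete: the splitting of $\sum_{j\ne 1}$ into $j\le r$ (bounded divided differences killed by $\theta\to 0$) and $j>r$ (with $\theta(N-r)\to\gamma$ aggregating to $\gamma d$) is the same mechanism as the paper's Claim~A and equation~\eqref{eq_x2}.

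The two places you flag as ``delicate'' are indeed where the real work is, and they are not yet done. For joint moments, the decoupling $\D_1^{k_1}\cdots\D_s^{k_s}G|_{\setzeroes}\to\prod_i m_{k_i}$ requires, at every slice level $r\ge s$, a simultaneous statement that the converging object factorizes as $\prod_i y_{k_i}(x_i)$; one needs to show that the single transpositions $s_{ij}$ among the distinguished indices $1,\dots,s$ are killed by their $\theta$ prefactor (a bounded number of terms), while the $O(N)$ transpositions into undistinguished indices aggregate separately in each of the $s$ variables. This is exactly the content of the remainder terms $R_1$ (monomials with some $c^\nu_F$ of $\ell(\nu)>1$) and $N^{-1}R_2$ in Theorem~\ref{theorem_operators_expansion}, and writing it out carefully in your language would take roughly the same effort as Propositions~\ref{Proposition_highest_derivatives}--\ref{Proposition_LLN_leading_term}.

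The reverse direction, as sketched, is not sufficient. Your statement that ``the same recursion is triangular'' and that the LLN factorization ``forces the mixed partials to vanish'' gives the intuition but does not actually control the coefficients $c^\mu_F$ with $\ell(\mu)>1$. Only the one-row recursion $Y_k$ is available in your setup, and at a fixed Taylor degree $k$ the pure derivative $\partial_{x_1}^k F$ appears in the same equation as all the $c^\mu_F$ with $|\mu|=k$, $\ell(\mu)>1$: a single equation cannot determine them all. The paper resolves this by exploiting \emph{all} $p(k)$ expressions $N^{-\ell(\lambda)}\prod_i\P_{\lambda_i}G|_{\setzeroes}$, $\lambda\vdash k$, and proving the resulting $p(k)\times p(k)$ matrix $B^{N,\theta}=(b^\lambda_\mu)$ is asymptotically triangular (in the column-length lexicographic order) with nonvanishing diagonal limits \eqref{eq_diagonal_asymptotics}, so it can be inverted against the known right-hand sides. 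In your language this amounts to running the recursion simultaneously for all the quantities $Z_{k_1,\dots,k_s}$, $\sum k_i=k$, and establishing the same triangular structure; that step is missing and is the essential content of the equivalence, not just a formality.
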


\medskip

The proof of this theorem is given later in Section \ref{sec_proof_LLN} below.

Our next results describe in more detail the map $\Tcm$ from Definition \ref{def_R_map}.

\subsection{Combinatorial formula for the map $\Tcm$}\label{sec_Tcm}

From Definition \ref{def_R_map}, we are able to obtain the values of $m_k$ by doing  calculations with formal power series and isolating the constant term of the resulting expansion.
For example, for $k=1, 2, 3,4$, the resulting formulas are the following:
\begin{equation}\label{ex_c_to_m}
\begin{aligned}
m_1 &= \ka_1,\\
m_2 &= (\ga+1)\ka_2 + \ka_1^2,\\
m_3 &= (\ga+1)(\ga+2)\ka_3 + 3(\ga+1)\ka_2\ka_1 + \ka_1^3,\\
m_4&=(\ga+1)(\ga+2)(\ga+3)\ka_4+4(\ga+1)(\ga+2)\ka_3 \ka_1+(\ga+1)(2\ga+3)\ka_2^2+6(\ga+1)\ka_2 \ka_1^2+\ka_1^4.
\end{aligned}
\end{equation}

However, the defining formula \eqref{eq_moments_through_f_cumulants} is not explicit enough and becomes  complicated when $k$ is large.
Our next main theorem is a simpler combinatorial formula that expresses $m_k$ as a polynomial of the variables $\ka_1, \ka_2, \cdots, \ka_k$.
To state it, we need some terminology.

For any $k\in\Z_{\geq 1}$, denote $[k] := \{1, 2, \dots, k\}$.
A \emph{set partition} $\pi$ of $[k]$ is an (unordered) collection of  pairwise disjoint nonempty subsets of $[k]$ such that $[k] = B_1\cup\dots\cup B_m$.
The subsets $B_1, \dots, B_m$ are called the \emph{blocks} of the set partition $\pi$ and we use the notation $\pi = B_1\sqcup \dots\sqcup B_m$.
The cardinalities of the blocks are denoted $|B_1|, \dots, |B_m|$.
We denote the collection of all set partitions of $[k]$ by $\PP(k)$.
Given a set partition $\pi$, we denote by $\#(\pi)$ its number of blocks. For example, there are seven set partitions of $[4]$ with two blocks; they are:
$$
 \{1\}\sqcup \{2,3,4\},\quad \{1,3,4\}\sqcup \{2\},\quad \{1,2,4\}\sqcup\{3\},\quad \{1,2,3\}\sqcup\{4\},
$$
$$
 \{1,2\}\sqcup \{3,4\},\quad \{1,3\}\sqcup \{2,4\},\quad \{1,4\}\sqcup \{2,3\}.
$$
We also use the Pochhammer symbol notation:
\begin{equation*}
(x)_q := \begin{cases}
x(x+1)\cdots (x+q-1), &\text{ if }q\in\Z_{\ge 1},\\
1, &\text{ if }q=0. \end{cases}
\end{equation*}

\begin{df}\label{W_def}
For any $\pi\in\PP(k)$ and $\gamma\in\mathbb R$, define the quantity $W(\pi)$, that will be called the \emph{$\ga$-weight of $\pi$}, as follows\footnote{We omit the dependence on $\gamma$ from the notation $W(\pi)$.}.
Let $m = \#(\pi)$ and label the blocks of $\pi$ by $B_1, \cdots, B_m$ in such a way that the smallest element from $B_i$ is smaller than all elements from $B_j$, whenever $i<j$.
That is, if the blocks are $B_i = \{b^i_1 < \dots < b^i_{|B_i|}\}$, then $b^1_1 < b^2_1 < \cdots < b^m_1$.
For each $i\in\{1, \cdots, m\}$, define $p(i)$ as the number of indices $j\in\{1, \dots, |B_i| - 1\}$ such that $\{b^i_j + 1, \dots, b^i_{j+1} - 1\}\cap B_t \neq\emptyset$ for some block $B_t$ with $t < i$, and set $q(i) := |B_i|-1-p(i)$. Then define
\begin{equation}\label{W_formula}
W(\pi) := \prod_{i=1}^m\Bigl( p(i)!\cdot (\ga+p(i)+1)_{q(i)} \Bigr).
\end{equation}
\end{df}

For a set partition $\pi = B_1\sqcup \dots\sqcup B_m$ of $[k]$, we can think of the quantity $p(i)!\cdot (\ga+p(i)+1)_{q(i)}$ as a weight associated to the block $B_i$.
Therefore the $\ga$-weight $W(\pi)$ is the product of all weights of the blocks of $\pi$.
The weight of a block $B_i$ depends on the integer $p(i)$, whose computation can be visualized through a geometric procedure involving arc diagrams:
\begin{itemize}
 \item Draw each block $B_j=(b_1^j<b_2^j<\dots<b_{r}^j)$ as an arc with $r$ vertical legs at positions $b_a^j$, $a=1,\dots,r$ and with $r-1$ horizontal roofs joining adjacent legs at height $m+1-j$.
 \item $p(i)$ is the number of roofs in $B_i$, which intersect legs of other blocks. Note that each roof is counted only once, no matter how many legs it intersects.
\end{itemize}

Let us provide several examples. First, consider set partition $\{1, 2, 5, 7\}\sqcup \{3, 4, 6\}\in\PP(7)$ corresponding to the following arc diagram:
\begin{figure}[H]
\centering
\includegraphics[width=0.3\linewidth]{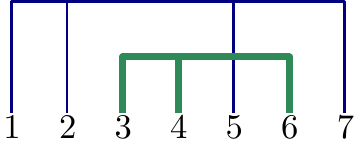}
\caption{Set partition $\{1, 2, 5, 7\}\sqcup \{3, 4, 6\}$}
\label{fig_1}
\end{figure}
\noindent The blocks are labeled $B_1 = \{1, 2, 5, 7\}$ and $B_2 = \{3, 4, 6\}$. We have $p(1) = 0$, $q(1) = 3$, $p(2) = 1$, $q(2) = 1$, and therefore
$$W(\pi) = 0!\cdot (\ga+1)_3\cdot 1!\cdot (\ga+2)_1 = (\ga+1)(\ga+2)^2(\ga+3).$$
For a different example, consider set partition $\{1, 4\}\sqcup\{2, 6\}\sqcup\{3, 5, 7\}\in\PP(7)$ corresponding to the following arc diagram:
\begin{figure}[H]
\centering
\includegraphics[width=0.3\linewidth]{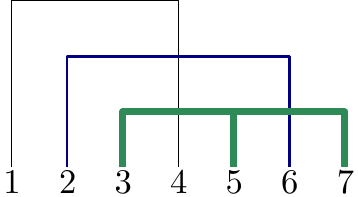}
\caption{Set partition $\{1, 4\}\sqcup\{2, 6\}\sqcup\{3, 5, 7\}$}
\label{fig_2}
\end{figure}
\noindent The blocks are labeled $B_1 = \{1, 4\}$, $B_2 = \{2, 6\}$, and $B_3 = \{3, 5, 7\}$.
We have $p(1) = 0$, $q(1) = 1$, $p(2) = 1$, $q(2) = 0$, $p(3) = 2$, $q(3) = 0$, and therefore
$$W(\pi) = 0!\cdot (\ga+1)_1\cdot 1!\cdot (\ga+2)_0\cdot 2!\cdot (\ga+3)_0 = 2(\ga+1).$$
For the final example, consider set partition  $\{1,3,4,5,6\}\sqcup\{2,7\}\in\PP(7)$ corresponding to the following arc diagram:
\begin{figure}[H]
\centering
\includegraphics[width=0.3\linewidth]{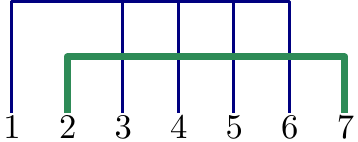}
\caption{Set partition $\{1, 3, 4, 5, 6\}\sqcup\{2, 7\}$}
\label{fig_3}
\end{figure}
\noindent The blocks are labeled $B_1=\{1,3,4,5,6\}$ and $B_2=\{2,7\}$. We have $p(1)=0$, $q(1)=4$, $p(2)=1$, $q(2)=0$, and therefore
$$
 W(\pi)=0! \cdot (\gamma+1)_{4} \cdot 1!\cdot (\gamma+1)_0= (\gamma+1)(\gamma+2)(\gamma+3)(\gamma+4).
$$
Let us also mention two useful properties which directly follow from the definition of $p(i)$:
\begin{itemize}
\item $p(1)=0$.
\item If $|B_i|=1$, then $p(i)=q(i)=0$.
\end{itemize}

\medskip

We have introduced all notations and can now state the main theorem of this section:

\begin{thm}[$\ga$--cumulants to moments formula]\label{theorem_cumuls_moms}
Let $\{m_k\}_{k\ge 1}$ and $\{\ka_l\}_{l\ge 1}$ be real sequences that are related by $\{m_k\}_{k\ge 1} = \Tcm(\{\ka_l\}_{l\ge 1})$.
Let $k\in\Z_{\ge 1}$ be arbitrary. Then
\begin{equation*}
m_k = \sum_{\pi\in\PP(k)}{W(\pi)\prod_{B\in\pi}{\ka_{|B|}}}.
\end{equation*}
\end{thm}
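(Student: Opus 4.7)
My plan is to expand the operator and then identify the resulting sum with the set partition formula.

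\textbf{Operator expansion.} I would first expand $(\partial + \ga d + *_g)^{k-1}$ as a sum over words of length $k-1$ in $\{\partial, \ga d, *_g\}$. Each word is specified by the positions $S \subseteq \{2, \dots, k\}$ of its $*_g$ letters, with $R := \{2, \dots, k\} \setminus S$ and $m := |S| + 1$. The $m - 1$ applications of $*_g$ (together with the initial $g$) introduce $m$ copies of $g(z) = \sum_{l \ge 1}\ka_l z^{l-1}$, from which by linearity I pick one term $\ka_{l_i} z^{l_i - 1}$ per copy. Since $\partial(c z^N) = N c z^{N-1}$ and $\ga d(c z^N) = \ga c z^{N-1}$, extracting the $z^0$ coefficient forces $\sum_{i=1}^m l_i = k$, and summing the binary choice of $\partial$ versus $\ga d$ at each $r \in R$ yields
\begin{equation*}
m_k \;=\; \sum_{(S, \mathbf{l})}\Bigl(\prod_{i=1}^m \ka_{l_i}\Bigr)\prod_{r \in R}(N_{r-1} + \ga),
\end{equation*}
where $N_{r-1} := \sum_{i \le c(r)} l_i - (r-1)$ is the $z$-degree just before position $r$, with $c(r)$ the number of copies present at that moment. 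Crucially, $N_{r-1}$ depends only on $(S, \mathbf{l})$.

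\textbf{Matching to set partitions.} Each $\pi \in \PP(k)$ corresponds to unique data $(S, \mathbf{l})$ through its block minima and sizes, so the theorem reduces to the identity
\begin{equation*}
\prod_{r \in R}(N_{r-1} + \ga) \;=\; \sum_{\pi} W(\pi) \qquad (\star)
\end{equation*}
where the sum is over $\pi \in \PP(k)$ with the prescribed $(S, \mathbf{l})$. Rather than attacking $(\star)$ by a direct induction on $|R|$---which gets messy, because inserting a new position $r$ into a block $B_i$ changes $W$ by one of two ``local ratios'' $R_F(p,q) = (p+1)(\ga+p+q+1)/(\ga+p+1)$ (new roof full) or $R_E(p, q) = \ga + p + q + 1$ (empty), which are not clean polynomial weights per block---I would strengthen the inductive hypothesis to a combinatorial description of the full power series $M_k(z) := (\partial + \ga d + *_g)^{k-1}(g(z))$. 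Specifically, I would express $[z^j] M_k(z)$ as a weighted sum over ``partial set partitions'' of $[k]$ with $j$ unfilled slots, and verify the operator recurrence $M_k = (\partial + \ga d + *_g) M_{k-1}$ directly at this combinatorial level: the three summands act on a partial partition with $j$ slots cleanly, with $\partial$ filling one of those slots (producing a factor of $j$), $\ga d$ filling a slot with a $\ga$ marker, and $*_g$ opening a new block of some size $l$ with weight $\ka_l$ and $l - 1$ fresh slots.

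\textbf{Main obstacle.} The bulk of the work lies in designing the right notion of ``partial set partition'' together with an extended weight $\widetilde W$ that specializes to $W(\pi)$ when $j = 0$ and which satisfies the three operator-action rules above. This amounts to a reorganisation of the full-versus-empty roof bookkeeping along the time axis of the operator iteration, rather than block-by-block; the key check is that the $\partial$ rule (which, under Leibniz, should naturally distribute across already-opened blocks) reproduces exactly the factorial $p(i)!$ piece of $W$ for full roofs, while the $\ga d$ rule absorbs the Pochhammer $(\ga + p(i) + 1)_{q(i)}$ piece for empty roofs. Once the combinatorial recurrence is verified at the level of $\widetilde W$, setting $j = 0$ and inducting on $k$ gives the theorem.
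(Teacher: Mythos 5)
Your operator-expansion setup is sound as far as it goes, and the reduction to the identity $(\star)$ correctly isolates the combinatorial heart of the theorem. However, two issues remain, one minor and one fatal to the proof as written.

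First, a small but real gap in the expansion itself: you assert $\ga d(c z^N) = \ga c z^{N-1}$ and deduce the per-step factor $(N_{r-1} + \ga)$, but the lowering operator $d$ satisfies $d(cz^0) = 0$, not $\ga c z^{-1}$. Thus the factor at step $r$ is $(N_{r-1}+\ga)$ only when $N_{r-1}\ge 1$, and is $0$ when $N_{r-1}=0$. Your sum must therefore be restricted to data $(S,\mathbf l)$ with $N_{r-1}\ge 1$ for all $r\in R$, and for the matching with set partitions you would additionally need to argue that this positivity condition is \emph{equivalent} to $(S,\mathbf l)$ arising from a genuine $\pi\in\PP(k)$ (e.g.\ $S=\{3\}$, $l_1=1$, $l_2=2$ for $k=3$ satisfies $\sum l_i = k$ but hits $N_1=0$ and corresponds to no valid partition). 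This is fixable but currently unaddressed.

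Second, and more seriously, the entire content of the theorem is precisely the identity $(\star)$, and you explicitly leave it unproven. You observe that direct induction on $|R|$ is awkward and propose instead a stronger inductive hypothesis describing $[z^j]M_k(z)$ via ``partial set partitions'' with an extended weight $\widetilde W$ satisfying three operator-action rules, but then state that ``the bulk of the work lies in designing the right notion of partial set partition together with an extended weight $\widetilde W$\dots'' In other words, the crucial combinatorial construction and the verification of the recurrence are announced but not carried out. A proof proposal that defers the central lemma is not yet a proof.

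For comparison, the paper's route is genuinely different and does complete the argument. Rather than tracking the full power series $[z^j]M_k(z)$, the paper strengthens the statement in a different direction: it introduces a second free sequence $a_1, a_2, \dots$, defines a \emph{refined} weight $w(\pi) = W(\pi)\cdot \ka_{|B_1|}\prod_{i\ge 2} a_{|B_i|}$ (Theorem~\ref{refined_blocks}), and proves $[z^0](\pa + *_a + \ga d)^{k-1}(g(z)) = \sum_\pi w(\pi)$ by induction on $k$. Having two independent sequences lets the induction unroll cleanly: one computes the coefficients $b_n$ of $(\pa + *_a + \ga d)g$ explicitly, applies the inductive hypothesis with $b$ in place of $\ka$, and then matches terms via a map $\pi\to\tilde\pi$ (merge the blocks of $1$ and $2$, delete $2$), with the key technical input being a self-contained combinatorial identity on $0$-$1$ sequences and descents (Lemma~\ref{Lemma_descent_sum}). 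Your approach, if completed, would likely be of comparable difficulty, but the ``partial set partition'' bookkeeping still has to be invented; the paper's refinement with the auxiliary sequence $a$ is a concrete device that already does that job.
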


\medskip

The proof is presented in Section \ref{mom_cum_sec} below. In Section \ref{Section_semifree} we explain how in the limits $\gamma\to 0$ and $\gamma\to\infty$, Theorem \ref{theorem_cumuls_moms} turns into the expression of moments through classical cumulants and through free cumulants, respectively.

\subsection{ $\Tcm$ and its inverse $\Tmc$ through generating functions}
\label{Section_gen_functions}

The map $\Tcm: \{\ka_l\}_{l\ge 1}\mapsto\{m_k\}_{k\ge 1}$ is equivalent to relations of the form
\begin{equation}
\label{eq_mom_through_cum}
m_k = (\ga+1)_{k-1}\cdot \ka_k + \text{ certain polynomial in the variables }\ka_1,\dots, \ka_{k-1},
\end{equation}
Recursively using \eqref{eq_mom_through_cum}, each $\ka_l$ can be expressed as a polynomial in the variables $m_1, \cdots, m_l$.
In other words, the map $\Tcm$ has an inverse denoted by
$$
\Tmc := (\Tcm)^{-1}: \{m_k\}_{k\ge 1}\mapsto\{\ka_l\}_{l\ge 1}.
$$

For example, inverting the formulas in \eqref{ex_c_to_m} we get
\begin{equation}\label{ex_m_to_c}
\begin{aligned}
\ka_1 &= m_1,\\
\ka_2 &= \frac{1}{\ga+1}\bigl(m_2 - m_1^2\bigr),\\
\ka_3 &= \frac{1}{(\ga+1)_2}\bigl(m_3 - 3 m_2 m_1 + 2 m_1^3\bigr),\\
\ka_4 &= \frac{1}{(\ga+1)_{3}}\Biggl(m_4-4 m_3 m_1-\left[2+\frac{1}{\ga+1}\right] m_2^2 + \left[10+\frac{2}{\ga+1}\right] m_2 m_1^2 - \left[5+\frac{1}{\ga+1}\right] m_1^4\Biggr).
\end{aligned}
\end{equation}



One way to write the formulas connecting moments and cumulants in a compact form is through generating function:

\begin{thm}\label{thm:mom_cums2}
Let $\{m_k\}_{k\ge 1}$ and $\{\ka_l\}_{l\ge 1}$ be real sequences related by $\{\ka_l\}_{l\ge 1} = \Tmc(\{m_k\}_{k\ge 1})$.
Then
\begin{equation}\label{eq_cums_moments}
\exp\left( \sum_{l = 1}^{\infty}\frac{\ka_l y^l}{l} \right) =
[z^0]\left\{ \sum_{n=0}^{\infty}\frac{(yz)^n}{(\ga)_n} \cdot\exp\left( \ga\sum_{k=1}^{\infty}\frac{m_k}{k} z^{-k} \right)\right\}.
\end{equation}
Equivalently, \eqref{eq_cums_moments} can be rewritten as a combination of two identities involving an auxiliary sequence $\{c_n\}_{n\ge 0}$ through:
\begin{equation} \label{eq_cums_moments_2}
\begin{dcases}
 \exp\left(\sum_{l=1}^{\infty} \frac{\kappa_l}{l} z^l\right)=\sum_{n=0}^{\infty} \frac{c_n}{(\gamma)_n} z^n,\\
  \exp\left( \gamma \sum_{k=1}^{\infty} \frac{m_k z^k}{k}\right)=\sum_{n=0}^{\infty} c_n z^n.
\end{dcases}
\end{equation}

\end{thm}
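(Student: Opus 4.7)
The plan is to prove the equivalent two-equation form \eqref{eq_cums_moments_2} directly and deduce \eqref{eq_cums_moments} from it. The equivalence of the two forms is a short formal-series manipulation: in \eqref{eq_cums_moments}, the factor $\sum_{n \ge 0} (yz)^n/(\ga)_n$ contains only nonnegative powers of $z$, while $\exp\bigl(\ga\sum_k m_k z^{-k}/k\bigr)$ contains only nonpositive powers of $z$, so $[z^0]$ of their product equals $\sum_n c_n y^n/(\ga)_n$ where $c_n = [w^n]\exp\bigl(\ga\sum_k m_k w^k/k\bigr)$. This matches the first equation of \eqref{eq_cums_moments_2} once $c_n$ is defined through the second.

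To prove \eqref{eq_cums_moments_2}, I would take $\{\ka_l\}_{l\ge 1}$ arbitrary and set $\{m_k\}_{k\ge 1} := \Tcm(\{\ka_l\}_{l\ge 1})$. Define $c_n := (\ga)_n\,[y^n]\exp(\sum_l \ka_l y^l/l)$, so that the first equation of \eqref{eq_cums_moments_2} holds by construction; it remains to show $\sum_n c_n z^n = \exp\bigl(\ga\sum_k m_k z^k/k\bigr)$. Setting $F(y) := \sum_n c_n y^n/(\ga)_n$, the ODE $F'(y) = g(y)F(y)$ with $g(y) = \sum_l \ka_l y^{l-1}$, combined with the Pochhammer identity $(\ga)_{n+1}/(\ga)_{n+1-l} = (\ga+n+1-l)_l$, yields the recursion
\begin{equation*}
(n+1) c_{n+1} = \sum_{l=1}^{n+1} \ka_l (\ga+n+1-l)_l\, c_{n+1-l}.
\end{equation*}
Meanwhile, the target identity is equivalent, via logarithmic differentiation, to
\begin{equation*}
(n+1) c_{n+1} = \ga\sum_{j=1}^{n+1} m_j\, c_{n+1-j}, \qquad c_0 = 1.
\end{equation*}
Hence the theorem reduces to verifying the polynomial identity
\begin{equation*}
\ga\sum_{j=1}^{n+1} m_j\, c_{n+1-j} \;=\; \sum_{l=1}^{n+1} \ka_l (\ga+n+1-l)_l\, c_{n+1-l}, \qquad n \ge 0. \qquad (\diamond)
\end{equation*}

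The main obstacle is $(\diamond)$. The cleanest route is to invoke Theorem \ref{theorem_cumuls_moms}, substituting $m_j = \sum_{\pi \in \PP(j)} W(\pi)\prod_{B \in \pi} \ka_{|B|}$, together with the classical exponential formula applied to the definition of $c_i$, giving $c_i = \frac{(\ga)_i}{i!}\sum_{\pi \vdash [i]} \prod_{B \in \pi}(|B|-1)!\,\ka_{|B|}$. Both sides of $(\diamond)$ then become polynomials in the $\ka_l$'s whose coefficients are weighted sums over nested pairs of set partitions of $[n+1]$, built from products of $\ga$-weights $W(\pi)$ and Pochhammer factors. Matching monomials in the $\ka_l$'s reduces $(\diamond)$ to a combinatorial identity on arc diagrams, which I would verify by induction on $n$ using the recursive structure of $W$ under removal of the block containing a distinguished element (e.g.\ the largest). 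An alternative that bypasses Theorem \ref{theorem_cumuls_moms} is to verify $(\diamond)$ directly from the operator definition $m_k = [z^0] T^{k-1} g(z)$ of Definition \ref{def_R_map} by unwinding the iteration of $T = \pa + \ga d + *_g$, though the resulting combinatorics is of comparable difficulty.
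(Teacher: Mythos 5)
Your reduction to the identity $(\diamond)$ is correct: the derivation of the two recursions via logarithmic differentiation, the Pochhammer quotient $(\ga)_{n+1}/(\ga)_{n+1-l} = (\ga+n+1-l)_l$, and the equivalence between the single- and two-equation forms of the theorem are all sound. However, the entire substance of the theorem is concentrated in $(\diamond)$, and you do not prove it; you only name two strategies. The first --- substitute the combinatorial formula of Theorem~\ref{theorem_cumuls_moms} for $m_j$ and the exponential formula for $c_i$, then ``match monomials in the $\ka_l$'s'' --- turns $(\diamond)$ into an identity between weighted double sums over pairs of set partitions, and the promised ``induction on $n$ using the recursive structure of $W$ under removal of the block containing a distinguished element'' is exactly the kind of delicate arc-diagram argument that occupies all of Section~\ref{mom_cum_sec} of the paper for Theorem~\ref{theorem_cumuls_moms} itself; nothing in your sketch indicates how the Pochhammer factors $(\ga+n+1-l)_l$ absorb into the $\ga$-weights $W(\pi)$ of the merged partitions. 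The second strategy (unwinding $T^{k-1}g$) you concede is no easier. This is a genuine gap: the key combinatorial lemma is asserted, not proved.

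For context, your route is genuinely different from the paper's, which does not invoke Theorem~\ref{theorem_cumuls_moms} at all. The paper first treats the case in which $\{m_k\}$ are moments of a compactly supported probability measure: it approximates $\mu$ by atomic measures, applies Theorem~\ref{thm_small_th} to the Dirac delta $\mu_N$ at $(a_1^{(N)},\dots,a_N^{(N)})$ (whose BGF is the multivariate Bessel function itself), and combines this with the explicit single-variable expansion $B_{(a_1,\dots,a_N)}(y,0^{N-1};\theta)=\sum_k c_k\, y^k/(\theta N)_k$ where $\sum_k c_k z^k = \prod_i(1-a_i z)^{-\theta}$; the limit $N\to\infty$, $\theta\to 0$, $\theta N\to\gamma$ then produces \eqref{eq_cums_moments_2} directly, and the general-sequence case follows by a Zariski-density argument on moment sequences of $l$-point atomic measures. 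If you completed the verification of $(\diamond)$, your proof would be a self-contained, purely formal-series argument that bypasses both Theorem~\ref{thm_small_th} and the analytic machinery --- arguably a more elementary route. As written, though, it stops at the decisive step.
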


As we explain in Section \ref{Section_semifree}, in the limit $\gamma\to 0$, the statement of Theorem \ref{thm:mom_cums2} turns into the well-known formula expressing the generating function of (classical) cumulants as a logarithm of the generating function of moments (equivalently, of the characteristic function of a random variable). On the other hand, in the limit $\gamma\to\infty$, Theorem \ref{thm:mom_cums2} can be converted into the identification of the free cumulants with Taylor series coefficients of the Voliculescu $R$--transform of a probability measure.

A close examination of \eqref{eq_cums_moments_2} reveals an unexpected connection to the $d$--cumulants for the (additive) finite free convolution. We recall that the latter is a deterministic binary operation on $d$--tuples of real numbers, which was shown in \cite[Theorem 1.2]{GM} to be the $\theta\to\infty$ limit of the operation $(\mathbf a,\mathbf b)\mapsto \mathbf a +_\theta \mathbf b$ for fixed $N=d$. Generating functions and certain combinatorial formulas for $d$--cumulants were developed in \cite{Marcus}, \cite{AP}. Comparing with \cite{AP}, we observe a match under the following change in notations, where in the left column we use notations from \cite{AP} and in the right column we use notations from our work:
\begin{equation} \label{eq_match_in_notations}
\begin{split}
d &\longleftrightarrow -\gamma,\\
m_n &\longleftrightarrow  m_n,\\
\kappa_n &\longleftrightarrow \gamma^{n-1}\kappa_n,\\
a_n &\longleftrightarrow (-1)^{n} c_n.
\end{split}
\end{equation}
Indeed, under \eqref{eq_match_in_notations} the first formula of \eqref{eq_cums_moments_2} becomes \cite[(3.1) or (3.3)]{AP} and the second formula of \eqref{eq_cums_moments_2} becomes \cite[(4.2)]{AP}. Note that the symbol $(x)_n$ has the meaning $x(x-1)\dots(x-n+1)$ in \cite{AP}, which is different from the convention that we use.

It is important to emphasize that in our work $\gamma>0$, while in \cite{Marcus}, \cite{AP}, $d$ is a positive integer. Hence, using \eqref{eq_match_in_notations} we see that there are no values of parameters under which finite free cumulants coincide with our $\gamma$--cumulants. Instead, one family of cumulants should be treated as an analytic continuation of another. There are two consequences of this correspondence. First, Theorem \ref{theorem_cumuls_moms} translates into a new combinatorial formula for finite free cumulants. Second, \cite[Theorem 4.2]{AP} explains how the generating function identity equivalent to \eqref{eq_cums_moments_2} leads to transition formulas (involving double sums over set partitions) between moments and finite free cumulants and vice versa. Hence, substituting \eqref{eq_match_in_notations} we can obtain similar formulas between moments and our $\gamma$--cumulants.

\subsection{Generalized Markov-Krein transform}\label{MK_transform}

There is a way to recast the formulas of Theorem \ref{thm:mom_cums2} connecting them to a remarkable non-linear transformation of measures discussed in \cite{FaFo} (see also \cite{MP} and \cite{K}). We take the numbers $c_n$ from \eqref{eq_cums_moments_2} and replace them with
$$
 \tilde c_n = \frac{n!}{(\gamma)_n} c_n.
$$
Further, suppose that $m_k$ are moments of a compactly supported probability measure $\nu$ and $\tilde c_n$ are moments of a compactly supported probability measure $\mu$:
$$
 m_k=\int_{\mathbb R} x^k \nu(dx),\qquad \tilde c_n = \int_{\mathbb R} x^n \mu(dx).
$$
Then the second identity of \eqref{eq_cums_moments_2} can be recast as
\begin{equation}
\label{eq_cums_moments_recast}
  \exp\left( -\gamma \int_{\mathbb R} \ln(z-x) \nu(dx) \right)= \int_{\mathbb R} \frac{1}{(z-t)^\gamma} \mu(dt), 
\end{equation}
where the equivalence of \eqref{eq_cums_moments_recast} with \eqref{eq_cums_moments_2} can be seen by assuming $z$ to be large and expanding the integrals into $1/z$ power series. It is proven in \cite{FaFo} that for any probability measure $\nu$ with $\int_{\mathbb R} \ln(1+|x|) \nu(dx) <\infty$ (in particular, compact support is not necessary), there exists another probability measure $\mu$, such that the identity \eqref{eq_cums_moments_recast} holds. For $\gamma=1$ the correspondence \eqref{eq_cums_moments_recast} and its relatives were popularized in the context of asymptotic problems by Kerov (see \cite{K}, \cite[Chapter VI]{Kerov_book}) under the name Markov--Krein transform; its origins go back to the studies of the solutions to the moment problems in the middle of the XX century. General $\gamma>0$ case was mentioned in \cite[Section 3.7 and 4.1]{K} and further discussed in \cite{FaFo}. In our setting the correspondence \eqref{eq_cums_moments_2} is useful because the first identity of \eqref{eq_cums_moments_2} is recast in terms of the measure $\mu$ as
$$
\sum_{l=1}^{\infty} \frac{\kappa_l}{l} z^l= \ln\left(\int_{\mathbb R} \exp(tz) \mu(dt)\right).
$$
Therefore, up to multiplication by $(l-1)!$, the $\gamma$--cumulants $\kappa_l$ of the measure $\nu$ are classical cumulants of the measure $\mu$ (we recall the definition of the classical cumulants in Section \ref{Section_limit_to_0}).

In a sense, the correspondence $\nu\leftrightarrow \mu$ of \eqref{eq_cums_moments_recast} reduces $\gamma$--cumulants (and all operations based on them) to classical cumulants. However, a difficulty in efficiently using this point of view is that the correspondence is highly non-linear and its properties are mostly unknown. For instance, describing all measures $\mu$, which can appear in the right-hand side of \eqref{eq_cums_moments_recast} is an open question.

\section{Applications}

\label{Section_applications}

In this section we list several applications of the general theorems from Section \ref{Section_main_results}.

\subsection{Law of Large Numbers for Gaussian $\beta$ ensembles}
\label{Section_GbE}

For each $N\ge 1$, let $\mu_{N,\,\th}$ be the $N$--particle \emph{Gaussian $\beta$ ensemble} with parameter $\beta = 2\th$ --- this is a probability distribution on $N$-tuples of real numbers $a_1\le a_2\le\cdots\le a_N$ with density proportional to
\begin{equation}
\label{eq_x29}
  \prod_{1\le i < j\le N}{\!\!\!\!(a_j - a_i)^{2\theta}}\, \prod_{i=1}^N{e^{-a_i^2/2}}.
\end{equation}
The eigenvalue distributions of the celebrated Gaussian Orthogonal/Unitary/Symplectic ensembles of random matrices are given by \eqref{eq_x29} at $\theta=\tfrac{1}{2}/1/2$, respectively.

To state the result of this subsection, we need a few definitions.
Denote by $\mathscr{M}(k)$ the collection of all \emph{perfect matchings} of $[k]$, that is, the collection of set partitions of $[k]$ where each block has size $2$.
$\mathscr{M}(k)$ is empty if $k$ is odd, and if $k = 2m$ is even, then $\mathscr{M}(k)$ has cardinality $(2m-1)!! = (2m-1)(2m-3)\cdots 3\cdot 1$.
Any perfect matching $\pi = \{B_1, \cdots, B_m\}$ of $[2m]$ is also a set partition of $[2m]$, so we can draw its arc diagram, as described in Section \ref{sec_Tcm}.
Denote by roof$(\pi)$ the number of roofs that do not intersect some leg.
Roof$(\pi)$ is an integer between $1$ and $m$, and roof$(\pi) = m$ if and only if the perfect matching $\pi$ is non-crossing, see Figure \ref{fig_roofs} for an illustration.

\begin{figure}[H]
\centering
\includegraphics[width=0.98\linewidth]{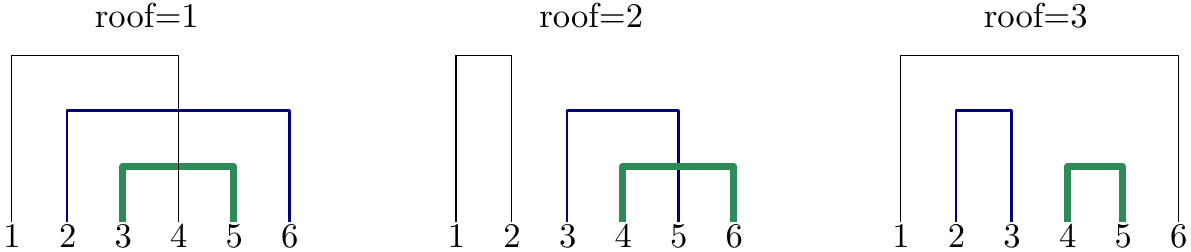}
\caption{Perfect matchings of $[6]$ with three possible values of roof$(\pi)$}
\label{fig_roofs}
\end{figure}

Finally, consider the empirical measures
$$
\rho_{N\!,\,\th} := \frac{1}{N}\sum_{i=1}^N{\delta_{a_i}},\qquad (a_1, \cdots, a_N) \text{ is $\mu_{N,\th}$--distributed}.
$$

\begin{thm}\label{thm_Gauss}
As $N\to\infty$, $\th\to 0$, $\th N\to\ga$, the (random) measures $\rho_{N\!,\,\th}$ converge weakly, in probability to a deterministic probability measure $\mu_\ga$ which is uniquely determined by its moments:
\begin{equation}\label{moms_Gaussian}
\int_{-\infty}^{\infty}{x^k \mu_\ga(dx)} = \sum_{\pi\in\mathscr{M}(k)}{\!\!\!(\ga+1)^{\mathrm{roof}(\pi)}},
\end{equation}
which is set to be $0$ for odd $k$.
\end{thm}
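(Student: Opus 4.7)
The plan is to apply Theorem~\ref{thm_small_th} combined with the combinatorial moment formula of Theorem~\ref{theorem_cumuls_moms}; the main input is an explicit form of the BGF. I would first establish the Gaussian self-reproducing identity
\begin{equation*}
G_{N;\th}(x_1,\dots,x_N)=\exp\Bigl(\tfrac{1}{2}\sum_{i=1}^N x_i^2\Bigr),
\end{equation*}
which is a classical property of the (symmetric) Dunkl transform. One can verify this by direct integration (using the Andreief identity at $\th=1$ together with analytic continuation in $\th$) or, more economically, by computing $(\prod_i \P_{k_i})G_{N;\th}\bigl|_{x_1=\cdots=x_N=0}$ through Proposition~\ref{proposition_moments_through_operators} and matching with the known power-sum expectations under the density \eqref{eq_x29}. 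The super-exponential decay of \eqref{eq_x29} ensures that $\mu_{N,\th}$ is exponentially decaying with any exponent $R>0$, so Lemma~\ref{bgf_good} and the framework of Section~\ref{Section_BGF} apply.

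With the BGF in hand, $\ln G_{N;\th}=\tfrac12\sum_{i=1}^N x_i^2$, so $\tfrac{\pa^2}{\pa x_i^2}\ln G_{N;\th}\bigl|_{x_1=\cdots=x_N=0}=1$, while all other single-variable and mixed partial derivatives at the origin vanish, uniformly in $N$ and $\th$. Matching Definition~\ref{Definition_LLN_appr_ht}, the sequence $\{\mu_{N,\th}\}_{N\ge 1}$ is $\ga$-LLN--appropriate with
\begin{equation*}
\ka_2=1,\qquad \ka_l=0\ \text{ for all } l\ne 2.
\end{equation*}
Theorem~\ref{thm_small_th} then furnishes LLN--satisfaction with moments $\{m_k\}_{k\ge 1}=\Tcm(\{\ka_l\}_{l\ge 1})$, and Theorem~\ref{theorem_cumuls_moms} collapses, since only set partitions all of whose blocks have size $2$ can contribute, to
\begin{equation*}
m_k=\sum_{\pi\in\mathscr{M}(k)} W(\pi).
\end{equation*}

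It remains to identify $W(\pi)=(\ga+1)^{\mathrm{roof}(\pi)}$ for $\pi\in\mathscr{M}(k)$. For each block $B_i$ one has $|B_i|-1=1$, forcing $p(i)+q(i)=1$; the block weight in \eqref{W_formula} is $\ga+1$ if $p(i)=0$ and $1$ if $p(i)=1$. In the arc diagram convention (block $B_j$ drawn at height $m+1-j$), only legs of \emph{earlier} blocks extend geometrically above $B_i$'s roof height, whereas legs of later blocks stop below it. Hence $p(i)=0$ is precisely the condition that no leg of the diagram crosses the roof of $B_i$, and summing over $i$ matches the definition of $\mathrm{roof}(\pi)$, giving the desired formula \eqref{moms_Gaussian}. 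Weak convergence in probability of $\rho_{N,\th}$ to a unique measure $\mu_\ga$ follows from Remark~\ref{rem_uniqueness} together with the a priori bound $m_{2k}\le (2k-1)!!(\ga+1)^k$, which forces sub-Gaussian tails and determinacy of the Hamburger moment problem. The only substantive step is the BGF identity in Step~1; once it is in hand, the rest reduces to the general framework plus careful bookkeeping of the height convention needed to identify $\{i:p(i)=0\}$ with $\mathrm{roof}(\pi)$.
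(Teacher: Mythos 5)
Your proposal is correct and matches the paper's own proof essentially step by step: the Gaussian self-reproducing BGF identity (which the paper isolates as Lemma~\ref{hermite_bgf}), identification of the single nonzero $\ga$-cumulant $\ka_2=1$, reduction of Theorem~\ref{theorem_cumuls_moms} to perfect matchings with $W(\pi)=(\ga+1)^{\mathrm{roof}(\pi)}$, and Carleman-type bounds for moment determinacy. The only cosmetic difference is your sketch of how to verify the BGF identity (Andreief at $\th=1$ plus analytic continuation, or operator matching), whereas the paper derives it from a degeneration of the Cauchy identity for Jack polynomials; since the paper itself flags this identity as folklore with several known derivations, that is not a discrepancy.
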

\begin{rem}
In our Theorem \ref{thm_Gauss}, the limiting measure $\mu_\ga$ is an analogue of Wigner's semicircle law from free probability theory and of the Gaussian distribution from classical probability, because the only nonzero $\ga$-cumulant is the second one. Similarly to these measures, $\mu_\ga$ is also present in a Central Limit Theorem with respect to the operation of $\ga$-convolution discussed in the next subsection, see \cite[Section 5.3]{MP}. In fact, $\mu_\gamma$ degenerates into these measures at special values of $\gamma$. Indeed, when $\ga = 0$ and $k = 2m$, the right-hand side of \eqref{moms_Gaussian} is equal to $|\mathscr{M}(2m)| = (2m-1)!!$, which coincides with the $(2m)$-th moment of the standard normal distribution.
When $\ga\to\infty$ and $k = 2m$, the right-hand side of \eqref{moms_Gaussian} (when divided by $\ga^m$) becomes the number of non-crossing perfect matchings of $[2m]$, which is the $m$-th Catalan number $C_m = \frac{1}{m+1}{2m \choose m}$. This is the $(2m)$-th moment of the standard Wigner's semicircle law.
\end{rem}

\begin{rem}
 While identification of $\mu_\ga$ through \eqref{moms_Gaussian} was not stated explicitly in the literature before, the LLN itself, i.e.\ existence of the limiting measures $\mu_\ga$ is known at least from \cite{ABG}.  \cite{BGP} provides other (more complicated) formulas for the moments of $\mu_\ga$, and in \cite{DS} the measure $\mu_\ga$ is identified with the mean spectral measure of a certain random Jacobi matrix.
\end{rem}
\begin{remark} \label{Remark_aHerm}
The measures $\mu_\ga$ were also previously studied by other authors without knowing about their connections to Gaussian $\beta$ ensembles.
Askey and Wimp \cite{AW} studied $\mu_\ga$ as an orthogonality measure for the \emph{associated Hermite polynomials} (\cite{D} obtained a formula for the moments that is equivalent to ours, and \cite{SZ}, \cite{BDEG} contain generalizations: see (4.7) in the former paper and Proposition 4.19 in the latter). Interestingly, the same polynomials also play a role in studying $\beta\to\infty$ limits of Gaussian $\beta$ Ensembles, see \cite[Section 4.3]{Gorin_Klept}. From another direction, Kerov \cite{K} studied $\mu_\ga$ in connection to the Markov-Krein transform and noticed that these measures interpolate between Gaussian and semicircle laws.
\end{remark}
\begin{proof}[Proof of Theorem \ref{thm_Gauss}]
By Lemma \ref{hermite_bgf} below, the Bessel generating function of $\mu_{N\!,\, \th}$ is
\begin{equation*}
G_N(x_1, \cdots, x_N; \th) = \exp\left( \frac{x_1^2 + \cdots + x_N^2}{2} \right).
\end{equation*}
It follows that $\{\mu_{N\!,\,\th}\}$ is $\ga$-LLN--appropriate with $\{\kappa_l\}_{l\ge 1}$ given by
\begin{equation}\label{gamma_cums_1}
\ka_l = \begin{cases}
1, & \text{if }l = 2,\\
0,& \text{otherwise}.
\end{cases}
\end{equation}

The corresponding sequence $\{m_k\}_{k\ge 1} = \Tcm(\{\ka_l\}_{l\ge 1})$ is given by the formula in Theorem \ref{theorem_cumuls_moms}.
Because the only nonzero $\ga$--cumulant $\ka_l$ is the one with $l = 2$, the summation for $m_k$ reduces from all set partitions of $[k]$ to all perfect matchings of $[k]$. In particular, $m_k = 0$ if $k$ is odd.
In the case that $k$ is even, say $k = 2m$, consider any perfect matching $\pi = \{B_1, \cdots, B_m\}$ of $[2m]$; each block $B_i$ has cardinality $2$, so $p(i)$ is $1$ if the roof of the arc $B_i$ intersects some leg in the arc-diagram of $\pi$, and otherwise $p(i)$ is $0$. As a result, the weight $W(\pi)$ in \eqref{W_formula} is equal to $(\ga+1)^{\text{roof}(\pi)}$.

Then Theorem \ref{thm_small_th} shows that the sequence $\{\mu_{N, \th}\}$ satisfies a LLN, and this proves the desired convergence in the statement of the theorem, see Remark \ref{rem_uniqueness}.
It remains to show that the right-hand sides of \eqref{moms_Gaussian} are the moments of a \emph{unique} probability measure. For this, we check the Carleman's condition: the moments problem for a sequence $\{\alpha_k\}_{k\ge 1}$ determines a unique probability measure if
\begin{equation} \label{eq_x30}
\sum_{m=1}^{\infty}{(\alpha_{2m})^{-1/(2m)}} = +\infty.
\end{equation}
Indeed, elementary bounds show
\begin{gather*}
\alpha_{2m} = \!\!\!\sum_{\pi\in\mathscr{M}(2m)}{\!\!\!(\ga+1)^{\text{roof}(\pi)}}
\le (\ga+1)^m \cdot |\mathscr{M}(2m)| = (\ga+1)^m \cdot (2m-1)!! \le (\ga+1)^m \cdot (2m)^m
\end{gather*}
and so $(\alpha_{2m})^{-1/(2m)} \ge \text{const}\cdot\frac{1}{\sqrt{m}}$, thus proving \eqref{eq_x30}.
\end{proof}

\begin{lemma}\label{hermite_bgf}
The Bessel generating function of the $N$--particle Gaussian $\beta$ ensemble with parameter $\beta = 2\th$
is
\begin{equation}
\label{eq_Hermite_BGF}
G_{\theta}(x_1, \cdots, x_N; \mu_{N\!,\,\th}) = \exp\left( \frac{x_1^2 + \cdots + x_N^2}{2} \right).
\end{equation}
\end{lemma}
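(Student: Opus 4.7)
The plan is to exploit the combinatorial formula \eqref{eq_Bessel_combinatorial} to interpret the BGF as the joint moment generating function of certain linear statistics of a Gaussian $\theta$-corners pattern, reducing the lemma to an i.i.d.\ Gaussian statement.

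First, substitute \eqref{eq_Bessel_combinatorial} into \eqref{BGF_def} and exchange the two integrals (justified by the exponential decay of the Gaussian weight \eqref{eq_x29}). This yields
\begin{equation*}
G_\theta(x_1,\dots,x_N;\mu_{N,\theta}) = \E\!\left[\exp\!\left(\sum_{k=1}^N x_k \Delta_k\right)\right],\qquad \Delta_k := \sum_{i=1}^k y^k_i - \sum_{j=1}^{k-1} y^{k-1}_j,
\end{equation*}
where the expectation is over the joint law on $\GT_N$ whose top row $(y^N_1,\dots,y^N_N)$ is sampled from $\mu_{N,\theta}$ and whose remaining rows form the $\theta$-corners process of Definition \ref{def_betacorner} with that top row. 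The desired identity $G_\theta=e^{|x|^2/2}$ is therefore equivalent to the statement that $\Delta_1,\dots,\Delta_N$ are i.i.d.\ standard Gaussians under this joint distribution, for then $\E\exp(\sum_k x_k\Delta_k) = \prod_{k=1}^N e^{x_k^2/2}$.

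I would prove the i.i.d.\ Gaussian claim by induction on $N$. The base case $N=1$ is immediate, since $\mu_{1,\theta}$ is $N(0,1)$ and $\Delta_1 = a$. For the inductive step, combining \eqref{eq_beta_corners_def}, \eqref{eq_normalization}, and \eqref{eq_x29} gives an explicit joint density for the whole pattern $\{y^k_i\}_{1\le i\le k\le N}$: the Vandermonde factor $\prod_{i<j}(a_j-a_i)^{2\theta}$ from \eqref{eq_x29} partially cancels against the inverse Vandermonde $\prod_{i<j}(a_j-a_i)^{2\theta-1}$ implicit in \eqref{eq_normalization}, leaving a single Vandermonde $\prod_{i<j}(a_j-a_i)$ together with $e^{-|a|^2/2}$ on the top row $a = (y^N_1,\dots,y^N_N)$. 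Changing variables so that $\Delta_N=\sum_i a_i-\sum_j y^{N-1}_j$ replaces one of the $a_j$'s, one integrates the top row out; the resulting factor in $\Delta_N$ should be a standard Gaussian density, while the remaining density should coincide with the $\mu_{N-1,\theta}$-weighted $\theta$-corners density on rows $1,\dots,N-1$. The inductive hypothesis then yields that $\Delta_1,\dots,\Delta_{N-1}$ are i.i.d.\ $N(0,1)$ independent of $\Delta_N$, closing the induction.

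The main obstacle is precisely this rank-reduction integral: showing that the top-row integration decouples into an $N(0,1)$ factor in $\Delta_N$ times the analogous $(N-1)$-level joint density. This amounts to a general-$\theta$ consistency of the Gaussian $\beta$-corners process, i.e.\ the $\theta$-deformation of the classical fact that the $(N-1)\times(N-1)$ top-left corner of a Gaussian matrix of type $\beta=1,2,4$ is itself a Gaussian ensemble of rank $N-1$ with an independent Gaussian ``new diagonal entry''. An alternative route that avoids the explicit density computation is to invoke R\"osler's theorem from Dunkl theory, which asserts that the (symmetric) Dunkl transform of $e^{-|a|^2/2}$ with weight $\prod_{i<j}|a_j-a_i|^{2\theta}$ is (up to the normalization $Z$) again $e^{-|x|^2/2}$; translating this statement through the definitions \eqref{BGF_def} and \eqref{eq_Bessel_combinatorial} and using that the Gaussian integrand is real gives \eqref{eq_Hermite_BGF} at once.
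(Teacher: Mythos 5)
Your approach differs substantially from the paper's, which takes a limit of the Cauchy identity for Jack polynomials (via the Jack measure degeneration of \cite{GSh} and the Jack-to-Bessel limit \eqref{limit_to_bessel}). Your reformulation of the lemma as the statement that the diagonal increments $\Delta_k = \sum_i y_i^k - \sum_j y_j^{k-1}$ of the Gaussian $\theta$-corners pattern are i.i.d.\ $N(0,1)$ is a legitimate and in fact illuminating restatement: since the BGF is literally $\E\exp(\sum_k x_k\Delta_k)$, the product form $\prod_k e^{x_k^2/2}$ is equivalent to that claim.

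However, the inductive argument you sketch has a real gap. The ``rank-reduction integral'' you identify as the main obstacle --- integrating out the top row and showing that the result decouples into $e^{-\Delta_N^2/2}$ times the $(N-1)$-level Gaussian $\theta$-corners density --- is essentially the general-$\theta$ consistency of the Gaussian $\beta$-corners process. This is a true but nontrivial fact; in the matrix cases $\theta=\tfrac12,1,2$ it comes for free from the existence of an actual $N\times N$ matrix model, but for general $\theta$ it needs an argument (e.g.\ via the Anderson/Dixon--Selberg integral that underlies \eqref{eq_normalization}, or via the tridiagonal model). You correctly flag this as unproven; what you have is a strategy rather than a proof, and closing the gap requires more or less the same work as other known derivations. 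Note also a cosmetic issue: the single surviving Vandermonde $\prod_{i<j}(a_j-a_i)$ in the top-row density is not sign-definite once one drops the ordering on the $a_i$, so the change of variables has to be carried out over ordered tuples or with absolute values tracked carefully.

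Your second route --- citing the Dunkl--Gaussian eigenfunction property (R\"osler/de Jeu) --- is sound and would make a fast proof, but you should mind the sign. The Dunkl transform that maps $e^{-|a|^2/2}$ to $e^{-|x|^2/2}$ uses the Fourier-type kernel with an $i$; the BGF \eqref{BGF_def} is the Laplace-type kernel (at $N=1$, $B_{(a)}(z;\theta)=e^{az}$, not $e^{iaz}$), so one gets $e^{+|x|^2/2}$ exactly as claimed, after substituting $x\mapsto ix$ and invoking the known analytic continuation of the MBF in its arguments. Saying ``the Gaussian integrand is real'' does not by itself explain the change of sign; a one-line analytic-continuation remark would fix this. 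With that repair, the citation route is complete and arguably cleaner than the paper's Jack-polynomial degeneration, at the cost of invoking a black-box result from Dunkl analysis rather than staying within the symmetric-function toolkit used elsewhere in the paper.
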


Identity \eqref{eq_Hermite_BGF} is a folklore and we do not claim any novelty. One can prove it by taking an appropriate limit of the Cauchy identity for Jack polynomials. Alternatively, the computation of the Bessel generating function for the Laguerre $\beta$ Ensemble is equivalent to \cite[Eqn. (3.1)]{Ner}, and \eqref{eq_Hermite_BGF} is then obtained by a limit transition. As yet another approach, Gaussian $\beta$ Ensembles can be identified with measures from \cite[Thm. 1.13 and Sec. 6]{AN} with a single non-zero parameter $\gamma_2$, and for those, the Bessel generating function is computed in \cite{AN}, see also \cite[Cor. 2.6]{OV} for the $\theta=1$ case. For completeness, let us sketch the first argument.

\begin{proof}[Sketch of the proof of Lemma \ref{hermite_bgf}]
We rely on the theory of symmetric functions, as presented in \cite[Ch. I and VI]{M}, see also \cite{Stanley_Jack}.
We denote by $\Lambda$ the real algebra of symmetric functions in infinitely many variables $x_1,x_2,\dots$, which is generated by the (algebraically independent) power sums $p_k=\sum_i (x_i)^k$, $k=1,2,\dots$. A distinguished linear basis of $\Lambda$ is given by the Jack symmetric functions $P_{\la}(\cdot; \theta)$, where $\la = (\la_1 \ge \la_2\ge \cdots \ge 0)$ ranges over the set of all partitions and $P_{\la}(\cdot; \theta)$ is homogeneous of degree $|\lambda|=\sum_i \lambda_i$.
The Jack symmetric functions exhibit the following Cauchy(-Littlewood) summation identity:
\begin{equation} \label{eq_Cauchy}
\sum_{\la}\frac{ P_{\la}(x_1,x_2,\dots; \theta) P_{\la}(y_1,y_2,\dots; \theta) }{\langle P_\lambda,P_\lambda\rangle} = \exp\left( \theta \sum_{k=1}^{\infty}{\frac{p_k(x_1,x_2,\dots)p_k(y_1,y_2,\dots)}{k}} \right),
\end{equation}
where the sum is over all partitions $\la$ and $\langle P_\lambda,P_\lambda\rangle$ are explicitly known $\theta$--dependent normalization constants. \eqref{eq_Cauchy} is a formal identity of power series. We can turn it into a numeric identity by applying a \emph{specialization}: an algebra homomorphism from $\Lambda$ to the real numbers. We need one particular homomorphism, which is known as the Plancherel specialization in the literature:
$$
 \tau_s: \Lambda \to \mathbb R,\qquad \tau_k(p_k)=\begin{cases} s, & k=1,\\ 0, & k \ge 1. \end{cases}
$$
Applying $\tau_s$ with $s>0$ to the $x$--variables and setting $y_{N+1}=y_{N+2}=\dots=0$, we transform\eqref{eq_Cauchy} into
\begin{equation} \label{eq_Cauchy_2}
\sum_{\la=(\lambda_1\ge \lambda_2\ge \dots \ge \lambda_N\ge 0)}\frac{ P_{\la}(\tau_s; \theta) P_{\la}(1^N; \theta) }{\langle P_\lambda,P_\lambda\rangle} \cdot \frac{P_{\la}(y_1,\dots, y_N; \theta)}{P_{\la}(1^N; \theta)} = \exp\left( \theta s \sum_{i=1}^N y_i \right),
\end{equation}
where $1^N$ stays for the $N$ variables equal to $1$ and the summation got restricted to the partitions with (at most) $N$ parts, because  $P_{\la}(y_1,\dots,y_N; \theta)$ vanishes for others. Let us now introduce a probability measure on $N$--tuples of integers $\lambda=(\lambda_1\ge \dots\lambda_N\ge 0)$ through the formula:
\begin{equation}\label{eq_Jack_measure}{\rm Prob}(\lambda_1,\dots,\lambda_N)=\mathcal{J}_{1^N; \tau_s}(\la) := e^{-\theta sN}\frac{P_{\la}(\tau_s; \theta)P_{\la}(1^N; \theta)}{\langle P_\lambda,P_\lambda\rangle}.
\end{equation}
$\mathcal{J}_{1^N; \tau_s}(\la)$ is an instance of the Jack measure on partitions and it is discussed in details in \cite[Sec. 2.2]{GSh}.
With this definition, we can rewrite \eqref{eq_Cauchy_2} as
\begin{equation}\label{eq_Cauchy_3}
\mathbb{E}_{\mathcal{J}_{1^N; \tau_s}} \left[ \frac{P_{\la}(y_1, \cdots, y_N; \theta)}{P_{\la}(1^N; \theta)} \right] =
 \exp\left(\theta s  \sum_{i=1}^N(y_i -1) \right).
\end{equation}
Next, we want to set
\begin{equation}\label{params_spec}
s = \eps^{-1}\theta^{-1},\qquad \la_i = \eps^{-1} + \eps^{-1/2} a_{N+1-i},
\quad y_i = \exp(\eps^{1/2}x_i),\quad i=1,2,\cdots, N,
\end{equation}
for $\eps > 0$ and then send $\eps\to 0$. \cite[Proposition 2.10 with $t = 1$]{GSh} proves that the measures \eqref{eq_Jack_measure} converge weakly to the Gaussian $\beta$ ensemble of \eqref{eq_x29}:
$$\lim_{\eps\to 0+}\mathcal{J}_{1^N;\, \tau_{\eps^{-1}\theta^{-1}}}=\mu_{N,\, \theta}.$$
In this way \eqref{eq_Hermite_BGF} is obtained as $\eps\to 0$ limit of \eqref{eq_Cauchy_3}. For the right-hand side of \eqref{eq_Cauchy_3} we have:
\begin{equation}\label{limit_1}
\exp\left(\theta s \sum_{i=1}^N(y_i- 1)\right) =  \exp\left(\eps^{-1/2}\sum_{i=1}^N{x_i}+ \sum_{i=1}^N{\frac{x_i^2}{2}} + O(\eps^{1/2})\right).
\end{equation}
For the left hand side of \eqref{eq_Cauchy_2}, we use the identity (valid for any $c\in\mathbb Z$) which is a discrete version of \eqref{const_seq}:
$$P_{(c+\mu_1, \dots,\, c+\mu_N)}(y_1, \cdots, y_N) = (y_1\cdots y_N)^c\cdot P_{(\mu_1, \dots,\, \mu_N)}(y_1, \cdots, y_N)$$
and limit relation between the symmetric Jack polynomials and multivariate Bessel functions:
\begin{equation}\label{limit_to_bessel}
\lim_{\eta\to 0^+}{ \frac{P_{(\eta^{-1} a_N, \dots, \eta^{-1} a_1)}(\exp(\eta x_1), \cdots, \exp(\eta x_N); \theta)}{P_{(\eta^{-1} a_N, \dots, \eta^{-1} a_1)}(1^N; \theta)} } = B_{(a_1, \cdots, a_N)}(x_1, \cdots, x_N; \theta).
\end{equation}
The last identity can be found in \cite[Sec. 4]{Ok_Olsh_shifted_Jack} or \cite[Thm. 7.5]{C}; in this limit transition the combinatorial formula for Jack polynomials (expressing the expansion of the polynomial into monomials as a sum over semi-standard Young tableaux or Gelfand-Tsetlin patterns) turns into \eqref{eq_Bessel_combinatorial} of Definition \ref{Definition_Bessel_function}.

Thus, dividing both sides of \eqref{eq_Cauchy_2} by  $\exp\left(\eps^{-1/2}\sum_{i=1}^N{x_i}\right)$ and sending $\eps\to 0$ (we omit a standard tail bound justifying the validity of the limit transition under the expectation sign), we get
\begin{equation*}
\mathbb{E}_{\mu_{N, \theta}}\!\left[ B_{(a_1, \cdots, a_N)}(x_1, \cdots, x_N; \theta) \right] = \exp\left(\sum_{i=1}^N\frac{x_i^2}{2}\right). \qedhere
\end{equation*}
\end{proof}

\subsection{$\gamma$--convolution}

\begin{proof}[Proof of Theorem \ref{Theorem_gamma_convolution}]
 Let $G_N^{\mathbf a}$ be the BGF of the distribution\footnote{In this section the word ``distribution'' is used in probabilistic meaning, as in ``distribution of a random variable'', rather than in functional-analytic meaning, where a distribution is a synonym of a generalized function.} of $\mathbf a(N)$  and let $G_N^{\mathbf b}$ be the BGF of the distribution of $\mathbf b(N)$. Since Definition \ref{Def_mom_convergence} is the same as Definition \ref{Definition_LLN_sat_ht}, Theorem \ref{thm_small_th} implies that the distributions of $\mathbf a(N)$ and $\mathbf b(N)$ are $\gamma$--LLN appropriate. Let us denote the corresponding $\gamma$--cumulants (right-hand sides in (a) of Definition \ref{Definition_LLN_appr_ht}) through $\kappa_l^{\mathbf a}$ and $\kappa_l^{\mathbf b}$, respectively.

Further, let $G_N^{\mathbf a+_\theta \mathbf b}$ be the BGF of $\mathbf a(N)+_\theta \mathbf b(N)$. By Definition \ref{Def_theta_addition} and the independence of the distributions of $\mathbf a(N)$ and $\mathbf b(N)$, we have
 $$
  G_N^{\mathbf a+_\theta \mathbf b}(x_1,\dots,x_N;\, \theta)=G_N^{\mathbf a}(x_1,\dots,x_N;\, \theta) \cdot G_N^{\mathbf b}(x_1,\dots,x_N;\, \theta).
 $$
 Hence, partial derivatives of $\ln(G_N^{\mathbf a+_\theta \mathbf b})$ are sums of those of $\ln(G_N^{\mathbf a})$ and those of $\ln(G_N^{\mathbf a})$. Therefore, the sequence of distributions of $\mathbf a(N)+_\theta \mathbf b(N)$ is $\gamma$--LLN appropriate with $\gamma$--cumulants given by the sums
 $\kappa_l^{\mathbf a}+\kappa_l^{\mathbf b}$, $l=1,2,\dots$. Applying Theorem \ref{thm_small_th} again, we conclude that $\mathbf a(N)+_\theta \mathbf b(N)$ converges in the sense of moments, which concludes the proof of the theorem. Observe that  the $l$th $\ga$-cumulant of the limit measure is $\kappa_l^{\mathbf a}+\kappa_l^{\mathbf b}$, for all $l=1,2,\cdots$, which then verifies formula \eqref{eq_convolution_cumulants}.
\end{proof}

\medskip

One remark is in order. We never prove (or claim) that the $N\to\infty$ limit of empirical distributions of $\mathbf a(N) +_\theta \mathbf b(N)$ is given by a probability measure; we only show that the moments converge to some limiting values. Of course, if we knew that $\mathbf a(N)+_\theta \mathbf b(N)$ is a bona fide random $N$--tuple of integers (which is widely believed to be true, see Conjecture \ref{conj_pos}), then we could say that the deterministic limit of random empirical distributions is necessarily given by a probability measure. In this case, the binary operation $\boxplus_\gamma$ would turn into an operation on probability measures.

\begin{definition}\label{semifree_conv}
Let $\ga>0$ be fixed. Let $\mu$ be a probability measure with finite moments $\{m_k^{\mu}\}_{k\ge 1}$.
If $\{\ka_l^{\mu}\}_{l\ge 1} := \Tcm(\{m_k^{\mu}\}_{k\ge 1})$, then the quantities $\ka_l^{\mu}$ are said to be the \emph{$\ga$-cumulants of $\mu$}. Next, let $\nu$ be another probability measure with all $\ga$-cumulants $\{\ka_l^{\nu}\}_{l\ge 1}$. If there exists a unique probability measure, to be denoted $\mu \boxplus_\gamma \nu$, such that all of its $\ga$-cumulants are finite and given by
$$\ka_l^{\mu \boxplus_\gamma \nu} = \ka_l^{\mu} + \ka_l^{\nu},\text{ for all }l\ge 1,$$
then $\mu \boxplus_\gamma \nu$ is said to be the \emph{$\ga$-convolution} of $\mu$ and $\nu$.
\end{definition}
We note that if Conjecture \ref{conj_pos} is true, then the words ``If there exists a probability measure'' can be removed from the above definition, because such a measure would always exists. However, one should still require the uniqueness because not every probability measure is uniquely determined by its $\gamma$--cumulants or moments; one needs to impose a slow growth condition on the moments to guarantee the uniqueness.

\medskip

This definition agrees with the definition embedded in Theorem \ref{Theorem_gamma_convolution} in the following sense.
If $\tau$ is a probability measure which is determined by its moments $\{m_k^{\tau}\}_{k\ge 1}$, then $\tau = \mu \boxplus_\gamma \nu$ (in the sense of Definition \ref{semifree_conv}) if and only if $\{m_k^{\tau}\}_{k\ge 1} = \{m_k^{\mu}\}_{k\ge 1} \boxplus_\gamma \{m_k^{\nu}\}_{k\ge 1}$ (in the sense of Theorem \ref{Theorem_gamma_convolution}).

\subsection{Examples of $\ga$--convolutions}

\begin{exam}
Let $\{m_k\}_{k\ge 1}$ be the sequence of moments of a probability measure $\mu$ on $\R$.
Also consider the sequence $\{a^k\}_{k\ge 1}$ of powers of a real number $a\in\R$; evidently this is the sequence of moments of the Dirac delta mass at point $a$.
Let $\{\wt m_k\}_{k\ge 1}$ be the sequence of moments of the conventional convolution $\delta_a * \mu$, in other words, if we set $m_0 := 1$ then
\begin{equation}\label{conv_delta}
\wt m_k = \sum_{i=0}^k{{k \choose i}a^i m_{k-i}},\qquad k=1, 2, \cdots.
\end{equation}
For any $\ga>0$, we claim
\begin{equation*}
\{\wt m_k\}_{k\ge 1} = \{a^k\}_{k\ge 1} \boxplus_\gamma \{m_k\}_{k\ge 1}.
\end{equation*}

This is equivalent to $\delta_a * \mu = \delta_a \boxplus_{\gamma} \mu$, i.e. it would mean that $\gamma$--convolution with a Dirac delta mass at $a$ is identified with shift by $a$. Indeed, we can verify this claim by using Theorem \ref{Theorem_gamma_convolution}.
Note that for the constant sequences $\{ \mathbf{a}(N) = (a, \cdots, a) \}_{N\ge 1}$, we have $\lim_{N\to\infty}\mathbf{a}(N) \stackrel{m}{=} \{ a^k \}_{k\ge 1}$. Moreover, for any $\mathbf{b}(N) = (b_1(N)\le \cdots \le b_N(N))$, we have that $\mathbf{a}(N) +_{\theta} \mathbf{b}(N)$ is the deterministic $N$-tuple $(b_1(N)+a\le \cdots \le b_N(N)+a)$, as mentioned right after Definition \ref{Def_theta_addition}. Hence if $\lim_{N\to\infty}\mathbf{b}(N) \stackrel{m}{=} \{ m_k \}_{k\ge 1}$, then $\lim_{N\to\infty}{\mathbf{a}(N) +_{\theta} \mathbf{b}(N)} \stackrel{m}{=} \{ \wt{m}_k \}_{k\ge 1}$. Then the claim follows from Theorem \ref{Theorem_gamma_convolution}.

\end{exam}

\begin{exam}\label{gauss_exam}
Let $\sigma^2 > 0$ be any positive number and consider the sequence of $\ga$--cumulants:
\begin{equation}\label{gamma_cums_ex}
\ka_{l}^{\sigma^2} := \begin{cases}
\sigma^2, & \text{if }l = 2,\\
0,& \text{otherwise}.
\end{cases}
\end{equation}
Denote the corresponding sequence of moments as $\{m_k^{\sigma^2}\}_{k\ge 1} := \Tcm(\{\ka_l^{\sigma^2}\}_{l\ge 1})$.
Observe that $\{m_k^{\sigma^2}\}_{k\ge 1}$ is the sequence of moments of a rescaled version of the distribution of Theorem \ref{thm_Gauss}, that we denote $\mu_{\ga, \sigma^2}$. From \eqref{gamma_cums_ex} and the definition of $\ga$--convolution, it follows that for any $\sigma_1^2, \sigma_2^2 >0$ we have $\mu_{\ga,\, \sigma_1^2 + \sigma_2^1} = \mu_{\ga, \sigma_1^2} \boxplus_\gamma \mu_{\ga, \sigma_2^2}$, or equivalently
$$
\{m_k^{\sigma_1^2 + \sigma_2^2}\}_{k\ge 1} = \{m_k^{\sigma_1^2}\}_{k\ge 1} \boxplus_\gamma \{m_k^{\sigma_2^2}\}_{k\ge 1}.
$$
\end{exam}

\begin{exam}\label{laguerre_exam}
Let $\la > 0$ be arbitrary and consider the constant sequence of $\ga$--cumulants: $\ka_{l}^{\la} := \la$, for all $l = 1, 2, \cdots$.
Denote the corresponding sequence of moments as $\{ m_k^\la \}_{k\ge 1} := \Tcm(\{ \ka_l^\la \}_{l\ge 1})$.
It is known that $\{m_k^\la\}_{k\ge 1}$ is the sequence of moments of a probability measure $\nu_\ga^\la$.
It follows that for any $\la_1, \la_2>0$, we have $\nu_\ga^{\la_1+\la_2} = \nu_\ga^{\la_1}\boxplus_\gamma\nu_\ga^{\la_2}$, or equivalently
$$
\{m_k^{\la_1 + \la_2}\}_{k\ge 1} = \{m_k^{\la_1}\}_{k\ge 1} \boxplus_\gamma \{m_k^{\la_2}\}_{k\ge 1}.
$$
The measure $\nu_\ga^\la$ was studied in \cite{ABMV,TT_Laguerre}, where it was shown to be the limit of the empirical measures of beta Laguerre ensembles in the limit $N\to\infty$, $\beta N\to 2\gamma$, $N/M\to\la$. The density of $\nu_\ga^\la$ can be obtained from \cite[Lemma 2.1]{TT_Laguerre}; note that in that paper our parameters $\ga, \la$ are denoted by $c, \alpha$, respectively. We also refer to \cite[Section 5.4 and Figure 5]{MP} for additional details and plots of the densities.
Since all the $\gamma$--cumulants of $\nu_\ga^{\la}$ are equal to each other,  this measure is similar to the Poisson and the Marchenko-Pastur distributions whose cumulants, respectively, free cumulants, are all the same.
\end{exam}

\subsection{Law of Large Numbers for ergodic measures}\label{sec_ergodic}

We start this section by providing some context in the complex case $\theta=1$ (or $\beta=2$). The infinite-dimensional unitary group $U(\infty)$ is defined as the union of the groups of $N\times N$ unitary matrices, $\bigcup_{N=1}^{\infty} U(N)$, where we embed $U(N)$ into $U(N+1)$ as the subgroup of operators fixing the $(N+1)$st basis vector. Each element of $U(\infty)$ is an infinite matrix, such that for some $N=1,2,\dots$, its top $N\times N$ corner is unitary and outside this corner we have $1$s on the diagonal and $0$s everywhere else. Consider the space $\mathcal H$ of infinite complex Hermitian matrices with rows and columns parameterized by positive integers $i$ and $j$.  $U(\infty)$ acts on $\mathcal H$ by conjugations and one can ask about random matrices in $\mathcal H$ whose laws are invariant under such action. Their probability distributions form a simplex and much of the work on conjugation-invariant matrices comes down to study extreme points of this simplex --- ergodic conjugation-invariant random matrices in $\mathcal H$. In \cite{Pickrell,OV} these matrices were completely classified: they depend on a sequence of real parameters $\{\alpha_i\}_{i=1}^{\infty}$ with $\sum_{i=1}^{\infty} (\alpha_i)^2\le \infty$ and two reals $\delta_1\in\mathbb R$, $\delta_2\ge 0$ and are given by an infinite sum:
\begin{equation}
\label{eq_OV_expansion}
 \delta_1 \mathcal I + \sqrt{\delta_2}\, \frac{X+X^*}{2} + \sum_{i=1}^\infty \alpha_i \left(\tfrac{1}{2}V_i V_i^*-\mathcal I\right),
\end{equation}
where $\mathcal I$ is the identical matrix (with $1$s on the diagonal and $0$s everywhere else), $X$ is a matrix with i.i.d.\ Gaussian $\mathcal N(0,1)+\ii \mathcal N(0,1)$ elements, $V_i$ is an infinite (column-)vector with i.i.d.\  Gaussian $\mathcal N(0,1)+\ii \mathcal N(0,1)$ components and all the involved matrices are independent. Note that if the only non-zero parameter is $\delta_2$, then \eqref{eq_OV_expansion} gives the Gaussian Unitary Ensemble (a particular case of Wigner matrices). If the only non-zero parameters are $\alpha_1=\alpha_2=\dots=\alpha_K=1$ and $\delta_1=K$, then \eqref{eq_OV_expansion} gives the Laguerre Unitary Ensemble (a particular case of Wishart or sample-covariance matrices).

As was first mentioned in \cite[Remark 8.3]{OV} and recently studied in details in \cite{AN}, the problem of classification of conjugation-invariant infinite complex Hermitian matrices has a general $\theta$--version related to the $\theta$-corners processes of Definition \ref{def_betacorner}.
Roughly speaking, while there are no infinite self-adjoint matrices in the general $\th$--version, one can make sense of the \emph{distribution of eigenvalues} of the top-left principal submatrices (corners) of the infinite self-adjoint matrices.
One of the problems addressed in \cite{AN} (see Theorem 1.13 there) is the classification of ergodic random matrices at general values of $\th$. Since there are no bona fide matrices, this problem actually asks for distributions of $N$-tuples, for $N=1, 2, \cdots$, satisfying certain coherence relations --- the distributions should be regarded as the eigenvalue distributions of the $N\times N$ corners of an ergodic matrix.
It turns out that the set of parameters remains the same as in the $\theta=1$ case. The law of the top-left $1\times 1$ corner $\eta$ of an ergodic matrix at general values of $\theta$ has characteristic function
\begin{equation}
\label{eq_Fourier_theta}
 \E e^{\ii t \eta}= \mathcal F_{\theta; \{\alpha_i\}, \delta_1, \delta_2} (\ii t),\quad \text{ where }\quad \mathcal F_{\theta; \{\alpha_i\}, \delta_1, \delta_2} (z):= \exp\Bigl(\delta_1 z+\tfrac{\delta_2}{2\theta} z^2\Bigr) \cdot \prod_{i=1}^{\infty} \frac{\exp(-\alpha_i z)}{\left(1-\tfrac{\alpha_i}{\theta} z\right)^\theta}.
\end{equation}
More generally, there is a formula that uniquely determines the eigenvalue distribution of the corners, namely if $\eta_1\le \eta_2\le\dots\le \eta_N$ are the random eigenvalues of the $N\times N$ corner, then their Bessel generating function is explicit:
\begin{equation}
\label{eq_BGF_ergodic}
\E\!\left[ B_{(\eta_1,\dots,\eta_N)}(x_1,\dots,x_N;\, \theta) \right] = \prod_{j=1}^N \mathcal F_{\theta; \{\alpha_i\}, \delta_1, \delta_2} (x_j).
\end{equation}

We will take \eqref{eq_BGF_ergodic} as our definition of the distributions on $N$-tuples $(\eta_1\le\dots\le\eta_N)$; these distributions are the ergodic measures of \cite{OV, AN}. For them, we prove the following Law of Large Numbers in the regime $N\to\infty,\,\theta N\to\gamma$.

\begin{thm} \label{Theorem_ergodic}
 Suppose that $\theta$ and $\{\alpha_i\}_{i=1}^{\infty}$, $\delta_1$, $\delta_2$ vary with $N$ in such a way that $N\to\infty$, $\th\to 0$, $\theta N\to \gamma$ and
 \begin{equation}
 \label{eq_ergodic_convergence}
   \ln \left( \mathcal F_{\theta; \{\alpha_i\}, \delta_1, \delta_2} (z) \right) \longrightarrow F(z)= \sum_{l=1}^{\infty} \frac{\kappa_l}{l} z^l,
 \end{equation}
 uniformly over a complex neighborhood of $0$. Then the eigenvalues $(\eta_1,\dots,\eta_N)$ of the $N\times N$ corners of the corresponding general $\theta$ ergodic random matrix converge in the sense of moments (as in Definitions \ref{Def_mom_convergence} or \ref{Definition_LLN_sat_ht}) to a probability distribution with $\gamma$--cumulants $\kappa_l$, i.e.\ its moments are found by the expression of Theorem \ref{theorem_cumuls_moms}.
\end{thm}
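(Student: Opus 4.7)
The plan is to reduce the statement to a direct application of Theorem \ref{thm_small_th} by verifying that the sequence of ergodic measures is $\gamma$-LLN-appropriate in the sense of Definition \ref{Definition_LLN_appr_ht}, with the $\gamma$-cumulants exactly the Taylor coefficients $\kappa_l$ prescribed in \eqref{eq_ergodic_convergence}. The key structural input is the factorization \eqref{eq_BGF_ergodic}, which, after taking logarithms, becomes
$$\ln G_{N;\theta}(x_1,\dots,x_N) \;=\; \sum_{j=1}^N \ln \mathcal F_{\theta; \{\alpha_i\}, \delta_1, \delta_2}(x_j).$$
This ``separation of variables'' is what makes the ergodic case particularly clean: the logarithm of the BGF is a sum of single-variable functions.

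From this identity, condition (b) of Definition \ref{Definition_LLN_appr_ht} is automatic and exact (not merely in the limit): any partial derivative involving two or more distinct variables is identically zero, because each summand depends on only one variable. For condition (a), I would note that
$$\left.\frac{\partial^l}{\partial x_i^l}\ln G_{N;\theta}\right|_{x_1=\dots=x_N=0} \;=\; \left.\frac{d^l}{dz^l}\ln \mathcal F_{\theta; \{\alpha_i\}, \delta_1, \delta_2}(z)\right|_{z=0}.$$
The hypothesis \eqref{eq_ergodic_convergence} gives uniform convergence of $\ln \mathcal F_{\theta; \{\alpha_i\}, \delta_1, \delta_2}(z)$ to $F(z)$ on a complex neighborhood of the origin; by the standard Weierstrass / Cauchy-integral argument, uniform convergence of holomorphic functions entails uniform convergence of all derivatives on smaller disks. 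Hence the $l$-th derivative at $0$ converges to $F^{(l)}(0) = (l-1)!\,\kappa_l$, exactly as required by (a). This establishes $\gamma$-LLN-appropriateness, and then Theorem \ref{thm_small_th} yields convergence of moments with $\{m_k\}_{k\ge 1}=\Tcm(\{\kappa_l\}_{l\ge 1})$, while Theorem \ref{theorem_cumuls_moms} supplies the combinatorial formula.

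The remaining point is that the limiting moments should actually determine a unique probability measure on $\mathbb R$. Since $F$ is holomorphic in a disk of some positive radius $\rho$ around $0$, one has bounds of the form $|\kappa_l|\le C\rho^{-l}$. Inserting these into the formula $m_k=\sum_{\pi\in\PP(k)}W(\pi)\prod_{B\in\pi}\kappa_{|B|}$ and using the crude estimate $W(\pi)=O\bigl((\gamma+k)^{k}\bigr)$ combined with the Bell-number count for $|\PP(k)|$, one gets a bound of order $k^{k}\cdot C_1^{k}$, which verifies Carleman's condition $\sum_m m_{2m}^{-1/(2m)}=\infty$ and thus uniqueness of the moment problem, in full analogy with the end of the proof of Theorem \ref{thm_Gauss}. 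I do not expect a real obstacle: the content is almost entirely in Theorem \ref{thm_small_th}, and the only genuine calculation is the uniqueness check, which is essentially a Carleman estimate on the weighted sum over set partitions.
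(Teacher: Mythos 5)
Your proof takes the same route as the paper's: exploit the product form \eqref{eq_BGF_ergodic} so that $\ln G_{N;\th}$ separates into a sum of single-variable functions, read off condition (b) of Definition \ref{Definition_LLN_appr_ht} immediately (it holds exactly, not just in the limit), obtain condition (a) from \eqref{eq_ergodic_convergence} via uniform convergence of holomorphic functions and Cauchy estimates, and then invoke Theorem \ref{thm_small_th}. The paper's own proof is exactly this one-paragraph argument, so on the core step you and the paper agree.

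Where you diverge is the extra Carleman check at the end. The paper does not include one in this proof (it only carries one out in the Gaussian case, Theorem \ref{thm_Gauss}, where the combinatorics reduce to perfect matchings and the count $|\mathscr{M}(2m)|=(2m-1)!!$ is favorable). Your extra step is a reasonable thing to want, but the estimate you propose does not in fact verify Carleman. Combining $W(\pi)=O\bigl((\gamma+k)^{k}\bigr)$ with the Bell-number count $|\PP(k)|\sim(k/\ln k)^{k}$ gives $m_k=O\bigl(k^{2k-o(k)}C_1^{k}\bigr)$, not $O\bigl(k^{k}C_1^{k}\bigr)$; and with $m_{2m}\asymp(2m)^{4m}$ one has $\sum_m m_{2m}^{-1/(2m)}\asymp\sum_m m^{-2}<\infty$, so Carleman fails at that rate. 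The crude bound overcounts badly because most set partitions (those with many small blocks) carry a much smaller weight $W(\pi)$ than the single one-block partition; you cannot bound the sum by $|\PP(k)|\cdot\max_\pi W(\pi)$. A correct route is through the generating-function identity \eqref{eq_cums_moments_2}: since $|\kappa_l|\le C\rho^{-l}$, the series $\sum_n c_n z^n/(\gamma)_n$ is holomorphic near $0$, whence $c_n=O\bigl(C_1^n(\gamma)_n\bigr)$; the super-multiplicativity $(\gamma)_{n_1}(\gamma)_{n_2}\cdots(\gamma)_{n_j}\le(\gamma)_{n_1+\cdots+n_j}$ then lets one extract $[z^k]\ln\bigl(\sum_n c_n z^n\bigr)=O\bigl(C_2^k(\gamma)_k\bigr)$, giving $m_k=O\bigl(C_2^k\,k!\,k^{\gamma-1}\bigr)$, for which $\sum_m m_{2m}^{-1/(2m)}\gtrsim\sum_m m^{-1}=\infty$ and Carleman does hold. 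So your conclusion is right, but the stated estimate needs replacing.
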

\begin{remark}\label{Remark_Gauss_Laguerre}
Choosing $\delta_2 = \theta$, so that $\mathcal F_{\theta; \{\alpha_i\}, \delta_1, \delta_2} (z)=\exp( z^2/2 )$, we recover the LLN for the Gaussian $\beta$--ensembles as $N\to\infty$, $\beta N\to 2\gamma$, as in Section \ref{Section_GbE}.
\noindent Choosing $\alpha_1=\alpha_2=\dots=\alpha_M=\theta$, $\delta_1=M\theta$ with $M=\lfloor \la N/\ga \rfloor$, so that $\mathcal F_{\theta; \{\alpha_i\}, \delta_1, \delta_2} (z)=(1-z)^{-\theta \lfloor \la N / \ga \rfloor}\to (1-z)^{-\la}$, we recover the LLN for the Laguerre $\beta$--ensembles as $N\to\infty$, $\beta N\to 2\gamma$ and $M/N\to \la/\ga$.
The limiting probability measure is $\nu_\ga^\la$, as described in Example \ref{laguerre_exam}.
\end{remark}
\begin{remark} The formula \eqref{eq_Fourier_theta} has a multiplicative structure: a product of $\mathcal F_{\theta; \{\alpha_i\}, \delta_1, \delta_2} (z)$ functions is again a function of the same type. This property leads to the limits in Theorem \ref{Theorem_ergodic} being infinitely-divisible with respect to $\gamma$--convolution $\boxplus_\gamma$.
This is in agreement with Examples \ref{gauss_exam} and \ref{laguerre_exam}, which show that the measures $\mu_\ga^{\sigma^2}$ and $\nu_\ga^\la$ are $\gamma$--infinitely-divisible.
\end{remark}
\begin{proof}[Proof of Theorem \ref{Theorem_ergodic}]
 Combining \eqref{eq_BGF_ergodic} with \eqref{eq_ergodic_convergence}, we conclude that the BGF of $(\eta_1,\dots,\eta_N)$ is $\gamma$-LLN appropriate in the sense of Definition \ref{Definition_LLN_appr_ht}. Hence, by Theorem \ref{thm_small_th}, $(\eta_1,\dots,\eta_N)$ converge in the sense of moments and the asymptotic moments are recovered from the $\gamma$--cumulants $\kappa_l$ by using the map $\Tcm$.
\end{proof}

\subsection{Limit of projections}

We again start from the complex case $\theta=1$. This time we fix $N=1,2,\dots$ and a deterministic $N$--tuple of reals $a_1\le \dots\le a_N$. Let $A_N$ be a uniformly random $N\times N$ complex Hermitian matrix with eigenvalues $a_1,\dots,a_N$ and let $A_m$ be the $m\times m$ top-left submatrix of $A_N$. We now fix $\tau>1$, set $m=\lfloor N/\tau\rfloor$ and send $N\to\infty$. If we assume that the empirical measures of eigenvalues of $A_N$, $\frac{1}{N}\sum_{i=1}^N \delta_{a_i}$, converge to a limiting probability measure $\mu$, then the (random) empirical measures of eigenvalues of $A_m$ converge to a (deterministic) measure $\mu^{\boxplus \tau}$. For integer $\tau$ this measure is the same as the free convolution of $\tau$ copies of $\mu$, hence, $\mu^{\boxplus \tau}$ can be called a fractional convolution power, see \cite{STJ} for a recent study and references.

We now present an analogue of the operation $\mu\mapsto \mu^{\boxplus \tau}$ in our $\theta\to 0$ asymptotic framework.

\begin{thm} \label{Theorem_projections} Fix real numbers $\gamma>0$ and $\tau>1$. Suppose that for each $N=1,2,\dots,$ we are given an $N$--tuple of reals $a_1\le\dots\le a_N$, and let $\{y_i^k\}_{1\le i \le k \le N}$ be the $\theta$--corners process with top row $a_1,\dots,a_N$, as in Definition \ref{def_betacorner}. (In particular, this means $y_1^N=a_1$,\dots, $y_N^N=a_N$.) Define the empirical measures
$$
 \rho_N=\frac{1}{N}\sum_{i=1}^N \delta_{a_i},\qquad \rho_{N}^\tau=\frac{1}{\lfloor N/\tau\rfloor} \sum_{i=1}^{\lfloor N/\tau\rfloor }\delta_{y^{\lfloor N/\tau\rfloor}_i},
$$
and suppose that all measures $\rho_N$ are supported inside a segment $[-C,C]$ and as $N\to\infty$, $\rho_N$ weakly converge to a probability measure $\mu$ (supported inside the same segment). Then as $N\to\infty$, $\theta\to 0$ with $\theta N\to\gamma$, the (random) measures $\rho_N^\tau$ converge weakly, in probability to a deterministic measure $\mu^{\tau,\gamma}$. If $\{m_k\}_{k\ge 1}$ are the moments of $\mu$ and $\{m_k^{\tau}\}_{k\ge 1}$ are the moments of $\mu^{\tau,\gamma}$, then
\begin{equation}\label{same_cums}
 T^{\gamma}_{m\to \kappa}\bigl(\{m_k\}_{k\ge 1}\bigr)=T^{\gamma/\tau}_{m\to \kappa}\bigl(\{m_k^{\tau}\}_{k\ge 1}\bigr).
\end{equation}
In other words, $\gamma$--cumulants of $\mu$ coincide with $\tfrac{\gamma}{\tau}$--cumulants of $\mu^{\tau,\gamma}$.
\end{thm}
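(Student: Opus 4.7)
The plan is to exploit the hereditary structure of the $\theta$-corners process to identify the Bessel generating function of the $m$-th row (with $m=\lfloor N/\tau\rfloor$) with the multivariate Bessel function of the top row evaluated at a reduced argument, and then apply Theorem \ref{thm_small_th} twice: first with parameter $\gamma$ to convert the hypothesis $\rho_N\Rightarrow\mu$ into derivative-at-origin information for the top-row MBF, and second with parameter $\gamma/\tau$ to convert that same information into moment convergence for $\rho_N^\tau$.

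The linchpin identity I would establish is
\[
\E\!\left[B_{(y_1^m,\dots,y_m^m)}(x_1,\dots,x_m;\,\theta)\right] \;=\; B_{(a_1,\dots,a_N)}(x_1,\dots,x_m,0,\dots,0;\,\theta),
\]
where the expectation is over the $\theta$-corners process with top row $(a_1,\dots,a_N)$. To see this, plug $x_{m+1}=\dots=x_N=0$ into the combinatorial formula \eqref{eq_Bessel_combinatorial}: the contributions of rows above level $m$ collapse, and the conditional distribution of rows $1,\dots,m$ given $y^m$, read off from the product form of the density \eqref{eq_beta_corners_def}, is exactly the $\theta$-corners process with top row $y^m$. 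The right-hand side is therefore the BGF of $(y_1^m,\dots,y_m^m)$, viewed as a function of $m$ variables.

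The deterministic sequence $\{\delta_{(a_1,\dots,a_N)}\}$ satisfies the LLN of Definition \ref{Definition_LLN_sat_ht} with limiting moments $\{m_k\}_{k\ge 1}$ of $\mu$, because $\rho_N\Rightarrow\mu$ and the supports are uniformly contained in $[-C,C]$. Theorem \ref{thm_small_th} with parameter $\gamma$ then yields
\[
\frac{\pa^l}{\pa x_1^l}\ln B_{(a_1,\dots,a_N)}(x_1,\dots,x_N;\,\theta)\Big|_{\setzeroes}\longrightarrow (l-1)!\,\kappa_l,\qquad l\ge 1,
\]
and the vanishing of all genuinely mixed partial derivatives, where $\{\kappa_l\}_{l\ge 1}=T^{\gamma}_{m\to\kappa}(\{m_k\}_{k\ge 1})$ are the $\gamma$-cumulants of $\mu$. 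Because differentiation in $x_1,\dots,x_m$ at the origin commutes with the substitution $x_{m+1}=\dots=x_N=0$, the same limits persist for $\ln G_{\theta}(x_1,\dots,x_m;\text{law of }y^m)$. Since $\theta m\to\gamma/\tau$, this verifies $(\gamma/\tau)$-LLN appropriateness of the $m$-th row BGFs with the \emph{same} cumulant sequence $\{\kappa_l\}_{l\ge 1}$. Running Theorem \ref{thm_small_th} in the reverse direction, now with parameter $\gamma/\tau$, produces convergence of $\rho_N^\tau$ in the sense of moments with $\{m_k^\tau\}_{k\ge 1}=T^{\gamma/\tau}_{\kappa\to m}(\{\kappa_l\}_{l\ge 1})$. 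Interlacing keeps $\rho_N^\tau$ supported in $[-C,C]$, so these moments uniquely determine a probability measure $\mu^{\tau,\gamma}$ and upgrade the convergence to weak convergence in probability; identity \eqref{same_cums} is then immediate from how $\{\kappa_l\}$ was constructed.

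The main obstacle is not in any single step of the above, but in justifying the BGF identity cleanly: one must exchange the expectation with the evaluation $x_{m+1}=\dots=x_N=0$ (legitimate thanks to compact support together with the uniform bound of Lemma \ref{bgf_good}) and verify that the conditional density coming from \eqref{eq_beta_corners_def} really is the $\theta$-corners density with top row $y^m$, including the correct normalization. Both points are routine given the explicit product structure of the density and of the MBF, but they are where one must be careful with the interlacing constraints. Everything afterwards is a direct packaging of the main equivalence theorem.
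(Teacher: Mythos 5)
Your proof is correct and follows essentially the same route as the paper: you identify the BGF of the $m$-th row with $B_{(a_1,\dots,a_N)}(x_1,\dots,x_m,0,\dots,0;\theta)$ (the paper derives this directly from the combinatorial formula of Definition \ref{Definition_Bessel_function}, while you spell out the underlying Markov property of the corners process), and then apply Theorem \ref{thm_small_th} first at parameter $\gamma$ and then in reverse at parameter $\gamma/\tau$, with the change in parameter coming from the change $N\to\lfloor N/\tau\rfloor$ in the number of variables.
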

\begin{remark}
 The condition of support inside $[-C,C]$ is used to guarantee that all the involved measures are determined by their moments; it can be replaced by other uniqueness conditions for the moments problem.
\end{remark}
\begin{proof}[Proof of Theorem \ref{Theorem_projections}]
 Convergence $\rho_N\to \mu$ and the condition on the support of $\rho_N$ imply that the moments of $\rho_N$ converge to those of $\mu$. Hence, the sequence of delta-measures (unit masses) on $N$--tuples $(a_1,\dots,a_N)$ satisfies LLN in the sense of Definition \ref{Definition_LLN_sat_ht}. Thus, Theorem \ref{thm_small_th} yields that it is $\gamma$--LLN appropriate, i.e., its BGF
 $$
  G_{N;\theta}(x_1,\dots,x_N)= B_{(a_1,\dots,a_N)}(x_1,\dots, x_N;\, \theta)
 $$
 satisfies the conditions of Definition \ref{Definition_LLN_appr_ht}. Let $\tilde G_{N;\theta}$ denote the BGF of the $\lfloor N/\tau\rfloor$--tuple of reals $y^{\lfloor N/\tau\rfloor}_1\le \dots \le y^{\lfloor N/\tau\rfloor}_{\lfloor N/\tau\rfloor}$. Then Definition \ref{Definition_Bessel_function} implies that
 $$
  \tilde G_{N;\theta}(x_1,\dots,x_{\lfloor N/\tau\rfloor})=G_{N;\theta}(x_1,\dots,x_{\lfloor N/\tau\rfloor}, 0,0,\dots,0),
 $$
 where there are $N-\lfloor N/\tau\rfloor$ in the right-hand side. Hence, the partial derivatives of $\ln(\tilde G_{N;\theta})$ coincide with partial derivatives of $\ln(G_{N;\theta})$ and, therefore, the former is $\gamma$--LLN appropriate. It is important to emphasize at this point that we use the same $\theta$ for $G_{N;\theta}$ and $\tilde G_{N;\theta}$, however, the number of variables for the latter is $\lfloor N/\tau\rfloor$ rather than $N$. This leads to $\gamma$ being divided by $\tau$. It remains to use Theorem \ref{thm_small_th} yet again to conclude that the random measures $\rho_{N}^\tau$ converge in the sense of moments and consequently also weakly, in probability.
\end{proof}
 In general, we do not know any simple criteria on when a given sequence of numbers is a sequence of $\gamma$--cumulants corresponding to a probability measure. Yet Theorem \ref{Theorem_projections} leads to an interesting comparison between different $\gamma$'s.
\begin{corollary}
 Take a sequence of real numbers $\kappa_1,\kappa_2,\dots$ and suppose that for some $\gamma_0>0$, these numbers are $\gamma_0$--cumulants of some probability measure $\nu$, i.e., $\{\kappa_l\}_{l\ge 1}=T^{\gamma_0}_{m\to\kappa}\bigl(\{m_k\}_{k\ge 1}\bigr)$ with $m_k=\int_{\mathbb R} x^k \nu(dx)$. Then for each $0<\gamma<\gamma_0$ the same numbers are also $\gamma$--cumulants of some probability measure. In particular, sending $\gamma\to 0$, we also have that the sequence $0! \kappa_1, 1!\kappa_2, 2!\kappa_3,\dots$ gives conventional cumulants of some probability measure.
\end{corollary}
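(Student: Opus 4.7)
The plan is to derive both assertions from Theorem \ref{Theorem_projections}. For the first assertion, given $0<\gamma<\gamma_0$, I would set $\tau:=\gamma_0/\gamma>1$. Assuming first that $\nu$ is compactly supported in some interval $[-C,C]$, choose for each $N$ a deterministic tuple $a_1(N)\le\dots\le a_N(N)$ in $[-C,C]$ whose empirical measure $\rho_N$ converges weakly to $\nu$ (e.g.\ the quantiles of $\nu$). Apply Theorem \ref{Theorem_projections} in the regime $N\to\infty$, $\theta\to 0$, $\theta N\to\gamma_0$: it produces a probability measure $\nu^{\tau,\gamma_0}$, supported in $[-C,C]$, which is the weak-in-probability limit of $\rho_N^\tau$. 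Relation \eqref{same_cums} reads
\[
  T^{\gamma_0}_{m\to\kappa}\bigl(\{m_k^\nu\}_{k\ge 1}\bigr)\;=\;T^{\gamma_0/\tau}_{m\to\kappa}\bigl(\{m_k^{\nu^{\tau,\gamma_0}}\}_{k\ge 1}\bigr),
\]
and since $\gamma_0/\tau=\gamma$ and the left-hand side equals $\{\kappa_l\}_{l\ge 1}$ by hypothesis, the measure $\nu^{\tau,\gamma_0}$ is the desired probability measure whose $\gamma$-cumulants are $\{\kappa_l\}_{l\ge 1}$.

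For the assertion about $\gamma\to 0$, I would use compactness. All measures $\nu^{\tau,\gamma_0}$, as $\tau$ ranges over $[1,\infty)$, are supported in the common interval $[-C,C]$, so the family is tight. By Prokhorov/Helly, some subsequence as $\tau\to\infty$ (equivalently $\gamma\to 0$) converges weakly to a probability measure $\nu^{\infty}$ supported in $[-C,C]$. Its moments are the limits of the moments of $\nu^{\tau,\gamma_0}$, and the latter are computed by Theorem \ref{theorem_cumuls_moms}. A direct inspection of \eqref{W_formula} shows that for any $\pi\in\PP(k)$ with blocks $B_1,\dots,B_m$ and statistics $p(i),q(i)=|B_i|-1-p(i)$,
\[
  \lim_{\gamma\to 0^+}W(\pi)\;=\;\prod_{i=1}^m p(i)!\cdot (p(i)+1)_{|B_i|-1-p(i)}\;=\;\prod_{i=1}^m(|B_i|-1)!,
\]
since $p(i)!\cdot(p(i)+1)_{|B_i|-1-p(i)}=(|B_i|-1)!$. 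Hence
\[
  m_k^{\nu^{\infty}}\;=\;\sum_{\pi\in\PP(k)}\prod_{B\in\pi}\bigl((|B|-1)!\cdot\kappa_{|B|}\bigr),
\]
which is precisely the classical moments-to-cumulants formula applied to the sequence $\{(l-1)!\,\kappa_l\}_{l\ge 1}$. Therefore $\nu^{\infty}$ has classical cumulants $(l-1)!\,\kappa_l$, proving the final claim. (As a by-product, the moments of $\nu^{\infty}$ are determined, so the full limit $\lim_{\gamma\to 0}\nu^{\tau,\gamma_0}=\nu^{\infty}$ exists without passing to a subsequence.)

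The step I expect to require the most care is handling the case where $\nu$ is not assumed to be compactly supported: Theorem \ref{Theorem_projections} is stated for empirical measures $\rho_N$ supported in a fixed interval, so a direct application needs compact support. In the general case one would have to argue by truncation of $\nu$ to $[-C,C]$ for large $C$, produce the corresponding measures, and then pass to the limit $C\to\infty$ while controlling the growth of moments; this truncation argument, although plausible, is the one genuinely non-routine technical point. The remaining ingredients—tightness, Helly extraction, the $\gamma\to 0^+$ calculation of $W(\pi)$, and the identification with the classical moment–cumulant relation—are standard once the combinatorial identity above is observed.
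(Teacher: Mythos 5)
Your proposal is correct. For the first assertion you take exactly the paper's route: apply Theorem~\ref{Theorem_projections} with $\tau=\gamma_0/\gamma$ to the deterministic empirical measures of a compactly supported $\nu$, and read off from \eqref{same_cums} that $\nu^{\tau,\gamma_0}$ has $\gamma$-cumulants $\{\kappa_l\}$; you are also right that the case of non-compactly supported $\nu$ requires a truncation, which the paper likewise acknowledges only in passing. For the $\gamma\to 0$ assertion, however, you take a genuinely different path from the paper's. The paper's one-line proof points to Section~\ref{Section_semifree} (i.e.\ Theorem~\ref{Theorem_gamma_to_0_limit}) for the degeneration of the cumulant formula, and the existence of the required probability measure is supplied separately via the generalized Markov--Krein transform of Section~\ref{MK_transform} (the measure $\mu$ satisfying \eqref{eq_cums_moments_recast}, whose existence is guaranteed by \cite{FaFo}); this is made explicit in the paragraph immediately following the corollary. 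Your argument instead constructs the limiting measure by tightness: since all $\nu^{\tau,\gamma_0}$ sit in the fixed interval $[-C,C]$, Helly's theorem gives a subsequential weak limit $\nu^\infty$, compact support makes moments converge, the pointwise limit $\lim_{\gamma\to 0}W(\pi)=\prod_i(|B_i|-1)!$ (which you compute correctly) identifies the moments of $\nu^\infty$ with the classical moment--cumulant expansion for the sequence $\{(l-1)!\,\kappa_l\}$, and determinacy on $[-C,C]$ upgrades this to full convergence. Your approach is more self-contained and constructive, producing the measure as a limit of the $\nu^{\tau,\gamma_0}$ themselves rather than invoking the nonlinear Markov--Krein correspondence; the paper's approach is shorter but outsources the existence question to a cited result. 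Both are valid, and both leave the same not-entirely-routine gap (the truncation to pass from compactly supported $\nu$ to general $\nu$), which you are explicit in flagging.
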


In fact, $0! \kappa_1, 1!\kappa_2, 2!\kappa_3,\dots$ are the conventional cumulants of the probability measure $\mu$ that is related to $\nu$ by means of the generalized Markov-Krein transform \eqref{eq_cums_moments_recast}.

\begin{proof}
 We apply Theorem \ref{Theorem_projections} with $\tau=\gamma_0/\gamma$. The theorem was proven only for compactly supported measures, but we can approximate any measure by compactly supported ones. Finally, the convergence of $\gamma$--cumulants to conventional cumulants as $\gamma\to 0$ is discussed in Section \ref{Section_semifree}.
\end{proof}

\begin{remark}
Theorem \ref{Theorem_projections}, or just equation \eqref{same_cums}, defines for $\tau>1$ the \emph{$(\tau,\ga)$--projection} map
$$
\Pi_{\tau,\ga} : \mu \mapsto \mu^{\tau,\ga},
$$
which maps the space of probability measures of compact support to itself.
It would be interesting to study the possibility of an extension of $(\tau, \ga)$--projection map to all probability measures. In particular, the probability measures $\mu_\ga^{\sigma^2}$ and $\nu_\ga^\la$ from Examples \ref{gauss_exam} and \ref{laguerre_exam} should map to the measures of the same type:
$\Pi_{\tau, \ga}(\mu_\ga^{\sigma^2}) = \mu_{\ga/\tau}^{\sigma^2}$ and $\Pi_{\tau, \ga}(\nu_\ga^\la) = \nu_{\ga/\tau}^\la$.
\end{remark}

\section{Law of Large Numbers at high temperature}\label{sec_proof_LLN}

In this section, we prove Theorem \ref{thm_small_th}.
Recall that the real parameter $\ga>0$ is fixed, and we are interested in the limit regime  $N\to\infty$, $\th\to 0$, and $\th N\to\ga$.

Let us recall some terminology about partitions of numbers (rather than set partitions of Section \ref{Section_main_results}).
A partition $\la$ is a weakly decreasing sequence of nonnegative integers $\la = (\la_1\ge \la_2\ge \cdots\ge 0 )$, $\la_i\in\Z_{\ge 0}$, such that $\sum_{i=1}^{\infty}\lambda_i<\infty$. The latter sum is denoted $|\la|$ and is called the \emph{size} of the partition $\la$.
If $\la$ is a partition of size $k$, we write $\la\vdash k$.
The \emph{length} $\ell(\la)$ of $\la$ is defined as the number of strictly positive parts of $\la$.

The partitions are often identified with Young diagrams, in which $\lambda_i$ become the row lengths. We also need column lengths $\lambda'_1\ge \lambda'_2\ge\dots$ defined by $\lambda_{j}'=|\{i\ge 1 \mid \lambda_i\ge j\}|$. In particular, $\lambda'_1=\ell(\lambda)$.


\subsection{The asymptotic expansion of Dunkl operators}

If $F$ is a smooth symmetric function of the $N$ variables $x_1,\dots,x_N$, then its Taylor series expansion is also symmetric and we can write the $k$--th order approximation as
\begin{equation} \label{eq_symmetric_Taylor}
F(x_1,\dots,x_N)= \sum_{\lambda:\, |\lambda|\le k,\, \ell(\lambda) \le N} c_F^{\lambda}\cdot M_\lambda(\vx)+ O(\|x\|^{k+1}),
\end{equation}
where the sum is over partitions $\lambda$ of size at most $k$ and length at most $N$. Finally, $M_\lambda(\vx)$ is the monomial symmetric function:
$$
M_\lambda(\vx)=\sum_{\begin{smallmatrix}(d_1,\dots,d_N)\in\mathbb Z^N_{\ge 0}, \text{ such that}\\ \lambda\text{ is the rearrangement of } d_i \text{ in nonincreasing order}\end{smallmatrix}} x_1^{d_1} \cdot x_2^{d_2}\cdots x_N^{d_N}.
$$

\begin{thm} \label{theorem_operators_expansion}
Fix $k=1,2,\dots$ and a partition $\lambda$ with $|\lambda|=k$.  Let $F(x_1,\dots,x_N)$ be a symmetric function of $(x_1,\dots,x_N)\in\R^N$, which is $(k+1)$--times continuously differentiable in a neighborhood of $(0,\dots,0)$ and satisfies $F(0,\dots,0)=0$. Then we have:
\begin{multline}\label{eq_operator_small_th_expansion}
 N^{-\ell(\lambda)} \left[\prod_{i=1}^{\ell(\lambda)} \P_{\lambda_i}\right] \exp\bigl( F(x_1,\dots,x_N)\bigr)\Bigr|_{\setzeroes}= b^{\lambda}_{\lambda} \cdot c_F^{\lambda} + \sum_{\mu:\, |\mu|=k,\, \ell(\mu)>\ell(\lambda)} b^{\lambda}_{\mu} \cdot c_F^{\mu}\\ + L\Bigl(c_F^{(i)},\, 1\le i \le k-1\Bigr) +  R_1\Bigl(c^{\nu}_F, \, |\nu|< k\Bigr)  + N^{-1} R_2\Bigl(c^{\nu}_F, \, |\nu|\le k\Bigr),
\end{multline}
where $b^\lambda_\mu$ are coefficients, which are uniformly bounded in the regime $N\to\infty$, $\theta\to 0$, $\theta N\to \gamma$. In particular,
\begin{equation}
\label{eq_diagonal_asymptotics}
 \lim_{\begin{smallmatrix} N\to\infty,\, \theta\to 0\\ \theta N\to\gamma\end{smallmatrix}}\, b_{\lambda}^{\lambda}= \prod_{i=1}^{\ell(\lambda)}\lambda_i (1+\gamma)_{\lambda_i-1}.
\end{equation}
Further,
\begin{equation} \label{eq_one_row_part}
 L\Bigl(c_F^{(i)},\, 1\le i \le k-1\Bigr)=  \prod_{i=1}^{\ell(\lambda)} \left([z^0](\partial+\gamma d+*_g)^{\lambda_i-1} g(z)\right)-  k (1+\gamma)_{k-1} c^{(k)}_F {\mathbf 1}_{\ell(\lambda)=1},
\end{equation}
where $(m)$ is the one-row Young diagram of size $m$, the operators $\partial$, $d$, $*_g$ are the ones introduced in Definition \ref{df:ops}, and $g(z) := \sum_{n=1}^{\infty} n c_F^{(n)} z^{n-1}$.

Next,  $R_1\Bigl(c^{\nu}_F, \, |\nu|< k\Bigr)$ is a polynomial  in $c^{\nu}_F, \, |\nu|< k$, such that:
\begin{itemize}
 \item If we assign the degree $|\nu|$ to each $c^\nu_F$, then $R_1$ is homogeneous of degree $k$.
 \item The coefficients of the monomials in $R_1$ are uniformly bounded in the regime $N\to\infty$, $\theta\to 0$, $\theta N\to \gamma$.
 \item Each monomial in $R_1$ has at least one factor $c^\nu_F$ with $\ell(\nu)>1$.
\end{itemize}
Finally,  $R_2\Bigl(c^{\nu}_F, \, |\nu|\le k\Bigr)$ is a homogeneous polynomial in   $c^{\nu}_F, \, |\nu|\le  k$, of degree $k$ and with uniformly bounded  coefficients (in the same regime).
\end{thm}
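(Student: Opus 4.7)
The plan is to reduce to a polynomial computation and then to perform a careful combinatorial expansion of $\prod_{i=1}^{\ell(\lambda)}\P_{\lambda_i}\,\exp(F)$ evaluated at the origin. Since both $\partial/\partial x_i$ and the divided-difference part $(1-s_{i,j})/(x_i-x_j)$ of $\mathcal{D}_i$ lower polynomial degree by one, the operator $\prod_{i=1}^{\ell(\lambda)}\P_{\lambda_i}$ is a degree-$k$ lowering operator with $k=|\lambda|$, and evaluation at $\setzeroes$ picks out the degree-$k$ homogeneous component of $\exp(F)$. Using the $C^{k+1}$ hypothesis one replaces $F$ by its Taylor polynomial $\sum_{|\mu|\le k} c_F^{\mu}M_\mu(\vec x)$, the remainder contributing an error of the form absorbed into $N^{-1}R_2$. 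One then expands the degree-$k$ part of $\exp(F)$ as a linear combination, indexed by multisets $\{\nu^{(1)},\dots,\nu^{(r)}\}$ of nonempty partitions with $\sum_j|\nu^{(j)}|=k$, of products $\tfrac{1}{r!}\prod_j c_F^{\nu^{(j)}} M_{\nu^{(j)}}(\vec x)$.

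Next I would expand $\prod_{i=1}^{\ell(\lambda)}\P_{\lambda_i}=\sum_{\vec j\in[N]^{\ell(\lambda)}}\prod_i \mathcal{D}_{j_i}^{\lambda_i}$. Tuples $\vec j$ with all coordinates distinct contribute $\sim N^{\ell(\lambda)}$ terms (matching the $N^{-\ell(\lambda)}$ prefactor), while tuples with coincidences are $O(N^{\ell(\lambda)-1})$ and feed $N^{-1}R_2$. For the distinct-index tuples I would iterate a Leibniz rule for Dunkl operators against a symmetric factor $g$,
\[
\mathcal{D}_i(fg)=g\,\mathcal{D}_i f+f\,\partial_i g,
\]
applied to $\mathcal{D}_i^n e^F=e^F\,\Phi_i^{(n)}$ with $g=e^F$ symmetric, giving the recursion $\Phi_i^{(n+1)}=\mathcal{D}_i\Phi_i^{(n)}+\Phi_i^{(n)}\,\partial_i F$. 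Combined with the elementary identity, valid for any polynomial $h$ in one variable,
\[
\mathcal{D}_i h(x_i)\bigr|_{x_j=0,\,j\ne i}=h'(x_i)+\theta(N-1)\cdot\frac{h(x_i)-h(0)}{x_i},
\]
and the hypothesis $\theta(N-1)\to\gamma$, the restriction of $\mathcal{D}_i$ to the single ``active'' variable $x_i$ degenerates to the operator $\partial+\gamma d$ of Definition \ref{df:ops}, while the Leibniz term $f\,\partial_i g$ degenerates to multiplication by $g(z):=\sum_{n\ge 1}n c_F^{(n)}z^{n-1}$; the recursion then becomes $\Phi^{(n+1)}(z)=(\partial+\gamma d+*_g)\,\Phi^{(n)}(z)$ starting from $\Phi^{(1)}=g$.

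Assembling the contributions: configurations where distinct $\P_{\lambda_i}$'s each act on a distinct active variable, targeting only one-row monomials from $F$, yield $[z^0](\partial+\gamma d+*_g)^{k-1}g(z)$, which is the main piece of \eqref{eq_one_row_part}; the specific subcase in which a single $\mathcal{D}_j^k$ extracts the $c_F^{(k)}$-coefficient of $F$ directly produces $k(1+\gamma)_{k-1}c_F^{(k)}$, which is $b_\lambda^\lambda c_F^\lambda$ when $\ell(\lambda)=1$ (verifying \eqref{eq_diagonal_asymptotics} inductively in $k$) and is subtracted in \eqref{eq_one_row_part} to avoid double-counting. Configurations where the $\P_{\lambda_i}$'s target a single degree-$k$ monomial $M_\mu$ with $\mu=\lambda$, respectively with $\ell(\mu)>\ell(\lambda)$, produce $b_\lambda^\lambda c_F^\lambda$ and $b_\lambda^\mu c_F^\mu$; the remaining configurations, mixing several $c_F^\nu$ with $|\nu|<k$ and at least one factor having $\ell(\nu)>1$, populate $R_1$ by degree-counting, while index coincidences and the Taylor tail of $F$ populate $N^{-1}R_2$. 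The main obstacle will be the last step: verifying the precise match with Definition \ref{def_R_map}, in particular tracking how the divided-difference parts of different $\mathcal{D}_i$'s interact when multiple $\P_{\lambda_i}$'s act in sequence on products of single-variable power sums, so as to reproduce $\partial+\gamma d+*_g$ exactly, together with the combinatorial factors coming from $\exp(F)$ and from summing over tuples $\vec j$; uniform boundedness of $b_\lambda^\mu$ and of the coefficients in $R_1,R_2$ will follow because these are polynomials in the bounded quantity $\theta N$ with combinatorial integer denominators.
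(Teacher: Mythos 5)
Your proposal follows essentially the same route as the paper's proof: replace $F$ by its Taylor polynomial, expand $\prod_i\P_{\lambda_i}$ into a sum over index tuples $\vec j\in[N]^{\ell(\lambda)}$, observe that tuples with coincident indices are $O(N^{\ell(\lambda)-1})$ and feed the $N^{-1}R_2$ error, use the Leibniz rule and the observation that $(1-s_{ij})$ kills symmetric functions to force the first application of each $\D_{j_i}$ to be $\partial_{j_i}$, and then track how the divided-difference part degenerates to $\gamma d$. The paper packages this into Lemma \ref{Lemma_replace_by_polynomial}, Lemma \ref{Lemma_D_as_polynomial}, Corollary \ref{Corollary_reduce_terms}, and Propositions \ref{Proposition_highest_derivatives}--\ref{Proposition_LLN_leading_term}, ending with the same subtraction-to-avoid-double-counting that you describe.

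Two small points where your outline and the paper's proof diverge or need sharpening. First, the replacement of $F$ by its $k$-th Taylor polynomial is not an $N^{-1}R_2$-type error: the paper's Lemma \ref{Lemma_replace_by_polynomial} shows that since the remainder is $O(\|x\|^{k+1})$ and all $k$ applied operators lower degree by at most one, the remainder contributes exactly zero after evaluation at the origin. Second, and more importantly, your ``elementary identity''
\[
\mathcal{D}_i h(x_i)\bigr|_{x_j=0,\,j\ne i}=h'(x_i)+\theta(N-1)\cdot\frac{h(x_i)-h(0)}{x_i}
\]
is correct as a one-step evaluation, but you cannot set $x_j=0$ in between successive applications of $\D_i$: each divided difference $\frac{\theta}{x_i-x_j}(1-s_{ij})$ applied to a monomial in the ``active'' variables produces $\theta d_i$ \emph{plus} a term carrying a factor $x_j$ (the paper's equation \eqref{eq_x2}). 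That $x_j$-dependent piece only survives after the full iteration if some later factor acts on $x_j$, and those index coincidences are precisely what the paper's Claim A (inside the proof of Proposition \ref{Proposition_highest_derivatives}) counts and absorbs into $O(N^{-1})$. You flag exactly this as ``the main obstacle,'' and it is indeed the crux; the paper resolves it by the counting argument of Claim A combined with \eqref{eq_x2}, which is what lets one replace every $\frac{\theta}{x_i-x_j}(1-s_{ij})$ in $\D_i$ by $\theta d_i$ up to $O(N^{-1})$ error. With that verification supplied, your plan reproduces the paper's proof.
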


Before proving Theorem \ref{theorem_operators_expansion}, let us use it to deliver the proof of Theorem \ref{thm_small_th}.

\begin{proof}[Proof of Theorem \ref{thm_small_th}]
First, take a LLN--appropriate sequence $\{\mu_N\}_N$ with associated sequence of real numbers $\{\ka_l\}_{l\ge 1}$.
Let $\{m_k\}_{k\ge 1}$ be the image of $\{\ka_l\}_{l\ge 1}$ under the map $\Tcm$, that is, each $m_k$ is the function of the $\ka_l$'s given by \eqref{eq_moments_through_f_cumulants}.
We aim to show that $\{\mu_N\}_N$ satisfies a LLN with associated sequence of real numbers $\{m_k\}_{k\ge 1}$.

Let us denote the BGF of $\mu_N$ by $G_{N; \theta}(x_1, \dots, x_N)$. Let $s = 1, 2, \dots$ and $k_1, \dots, k_s\in\Z_{\geq 1}$ be arbitrary.
By Proposition \ref{proposition_moments_through_operators} (or Proposition \ref{Proposition_BGF_dist} for distributions), we have
\begin{equation}
\E_{\mu_N} \!\!\left[\prod_{i=1}^s  p_{k_i}^N \right] =
N^{-s} \left( \prod_{i=1}^s{\P_{k_i}}\right)G_{N; \th}(x_1, \dots, x_N)\Bigr|_{\setzeroes}.
\end{equation}
Without loss of generality, we assume that $k_1\ge k_2\ge\dots\ge k_s$, so that the $k_i$'s form a partition.
Since $G_{N; \th}$ is holomorphic in a neighborhood of the origin and $G_{N; \th}(0,\dots,0)=1$, then there is a holomorphic function $F_{N; \th}(x_1, \dots, x_N)$ in a neighborhood of the origin such that $G_{N; \th} = \exp(F_{N; \th})$ and $F_{N; \th}(0,\dots,0)=0$. The functions $F_{N; \th}(x_1, \dots, x_N)$ are smooth and symmetric in the real variables $x_1, \dots, x_N$, so we can consider their Taylor expansions:
$$
F_{N; \th}(x_1,\dots,x_N)= \sum_{\la:\, |\la|\le k,\, \ell(\la) \le N} c_{F_{N; \th}}^{\lambda}\cdot M_\lambda(\vx)+ O(\|x\|^{k+1}).
$$
By LLN--appropriateness,
$$
\limtwo c^{(n)}_{F_{N; \th}} = \frac{\ka_n}{n},\qquad \limtwo c^{\mu}_{F_{N; \th}}=0,\quad \text{if }\ell(\mu)>1.
$$
Apply Theorem \ref{theorem_operators_expansion} to the function $F_{N; \th}$ and the partition $\la = (k_1 \ge \cdots \ge k_s)$.
Let us take the limit of each term in the resulting right-hand side of \eqref{eq_operator_small_th_expansion} in the limit regime $N\to\infty$, $\th\to 0$, $\th N\to\ga$:
\begin{itemize}
 \item In the first line, if $s>1$, then each term involves some $c^{\mu}_{F_{N; \th}}$ with $\ell(\mu)>1$, and therefore tends to $0$. Otherwise, if $s=1$, then there is a single asymptotically non-vanishing term, namely $b^{(k_1)}_{(k_1)} \cdot c^{(k_1)}_{F_{N; \th}}$, which converges to $(1+\gamma)_{k_1-1}\,\ka_{k_1}$.
  \item The polynomial $R_1$ converges to $0$, since each of its monomials involves some $c^{\mu}_{F_{N; \th}}$ with $\ell(\mu)>1$ and, therefore, vanishes asymptotically.
  \item The polynomial $\frac{1}{N} R_2$ converges to $0$ due to the $\frac{1}{N}$ prefactor.
  \item The polynomial $L$ converges to
  $$
  \prod_{i=1}^{s} \left([z^0](\partial+\gamma d+*_g)^{k_i-1} g(z)\right)-  (1+\gamma)_{k_1-1} \ka_{k_1} {\mathbf 1}_{s=1},
  $$
where $g(z) = \sum_{n=1}^{\infty} {\ka_n z^{n-1}}$, due to the fact that the power series $\sum_{n=1}^{\infty}nc_{F_{N; \th}}^{(n)}z^{n-1}$ converges coefficient-wise to $g(z)$.
\end{itemize}
Combining the terms coming from the above four items, we conclude that
$$
\lim_{N\to\infty} \E_{\mu_N} \!\!\left[\prod_{i=1}^s  p_{k_i}^N \right] = \prod_{i=1}^s \left([z^0](\partial+\gamma d+*_g)^{k_i-1} g(z)\right)\!.
$$
We have thus arrived at the Law of Large Numbers with $m_k$ given by \eqref{eq_moments_through_f_cumulants}.

\bigskip

In the opposite direction, take a sequence $\mu_N$ which satisfies the Law of Large Numbers with associated sequence $\{m_k\}_{k\ge 1}$.
Let $\{\ka_l\}_{l\ge 1}$ be the image of $\{m_k\}_{k\ge 1}$ under the map $\Tmc$.
Again let $G_{N; \th}(x_1, \dots, x_N)$ be the BGF of $\mu_N$. We show that $\mu_N$ is LLN--appropriate with corresponding sequence $\{\ka_l\}_{l\ge 1}$, that is, we are going to establish the conditions on partial derivatives of Definition \ref{Definition_LLN_appr_ht}. This will be done by induction on the total order of the derivative. For the inductive step, we assume that for all $s\le k-1$, the asymptotic behavior of all partial derivatives of order $s$ is already established, i.e.\ we assume that the limits
$$
\lim_{\begin{smallmatrix} N\to\infty,\, \theta \to 0\\ \theta N\to \gamma \end{smallmatrix}}\left.\frac{\pa}{\pa x_{i_1}}\cdots\frac{\pa}{\pa x_{i_s}}\ln{(G_{N; \th})}\right|_{\setzeroes},\quad i_1, \cdots, i_s\in\Z_{\ge 1},
$$
exist and are equal to zero unless $i_1 = \cdots = i_s$, in which case the limit is equal to $(s-1)!\cdot \kappa_s$.

Our task is to prove the two conditions of Definition \ref{Definition_LLN_appr_ht} for $\ell=k$ and for $r=k$.
Let $p(k)$ be the total number of partitions of $k$ and consider the $p(k)$ expressions \eqref{eq_operator_small_th_expansion} obtained by making $\lambda$ run over all partitions of $k$ and letting $F_N:=F_{N; \th}$ be determined through $G_{N; \th} = \exp(F_{N; \th})$.
We regard the left-hand sides and the coefficients $c^{\mu}_{F_{N; \th}}$ with $|\mu| < k$, as constants, while we regard the terms $c^{\la}_{F_{N; \th}}$ with $|\la| = k$, as variables; then we can treat these expressions as $p(k)$ linear equations for the $p(k)$ variables $c^{\la}_{F_{N; \th}}$ with $|\la|=k$.
The coefficients of these equations generally depend on $N$ and $\theta$, and moreover we know the $N\to\infty$, $\theta\to\infty$, $\theta N\to\infty$ asymptotic behavior of the left-hand sides of \eqref{eq_operator_small_th_expansion} as well as $L$ and $R_1$ in the right-hand side (by the inductive hypothesis).
The form of the first line of \eqref{eq_operator_small_th_expansion} implies that the matrix of coefficients of these equations becomes triangular as $N\to\infty$, $\th\to 0$, $\theta N\to \gamma$ in the lexicographic order $\leq$ on partitions of size $k$, viewed as vectors of column lengths $(\lambda'_1,\lambda'_2,\dots)$, because $\ell(\mu)>\ell(\la)$ implies $\mu>\la$. The diagonal elements have nonzero limits, because of \eqref{eq_diagonal_asymptotics}.

We can rewrite these linear equations in the matrix notation. Let $B^{N,\theta}$ be the $p(k)\times p(k)$ matrix with matrix elements
$$
B^{N,\theta}(\mu,\lambda)=b^{\lambda}_\mu.
$$
Further, let $\mathbf c^N$ denote the $p(k)$--dimensional column-vector with coordinates $c^{\la}_{F_{N; \th}}$, $|\lambda|=k$. Then the previous paragraph can be summarized as a matrix equation
\begin{equation}
\label{eq_x26}
 B^{N,\theta}\cdot \mathbf c^N = \mathbf r^N,
\end{equation}
where the vector $\mathbf c^N$ is unknown and the right-hand side $\mathbf r^N$ is known. The key property of \eqref{eq_x26} is that the entries of the inverse  matrix $(B^{N,\theta})^{-1}$ are bounded as $N\to\infty$, $\theta\to\infty$, $\theta N\to\gamma$; this follows from triangularity of $B^{N,\theta}$ and non-zero limits for its diagonal entries. Let $\mathbf c^{\infty}$ denote another $p(k)$--dimensional vector, in which the first coordinate (corresponding to the one-row partition $(k)$) is $\tfrac{\kappa_k}{k}$ (here $\kappa_k$ is found from \eqref{eq_x28}, in which the numbers $m_1,m_2,\dots$ are known us) and all other coordinates are zeros. The first part of the proof (where we showed that each LLN-appropriate sequence satisfies LLN) and the induction hypothesis imply that
\begin{equation}
\label{eq_x27}
 B^{N,\theta}\cdot \mathbf  c^\infty = \mathbf r^N +o(1),
\end{equation}
where $o(1)$ is a vanishing term as $N\to\infty$, $\theta\to\infty$, $\theta N\to\gamma$. Multiplying
\eqref{eq_x26} and \eqref{eq_x27} by $(B^{N,\theta})^{-1}$ and comparing the results, we conclude that
$$
 \lim_{\begin{smallmatrix}N\to\infty, \theta\to 0,\\ \theta N\to \gamma \end{smallmatrix}} \mathbf c^N=\mathbf c^\infty. \qedhere
$$
\end{proof}

\subsection{Proof of Theorem \ref{theorem_operators_expansion}}\label{sec_proof_expansion}

We start by reducing to the case of $F$ being a symmetric polynomial.

\begin{lemma} \label{Lemma_replace_by_polynomial}
Suppose that $F$ is a $(k+1)$--times continuously differentiable function in a neighborhood of $(0, \dots, 0)\in\C^N$, with Taylor expansion \eqref{eq_symmetric_Taylor}. Then for any $\lambda$ with $|\lambda|=k$, we have
$$
\left[\prod_{i=1}^{\ell(\lambda)} \P_{\lambda_i}\right] \exp\bigl( F(x_1,\dots,x_N)\bigr)\Bigr|_{\setzeroes}=  \left[\prod_{i=1}^{\ell(\lambda)} \P_{\lambda_i}\right] \exp\bigl( \tilde F(x_1,\dots,x_N)\bigr)\Bigr|_{\setzeroes},
$$
where
$$
\tilde F(x_1,\dots,x_N)= \sum_{\nu:\, |\nu|\le k} c_F^{\nu}\cdot M_\nu(\vx).
$$
\end{lemma}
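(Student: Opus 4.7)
The plan is to exploit the fact that each Dunkl operator $\D_i$ drops the order of vanishing at the origin by at most one. Since $\prod_{i=1}^{\ell(\la)}\P_{\la_i}$ is a sum of compositions of exactly $|\la|=k$ Dunkl operators, it annihilates (after evaluation at the origin) every sufficiently smooth function whose Taylor series at $0$ vanishes through order $k$. Applying this to the tail $\exp(F)-\exp(\tilde F)$ gives the lemma immediately.

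\textbf{Step 1 (order-of-vanishing under one Dunkl operator).} First I would prove the following claim: if $m\ge 1$ and $f$ is $C^m$ in a neighborhood of $0\in\C^N$ with $\pa^{\alpha}f(0)=0$ for all $|\alpha|\le m-1$, then $\D_i f$ is $C^{m-1}$ near $0$ with $\pa^{\beta}(\D_i f)(0)=0$ for all $|\beta|\le m-2$. The partial-derivative part of $\D_i$ trivially loses one derivative and one order of vanishing, so only the divided-difference piece needs care. Using the integral identity
\begin{equation*}
\frac{f(x)-f(s_{i,j}x)}{x_i-x_j}=\int_0^1(\pa_i-\pa_j)f\bigl((1-t)x+t\, s_{i,j}x\bigr)\,dt
\end{equation*}
(equivalently, Hadamard's lemma applied to the $s_{i,j}$-antisymmetric part $\tfrac12(f-s_{i,j}f)$), one sees that the divided difference is a $C^{m-1}$ function near the origin; Taylor-expanding the integrand then yields the claimed vanishing.

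\textbf{Step 2 (iterate and apply to the tail).} Iterating Step 1 through each of the $k$ Dunkl operators appearing in $\prod_{i=1}^{\ell(\la)}\P_{\la_i}=\prod_{i=1}^{\ell(\la)}\sum_{j=1}^N\D_j^{\la_i}$, any $C^{k+1}$-function vanishing to order $\ge k+1$ at the origin is mapped to a continuous function that vanishes at the origin. Now set $R:=F-\tilde F$. By the definition of $\tilde F$ as the degree-$\le k$ part of the Taylor expansion \eqref{eq_symmetric_Taylor} of $F$, combined with Taylor's theorem applied to $F\in C^{k+1}$, we have $R\in C^{k+1}$ with $\pa^\alpha R(0)=0$ for $|\alpha|\le k$. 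Writing
\begin{equation*}
\exp(F)-\exp(\tilde F)=\exp(\tilde F)\bigl(\exp(R)-1\bigr),
\end{equation*}
and noting that $\exp(R)-1$ inherits the order of vanishing of $R$, we conclude that $\exp(F)-\exp(\tilde F)$ is $C^{k+1}$ and vanishes to order $\ge k+1$ at the origin. Step 2 then delivers
\begin{equation*}
\left[\prod_{i=1}^{\ell(\la)}\P_{\la_i}\right]\!\bigl(\exp(F)-\exp(\tilde F)\bigr)\bigg|_{\setzeroes}=0,
\end{equation*}
which is precisely the identity claimed in the lemma.

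\textbf{Main obstacle.} The only genuinely non-trivial step is the order-of-vanishing estimate for $\D_i$: one must verify that the formally singular operator $(1-s_{i,j})/(x_i-x_j)$ costs exactly one derivative and exactly one order of vanishing, rather than something worse. Hadamard's lemma (via the integral formula above) handles this cleanly, after which the rest of the argument is bookkeeping with Taylor expansions.
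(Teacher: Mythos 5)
Your proposal is correct and follows essentially the same route as the paper's proof: both define a remainder $\exp(F)-\exp(\tilde F)$ vanishing to order $k+1$, both prove by induction that applying one of the elementary operators $\pa/\pa x_i$ or $(1-s_{ij})/(x_i-x_j)$ costs exactly one derivative of smoothness and one order of vanishing, and both justify the divided-difference step by Taylor's theorem with integral remainder (your Hadamard-lemma identity is just a way of writing that remainder). The only cosmetic difference is that you obtain the remainder via $\exp(\tilde F)(\exp(F-\tilde F)-1)$ rather than subtracting directly, which does not change the substance.
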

\begin{proof}
 We have
 $$
  \exp\bigl( F(x_1,\dots,x_N)\bigr)=\exp\bigl( \tilde F(x_1,\dots,x_N)\bigr) +R(x_1,\dots,x_N),
 $$
 where $R$ is a $(k+1)$--times continuously differentiable function, satisfying $R=O(\|x\|^{k+1})$ as $(x_1,\dots,x_N)\to(0,\dots,0)$. It remains to show that after we apply $k$ operators of the form $\frac{\pa}{\pa x_i}$ or $\frac{1-s_{ij}}{x_i-x_j}$ to $R$, the resulting function $R^{(k)}$ is continuous and vanishes at $(0,\dots,0)$. For that we let $R^{(m)}$, $m=1,2,\dots,k$ be the result of application of $m$ such operators and prove by induction in $m$ that $R^{(m)}$ is $(k+1-m)$--times continuously differentiable and satisfies $R^{(m)}=O(\|x\|^{k+1-m})$. The induction step is proven by applying to $R^{(m)}$ the Taylor's theorem with remainder in the integral form.
\end{proof}

By virtue of Lemma \ref{Lemma_replace_by_polynomial}, we can (and will) assume for the remainder of this section that
$$F(x_1, \dots, x_N) = \sum_{\nu:\, |\nu|\le k}{c^{\nu}_F\cdot M_\la(\vx)}.$$
Next, consider any product of $k$ operators, each of which is either $\frac{\pa}{\pa x_i}$ for some $i$, or $\frac{1}{x_i-x_j}(1-s_{ij})$ for some $i$ and $j$. We apply these operators inductively to $\exp(F)$, using the following rules:
\begin{multline}
\label{eq_Leibnitz}
\frac{\pa}{\pa x_i} \bigl[ H(x_1,\dots,x_N) \cdot \exp (F(x_1,\dots,x_N)) \bigr]\\
= \left(\frac{\pa}{\pa x_i} H(x_1,\dots,x_N)+ H(x_1,\dots,x_N) \frac{\pa}{\pa x_i} F(x_1,\dots,x_N)\right) \cdot \exp (F(x_1,\dots,x_N)),
\end{multline}
\begin{multline}
\label{eq_Leibnitz_dif}
 \frac{1}{x_i-x_j}(1-s_{ij}) \bigl[ H(x_1,\dots,x_N) \cdot \exp (F(x_1,\dots,x_N)) \bigr]\\ = \left(\frac{1}{x_i-x_j}(1-s_{ij})  H(x_1,\dots,x_N) \right) \cdot  \exp(F(x_1,\dots,x_N)).
\end{multline}

Hence, taking into account that $F(0,\dots,0)=0$, the result of acting by such product on $\exp(F)$ and then setting all variables equal to $0$ is a finite linear combination of products of actions of $\frac{\pa}{\pa x_i}$ and $\frac{1}{x_i-x_j}(1-s_{ij})$ on the function $F$, and then picking up the constant term of the polynomial.
Since $F$ is a polynomial with coefficients $c_F^\la$ and the actions of $\frac{\pa}{\pa x_i}$ and $\frac{1}{x_i-x_j}(1-s_{ij})$ on monomials are clear, we conclude the following statement.

\begin{lemma} \label{Lemma_D_as_polynomial}
For any $k$ indices $1\le i_1,\dots,i_k\le N$, the expression
\begin{equation}\label{prod_Ds}
\left( \prod_{m=1}^k \D_{i_m} \right) \exp(F(x_1\dots,x_N)) \Bigr|_{\setzeroes}
\end{equation}
is a homogeneous polynomial of degree $k$ in $c^\la_F$ (if we regard each $c^\la_F$ as a degree $|\lambda|$ variable), whose coefficients are uniformly bounded as $N\to\infty$, $\theta\to 0$, $\theta N\to\gamma$.
\end{lemma}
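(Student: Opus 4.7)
\emph{Proof plan.} My strategy is to rewrite the left-hand side of \eqref{prod_Ds} as the value at $\setzeroes$ of an explicit differential-polynomial expression acting on the constant function $1$, and then read off polynomiality, homogeneity and uniform boundedness from its structure.

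Combining \eqref{eq_Leibnitz} and \eqref{eq_Leibnitz_dif}, one verifies the dressed Leibniz rule
$$\D_i\bigl[H\cdot\exp(F)\bigr]=\bigl(\D_i H+H\cdot \partial_i F\bigr)\exp(F).$$
Iterating this identity $k$ times starting from $\exp(F)$ and evaluating at $\setzeroes$, where $\exp(F)=1$ and $F=0$, reduces \eqref{prod_Ds} to
$$\bigl[\widetilde\D_{i_1}\widetilde\D_{i_2}\cdots\widetilde\D_{i_k}(1)\bigr]\Big|_{\setzeroes},\quad\text{with}\quad \widetilde\D_i H :=\D_i H+H\cdot \partial_i F.$$
By Lemma \ref{Lemma_replace_by_polynomial} I may truncate $F$ to the polynomial $\sum_{|\nu|\le k}c_F^\nu M_\nu(\vx)$; after distributing each $\widetilde\D_i=\partial_i+\theta\sum_{j\neq i}\frac{1-s_{i,j}}{x_i-x_j}+\mathrm{mult}(\partial_i F)$ across the product of $k$ factors, the expression becomes a finite, explicit polynomial in the variables $\{c_F^\nu\}_{|\nu|\le k}$, with coefficients depending on $N$, $\theta$, and $i_1,\ldots,i_k$.

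To obtain weighted homogeneity I assign weight $|\nu|$ to $c_F^\nu$ and track both the $x$-degree and the $c$-weight of each summand in the expansion: every $M_\nu$ contributes matching $x$-degree and $c$-weight equal to $|\nu|$; each $\partial_i$ and each divided difference lowers the $x$-degree by one; multiplication by $\partial_i F$ raises the $c$-weight by one and contributes a new factor of $x$-degree equal to one less than its $c$-weight. Since exactly $k$ elementary operations are applied in total, only summands whose product of $M_\nu$-factors has total $x$-degree equal to $k$ survive evaluation at $\setzeroes$, and for every such surviving monomial the $c$-weight also equals $k$. This proves weighted homogeneity of degree $k$.

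The main obstacle is the uniform boundedness of the coefficients. The number of distinct combinatorial skeletons arising from the expansion is bounded by a constant $C_k$ depending only on $k$; all remaining $(N,\theta)$-dependence is concentrated in the sums $\theta\sum_{j\neq i_m}$ of the divided-difference branches and in the rational coefficients these branches output upon contact with the $M_\nu$-factors. The key structural observation I would use is that $\frac{1-s_{i_m,j}}{x_{i_m}-x_j}$ applied to a polynomial and evaluated at $\setzeroes$ yields a coefficient of absolute value at most a constant depending only on $k$; moreover this coefficient vanishes unless the variable $x_j$ is \emph{active}, i.e.\ occurs non-trivially in the $M_\nu$-factor(s) being acted upon. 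Consequently each free index sum $\sum_{j\neq i_m}$ is either a genuine sum of $N-1$ bounded terms paired against the small prefactor $\theta$ (contributing $\theta(N-1)\to\gamma$), or is truncated to $O(1)$ active indices paired against a small $\theta$ (contributing $O(\theta)=O(\gamma/N)$). In either case each divided-difference block is of size $O(1)$, and composing the at most $k$ such blocks with the remaining derivation and multiplication branches gives a uniformly bounded coefficient as $N\to\infty$, $\theta\to 0$, $\theta N\to\gamma$. Once this combinatorial boundedness is established, the lemma follows and the rest is bookkeeping.
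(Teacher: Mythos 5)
Your argument takes essentially the same route as the paper's: truncate $F$ via Lemma \ref{Lemma_replace_by_polynomial}, iterate the Leibniz identities \eqref{eq_Leibnitz}--\eqref{eq_Leibnitz_dif} (which you conveniently package as a dressed operator $\widetilde\D_i$), obtain weighted homogeneity of degree $k$ by degree-counting, and control the coefficients by pairing the $\theta$ prefactor on each divided-difference branch against the $N-1$ choices of the free index $j$, so that $\theta(N-1)\to\gamma$ stays bounded.

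The one inaccuracy is your intermediate structural observation that the divided difference's contribution ``vanishes unless $x_j$ is active.'' This is not literally true: applying $\tfrac{1-s_{i,j}}{x_i-x_j}$ to $x_i^2$ gives $x_i+x_j$, and applying it to $x_i$ and evaluating at the origin gives $1$ --- both nonzero for every $j$, regardless of whether $x_j$ occurred in the input. Fortunately the claim is not needed. Your case (a), bounding the sum over all $j\ne i_m$ by $\theta(N-1)\cdot C_k$, already yields the uniform bound on its own, so the case distinction and the supporting ``active index'' claim can simply be dropped --- which is exactly what the paper does: it expands $\prod_m\D_{i_m}$ into elementary factors and observes directly that the $r$ divided-difference slots contribute at most $(N-1)^r\theta^r$ times a $k$-dependent constant, which is bounded in the regime $\theta N\to\gamma$.
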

\begin{proof}
By definition, each $\D_i$ is linear combination of $N$ terms, each of which is $\frac{\pa}{\pa x_i}$ or $\frac{1}{x_i-x_j}(1-s_{ij})$.
Observe that any of these two simple operators decreases the degree of a polynomial in the variables $x_1, \cdots, x_N$ by $1$.
Therefore, using the rules \eqref{eq_Leibnitz} and \eqref{eq_Leibnitz_dif}, the expression \eqref{prod_Ds} is a polynomial in the coefficients of the degree $k$ component of $\exp(\sum_{\la:\, |\la|\le k}{c^\la_F \cdot M_\la(\vx)})$.
Such polynomial is therefore in the variables $c^\la_F$ and is homogeneous of degree $k$, because of how we assigned the degrees to the $c^F_\la$'s.

In the formula \eqref{dunkl_ops} for $\D_i$, the term $\frac{\pa}{\pa x_i}$ comes with unit coefficient, and the remaining terms $\frac{1}{x_i-x_j}(1-s_{ij})$ come with a prefactor $\th$, which decays as $\gamma/N$ as $N\to\infty$. Hence, expanding $\prod_{m=1}^k \D_{i_m}$ as a linear combination of products of the operators $\frac{\pa}{\pa x_i}$ and $\frac{1}{x_i-x_j}(1-s_{ij})$, we see that the coefficients of the polynomial \eqref{prod_Ds} in the variables $c_\la^F$ are uniformly bounded in the regime of our interest.
\end{proof}

\begin{corollary} \label{Corollary_reduce_terms}
Take any partition $\la$ with $|\la|=k$.
As $N\to\infty$, $\theta\to 0$, $\theta N\to \gamma$, we have
\begin{multline}
N^{-\ell(\lambda)} \left[\prod_{i=1}^{\ell(\lambda)} \P_{\lambda_i}\right]\! \exp\bigl( F(x_1,\dots,x_N)\bigr)\Bigr|_{\setzeroes} =
\left[\prod_{i=1}^{\ell(\lambda)} (\D_i)^{\lambda_i}\right]\! \exp\bigl( F(x_1,\dots,x_N)\bigr)\Bigr|_{\setzeroes}\\+ N^{-1} R_3,
\end{multline}
where $R_3$ is a homogeneous polynomial of degree $k$ in the coefficients $c^{\nu}_F$ (if we regard each $c^\nu_F$ as a degree $|\nu|$ variable), and with uniformly bounded coefficients.
\end{corollary}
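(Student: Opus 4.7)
The plan is to expand the product $\prod_{i=1}^{\ell(\lambda)} \P_{\lambda_i}$ using the definition $\P_k = \sum_{j=1}^N (\D_j)^k$ to obtain
\begin{equation*}
\prod_{i=1}^{\ell(\lambda)} \P_{\lambda_i}\exp(F)\Bigr|_{\setzeroes} = \sum_{(j_1,\dots,j_{\ell(\lambda)})\in\{1,\dots,N\}^{\ell(\lambda)}} \prod_{i=1}^{\ell(\lambda)} (\D_{j_i})^{\lambda_i} \exp(F)\Bigr|_{\setzeroes},
\end{equation*}
split the sum into those tuples $(j_1,\dots,j_{\ell(\lambda)})$ with pairwise distinct entries and those with a repetition, and show that the first group yields the leading term while the second is absorbed into the error.

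First I would verify the equivariance of the Dunkl operators under the symmetric group: if $\sigma\in S(N)$ and $s_\sigma$ acts on functions by permutation of variables, then $s_\sigma^{-1}\D_j s_\sigma = \D_{\sigma(j)}$ (immediate from \eqref{dunkl_ops}). Combined with symmetry of $F$ (so that $s_\sigma \exp(F) = \exp(F)$) and $S(N)$-invariance of the origin, this gives that, for any $\ell(\lambda)$-tuple $(j_1,\dots,j_{\ell(\lambda)})$ of pairwise distinct indices,
\begin{equation*}
\prod_{i=1}^{\ell(\lambda)} (\D_{j_i})^{\lambda_i} \exp(F)\Bigr|_{\setzeroes} = \prod_{i=1}^{\ell(\lambda)} (\D_i)^{\lambda_i} \exp(F)\Bigr|_{\setzeroes},
\end{equation*}
by picking $\sigma$ with $\sigma(j_i)=i$. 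Hence the contribution of pairwise-distinct tuples to $\prod \P_{\lambda_i}\exp(F)|_0$ equals
\begin{equation*}
N(N-1)\cdots(N-\ell(\lambda)+1)\cdot \prod_{i=1}^{\ell(\lambda)} (\D_i)^{\lambda_i} \exp(F)\Bigr|_{\setzeroes}.
\end{equation*}

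Next, I would bound the remaining terms. The number of tuples $(j_1,\dots,j_{\ell(\lambda)})$ with at least one repetition is $O(N^{\ell(\lambda)-1})$, and by Lemma \ref{Lemma_D_as_polynomial}, for each fixed such tuple, $\prod_i (\D_{j_i})^{\lambda_i} \exp(F)|_0$ is a homogeneous polynomial of degree $k$ in the variables $c_F^\nu$ with coefficients that stay uniformly bounded in the regime $N\to\infty$, $\theta\to 0$, $\theta N\to\gamma$. Summing and multiplying by $N^{-\ell(\lambda)}$ yields an $N^{-1}$-times a homogeneous degree-$k$ polynomial with uniformly bounded coefficients; this is the $N^{-1} R_3$ term. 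Finally, the factor $N(N-1)\cdots(N-\ell(\lambda)+1)/N^{\ell(\lambda)} = 1 + O(1/N)$ produces the main term $\prod (\D_i)^{\lambda_i} \exp(F)|_0$ plus another $N^{-1}$ correction, which Lemma \ref{Lemma_D_as_polynomial} applied to the fixed tuple $(1,2,\dots,\ell(\lambda))$ shows is again $N^{-1}$ times a homogeneous degree-$k$ polynomial with uniformly bounded coefficients. Combining both corrections into a single remainder finishes the proof.

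The main step is the $S(N)$-equivariance argument that identifies the leading contribution as a multiple of $\prod(\D_i)^{\lambda_i}\exp(F)|_0$; once that symmetry reduction is in place, the rest of the argument is a combinatorial counting and a direct application of Lemma \ref{Lemma_D_as_polynomial}. No difficulty is expected from the choice of representative tuple $(1,\dots,\ell(\lambda))$ since all pairwise-distinct tuples give the same value by equivariance.
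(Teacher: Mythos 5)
Your proposal is correct and follows the same route as the paper's proof: expand each $\P_{\lambda_i}$ as $\sum_j (\D_j)^{\lambda_i}$, observe via symmetry of $F$ (which you spell out as $S(N)$-equivariance of the Dunkl operators) that every tuple of distinct indices contributes the same value $\prod_{i=1}^{\ell(\lambda)}(\D_i)^{\lambda_i}\exp(F)|_0$, and absorb both the repeated-index tuples and the $N(N-1)\cdots(N-\ell(\lambda)+1)/N^{\ell(\lambda)}=1+O(1/N)$ prefactor correction into the remainder via Lemma \ref{Lemma_D_as_polynomial}. You make explicit a detail the paper leaves implicit (the prefactor correction), but the argument is the same.
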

\begin{proof}
 Each $\P_{\lambda_i}$ is a sum of $N$ terms $(\D_j)^{\lambda_i}$, $j=1,\dots,N$. Hence,  $\left[\prod_{i=1}^{\ell(\lambda)} \P_{\lambda_i}\right]$ is a sum of $N^{\ell(\lambda)}$ terms, each of which is a finite (independent of $N$ and $\theta$) product of $(\D_j)^{\lambda_i}$. For all but $O(N^{\ell(\lambda) - 1})$ of these terms, the indices $j$ are all distinct. Hence, by symmetry of $F$, the result of the action of such product on $\exp(F)$ is the same as that of $\prod_{i=1}^{\ell(\lambda)} (\D_i)^{\lambda_i}$, after setting all variables $x_i$ equal to zero. Dividing by $N^{-\ell(\lambda)}$, we get the desired statement.
\end{proof}

For the rest of the section, we analyze $\left[\prod_{i=1}^{\ell(\lambda)} (\D_i)^{\lambda_i}\right] \exp\bigl( F(x_1,\dots,x_N)\bigr)\Bigr|_{\setzeroes}$. In view of Corollary \ref{Corollary_reduce_terms} we need to show that it has an expansion of the form of the right-hand side of \eqref{eq_operator_small_th_expansion}.

\begin{proposition} \label{Proposition_highest_derivatives}
Take any partition $\la$ with $|\la|=k$. We have
$$
\left[\prod_{i=1}^{\ell(\lambda)} (\D_i)^{\lambda_i}\right] \exp\bigl( F(x_1,\dots,x_N)\bigr)\Bigr|_{\setzeroes}= b^{\lambda}_{\lambda} \cdot c_F^{\lambda} + \sum_{\mu:\, |\mu|=k,\, \ell(\mu)>\ell(\lambda)} b^{\lambda}_{\mu} \cdot c_F^{\mu}+R+O\left(\frac{1}{N}\right),
$$
where the coefficients $b^\lambda_\mu$  are uniformly bounded in the regime $N\to\infty$, $\theta\to 0$, $\theta N\to \gamma$. In particular,
$$
\lim_{\begin{smallmatrix} N\to\infty,\, \theta\to 0\\ \theta N\to\gamma\end{smallmatrix}}\, b_{\lambda}^{\lambda}= \prod_{i=1}^{\ell(\lambda)} \lambda_i (1+\gamma)_{\lambda_i-1}.
$$
Moreover, $R$ is a homogeneous polynomial of degree $k$ in the coefficients $c_F^\nu$ with $|\nu|<k$, i.e., it does not involve the coefficients $c_F^\nu$ with $|\nu|=k$. Finally, $O\left(\frac{1}{N}\right)$ stands for a linear polynomial in the coefficients $c_F^\nu$ with $|\nu| = k$, whose coefficients are of the order $O\left(\frac{1}{N}\right)$, as $N\to\infty$, $\theta\to 0$, $\theta N\to \gamma$.
\end{proposition}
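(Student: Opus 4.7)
The plan is to expand $\exp(F)$ in powers of $F$, separate the linear-in-$c_F^\nu$ contribution from the rest, and analyze the resulting product of Dunkl operators via the splitting $\D_i = \partial_i + \theta T_i$ with $T_i := \sum_{j \ne i}(1 - s_{ij})/(x_i - x_j)$.

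First I would apply Lemma \ref{Lemma_replace_by_polynomial} to reduce to the case where $F = \sum_{\nu:\, 1 \le |\nu| \le k} c_F^\nu M_\nu(\vec x)$ is a symmetric polynomial. Expanding $\exp(F) = \sum_{n \ge 0} F^n/n!$ and noting that $\prod_i(\D_i)^{\lambda_i}$ is a degree-$k$ lowering operator, only total-degree-$k$ contributions survive at $\vec x = 0$. The terms with $n \ge 2$ give a polynomial in the $c_F^\nu$'s that is homogeneous of degree $k$ under the weighting $\deg(c_F^\nu) = |\nu|$. Since $F(0) = 0$, each $F$-factor contributes $c_F^{\nu_i}$ with $|\nu_i| \ge 1$, and with $n \ge 2$ factors summing to $k$ we get $|\nu_i| < k$ for every $i$. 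Uniform boundedness of the coefficients is Lemma \ref{Lemma_D_as_polynomial}; these contributions form $R$. The $n=1$ part $\prod_i(\D_i)^{\lambda_i} F|_{\vec x = 0}$ selects only $|\mu| = k$ (lower-degree monomials die under a degree-$k$ lowering operator), yielding $\sum_{|\mu|=k} b_\lambda^\mu c_F^\mu$ with
\[
b_\lambda^\mu := \prod_{i=1}^{\ell(\lambda)}(\D_i)^{\lambda_i} M_\mu(\vec x) \Bigr|_{\vec x = 0},
\]
and uniform boundedness of each $b_\lambda^\mu$ follows from Lemma \ref{Lemma_D_as_polynomial} applied to $F = M_\mu$.

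Next I would expand each $(\D_i)^{\lambda_i} = (\partial_i + \theta T_i)^{\lambda_i}$ and distribute. The pure-derivative term $\prod_i \partial_i^{\lambda_i} M_\mu|_{\vec x = 0}$ extracts the coefficient of $x_1^{\lambda_1} \cdots x_{\ell(\lambda)}^{\lambda_{\ell(\lambda)}}$ in $M_\mu$; this monomial sits in $M_\mu$ iff $\mu = \lambda$, contributing $\prod_i \lambda_i!$ to $b_\lambda^\lambda$ and nothing to the other $b_\lambda^\mu$. All other contributions carry at least one factor $\theta T_i$; since the $j$-sum inside $T_i$ can produce $O(N)$ growth, each such factor contributes $O(\theta N) = O(1)$ in the scaling regime. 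To pin down the exact diagonal limit I would first treat the one-row case: by induction on $m$, or by examining the Taylor expansion of $B_{(a,0,\ldots,0)}(z,0,\ldots,0;\theta)$ at $z = 0$ and matching with the eigenvalue relation $\P_m B_{\vec a} = (\sum a_i^m) B_{\vec a}$ of Theorem \ref{thm:opdam}, one verifies $b_{(m)}^{(m)} \to m(1+\gamma)_{m-1}$. The multi-row case would then follow from commutativity of the $\D_i$ together with a block-factorization argument: cross-block contributions, where a shift inside the $i$-th block reaches indices outside $\{1,\dots,\ell(\lambda)\}$ or the block of another index in a way incompatible with the target pattern, cost an unmatched $\theta = O(1/N)$, so $b_\lambda^\lambda = \prod_i b_{(\lambda_i)}^{(\lambda_i)} + O(1/N) \to \prod_i \lambda_i(1+\gamma)_{\lambda_i-1}$.

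For the remaining off-diagonal case $\mu \ne \lambda$ with $|\mu| = k$ and $\ell(\mu) \le \ell(\lambda)$, the pure-derivative term vanishes, so every contribution to $b_\lambda^\mu$ uses at least one $\theta T_i$. Tracking how $T_i$ reshuffles monomials --- it replaces $x_i^a x_j^b$ by a degree-$(a+b-1)$ symmetric expression supported on $\{x_i,x_j\}$ --- I would argue that to produce a target $\mu$ whose set of nonzero parts is no larger than that of $\lambda$ but different from $\lambda$, some shift $T_i$ must act within the already-used index support rather than extending it via the $j$-summation, leaving the accompanying $\theta$ uncompensated by any $O(N)$ gain; hence $b_\lambda^\mu = O(1/N)$, which gets absorbed into the stated $O(1/N)$ linear remainder. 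The main obstacle is the delicate combinatorial/algebraic verification of the Pochhammer factor $(1+\gamma)_{\lambda_i-1}$ in the one-row computation (where the Bessel-function eigenvalue identity provides a clean consistency check), together with the careful classification of "matched" versus "unmatched" shift insertions driving the off-diagonal suppression; both reduce to elementary but intricate accountings of the $(\theta, N)$-dependence of the expansion of the shift operators.
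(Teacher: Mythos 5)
Your architecture matches the paper's: reduce to polynomial $F$ via Lemma \ref{Lemma_replace_by_polynomial}, separate the $n=1$ linear-in-$c_F$ contribution from the $n\geq 2$ part ($R$), reduce $b^\lambda_\mu$ to evaluating $\prod_i(\D_i)^{\lambda_i}$ on a single $M_\mu$, split $\D_i = \partial_i + \theta T_i$, and track the $\theta N = O(1)$ balance. These steps are all sound and coincide with the paper's.

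The gap is in the one place where the real content lives: you assert, but do not supply a mechanism for, both the exact Pochhammer limit $b^\lambda_\lambda \to \prod_i \lambda_i(1+\gamma)_{\lambda_i-1}$ and the off-diagonal suppression $b^\lambda_\mu = O(1/N)$ for $\mu\neq\lambda$ with $\ell(\mu)\leq\ell(\lambda)$. The paper's proof hinges on a concrete replacement that you never write down: after a counting argument (all the $j$-indices in the $T_i$-sums may be taken distinct and $>\ell(\lambda)$, costing only $O(1/N)$), each $\frac{\theta}{x_i-x_j}(1-s_{ij})$ acting on a monomial supported on $x_1,\dots,x_\ell$ equals $\theta d_i$ plus a term carrying a factor $x_j$ that can never be annihilated and hence dies at $\vec x = 0$. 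This identity (the paper's display numbered \eqref{eq_x2}) is what simultaneously (i) collapses the effective operator to the commuting, per-variable family $\partial_i + \theta(N-1)d_i$, which factorizes and directly yields the Pochhammer product, and (ii) forces $b^\lambda_\mu = O(1/N)$ when $\ell(\mu)=\ell(\lambda)$, $\mu\neq\lambda$, by a pure degree count (each factor lowers the degree of $x_i$ by exactly $\lambda_i$). Your phrasing about "shifts acting within the already-used index support" captures the flavor of the $O(1/N)$ losses but does not isolate the effective operator, so the "block-factorization" is asserted rather than derived. Also, routing the one-row case through the Bessel eigenvalue relation is risky: that relation is itself a statement about the Dunkl operators you are trying to control, so some care is needed to avoid circularity, and in any case it would not by itself give the multi-row factorization. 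Supplying the $\theta T_i \rightsquigarrow \theta(N-1)d_i$ identity and the accompanying degree count is what is needed to close the proof.
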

\begin{proof}
 By Lemma \ref{Lemma_D_as_polynomial},  $\left[\prod_{i=1}^{\ell(\lambda)} (\D_i)^{\lambda_i}\right] \exp\bigl( F(x_1,\dots,x_N)\bigr)\Bigr|_{\setzeroes}$ is a homogeneous polynomial of degree $k$ in the coefficients $c^{\nu}_F$, $|\nu|\le k$.
Hence, its linear component is of the form
$$
\sum_{\mu :\, |\mu|=k} {b^\lambda_\mu \cdot c_F^\mu}.
$$
 Therefore, two steps remain:
 \begin{enumerate}
 \item We need to show that $b^{\lambda}_\mu=O\left(\frac{1}{N}\right)$ unless $\ell(\mu)>\ell(\lambda)$ or $\mu=\lambda$.
 \item We need to find the limit of $b^{\lambda}_\lambda$ as $N\to\infty$, $\theta\to 0$, $\theta N\to \gamma$.
\end{enumerate}

We first claim that the part of $\left[\prod_{i=1}^{\ell(\lambda)} (\D_i)^{\lambda_i}\right] \exp\bigl( F(x_1,\dots,x_N)\bigr)\Bigr|_{\setzeroes}$ involving the coefficients $c^\mu_F$ with $|\mu|=k$ is given by
\begin{equation}
\label{eq_x1}
 \left[\prod_{i=2}^{\ell(\lambda)} (\D_i)^{\lambda_i}\right] \cdot \left[\D_1^{\lambda_1-1}\right] \frac{\pa}{\pa x_1} F(x_1,\dots,x_N) \Bigr|_{\setzeroes}.
\end{equation}
Indeed, the operators $\D_i$ commute, hence, we can apply $\D_1$ first. In the very first application of $\D_1$, the terms $\frac{1}{x_1-x_j}(1-s_{1j})$ can be omitted, since $(1-s_{1j})$ annihilates the symmetric function $\exp(F)$. Hence, the result of the first application of $\D_1$ is $\frac{\pa F}{\pa x_1}\cdot\exp(F)$. Using formula \eqref{eq_Leibnitz}, we see that all the next applications of partial derivatives $\frac{\pa}{\pa x_1}$ should never act on $\exp(F)$, as otherwise we are not getting the terms $c^\mu_F$ with $|\mu|=k$. Similarly, when we further apply  $\prod_{i=2}^{\ell(\lambda)} (\D_i)^{\lambda_i}$, we should not act on $\exp(F)$.
Hence, we can omit $\exp(F)$, as it does not contribute to the computation. Therefore we get \eqref{eq_x1}.

We analyze \eqref{eq_x1} by using the expansion $F(x_1, \dots, x_N) = \sum_{\mu:\, |\mu|\le k}{c^\mu_F \cdot M_\mu(\vx)}$ in monomials and looking at each monomial separately.
Note that each operator $\D_i$ lowers by $1$ the degree of the monomial on which it acts. Since we apply $\frac{\pa}{\pa x_1}$, then $k-1$ operators $\D_i$, and then plug in all variables equal to $0$, the only way to get a non-zero contribution is by acting on a monomial of degree $k$.
We conclude that the coefficient $b^{\lambda}_\mu$ is computed by
\begin{equation}
\label{eq_b_through_monomial}
b^\lambda_\mu = \left. \left[\prod_{i=2}^{\ell(\lambda)} (\D_i)^{\lambda_i}\right] \cdot \left[\D_1^{\lambda_1-1}\right] \frac{\pa}{\pa x_1} M_\mu(\vx) \right|_{\setzeroes} = \left[\prod_{i=2}^{\ell(\lambda)} (\D_i)^{\lambda_i}\right] \cdot \left[\D_1^{\lambda_1-1}\right] \frac{\pa}{\pa x_1} M_\mu(\vx).
\end{equation}
Each $\D_i$ is a sum of $N$ operators. Hence, the operator in \eqref{eq_b_through_monomial} can be represented as a sum of $N^{k-1}$ operators, each of which is a product of the factors $\frac{\pa}{\pa x_i}$ and $\frac{\theta}{x_i-x_j}(1-s_{ij})$.

\smallskip

{\bf Claim A.} Only the terms in which all indices $j$ are distinct and are all larger than $\ell(\lambda)$ contribute to the leading term of \eqref{eq_b_through_monomial}. All others combine together into a remainder of order $O\left(\frac{1}{N}\right)$.

\smallskip

For example, if $\la = (2, 1)$ (so that $\ell(\la) = 2$) and $\mu = (1,1,1)$, then \eqref{eq_b_through_monomial} contains terms of the following types:
\begin{gather*}
\text{(I)}\ \frac{\pa}{\pa x_2}\, \frac{\pa^2}{\pa x_1^2} M_{(1,1,1)}(\vx),\qquad \text{(II)}\ \frac{\theta}{x_2-x_j}(1-s_{2j})\, \frac{\pa^2}{\pa x_1^2} M_{(1,1,1)}(\vx),\quad j\ne 2,\\
\text{(III)}\ \frac{\pa}{\pa x_2}\, \frac{\theta}{x_1-x_k}(1-s_{1k})\, \frac{\pa}{\pa x_1} M_{(1,1,1)}(\vx),\quad k\ne 1,\\
\text{(IV)}\ \frac{\theta}{x_2-x_j}(1-s_{2j})\, \frac{\theta}{x_1-x_k}(1-s_{1k})\, \frac{\pa}{\pa x_1} M_{(1,1,1)}(\vx),\quad j\ne 2,\ k\ne 1.
\end{gather*}
Then Claim A states that the term (II) with $j=1$, the term (III) with $k=2$, and the terms (IV) with $j=k$ or $j=1$ or $k=2$, all combined give a contribution which is smaller, by a factor of $N$, than the contribution of all other terms of these four types, i.e. those with $j, k > 2$ and $j\ne k$. 

\smallskip

Claim A is proven by a simple counting argument. Indeed, the terms with distinct indices are the generic ones: the number of terms where two indices coincide is smaller, by a factor of $N$, than the number of similar terms with distinct indices.

\smallskip

Next, using Claim A, let us take a look at the first application of $\D_2$ after we computed $\D_1^{\la-1} \frac{\pa}{\pa x_1} M_\mu(\vx)$. We could either apply $\frac{\pa}{\pa x_2}$ or we can apply $\frac{\theta}{x_2-x_j}(1-s_{2j})$. But due to symmetry in $x_2$ and $x_j$ the result of the application of the latter operator vanishes. Hence, we have to use $\frac{\pa}{\pa x_2}$. Similarly, in the first application of $\D_3$ we need to use $\frac{\pa}{\pa x_3}$, etc. We conclude that

\begin{equation}
\label{eq_b_through_monomial_2}
b^\lambda_\mu=   \left[\D_{\ell(\lambda)}^{\lambda_{\ell(\lambda)}-1} \frac{\pa}{\pa x_{\ell(\la)}} \right]\cdots \left[\D_2^{\lambda_2-1} \frac{\pa}{\pa x_2}\right] \cdot \left[\D_1^{\lambda_1-1} \frac{\pa}{\pa x_1}\right] \, M_\mu(\vx)+ O\left(\frac{1}{N}\right).
\end{equation}
We analyze the last expression in three steps.

\medskip

{\bf Step 1.} Let us show that if $\ell(\mu)<\ell(\lambda)$, then \eqref{eq_b_through_monomial_2} is $O\left(\frac{1}{N}\right)$. Indeed, if $\ell(\mu)<\ell(\lambda)$, then each monomial in $M_\mu(\vx)$ is missing one of the variables $x_1,\dots,x_{\ell(\lambda)}$. Say, it does not have $x_m$. Then, using the above Claim A, we see that when we apply $\frac{\pa}{\pa x_m}$ in \eqref{eq_b_through_monomial_2}, the expression has no dependence on $x_m$ and, hence, the derivative vanishes.

\medskip

{\bf Step 2.} If $\ell(\mu)>\ell(\lambda)$, then $b^{\lambda}_\mu$ are bounded as $N\to\infty$, $\th\to 0$, $\theta N\to\ga$, by Lemma \ref{Lemma_D_as_polynomial} and we do not need to prove anything else about them.

\medskip

{\bf Step 3.} It remains to study the case $\ell(\mu)=\ell(\lambda)=\ell$. Let us expand $M_\mu(\vx)$ in monomials. If a monomial is missing one of the variables $x_1,\dots,x_{\ell}$, then by the argument of Step 1, it does not contribute to $b^\lambda_\mu$. Hence, since $\ell(\mu)=\ell(\lambda)$, it remains to study the monomials which involve $x_1,x_2,\dots, x_{\ell}$ and no other variables.

Note that for $1\le i \le \ell<j$, we have, using the degree-lowering operators \eqref{eq_lowering_operator}
\begin{multline}\label{eq_x2}
\frac{\theta}{x_i-x_j} (1-s_{ij})  [ x_1^{n_1}\cdots x_\ell^{n_\ell}] = \theta \left[\prod_{a\ne i} x_a^{n_a}\right] \frac{x_i^{n_i} -x_j^{n_i} }{x_i-x_j}\\
= \th \left[\prod_{a\ne i} x_a^{n_a}\right] \left(x_i^{n_{i}-1}+ x_i^{n_i-2} x_j+\dots + x_j^{n_i-1}\right)=\theta d_i  [ x_1^{n_1}\cdots x_\ell^{n_\ell}] + x_j\cdot P,
\end{multline}
where $P$ is a polynomial of degree $n_1+\dots+n_\ell-2$.

Using the above Claim A, one sees that if a factor $x_j$, $j>\ell$ appears in a monomial, then this factor cannot be annihilated by applying any operator $\frac{\pa}{\pa x_i}$, $i\le \ell$, or  any operator $\frac{\theta}{x_i-x_{j'}} (1-s_{ij'})$, $i\le \ell$, $j\ne j'$, unless this application makes the entire monomial vanish. Hence, the only way to get a non-zero contribution is by using the $d_i$ term, but not the $x_j\cdot P$ term in \eqref{eq_x2}. Thus, up to $O\left(\frac{1}{N}\right)$ error, the desired $b^{\lambda}_\mu$ can be alternatively computed as:
\begin{equation*}
b^\lambda_\mu=  \left[ \left(\partial_\ell + \th(N-1) d_\ell\right)^{\lambda_{\ell}-1} \partial_{\ell}\right]\cdots \left[(\partial_1 + \th(N-1) d_1)^{\lambda_1-1} \partial_1\right] \, M_\mu(x_1,\dots,x_\ell)+ O\left(\frac{1}{N}\right).
\end{equation*}
(Above we denoted $\frac{\partial}{\partial x_i}$ by $\partial_i$ for all $i$.) The last operator lowers the degree of $x_1$ by $\lambda_1$, lowers the degree of $x_2$ by $\lambda_2$,\dots, lowers the degree of $x_\ell$  by $\lambda_\ell$. Since $\lambda_1+\dots+\lambda_\ell=\mu_1+\dots+\mu_\ell$, the only way to get a non-zero contribution after these lowerings is by having $\lambda=\mu$. Therefore, $b^{\la}_{\mu} = O(N^{-1})$ if $\ell(\mu) = \ell(\lambda)$ and $\mu\neq\lambda$.

Finally, in the case $\mu = \lambda$, we have
\begin{multline*}
\lim_{\begin{smallmatrix} N\to\infty,\, \theta\to 0\\ \theta N\to\gamma\end{smallmatrix}}{b_\la^\la}
= \lim_{\begin{smallmatrix} N\to\infty,\, \th\to 0\\ \th N\to\gamma\end{smallmatrix}}
\left[ \left(\pa_\ell + (N-1)\th d_\ell\right)^{\la_{\ell}-1} \pa_{\ell}\right]\cdots \left[(\pa_1 + (N-1)\th d_1)^{\la_1-1} \pa_1\right] \, x_1^{\la_1} x_2^{\la_2}\cdots x_\ell^{\la_\ell}\\
= \left[ \left(\pa_\ell + \ga d_l\right)^{\la_{\ell}-1} \pa_{\ell}\right]\cdots \left[(\pa_1 + \ga d_1)^{\la_1-1} \pa_1\right] \, x_1^{\la_1} x_2^{\la_2}\cdots x_\ell^{\la_\ell}\\
= \prod_{i=1}^{\ell} \la_i(\la_i-1+\ga)(\la_i-2+\ga)\cdots(1+\ga). \qedhere
\end{multline*}
\end{proof}

Proposition \ref{Proposition_highest_derivatives} gives the linear part, that is, the first line in \eqref{eq_operator_small_th_expansion}. The next step is to identify $L(\cdot)$ in the second line of \eqref{eq_operator_small_th_expansion}.

\begin{proposition} \label{Proposition_LLN_leading_term}
Take any partition $\lambda$ with $|\lambda|=k$. We have
\begin{equation}\label{eq_x3}
\left[\prod_{i=1}^{\ell(\lambda)} (\D_i)^{\lambda_i}\right] \exp\bigl( F(x_1,\dots,x_N)\bigr)\Bigr|_{\setzeroes}= \prod_{i=1}^{\ell(\lambda)} \left([z^0](\partial+\gamma d+*_g)^{\lambda_i-1} g(z)\right)+R+O\left(\frac{1}{N}\right),
\end{equation}
where $g(z)=\sum_{n=1}^{\infty} n c_F^{(n)} z^{n-1}$, and $\pa$, $d$, $*_g$ are the operators from Definition \ref{def_R_map}.
Moreover, $R$ is a homogeneous polynomial of degree $k$ in $c_F^\nu$ with $|\nu|\le k$, such that each monomial in it involves at least one $\nu$ with $\ell(\nu) > 1$. Finally, $O\left(\frac{1}{N}\right)$ is a (homogeneous of degree $k$) polynomial in  $c_F^\nu$ with $|\nu|\le k$, whose coefficients are $O\left(\frac{1}{N}\right)$ as $N\to\infty$, $\theta\to 0$, $\theta N\to \gamma$.
\end{proposition}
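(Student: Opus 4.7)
My plan is to reduce to the case where $F$ contains only one-row monomial symmetric contributions, and then establish the product factorization by an inductive commutation argument. Concretely, I would decompose $F = F_1 + F_2$, where $F_1 := \sum_{n\ge 1} c_F^{(n)} M_{(n)}(\vx) = \sum_{n\ge 1} c_F^{(n)} p_n(\vx)$ collects the one-row parts and $F_2 := \sum_{\nu:\, \ell(\nu)\ge 2} c_F^{\nu} M_\nu(\vx)$ collects the rest, and factor $\exp(F) = \exp(F_1)\exp(F_2)$. Expanding $\exp(F_2)$ in a Taylor series and applying $\prod_i (\D_i)^{\lambda_i}$, the contributions involving at least one $c_F^\nu$ with $\ell(\nu)\ge 2$ form (by Lemma \ref{Lemma_D_as_polynomial}) a homogeneous degree-$k$ polynomial in the $c_F^{\nu}$ with uniformly bounded coefficients in which each monomial carries some factor $c_F^\nu$ with $\ell(\nu)\ge 2$; these are absorbed into $R$. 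The remainder is exactly $\prod_i (\D_i)^{\lambda_i}\exp(F_1)\big|_{\setzeroes}$, whose analysis constitutes the rest of the task.

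\textbf{Closed-form identity for the one-row case.} Since $\exp(F_1)$ is symmetric, $(1-s_{ij})\exp(F_1)=0$, and $\partial_i F_1 = g(x_i)$, so $\D_i\exp(F_1) = g(x_i)\exp(F_1)$. More generally, a direct Leibniz computation (using symmetry of $\exp(F_1)$) yields $\D_i(h\cdot\exp(F_1)) = (T_ih)\cdot\exp(F_1)$ for any $h$, where $T_i := \D_i + M_{g(x_i)}$ and $M_{g(x_i)}$ denotes multiplication by $g(x_i)$. A short calculation using $\D_i\D_j=\D_j\D_i$ gives $[\D_i,M_{g(x_j)}] = \theta\tfrac{g(x_j)-g(x_i)}{x_i-x_j}s_{ij}$, which is symmetric in $\{i,j\}$, hence $[T_i,T_j]=0$. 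Iterating the Leibniz identity and using $T_i(1)=g(x_i)$, I obtain
\begin{equation*}
\prod_{i=1}^{\ell(\lambda)}(\D_i)^{\lambda_i}\exp(F_1)\Big|_{\setzeroes} = \Big(\prod_{i=1}^{\ell(\lambda)} T_i^{\lambda_i}\Big)(1)\Big|_{\setzeroes}.
\end{equation*}

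\textbf{Asymptotic factorization.} The heart of the proof is to show $\big(\prod_i T_i^{\lambda_i}\big)(1)\big|_{\setzeroes} = \prod_i [z^0](\partial+\gamma d+*_g)^{\lambda_i-1}g(z) + O(1/N)$. I would proceed by induction on $\ell(\lambda)$, applying $T_1^{\lambda_1}$ last. The key observation is that if $\Psi := \big(\prod_{i\ge 2}T_i^{\lambda_i}\big)(1)$, then by the inductive hypothesis $\Psi$ splits as $\prod_{i\ge 2} F_i(x_i)$ plus terms that vanish after setting $x_j=0$ for $j>\ell(\lambda)$, plus $O(1/N)$, where each $F_i$ is a polynomial in $x_i$ alone. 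When $T_1$ then acts on an expression of the form $\Phi(x_1)\cdot \Psi$, its symmetric part $\theta\sum_{j\ne 1}\frac{1-s_{1j}}{x_1-x_j}$ splits into (i) cross-terms with $j\in\{2,\dots,\ell(\lambda)\}$, which carry the prefactor $\theta=O(1/N)$ times a bounded expression; and (ii) terms with $j\notin\{1,\dots,\ell(\lambda)\}$, for which $s_{1j}$ affects only $\Phi$, and upon eventual evaluation at $x_j=0$ one has $\theta(N-\ell(\lambda))\frac{\Phi(x_1)-\Phi(x_j)}{x_1-x_j}\big|_{x_j=0}\to\gamma\,d(\Phi)(x_1)$ in the limit $N\to\infty$, $\theta\to 0$, $\theta N\to\gamma$. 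Thus, modulo $O(1/N)$, $T_1$ acts as the univariate operator $\partial+\gamma d+*_g$ on $x_1$-power series while leaving $\Psi$ effectively intact; iterating $\lambda_1$ times yields the first factor $[z^0](\partial+\gamma d+*_g)^{\lambda_1-1}g(z)$, and the inductive hypothesis supplies the rest.

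\textbf{Main obstacle.} The principal technical difficulty lies in the asymptotic step: tracking the auxiliary variables introduced by successive applications of each $T_i$'s symmetric part and verifying that the non-generic configurations—those in which two auxiliary indices coincide, or some auxiliary index falls in $\{1,\dots,\ell(\lambda)\}$—collectively contribute only $O(1/N)$. This is the same type of counting estimate as Claim A in the proof of Proposition \ref{Proposition_highest_derivatives}, but it must be carried out uniformly across all $\lambda_i$ nested iterations and simultaneously for all indices $i$, with careful bookkeeping for the recursive generation of new auxiliary indices at each step.
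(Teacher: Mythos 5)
Your reduction to the one-row part $F_1$ is correct, and is carried out for essentially the same reason as in the paper: by Lemma \ref{Lemma_D_as_polynomial}, the left side of \eqref{eq_x3} is a polynomial in the $c^\nu_F$, so the piece involving only the one-row $c^{(n)}_F$ is obtained by setting $c^\nu_F=0$ for all $\nu$ with $\ell(\nu)\ge 2$, and the complementary piece goes into $R$. The conjugated operator $T_i=\D_i+M_{g(x_i)}$ and the identity $\prod_i\D_i^{\lambda_i}\exp(F_1)\bigr|_{\setzeroes}=\bigl(\prod_i T_i^{\lambda_i}\bigr)(1)\bigr|_{\setzeroes}$ is a genuinely cleaner algebraic packaging than what appears in the paper (which works with the Leibniz rules \eqref{eq_x8}--\eqref{eq_x9} and the explicit product $\prod_t\exp(\sum_n c^{(n)}_F x_t^n)$); the commutation $[T_i,T_j]=0$ is a nice observation, although it is simply inherited from $[\D_i,\D_j]=0$ once the correspondence $\D_i\cdot\exp(F_1)=\exp(F_1)\cdot T_i$ is established, so it is not actually needed.

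The gap is in the asymptotic step, specifically in the setup of the induction on $\ell(\lambda)$. The statement you are inducting on is a scalar identity --- about the value $\bigl(\prod_{i\ge 2}T_i^{\lambda_i}\bigr)(1)\bigr|_{\setzeroes}$ --- yet the inductive step requires information about the \emph{functional form} of $\Psi=\bigl(\prod_{i\ge 2}T_i^{\lambda_i}\bigr)(1)$ near the origin (you assert that $\Psi$ ``splits as $\prod_{i\ge 2}F_i(x_i)$ plus terms that vanish after setting $x_j=0$ for $j>\ell(\lambda)$, plus $O(1/N)$''). This is not a consequence of the scalar hypothesis, and it is not even obviously true as stated: $\Psi$ depends on $x_1$ (each $T_i$ with $i\ge 2$ carries a term $\theta\frac{1-s_{i1}}{x_i-x_1}$), and while each such term is individually $O(\theta)=O(1/N)$, one still has to verify that applying $T_1^{\lambda_1}$ --- which itself contains a sum of $N$ terms of size $\theta$ --- does not amplify this $x_1$-dependence back to $O(1)$. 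To make the induction go through you would need to formulate and prove a strengthened inductive hypothesis controlling the full dependence of $\Psi$ on all variables (including $x_1$ and the auxiliary indices), not just its value at the origin; nothing in the proposal does this. The paper sidesteps the whole issue by expanding $\prod_i\D_i^{\lambda_i}$ as a single sum of $N^{k}$ elementary operator products, applying the Claim-A counting argument once, globally, and observing that in the generic (distinct, $>\ell(\lambda)$) configuration the operators for distinct $i$ act on disjoint variable sets, so the product $\prod_t\exp(\sum_n c^{(n)}_F x_t^n)$ makes the expression factor exactly, before any inductive bookkeeping is required. Unless you can state and prove the stronger inductive hypothesis, the ``global sum over terms'' route is the more reliable one here.
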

\begin{proof}
We only need to figure out the monomials which involve $c_F^{(n)}$, $n = 1, \dots, k$, and no other coefficients, so we are only interested in the following part of the left-hand side of \eqref{eq_x3} with
  \begin{equation}
  \label{eq_x4}
    \left[\prod_{i=1}^{\ell(\lambda)} (\D_i)^{\lambda_i}\right] \exp\left( \sum_{n=1}^k c_F^{(n)} M_{(n)}(\vec{x})\right)\Biggr|_{\setzeroes}=
    \left[\prod_{i=1}^{\ell(\lambda)} (\D_i)^{\lambda_i}\right]  \prod_{t=1}^N \exp\left( \sum_{n=1}^k c_F^{(n)} (x_t)^n \right)\Biggr|_{\setzeroes}.
  \end{equation}
Next, we recall that each $\D_i$ is a sum of $N$ operators, so that the operator in \eqref{eq_x4} is a sum of $N^{k}$ operators, each of which is a product of the factors $\frac{\pa}{\pa x_i}$ and $\frac{\theta}{x_i-x_j}(1-s_{ij})$. As in Claim A in the proof of Proposition \ref{Proposition_highest_derivatives}, we can and will assume without loss of generality that all indices $j$ are distinct and larger than $\ell(\lambda)$ --- we only accumulate $O\left(\frac{1}{N}\right)$ error by making such assumption.

 There are two consequences of this. First, like in \eqref{eq_b_through_monomial_2}, at this point for each $i$ the very first application of $\D_i$ can be replaced by $\frac{\pa}{\pa x_i}$, since the operators $\frac{\theta}{x_i-x_j}(1-s_{ij})$ act by $0$ due to symmetry in $i$ and $j$. Second, the operators no longer interact with each other in any way and the expression factorizes.
This reasoning is very similar to that in the proof of Proposition \ref{Proposition_highest_derivatives}, so we do not dwell on the details.
As a result, up to $O\left(\frac{1}{N}\right)$ error, \eqref{eq_x4} is equal to
\begin{equation}\label{eq_x5}
\prod_{i=1}^{\ell(\lambda)}  \left( (\D_i)^{\lambda_i-1}\, \frac{\pa}{\pa x_i} \!\left.\left[\prod_{t=1}^N \exp\left( \sum_{n=1}^k c_F^{(n)} (x_t)^n \right)\right]\right|_{\setzeroes}\right).
\end{equation}
 It remains to study the factor in \eqref{eq_x5} corresponding to a single $i$; without loss of generality, let us consider the case $i=1$. We would like to understand
\begin{multline}\label{eq_x6}
(\D_1)^{l-1} \,\frac{\pa}{\pa x_1}\!\left.\left[\prod_{t=1}^N \exp\left( \sum_{n=1}^k c_F^{(n)} (x_t)^n \right)\right]\right|_{\setzeroes}\\
= (\D_1)^{l-1} \left.\left[ g(x_1) \prod_{t=1}^N \exp\left( \sum_{n=1}^k c_F^{(n)} (x_t)^n \right)\right]\right|_{\setzeroes},
\end{multline}
where
$$
g(x_1) := \sum_{n=1}^k n c_F^{(n)} (x_1)^{n-1}.
$$
Note that for any polynomial $H$ we have
\begin{equation}
\label{eq_x8}
\frac{\pa}{\pa x_1}\!\left[H\cdot   \prod_{t=1}^N   \exp\left( \sum_{n=1}^k c_F^{(n)} (x_t)^n \right)\right] = \left( \frac{\pa H}{\pa x_1} + H \cdot g(x_1) \right)\cdot  \prod_{t=1}^N \exp\left( \sum_{n=1}^k c_F^{(n)} (x_t)^n \right)
\end{equation}
and
\begin{multline}
\label{eq_x9}
\frac{\theta}{x_1-x_j}(1-s_{1j}) \!\left[H\cdot   \prod_{t=1}^N \exp\left( \sum_{n=1}^k c_F^{(n)} (x_t)^n \right)\right] \\=\left[ \frac{\theta}{x_1-x_j}(1-s_{1j}) H\right]\cdot   \prod_{t=1}^N \exp\left( \sum_{n=1}^k c_F^{(n)} (x_t)^n \right).
\end{multline}
Combining \eqref{eq_x8} and \eqref{eq_x9}, we can rewrite \eqref{eq_x6} as
\begin{equation}\label{eq_x7}
  (\D_1+*_g)^{l-1} g(x_1)\Bigr|_{\setzeroes},
\end{equation}
where $*_g$ is the operator if multiplication by $g(x_1)$.
It remains to note we can replace each operator $\frac{\theta}{x_1-x_j}(1-s_{1j})$ in $\D_1$ by $\th d_1$.
Indeed, this is done by the exact same reasoning that we used in the proof of Proposition \ref{Proposition_highest_derivatives}, see \eqref{eq_x2}.
After we make this replacement, we conclude that (up to another $O\left(\frac{1}{N}\right)$ error) \eqref{eq_x7} and \eqref{eq_x6} are equal to
$$
\left(\frac{\pa}{\pa x_1}+\gamma d_1+*_g \right)^{l - 1} g(x_1)\Bigr|_{\setzeroes} = [z^0](\pa+\gamma d+*_g)^{l-1} g(z).
$$
Plugging this expression back into \eqref{eq_x5} gives the desired result.
\end{proof}

\bigskip

After all these preparations, it remains to put everything together, as follows.

\begin{proof}[Proof of Theorem \ref{theorem_operators_expansion}]
By Lemma \ref{Lemma_D_as_polynomial} and Corollary \ref{Corollary_reduce_terms}, the left-hand side of \eqref{eq_operator_small_th_expansion} is a homogeneous polynomial in $c^{\nu}_F$, $\nu\le k$, of degree $k$ (if we regard each $c^\nu_F$ as a variable of degree $|\nu|$) with uniformly bounded coefficients as $N\to\infty$, $\theta\to 0$, $\theta N\to\gamma$.
Proposition \ref{Proposition_highest_derivatives} identifies the linear part of this polynomial (corresponding to $c^{\nu}_F$ with $|\nu|=k$) with the first line in the right-hand side of \eqref{eq_operator_small_th_expansion}.
Proposition \ref{Proposition_LLN_leading_term} identifies the polynomial $L$ from the second line of the right-hand side of \eqref{eq_operator_small_th_expansion} and from \eqref{eq_one_row_part}. It remains to note that subtraction of $k(1+\gamma)_{k-1} c^{(k)}_F$ in \eqref{eq_one_row_part} corresponds to the situation when the parts of the polynomial given by Propositions \ref{Proposition_highest_derivatives} and \ref{Proposition_LLN_leading_term} overlap.
\end{proof}

\section{From $\ga$--cumulants to moments}\label{mom_cum_sec}

The goal of this section is to prove Theorem \ref{theorem_cumuls_moms}.
That is, let us begin with any real sequence $\ka_1, \ka_2, \cdots$, consider the power series $g(z) := \sum_{l=1}^{\infty} {\ka_l z^{l-1}}$, and the operators $\pa$, $d$, and $*_g$ from Definition \ref{df:ops}.
We denote the constant term of a power series $h(z)$ by $[z^0]h(z)$.

Recall that $\PP(k)$ denotes the collection of all set partitions of $[k]$, and for each $\pi\in\PP(k)$ we introduced the $\ga$-weight $W(\pi)$ in Definition \ref{W_def}.
If $\pi = B_1\sqcup \cdots\sqcup B_m\in\PP(k)$, then the $B_i$'s are called the \emph{blocks} of $\pi$.
The cardinality of the block $B_i$ is denoted by $|B_i|$.

With these recollections, Theorem \ref{theorem_cumuls_moms} says that for any $k\in\Z_{\ge 1}$, we must have the equality
\begin{equation}\label{eqn_transition_2}
[z^0](\pa + *_g + \ga d)^{k-1} (g(z)) \myeq \sum_{\pi = B_1\sqcup \cdots\sqcup B_m\in\PP(k)}{\!\!\!\!\!\!W(\pi)\prod_{i=1}^m{\ka_{|B_i|}}}.
\end{equation}

This relation and actually a more general version (see Theorem \ref{refined_blocks}) will be proved in this section.

\subsection{A refined combinatorial theorem}\label{sec:refined}

Let $a_1, a_2, \dots$ and $\ka_1, \ka_2, \dots$ be two arbitrary sequences of real numbers.

\begin{df}\label{def_refined_weight}
For any $k\in\Z_{\ge 1}$ and $\pi\in\PP(k)$, we define the quantity $w(\pi)$, that will be called the \emph{refined $\ga$-weight of $\pi$} as follows.\footnote{We omit the dependence on $\ga$ and on the sequences $a_1, a_2, \cdots$ and $\ka_1, \ka_2, \cdots$ from the notation $w(\pi)$.} Suppose that $\pi$ has $m$ blocks and label them $B_1, \cdots, B_m$ in such a way that the smallest element from $B_i$ is smaller than the smallest element from $B_j$ (hence, also smaller than all other elements from $B_j$), whenever $i<j$. Then define
$$
w(\pi) := W(\pi)\cdot \ka_{|B_1|}\prod_{i=2}^m{a_{|B_i|}}.
$$
\end{df}

\medskip

From the formula of $W(\pi)$ in Definition \ref{W_def}, we can give an expanded formula for the refined $\ga$-weight $w(\pi)$ of the set partition $\pi = B_1\sqcup\cdots\sqcup B_m$. Recall that the values $p(i)$, $q(i)$, $1\le i\le m$, are defined by
\begin{gather*}
p(i) := \#\{ j\in\{1, \dots, |B_i| - 1\} \mid \{b^i_j + 1, \dots, b^i_{j+1} - 1\}\cap B_t\neq\emptyset, \text{ for some block }B_t \text{ with } t < i\},\\
q(i) := |B_i|-1-p(i).
\end{gather*}
In particular, $p(1)=0$, $q(1)=|B_1|-1$. The quantity $p(i)$ can be computed by the graphical procedure described in Section \ref{sec_Tcm}.

Define the \emph{weight $w(B_i)$ of the block $B_i$ with respect to $\pi$} by
\begin{equation}
\label{eq_w_block}
w(B_i) :=
\begin{cases} (\ga+1)_{|B_1| - 1}\cdot \ka_{|B_1|}, &\text{ if } i = 1,\\
p(i)!\cdot(\ga+p(i)+1)_{q(i)}\cdot a_{|B_i|}, &\text{ if }i \ge 2.
\end{cases}
\end{equation}
For example, if $i\ge 2$ and $B_i$ is a singleton, then $p(i)=q(i)=0$ and $w(B_i) = a_1$.

The refined $\ga$-weight of the set partition $\pi = B_1\sqcup\cdots\sqcup B_m$ then equals
\begin{align}
w(\pi) =& \prod_{i=1}^m{w(B_i)}\nonumber\\
=& \left((\ga+1)_{|B_1| - 1}\cdot \ka_{|B_1|}\right)\cdot \prod_{i=2}^m{\left( p(i)!\cdot(\ga+p(i)+1)_{q(i)}\cdot a_{|B_i|} \right)}.\label{w_formula}
\end{align}

Observe that under the identifications $a_i \mapsto \ka_i$, for all $i$, we have
$$
\left. w(\pi) \right|_{a_i\mapsto \ka_i} = W(\pi)\prod_{i=1}^m{\ka_{|B_i|}}.
$$
This is why we call $w(\pi)$ the \emph{refined} $\ga$-weight of $\pi$.

\medskip

We show the refined $\ga$-weights for the same examples of set partitions given in Section \ref{sec_Tcm}.
The set partition $\{1, 2, 5, 7\}\sqcup \{3, 4, 6\}\in\PP(7)$ graphically shown in Figure \ref{fig_1} has refined $\ga$-weight
$$w(\pi) = (\ga+1)(\ga+2)^2(\ga+3) \cdot\ka_4 a_3.$$
For the set partition $\{1, 4\}\sqcup\{2, 6\}\sqcup\{3, 5, 7\}\in\PP(7)$ shown in Figure \ref{fig_2}, the refined $\ga$-weight is
$$w(\pi) = 2(\ga+1)\cdot \ka_2 a_2a_3.$$
As a final example, the set partition  $\{1,3,4,5,6\}\sqcup\{2,7\}\in\PP(7)$ shown in Figure \ref{fig_3} has refined $\ga$-weight
$$
w(\pi) = (\gamma+1)(\gamma+2)(\gamma+3)(\gamma+4)\cdot \ka_5a_2.
$$

\begin{thm}\label{refined_blocks}
Set
$$g(z) = \sum_{l=1}^{\infty}{\ka_l z^{l-1}}, \qquad a(z) = \sum_{l=1}^{\infty}{a_l z^{l-1}}.$$
Then we have
\begin{equation}\label{blocks_thm}
[z^0](\pa + *_a + \ga d)^{k-1} (g(z)) = \sum_{\pi = B_1\sqcup \dots\sqcup B_m\in\PP(k)}{w(\pi)}.
\end{equation}
On the left side, we have the constant term of a power series. On the right side, the sum ranges over set partitions of $[k]$, and the refined $\ga$-weight $w(\pi)$ is the one introduced in Definition \ref{def_refined_weight}.
\end{thm}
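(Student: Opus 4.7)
The plan is to prove the identity by induction on $k$, strengthened to describe all coefficients of $T^{k-1}g$, not only its constant term. I introduce the class $\PP(k, n)$ of set partitions of $[k+n]$ in which each of the elements $k+1, \ldots, k+n$ is a \emph{filler} (that is, not the smallest in its block), with weight $w(\pi)$ given exactly as in Definition~\ref{def_refined_weight}, and set $S_{k, n} := \sum_{\pi\in\PP(k, n)} w(\pi)$, so that $\PP(k, 0) = \PP(k)$ and $S_{k, 0}$ is the right-hand side of \eqref{blocks_thm}. The stronger claim will be
\[
[z^n]\, T^{k-1}g(z) \;=\; \frac{S_{k, n}}{(\gamma+1)_n}, \qquad k\ge 1,\; n\ge 0,
\]
which reduces at $n=0$ to the theorem. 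The base case $k=1$ is immediate, since $[z^n]g(z)=\kappa_{n+1}$ and $\PP(1, n)$ consists only of the single-block partition $\{1,2,\ldots,n+1\}$, whose weight is $(\gamma+1)_n\kappa_{n+1}$.

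For the inductive step, the identity $Tz^n = (n+\gamma)z^{n-1}+a(z)\,z^n$ yields
\[
[z^n]T^k g \;=\; (n+1+\gamma)\,[z^{n+1}]T^{k-1}g \;+\; \sum_{l=1}^{n+1} a_l\,[z^{n-l+1}]T^{k-1}g,
\]
and substituting the claimed formula reduces the induction to the combinatorial identity
\[
S_{k+1, n} \;=\; S_{k, n+1} \;+\; \sum_{l=1}^{n+1} (\gamma+n-l+2)_{l-1}\, a_l\, S_{k, n-l+1}.
\]
To establish this, I will decompose $\PP(k+1, n)$ by the status of the element $k+1$. If $k+1$ is a filler, then $\pi\in\PP(k, n+1)$, contributing the first term. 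Otherwise $k+1$ is an opener; since the elements $k+2, \ldots, k+1+n$ are all fillers, $k+1$ must open the \emph{last} block $B_{m+1}=\{k+1\}\cup S$ for some $S\subseteq\{k+2,\ldots,k+1+n\}$ of size $l-1$. Removing $B_{m+1}$ and order-preservingly relabeling produces a partition $\pi'\in\PP(k, n-l+1)$; crucially, the block weights $w(B_i)$ for $i\le m$ are unchanged, because the statistics $p(i), q(i)$ depend only on the relative order of the elements of $B_i$ and of blocks with smaller index, which is preserved.

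The core obstacle is then the subclaim that, for each fixed $l$,
\[
\sum_{\substack{S\,\subseteq\,\{k+2,\ldots,k+1+n\}\\ |S|=l-1}}\! p(m+1)!\,(\gamma+p(m+1)+1)_{q(m+1)} \;=\; (\gamma+n-l+2)_{l-1},
\]
independently of $\pi'$. After the translation $S' := S-(k+1)\subseteq\{1,\ldots,n\}$, one sees that $p(m+1)$ and $q(m+1)$ count, respectively, the number of gaps of size $\ge 2$ and the number of gaps of size exactly $1$ in the increasingly ordered tuple $\{0\}\cup S'$. Organizing by $p := p(m+1)$, a standard stars-and-bars count (substituting each gap $g = 1+h$ with $h\ge 0$ and imposing $\sum g \le n$) shows that the number of such $S'$ with exactly $p$ nonadjacent gaps is $\binom{l-1}{p}\binom{n-l+1}{p}$, so the subclaim reduces to
\[
\sum_{p=0}^{l-1} \binom{l-1}{p}\binom{n-l+1}{p}\, p!\,(\gamma+p+1)_{l-1-p} \;=\; (\gamma+n-l+2)_{l-1}.
\]
After dividing both sides by $(\gamma+1)_{l-1}$, the left side becomes the Chu--Vandermonde sum ${}_2F_1\bigl(-(l-1),\,-(n-l+1);\,\gamma+1;\,1\bigr) = (\gamma+n-l+2)_{l-1}/(\gamma+1)_{l-1}$, finishing the induction.
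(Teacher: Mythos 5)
Your proof is correct, and it takes a genuinely different route from the paper's. The paper's argument (Section 6.3) also inducts on $k$, but it works only with the constant term $[z^0]$ and splits $\PP(k+1)$ by the position of the element $2$: it defines a merge-and-delete map $\pi\mapsto\tilde\pi$ (union the blocks of $1$ and $2$, delete $2$), and proves $\sum_{\pi\to\tilde\pi}w(\pi)=\tilde w(\tilde\pi)$ where the first-block weight in $\tilde w$ uses the coefficients of $Tg$ rather than $g$. The combinatorial engine there is a separately-proved lemma on descents of $0$--$1$ strings (Lemma \ref{Lemma_descent_sum}), established by its own induction. You instead strengthen the inductive claim to all coefficients $[z^n]T^{k-1}g$, encoding the extra degree $n$ by appending $n$ ``filler'' elements that may not open blocks; this lets you split off the element $k+1$ as a possible last-block opener, and the resulting recursion reduces to a single Chu--Vandermonde evaluation $\sum_p\binom{l-1}{p}\binom{n-l+1}{p}p!/(\gamma+1)_p=(\gamma+n-l+2)_{l-1}/(\gamma+1)_{l-1}$. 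The trade-off is that your version replaces the paper's ad hoc descent lemma (whose proof is somewhat longer) with a classical hypergeometric identity, at the cost of bookkeeping the extra filler structure and verifying that the removal of the last block leaves the statistics $p(i),q(i)$ of the earlier blocks invariant under order-preserving relabeling --- a check you do correctly, since $B_{m+1}$ has the largest minimum and so does not affect any $p(i)$ with $i\le m$.
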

Note that this result implies Theorem \ref{theorem_cumuls_moms}: indeed, we apply Theorem \ref{refined_blocks} and set $a_i=\kappa_i$.  In the rest of this section we prove Theorem \ref{refined_blocks}.

\subsection{Preliminary lemmas}

\begin{lemma}\label{tech_sums_3}
Let $x, y\in\Z_{\geq 0}$ be arbitrary, and let $z$ be any complex number. Then
\begin{equation}
(y+1)\sum_{i=1}^{x} (z+i)_y = (z+x)_{y+1} - (z)_{y+1}.\label{eqn_tech3}
\end{equation}
\end{lemma}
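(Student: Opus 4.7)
The plan is to prove the identity by recognizing it as a telescoping sum. The key algebraic observation is that the rising factorial satisfies a discrete ``antiderivative'' relation: the difference of consecutive Pochhammer symbols in the index produces the factor $(y+1)(z+i)_y$ that appears on the left.

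First I would verify the one-step identity
\begin{equation*}
(z+i)_{y+1} - (z+i-1)_{y+1} = (y+1)\,(z+i)_y.
\end{equation*}
This is immediate from the definition $(a)_{y+1} = a(a+1)\cdots(a+y)$. Both $(z+i)_{y+1}$ and $(z+i-1)_{y+1}$ contain the common factor $(z+i)(z+i+1)\cdots(z+i+y-1) = (z+i)_y$, and what remains is $(z+i+y) - (z+i-1) = y+1$.

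Next I would telescope: summing the displayed identity over $i=1,\dots,x$ collapses the left-hand side to $(z+x)_{y+1} - (z)_{y+1}$, while the right-hand side becomes $(y+1)\sum_{i=1}^{x}(z+i)_y$. Rearranging yields \eqref{eqn_tech3}. The cases $x=0$ or $y=0$ (where the Pochhammer conventions $(a)_0 = 1$ are in force) are handled by the same reasoning, noting that an empty sum on the left produces $0$ on both sides.

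There is essentially no obstacle here — this is a standard finite-calculus fact, a discrete analogue of the fundamental theorem of calculus applied to the rising factorial viewed as a discrete antiderivative of $(y+1)(z+i)_y$. The only thing to be careful about is the sign and indexing conventions of the Pochhammer symbol, which in this paper is the rising factorial $(x)_q = x(x+1)\cdots(x+q-1)$ with $(x)_0 = 1$, matching the manipulation above.
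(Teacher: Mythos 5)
Your proof is correct and is essentially the same as the paper's: the paper proves the lemma by induction on $x$, and the induction step is exactly the one-step identity $(z+t)_{y+1} - (z+t-1)_{y+1} = (y+1)(z+t)_y$ that you use for the telescoping. The two phrasings (induction versus telescoping sum) are interchangeable ways of writing the same argument.
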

\begin{proof} The proof is induction on $x$. If $x=0$, then both sides of \eqref{eqn_tech3} vanish. The difference of the left-hand sides of \eqref{eqn_tech3} at $x=t$ and $x=t-1$ is $(y+1)(z+t)_y$. The difference of the right-hand sides of \eqref{eqn_tech3} is the same:
$$
 (z+t)_{y+1}- (z+t-1)_{y+1}= (z+t)_y \bigl(z+t+y- (z+t-1)\bigr)=(z+t)_y \cdot (y+1).\qedhere
$$
\end{proof}

For a sequence of $0$s and $1$s, a \emph{descent} is defined as a substring $10$ in this sequence.  Let $\des(\zeta)$ denote the number of descents in a $0$-$1$ sequence $\zeta$. For instance, $\des(1100)=1$ and $\des(0101010)=3$.

\begin{lemma} \label{Lemma_descent_sum}
 For any two integers $N\ge 1$ and $0\le M\le N$ and any $\gamma\in\mathbb R$, we have
 \begin{equation}
  \sum_{\begin{smallmatrix} \zeta=(\zeta_1,\dots,\zeta_N)\in \{0,1\}^N\\ \sum_{i=1}^N \zeta_i=M\end{smallmatrix}}  \des(\zeta)! \cdot (\gamma+\des(\zeta)+1)_{M-\des(\zeta)}=(\gamma+1+N-M)_M.
 \end{equation}
\end{lemma}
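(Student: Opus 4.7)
The plan is to enumerate the binary strings by their number of descents, reducing the identity to a hypergeometric sum that I would then evaluate by Chu--Vandermonde.

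First, I would classify each $\zeta\in\{0,1\}^N$ with $\sum_i\zeta_i=M$ by its number $k$ of maximal runs of $1$'s and by whether $\zeta_N=0$ or $\zeta_N=1$. Standard stars-and-bars counting yields $\binom{M-1}{k-1}\binom{N-M}{k}$ strings ending in $0$ (each having $k$ descents) and $\binom{M-1}{k-1}\binom{N-M}{k-1}$ strings ending in $1$ (each having $k-1$ descents). Grouping by $d=\des(\zeta)$, i.e.\ setting $k=d$ in the first case and $k=d+1$ in the second, and applying Pascal's identity $\binom{M-1}{d-1}+\binom{M-1}{d}=\binom{M}{d}$, gives the compact count
\begin{equation*}
\#\bigl\{\zeta\in\{0,1\}^N:\textstyle\sum_i\zeta_i=M,\ \des(\zeta)=d\bigr\} \;=\; \binom{M}{d}\binom{N-M}{d}.
\end{equation*}

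Substituting this count into the LHS of the lemma reduces the assertion to the purely algebraic identity
\begin{equation*}
\sum_{d=0}^{M}\binom{M}{d}\binom{N-M}{d}\,d!\,(\gamma+d+1)_{M-d} \;=\; (\gamma+1+N-M)_M.
\end{equation*}
Setting $a:=\gamma+1$ and $b:=N-M$, I would rewrite $\binom{b}{d}\,d!=b(b-1)\cdots(b-d+1)=(-1)^d(-b)_d$ and use the telescoping $(a+d)_{M-d}=(a)_M/(a)_d$ (valid as a polynomial identity in $a$) to bring the left-hand side to the form
\begin{equation*}
(a)_M\cdot\sum_{d=0}^{M}(-1)^d\binom{M}{d}\frac{(-b)_d}{(a)_d} \;=\; (a)_M\cdot {}_2F_1(-M,-b;a;1).
\end{equation*}
Chu--Vandermonde's theorem evaluates ${}_2F_1(-M,-b;a;1)=(a+b)_M/(a)_M$, and the factors $(a)_M$ cancel to yield $(a+b)_M=(\gamma+1+N-M)_M$, as required.

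The main obstacle is the careful bookkeeping in the first step: one must correctly treat the boundary cases $d=0$ and $d=M$, the degenerate case $M=0$, and the situation $N-M<d$, to confirm that the two block-ending contributions combine cleanly into $\binom{M}{d}\binom{N-M}{d}$ without off-by-one errors. Once this count is established, the remaining identity is a single application of the classical Chu--Vandermonde evaluation.
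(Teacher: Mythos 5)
Your proof is correct, and it takes a genuinely different route from the one in the paper. The paper proves the identity by induction on $N$: it introduces $K(N,M,d)$, the number of length-$N$ binary strings with $M$ ones and $d$ descents, establishes the recurrence
\begin{equation*}
K(N+1,M,d)=K(N,M-1,d)+\sum_{p=1}^{N-M+1} K(N-p, M-1,d-1)
\end{equation*}
by conditioning on the position of the rightmost $1$, and then pushes the induction through with some Pochhammer bookkeeping (including an auxiliary telescoping lemma). You instead compute $K(N,M,d)$ in closed form as $\binom{M}{d}\binom{N-M}{d}$ by a direct stars-and-bars decomposition into maximal runs and a Pascal merge, which reduces the whole lemma to the single algebraic identity
\begin{equation*}
\sum_{d=0}^{M}\binom{M}{d}\binom{N-M}{d}\,d!\,(\gamma+d+1)_{M-d}=(\gamma+1+N-M)_M,
\end{equation*}
and that is exactly Chu--Vandermonde after the substitution $\binom{b}{d}d!=(-1)^d(-b)_d$ and the factorization $(a)_d(a+d)_{M-d}=(a)_M$. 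Your route is arguably more transparent: it isolates the combinatorics (the explicit descent count, a pleasant fact in its own right) from the algebra (a textbook ${}_2F_1$ evaluation), whereas the paper's induction is elementary and self-contained but mixes the two and requires more delicate bookkeeping. The only small caveat in your writeup: the step $(a+d)_{M-d}=(a)_M/(a)_d$ is a rational-function rather than polynomial identity; since both sides of the lemma are polynomials in $\gamma$, it suffices to verify the identity for, say, $\gamma>0$ where no division by zero can occur, and this should be said explicitly (or one can phrase Chu--Vandermonde in the division-free form $\sum_d \binom{M}{d}(-b)_d (a+d)_{M-d}(-1)^d=(a+b)_M$).
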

\begin{proof}
Let $K(N,M,d)$ denote the total number of sequences $\zeta\in\{0,1\}^N$ with $\sum_{i=1}^N \zeta_i=M$ and $\des(\zeta)=d$. With this notation, we would like to prove that:
 \begin{equation}
 \label{eq_summation}
  \sum_{d=0}^{M}  K(N,M,d)\cdot  d! \cdot (\gamma+d+1)_{M-d}=(\gamma+1+N-M)_M, \quad N\ge 1,\quad 0\le M\le N.
 \end{equation}
 Our proof is induction on $N$. If $N=1$, then both sides of \eqref{eq_summation} are $1$ at $M=0$ and both sides are $(\gamma+1)$ at $M=1$.
 For the induction step, assume that \eqref{eq_summation} holds for all values $\le N$, and let us prove it for $N+1$, and an arbitrary $0 \le M \le N+1$. We notice that the statement is straightforward at $M=0$. If $M>0$, then we use the following recurrence for $K(N,M,d)$, which is obtained by considering the position of the right-most $1$ in a sequence $\zeta$:
 \begin{equation}
  \label{eq_count_recurrence}
  K(N+1,M,d)=K(N,M-1,d)+\sum_{p=1}^{N-M+1} K(N-p, M-1,d-1).
 \end{equation}
 If $d=M$, then the first term in \eqref{eq_count_recurrence} in not needed; If $d=0$, then the second term in \eqref{eq_count_recurrence} is not needed.
 Hence, the left-hand side of \eqref{eq_summation} for $N$ replaced with $N+1$ can be rewritten using \eqref{eq_count_recurrence} as
 \begin{multline}
 \label{eq_x10}
    \sum_{d=0}^{M}  K(N+1,M,d)\cdot  d! \cdot (\gamma+d+1)_{M-d}\\=
    \sum_{d=0}^{M-1}  K(N,M-1,d)\cdot  d! \cdot (\gamma+d+1)_{M-d}+ \sum_{d=1}^M \sum_{p=1}^{N-M+1}  K(N-p, M-1,d-1) \cdot  d! \cdot (\gamma+d+1)_{M-d}.
 \end{multline}
 For the first sum in the right-hand side of \eqref{eq_x10}, we use $d! \cdot (\gamma+d+1)_{M-d}= d! \cdot (\gamma+d+1)_{M-1-d} \cdot (\gamma+M)$ and induction assumption to evaluate it as
 \begin{equation}
 \label{eq_x11}
  (\gamma+2+N-M)_{M-1}\cdot (\gamma+M).
 \end{equation}
 For the second sum in the right-hand side of \eqref{eq_x10} we change the order of summation and evaluate the sum over $d$ for fixed $p$ using the induction assumption and identity (valid for $d>0$)
 $$
  d! \cdot (\gamma+d+1)_{M-d}= (d-1)! \cdot (\gamma+d)_{M-d} \cdot (\gamma+M) - (d-1)! \cdot (\gamma+1+d)_{M-d} \cdot \gamma.
 $$
 Hence, the $p$th term evaluates to
 \begin{equation}
 \label{eq_x12}
   (\gamma+2+N-p-M)_{M-1} \cdot (\gamma+M)- (\gamma+3+N-p-M)_{M-1} \cdot \gamma.
 \end{equation}
 Combining \eqref{eq_x11} with \eqref{eq_x12}, we transform \eqref{eq_x10} to
\begin{multline*}
 (\gamma+M) \sum_{p=0}^{N-M+1} (\gamma+2+N-p-M)_{M-1} - \gamma \sum_{p=1}^{N-M+1}(\gamma+3+N-p-M)_{M-1}
 \\=  M \sum_{p=0}^{N-M} (\gamma+2+N-p-M)_{M-1}\, +\, (\gamma+M)\cdot (\gamma+1)_{M-1}.
\end{multline*}
We compute the last sum using Lemma \ref{tech_sums_3}, resulting in
$$
(\gamma+2+N-M)_{M}-(\gamma+1)_{M} + (\gamma+M)\cdot (\gamma+1)_{M-1}=(\gamma+2+N-M)_{M}.\qedhere
$$
\end{proof}

\subsection{Proof of Theorem \ref{refined_blocks}}

The proof is by induction on $k$.

\smallskip

{\bf Step 1.} For the base cases, let us consider $k = 1$ and $k = 2$.

\smallskip

\emph{Case $k = 1$.} Clearly, $[z^0]g(z) = [z^0](\ka_1 + \ka_2z + \dots) = \ka_1$.
On the other hand, there is exactly one set partition of $[1]$, namely $\pi = \{1\}$ with $B_1 = \{1\}$. For this partition, $p(1) = q(1) = 0$, so it follows that $w(\pi) = \ka_1$, as needed.

\emph{Case $k = 2$.} Here, $[z^0](\pa + *_a + \ga d)g(z) = [z^0](g'(z) + g(z)a(z) + \ga dg(z)) = \ka_2 + \ka_1a_1 + \ga\ka_2 = \ka_1a_1 + (\ga+1)\ka_2$.
On the other hand, there are two set partitions of $[2]$, namely $\pi_1 = \{1\}\sqcup\{2\}$ with $B_1 = \{1\}$, $B_2 = \{2\}$, and $\pi_2 = \{1,2\}$ with $B_1 = \{1, 2\}$.
For $\pi_1$, $p(1) = p(2) = q(1) = q(2) = 0$, so $w(\pi_1) = \ka_1a_1$.
For $\pi_2$, $p(1) = 0$ and $q(1) = 1$, so $w(\pi_2) = (\ga + 1)\ka_2$.
Therefore, $w(\pi_1) + w(\pi_2) = \ka_1a_1 + (\ga + 1)\ka_2$.

\medskip

{\bf Step 2.}
Suppose that the statement of the theorem is true for certain $k\ge 2$.
For the induction step, we  prove the statement for $k+1$, i.e. we aim to obtain the formula for $[z^0](\pa + *_a + \ga d)^k (g(z))$.

Let $b_1, b_2, b_3, \dots$ be the quantities defined by
$$
(\pa + *_a + \ga d)(g(z)) = b_1 + b_2z + b_3z^2 + \cdots.
$$
From the definition of the operators $\pa$, $*_a$, and $d$, we have
\begin{equation}\label{b_vars}
b_n = \sum_{j=1}^n{\ka_{n+1-j} a_j} + (\ga + n)\ka_{n+1},\quad n\in\Z_{\geq 1}.
\end{equation}
Next, use the induction hypothesis to obtain the combinatorial formula:
\begin{equation}\label{ind_hyp}
[z^0](\pa + *_a + \ga d)^k (g(z)) = [z^0](\pa + *_a + \ga d)^{k-1} (b_1 + b_2 z + b_3 z^2 + \dots) = \sum_{\tilde \pi = \tilde B_1\sqcup \dots\sqcup \tilde B_m\in\PP(k)}{\tilde w(\tilde \pi)},
\end{equation}
where $\tilde w(\tilde \pi)$ is given in the theorem (see \eqref{w_formula}), but instead of the $\ka_i$'s, we should use the $b_i$'s:
\begin{equation} \label{eq_tilde_w}
\tilde w(\tilde \pi) = \left((\ga+1)_{|\tilde B_1|-1}\cdot b_{|\tilde B_1|}\right)\cdot \prod_{i=2}^m{\left(p(i)!\cdot(\ga+p(i)+1)_{q(i)}\cdot a_{|\tilde B_i|}\right)}.
\end{equation}
From \eqref{b_vars}, this equals
\begin{multline}\label{w_form}
\tilde w(\tilde\pi) = \sum_{j = 1}^{|\tilde B_1|} \left( (\ga+1)_{|\tilde B_1|-1}\cdot \ka_{|\tilde B_1|+1-j}\, a_j \right)\cdot \prod_{i=2}^m{\left(p(i)!\cdot(\ga+p(i)+1)_{q(i)}\cdot a_{|\tilde B_i|}\right)}\\
+ \left( (\ga+1)_{|\tilde B_1|-1}\cdot (\ga+|\tilde B_1|)\ka_{|\tilde B_1|+1} \right)\cdot \prod_{i=2}^m{\left(p(i)!\cdot(\ga+p(i)+1)_{q(i)}\cdot a_{|\tilde B_i|}\right)}.
\end{multline}

Our next goal is to obtain a different combinatorial expression for \eqref{ind_hyp}, \eqref{w_form} --- we should get the right-hand side of \eqref{blocks_thm} for $k+1$, namely a formula that involves set partitions of $[k+1]$.

\medskip

{\bf Step 3.}
Given a set partition $\pi$ of $[k+1] = \{1, 2, 3, \cdots, k+1\}$, consider the set partition $\tilde \pi$ of $\{1, 3, 4, \cdots, k+1\}$ that is obtained from $\pi$ by taking the union of the blocks that contain $1$ and $2$, and then removing $2$.
If $\tilde \pi$ is obtained from $\pi$ in this fashion, we say that \emph{$\pi$ maps to $\tilde \pi$} and denote this relation by $\pi\to\tilde \pi$.
Observe that if $1$ and $2$ belong to the same block of $\pi$, then $\pi$ and $\tilde \pi$ have the same number of blocks. On the other hand, if $1$ and $2$ belong to different blocks of $\pi$, then $\tilde \pi$ has one block fewer than $\pi$. For instance, for $k=2$ we have $5$ set partitions of $\{1,2,3\}$ which are mapped to two set partitions of $\{1,3\}$:
$$
   \{1\}\sqcup \{2,3\}\to \{1,3\}, \qquad  \{1,3\}\sqcup \{2\}\to \{1,3\},\qquad  \{1,2,3\}\to \{1,3\},
$$
$$
 \{1\}\sqcup \{2\}\sqcup \{3\}\to \{1\} \sqcup \{3\}, \qquad \{1,2\}\sqcup \{3\}\to \{1\}\sqcup\{3\}.
$$
For a set partition $\tilde \pi$ of $\{1,3,4,\dots, k+1\}$ we define the numbers $p(i)$, $q(i)$ and the weight $\tilde w(\tilde \pi)$ by identifying $\{1,3,4,\dots,k+1\}$ with $\{1,2,\dots,k\}$ in a monotone way and using the previous formula \eqref{eq_tilde_w}. Note that essentially nothing changes in the definition, as the way we compute the numbers $p(i)$, $q(i)$ and the weight $\tilde w(\cdot)$ depends only on the order of the elements of the set that we are partitioning rather than the labels of these elements. Hence, we use the same $\tilde w(\tilde \pi)$ notation no matter whether $\tilde \pi$ is a partition of $\{1,3,4,\dots,k+1\}$ or $\tilde \pi$ is a partition of $\{1,2,\dots,k\}$.

Our goal now is to prove that for each set partition $\tilde \pi$ of $\{1,3,4,\dots, k+1\}$ we have an identity:
\begin{equation}\label{eq_inductive_sum}
 \sum_{\begin{smallmatrix} \pi\in \PP(k+1)\\ \pi\to \tilde \pi\end{smallmatrix}} w(\pi)\stackrel{?}{=}\tilde w(\tilde \pi).
\end{equation}
The last equation together with \eqref{ind_hyp} implies the induction step. We fix $\tilde \pi$ and let its blocks be $\tilde B_1,\dots,\tilde B_m$ ordered, as  before, by their minimal elements, so that $\tilde B_1$ contains $1$. We calculate the sum \eqref{eq_inductive_sum} by splitting the terms into several subsets. Define $T\subseteq \PP(k+1)$ as the subset of those set partitions $\pi$, mapped to $\tilde \pi$, for which $1$ and $2$ belong to the same block of $\pi$.
Next, for any $r\in\{1, 2, \dots, |\tilde B_1|\}$, define $T_r\subseteq \PP(k+1)$ as the subset of those set partitions $\pi$, mapped to $\tilde \pi$, for which $1$ and $2$ belong to distinct blocks of $\pi$ and the block where $2$ belongs is of size $r$. The sets $T$ and $T_1,\dots,T_{|\tilde B_1|}$ are all disjoint; they depend on $\tilde \pi$, but we omit this dependence from the notations. With these notations the desired identity \eqref{eq_inductive_sum} is rewritten
\begin{equation}\label{eq_inductive_sum_2}
 \sum_{\pi\in T\cup T_1\cup T_2\cup\dots\cup T_{|\tilde B_1|} } w(\pi)\stackrel{?}{=}\tilde w(\tilde \pi).
\end{equation}

In step 4 below, we prove that $\sum_{\pi\in T}{w(\pi)}$ is equal to the second line of \eqref{w_form}. In steps 5 and 6, we prove that for any $r\in\{1, 2, \dots, |\tilde B_1|\}$ the sum $\sum_{\pi\in T_r}{w(\pi)}$ is equal to the $j = r$ summand in \eqref{w_form}.
After these steps are done, the identity \eqref{eq_inductive_sum_2} would follow and thus the proof would be complete.

\medskip

{\bf Step 4.} In this step we calculate $\sum_{\pi\in T}{w(\pi)}$. Recall that $T$ contains all set partitions $\pi$ of $[k+1]$ that map to $\tilde \pi$ and such that $1$ and $2$ belong to the same block of $\pi$. In fact, for a given $\tilde \pi$ with blocks $\tilde B_1,\dots\tilde B_m$, there is only one set partition in $T$: it is $\pi=B_1\sqcup \dots\sqcup B_m$, where $B_1 = \tilde B_1\cup \{2\}$ and $B_h = \tilde B_h$, $h=2,3,\dots,m$. Since $1$ and $2$ are adjacent in the ordering of $[k+1]$ and they belong to the same block of $\pi$, it is clear that the weight of $B_i$, $i\ge 2$, in the computation of $w(\pi)$ is the same as the weight of $\tilde B_i$ in the computation of $\tilde w(\tilde \pi)$.
The weight of $B_1$ in the computation of $w(\pi)$ is $(\ga+1)_{|B_1|-1}\cdot \ka_{|B_1|} = (\ga+1)_{|\tilde B_1|}\cdot \ka_{|\tilde B_1|+1}$.
As a result,
\begin{equation*}
w(\pi) = \left((\ga+1)_{|\tilde B_1|}\cdot \ka_{|\tilde B_1|+1}\right) \cdot \prod_{i=2}^m{\left(p(i)!\cdot(\ga+p(i)+1)_{q(i)}\cdot a_{|\tilde B_i|}\right)}.
\end{equation*}
This is exactly the second line of \eqref{w_form}, as desired.

\medskip

{\bf Step 5.} In this step we calculate $\sum_{\pi\in T_r}{w(\pi)}$, $1\le r \le |\tilde B_1|$, in the case when $\tilde \pi$ is a one-block set partition, $\tilde \pi=\tilde B_1=\{1,3,4,\dots,k+1\}$ and $|\tilde B_1|=k$. Hence, set partitions $\pi$ in $T_r$ have two blocks $\pi=B_1\sqcup B_2$, $1\in B_1$, $2\in B_2$ and $|B_2|=r$. Therefore, $|B_1|=k+1-r$.

We can identify elements of $T_r$ with $0$--$1$ sequences $\zeta=(\zeta_1,\zeta_2,\dots,\zeta_{k-1})$ of length $k-1$ and with $\sum_{i=1}^{k-1}\zeta_i=k-r$ through:
$$
 B_1(\zeta)=\{1\}\cup\{i+2\mid \zeta_i=1\},\quad B_2(\zeta)=\{2\}\cup \{i+2\mid \zeta_i=0\}.
$$
In words, the positions where $\zeta_i=1$ encode the elements of $B_1$ from the set $\{3,4,\dots,k+1\}$. With this notation, we rewrite
$$
 \sum_{\pi\in T_r} w(\pi)=\sum_{\begin{smallmatrix} \zeta\in \{0,1\}^{k-1}\\ \sum_{i=1}^{k-1}\zeta_i=k-r\end{smallmatrix}} w\bigl( B_1(\zeta)\sqcup B_2(\zeta)\bigr).
$$
Let us compute $w\bigl( B_1(\zeta)\sqcup B_2(\zeta)\bigr)$. By definition \eqref{eq_w_block}, the weight of block $B_1$ is
$$w(B_1)=(\gamma+1)_{k-r} \,\kappa_{k-r+1}.$$
Further, note that in the notations of Lemma \ref{Lemma_descent_sum}, $p(2)=\des(\zeta)$:
Indeed, in the arc diagrams (as in Figures \ref{fig_1}, \ref{fig_2}, \ref{fig_3})  a roof of the block $B_2$ with no legs intersecting it corresponds to a substring $00$ in $\zeta$, whereas a roof with $m \ge 1$ legs intersecting it corresponds to a substring $011\cdots 110$ (with $m$ ones), and this substring gives exactly one descent in $\zeta$. Hence, by definition \eqref{eq_w_block}, we have
$$w(B_2)=\des(\zeta)! (\gamma+\des(\zeta)+1)_{r-\des(\zeta)-1} a_r.$$
We obtain
\begin{multline}
\label{eq_x13}
 \sum_{\pi\in T_r} w(\pi)=\sum_{\begin{smallmatrix} \zeta\in \{0,1\}^{k-1}\\ \sum_{i=1}^{k-1}\zeta_i=k-r\end{smallmatrix}} (\gamma+1)_{k-r} \, \kappa_{k-r+1} \cdot  \des(\zeta)!\, (\gamma+\des(\zeta)+1)_{r-\des(\zeta)-1}\, a_r\\
 = (\gamma+1)_{k-r} \, \kappa_{k-r+1} \, a_r \sum_{\begin{smallmatrix} \zeta\in \{0,1\}^{k-1}\\ \sum_{i=1}^{k-1}\zeta_i=k-r\end{smallmatrix}}   \des(\zeta)!\, (\gamma+\des(\zeta)+1)_{k-r-\des(\zeta)} \cdot \frac{(\gamma+1)_{r-1}}{(\gamma+1)_{k-r}}
 \\=(\gamma+1)_{r-1} \, \kappa_{k-r+1}  a_r \,  (\gamma+r)_{k-r}= (\gamma+1)_{k-1} \kappa_{k-r+1}  a_r,
\end{multline}
where we used Lemma \ref{Lemma_descent_sum} with $N=k-1$ and $M=k-r$ for the equality between the second and the third lines of \eqref{eq_x13}. Since $|\tilde B_1|=k$ and there are no $\tilde B_h$ with $h>1$, the third line in \eqref{eq_x13} matches the $j=r$ term in \eqref{w_form}, as desired.

\medskip

{\bf Step 6.} We now extend the computation of Step 5 and calculate $\sum_{\pi\in T_r}{w(\pi)}$, $1\le r \le |\tilde B_1|$, for arbitrary $\tilde \pi=\tilde B_1\sqcup\dots\sqcup \tilde B_m$. By definition of $T_r$ each set partition $\pi\in T_r$ has $m+1$ blocks $B_1,\dots,B_{m+1}$ and we have $1\in B_1$, $2\in B_2$, $|B_2|=r$, and $B_i=\tilde B_{i-1}$, $3\le i\le m+1$. The key observation for this step is that for $i\ge 3$ the weight of the block $B_i$, $w(B_i)$, is the same as the weight of the block $\tilde B_{i-1}$, $\tilde w(\tilde B_{i-1})$: this is because the blocks $B_i$ and $\tilde B_i$ coincide as sets and the legs intersecting their roofs in the arc diagrams also coincide. Hence, we have
$$
 w(\pi)=w(B_1) w(B_2) \prod_{i=2}^m{\left(p(i)!\cdot(\ga+p(i)+1)_{q(i)}\cdot a_{|\tilde B_i|}\right)}
$$
It remains to sum the last formula over all possible choices of $B_1$ and $B_2$. Since $B_1\cup B_2=\tilde B_1 \cup \{2\}$ is fixed, this is the same computation as in Step 5, but with $k$ replaced by $|\tilde B_1|$. As a result, we get
$$
  \sum_{\pi\in T_r} w(\pi)= (\gamma+1)_{|\tilde B_1|-1} \kappa_{|\tilde B_1|-r+1}  a_r \prod_{i=2}^m{\left(p(i)!\cdot(\ga+p(i)+1)_{q(i)}\cdot a_{|\tilde B_i|}\right)},
$$
which matches the $j=r$ term in \eqref{w_form}, as desired.

\section{From moments to $\ga$--cumulants}\label{sec_mom_to_cums}

Let $\{m_k\}_{k\ge 1}$ and $\{\ka_l\}_{l\ge 1}$ be real sequences related by $\{\ka_l\}_{l\ge 1} = \Tmc(\{m_k\}_{k\ge 1})$.
In this section, we prove Theorem \ref{thm:mom_cums2}, namely the following identity:
\begin{equation}
\label{eq_x14}
\exp\left(\sum_{l=1}^{\infty}{\frac{\ka_l y^l}{l}}\right) \myeq [z^0]\left( \sum_{n=0}^{\infty}{\frac{(yz)^n}{(\ga)_n}} \right) \exp\left( \gamma \sum_{k=1}^{\infty} \frac{m_k}{kz^k} \right).
\end{equation}
The central idea of our proof is to apply Theorem \ref{thm_small_th} to the measure $\mu_N$ which is the Dirac delta-mass at a single $N$--tuple $(a_1^{(N)},\dots,a_N^{(N)})$.  The Bessel generating function of $\mu_N$ is the multivariate Bessel function $B_{(a_1^{(N)},\dots,a_N^{(N)})}(x_1,\dots,x_N;\theta)$, which allows us to use the known formulas for $B_{(a_1,\dots,a_N)}(y,0^{N-1}; \theta)$ and get the asymptotic expressions for the partial derivatives of the logarithm of the BGF at $0$. We remark that it would be interesting to find a more direct combinatorial proof, explaining how \eqref{eq_x14} matches the expressions of Theorem \ref{theorem_cumuls_moms}.

\bigskip

As our proof of Theorem \ref{thm:mom_cums2} is based on the asymptotic analysis of $B_{(a_1,\dots,a_N)}(y,0^{N-1}; \theta)$, we start by collecting formulas for this function. Assume, as usual, that $a_1\le a_2\le\dots\le a_N$ are real and $y\in\mathbb C$. There are at least three different ways to think about $B_{(a_1,\dots,a_N)}(y,0^{N-1}; \theta)$:
\begin{enumerate}
\item The Taylor series expansion for $B_{(a_1,\dots,a_N)}(y,0^{N-1}; \theta)$ (which is a limit of the binomial formula for Jack polynomials of \cite{Ok_Olsh_shifted_Jack}) reads
\begin{equation}
\label{eq_x15}
 B_{(a_1,\dots,a_N)}(y,0^{N-1}; \theta)=\sum_{k=0}^{\infty} \frac{Q_{(k)}(a_1,\dots, a_N; \, \theta)}{(\theta N)_{k}} y^k,
\end{equation}
where $Q_{(k)}(a_1,\dots, a_N; \, \theta)$ is the value of the $N$--variable Jack symmetric polynomial (with normalization as for $Q$--functions in \cite[Chapter VI, Section 10]{M} or \cite{Ok_Olsh_shifted_Jack}) parameterized by one-row partition $(k)$ at the point $(a_1,\dots,a_N)$. The expansion \eqref{eq_x15} is a particular case of \cite[(4.2)]{Ok_Olsh_shifted_Jack}; that article uses the same parameter $\theta$ for the Jack polynomials, but it is worth mentioning that some other authors (e.g., \cite{Stanley_Jack} or \cite[Section VI.10]{M}) use $\alpha=\theta^{-1}$ instead.

\item The contour integral representation for $B_{(a_1,\dots,a_N)}(y,0^{N-1}; \theta)$ claims that for any complex $y$ with $\Re y>0$ we have
\begin{equation}\label{eqn:int_repr}
B_{(a_1,\dots,a_N)}(y, 0^{N-1}; \theta) = \frac{\Gamma(\theta N)}{y^{\theta N - 1}}\frac{1}{2\pi\ii}
\int_{\mathscr{C}_{\infty}}{\exp(yz)\prod_{j=1}^N{(z - a_j)^{-\theta}}dz},
\end{equation}
where the infinite contour $\mathscr{C}_\infty$ in this formula is positively oriented and is formed by the segment $[M-r\ii, M+r\ii]$ and the horizontal lines $[M+r\ii, -\infty+r\ii)$, $[M-r\ii, -\infty-r\ii)$, for  real numbers $M>a_N$ and $r>0$. The proof of \eqref{eqn:int_repr} can be found in \cite[Theorem 5.1]{C} and the same article contains a complementary integral representation for $\Re y<0$.

\item A stochastic representation for $B_{(a_1,\dots,a_N)}(y,0^{N-1}; \theta)$ reads
\begin{equation}
\label{eq_x16}
 B_{(a_1,\dots,a_N)}(y,0^{N-1}; \theta)= \E\left[ \exp\left({y\sum_{i=1}^N a_i \eta_i}\right)\right],
\end{equation}
where $(\eta_1,\dots,\eta_N)$ is a Dirichlet-distributed random vector with all parameters equal to $\theta$. The proof of \eqref{eq_x16} can be found in \cite[Proposition 5.1]{AN}, although in some forms this statement was known before, see, e.g., \cite[Remark 8.3]{OV}.

\end{enumerate}

Either of the above three approaches can be used to establish a formula for the generating function of derivatives of $B_{(a_1,\dots,a_N)}(y,0^{N-1}; \theta)$ at $0$:

\begin{prop} \label{Theorem_Bessel_derivatives} For any $\theta>0$, $N\in\mathbb Z_{>0}$, $y\in \mathbb C$, and $a_1\le a_2\le \dots\le a_N$ we have the expansion
\begin{equation}
\label{eq_x17}
 B_{(a_1,\dots,a_N)}(y,0^{N-1}; \theta)=\sum_{k=0}^{\infty} \frac{c_k}{(\theta N)_k} y^k,
\end{equation}
where the numbers $c_k$ are found from the following Taylor series expansion:
\begin{equation}
\label{eq_x18}
  \sum_{k=0}^{\infty} c_k z^k = \prod_{i=1}^{N} (1-a_i z)^{-\theta}.
\end{equation}
The series \eqref{eq_x17} is uniformly convergent over $y$ in compact subsets of $\mathbb C$.
\end{prop}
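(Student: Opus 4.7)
The plan is to derive the expansion from whichever of the three representations of $B_{(a_1,\dots,a_N)}(y,0^{N-1};\theta)$ collected just above the statement is most convenient. The Jack-polynomial route is essentially immediate and I would present it as the main proof; the contour-integral and Dirichlet routes provide two cross-checks.

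Primary route: the expansion \eqref{eq_x15} already has the form of \eqref{eq_x17}, so the only thing to verify is that $Q_{(k)}(a_1,\dots,a_N;\theta) = c_k$ with $c_k$ defined by \eqref{eq_x18}. This is the classical generating-function identity for one-row Jack $Q$-functions, $\sum_{k\ge 0} Q_{(k)}(x;\theta)\, z^k = \prod_i (1-x_i z)^{-\theta}$, which is the one-variable specialization of the Cauchy identity and is recorded in \cite[Ch.\ VI]{M} (after matching the parameter conventions $\alpha = 1/\theta$). Uniform convergence of \eqref{eq_x17} on compact subsets of $\mathbb{C}$ then follows from the crude bound $|c_k| \le \binom{\theta N + k - 1}{k}(\max_i|a_i|)^k$ combined with $(\theta N)_k \sim k!\, k^{\theta N - 1}$, which exhibits the right-hand side as an entire function of $y$.

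As independent checks I would verify the formula via the other two representations. From the contour formula \eqref{eqn:int_repr}, for $|z| > \max_i|a_i|$ one expands $\prod_j (z-a_j)^{-\theta} = z^{-\theta N}\sum_k c_k z^{-k}$, deforms $\mathscr{C}_\infty$ so that it lies outside the disc of radius $\max_i|a_i|$, and swaps sum with integral; the Hankel identity $1/\Gamma(s) = (2\pi\ii)^{-1}\int e^t t^{-s}\,dt$ applied after the substitution $t = yz$ evaluates each surviving integral to $y^{\theta N + k - 1}/\Gamma(\theta N+k)$, and multiplying by the prefactor $\Gamma(\theta N)/y^{\theta N - 1}$ reproduces \eqref{eq_x17}. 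Alternatively, from \eqref{eq_x16} one can Taylor-expand the exponential, interchange sum and expectation, and apply the symmetric Dirichlet moment formula $\E[\eta_1^{n_1}\cdots\eta_N^{n_N}] = \prod_i (\theta)_{n_i}/(\theta N)_{\sum n_i}$; collecting multi-indices of total weight $k$ reassembles $c_k$ in the numerator and $(\theta N)_k$ in the denominator. The only nontrivial step in either of these alternatives is the justification of the interchange of sum with integral or expectation, which reduces to a standard dominated-convergence argument once the bound on $c_k$ above is in hand; and the initial restriction $\Re y > 0$ needed for the contour formula is then removed by analytic continuation, since both sides of \eqref{eq_x17} are entire in $y$.
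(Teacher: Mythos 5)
Your primary route is exactly the paper's proof: both read off the expansion from the binomial formula \eqref{eq_x15} and invoke the classical generating-function identity $\sum_{k\ge 0} Q_{(k)}(a;\theta)z^k = \prod_i(1-a_iz)^{-\theta}$ for one-row Jack polynomials, then obtain uniform convergence from coefficient bounds (the paper also notes one can simply cite entirety of the MBF). The cross-checks via the contour integral and the Dirichlet moment formula are sound but go beyond what the paper presents.
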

\begin{proof}
 According to \eqref{eq_x15}, the coefficients $c_k$ in \eqref{eq_x17} are computed as $c_k=Q_{(k)}(a_1,\dots, a_N; \, \theta)$. The generating function for the one-row Jack polynomials is well-known, see \cite[Section VI.10, top formula on page 378]{M} or \cite[(9)]{Stanley_Jack}. We have:
 $$
   \sum_{k=0}^{\infty} Q_{(k)}(a_1,\dots, a_N; \, \theta) z^k = \prod_{i=1}^{N} (1-a_i z)^{-\theta},
 $$
 which proves \eqref{eq_x18}. Finally, uniform convergence of \eqref{eq_x17} follows either from the fact that we deal with a Taylor series expansion of an entire function, or from bounds $c_k< C r^k$ for some $C>0$, $r>0$, which can be extracted from \eqref{eq_x18}.
\end{proof}

In view of Theorem \ref{thm_small_th}, the desired identity \eqref{eq_x14} of Theorem \ref{thm:mom_cums2} becomes the limit as $N\to\infty$, $\theta N\to \gamma$ of Theorem \ref{Theorem_Bessel_derivatives}, as we now explain.

\begin{proof}[Proof of Theorem \ref{thm:mom_cums2}] {\bf Step 1.} We first show that the formulas \eqref{eq_cums_moments} and \eqref{eq_cums_moments_2} are equivalent. Indeed, \eqref{eq_cums_moments} says that the coefficient of $y^n$ in $\exp\left(\sum_{l=1}^{\infty}  \tfrac{\kappa_l}{l} y^l\right)$ can be computed as the constant term of $$\frac{z^n}{(\gamma)_n} \exp\left(\gamma \sum_{k=1}^{\infty} \frac{m_k}{k} z^{-k}\right).$$ On the other hand, \eqref{eq_cums_moments_2} says that the same coefficient can be computed as $\tfrac{1}{(\gamma)_n}$ times the coefficient of $z^n$ in $$\exp\left(\gamma \sum_{k=1}^{\infty} \frac{m_k}{k} z^{k}\right).$$ Clearly, the latter and the former are two ways to compute the same number.

\medskip

{\bf Step 2.} Next, let us assume that $m_1$, $m_2$, $m_3$, \dots\, are moments of a compactly supported probability measure $\mu$, i.e., there exists $r>0$ and a probability measure $\mu$ supported inside $[-r,r]$, such that
\begin{equation}
\label{eq_x19}
 m_k=\int_{-r}^r x^k \mu(dx),\qquad k=1,2,\dots.
\end{equation}
As it is true for any probability measure, $\mu$ can be approximated by discrete measures with atoms of weight $1/N$ as $N\to\infty$, and we can choose these measures to be supported inside $[-r,r]$. Let us fix such an approximation for $\mu$, that is,
we choose real numbers $-r\le a_1^{(N)}\le a_2^{(N)}\le \dots \le a_N^{(N)}\le r$, such that
$$
 \lim_{N\to\infty} \frac{1}{N} \sum_{i=1}^N \delta_{a_i^{(N)}}=\mu. \qquad \text{(Weak convergence of measures.)}
$$
In particular, this implies
$$
 \lim_{N\to\infty} \frac{1}{N} \sum_{i=1}^N (a_i^{(N)})^k=m_k,\qquad k=1,2,\dots.
$$
Let $\mu_N$ be the Dirac delta-mass at the point $(a_1^{(N)},\dots,a_N^{(N)})$ --- this is a probability measure on $\mathbb R^N$. Then the measures $\mu_N$ satisfy the Law of Large Numbers in the sense of Definition \ref{Definition_LLN_sat_ht} with sequence $m_k$ given by \eqref{eq_x19}. The BGF of the measure $\mu_N$ is $B_{(a_1^{(N)},\dots,a_N^{(N)})}(x_1,\dots,x_N;\, \theta)$. Hence, Theorem \ref{thm_small_th} yields that
\begin{equation}
\label{eq_x22}
 \lim_{\begin{smallmatrix} N\to\infty,\, \theta\to 0 \\ \theta N\to \gamma \end{smallmatrix}} \frac{\partial^l}{\partial y^l} \ln B_{(a_1^{(N)},\dots,a_N^{(N)})}(y, 0^{N-1};\, \theta)\Bigr|_{y=0}= (l-1)!\cdot \kappa_l,
\end{equation}
where $\{\ka_l\}_{l\ge 1} = \Tmc(\{m_k\}_{k\ge 1})$.

On the other hand, note that the formulas \eqref{eq_x17} and \eqref{eq_x18} have a limit in the regime $N\to \infty$, $\theta\to 0$, $\theta N\to\gamma$, which reads:
\begin{equation}
\label{eq_x20}
\lim_{\begin{smallmatrix} N\to\infty,\, \theta\to 0 \\ \theta N\to \gamma \end{smallmatrix}} B_{(a_1^{(N)},\dots,a_N^{(N)})}(y,0^{N-1}; \theta)=\sum_{k=0}^{\infty} \frac{c_k}{(\gamma)_k} y^k,
\end{equation}
where the numbers $c_k$ are found from the following  Taylor series expansion:
\begin{multline}
\label{eq_x21}
  \sum_{k=0}^{\infty} c_k z^k = \lim_{\begin{smallmatrix} N\to\infty,\, \theta\to 0 \\ \theta N\to \gamma \end{smallmatrix}} \prod_{i=1}^{N} (1-a_i^{(N)} z)^{-\theta}=\lim_{\begin{smallmatrix} N\to\infty,\, \theta\to 0 \\ \theta N\to \gamma \end{smallmatrix}} \exp\left[-\theta \sum_{i=1}^{N} \ln\left(1-a_i^{(N)} z\right)\right]\\=\lim_{\begin{smallmatrix} N\to\infty,\, \theta\to 0 \\ \theta N\to \gamma \end{smallmatrix}} \exp\left[\theta N \sum_{k=1}^{\infty} \frac{z^k}{k} \frac{1}{N}\sum_{i=1}^{N} (a_i^{(N)})^k\right]= \exp\left[ \gamma \sum_{k=1}^{\infty} \frac{m_k z^k}{k}\right].
\end{multline}
Because $|a_i^{(N)}|\le r$ for all $1\le i \le N$, for any $\eps>0$ the convergence in \eqref{eq_x21} is uniform over $|z|\le r^{-1}-\eps$. We claim that the convergence in \eqref{eq_x20} is uniform over $y$ in compact subsets of $\mathbb C$. Indeed, the term-by-term
convergence of \eqref{eq_x17} to \eqref{eq_x20} is evident from \eqref{eq_x21}, while a tail bound on the series can be obtained from a uniform $N$--independent bound on the coefficients $c_k=c_k^{(N)}$ in \eqref{eq_x17}, \eqref{eq_x18} of the form $|c_k^{(N)}|\le C\cdot r^{-k}$ for $C>0$, which follows from the Cauchy integral formula applied to \eqref{eq_x18}.

Comparing \eqref{eq_x22} with \eqref{eq_x20} and noting that uniform convergence of analytic functions implies convergence of their derivatives, we conclude that
\begin{equation}
\label{eq_x23}
 \exp\left(\sum_{l=1}^{\infty} \frac{\kappa_l}{l} y^l\right)=\sum_{k=0}^{\infty} \frac{c_k}{(\gamma)_k} y^k.
\end{equation}
The last identity together with \eqref{eq_x21} give \eqref{eq_cums_moments_2}.

\medskip

{\bf Step 3.} It remains to study the case when $\{m_k\}_{k\ge 1}$ is an arbitrary sequence of real numbers, rather than a sequence of moments of a compactly supported probability measure $\mu$ as in \eqref{eq_x19}. Note that the relation $\{\ka_l\}_{l\ge 1} = \Tmc(\{m_k\}_{k\ge 1})$ is equivalent to saying that for certain polynomials $Q^l$, we have
$$
 \ka_l= Q^l(m_1,m_2,\dots,m_l), \quad l=1,2,\dots.
$$
On the other hand, \eqref{eq_cums_moments} of Theorem \ref{thm:mom_cums2} is equivalent to saying that for certain polynomials $\tilde Q^l$, we have
$$
 \ka_l= \tilde Q^l(m_1,m_2,\dots,m_l), \quad l=1,2,\dots.
$$
Hence, in order to prove Theorem \ref{thm:mom_cums2}, we need to show that the polynomials $Q^l$ and $\tilde Q^l$ coincide for each $l=1,2,\dots$. Note that two polynomials in $l$ variables coincide if and only if they coincide as functions on a non-empty open set $D\subset \mathbb R^{l}$, and this $D$ can be chosen in an arbitrary way. Fix $l$ and define:
$$
 D:=\left\{\left(\tfrac{1}{l}\sum_{i=1}^{l} d_i,\, \tfrac{1}{l}\sum_{i=1}^{l} (d_i)^2,\dots, \tfrac{1}{l}\sum_{i=1}^{l} (d_i)^l\right) \, \Bigg|\, d_1<d_2<\dots<d_l\right\}\subset \mathbb R^{l}.
$$
The set $D$ is an image of an open set $\{(d_1,\dots,d_l)\subset \mathbb R^l \mid d_1<\dots<d_l \}$ under a smooth map with non-vanishing Jacobian (equal to $l^{-l} \prod_{i<j}  (d_i-d_j)$). Hence, $D$ is open. By Step 2 the polynomials  $Q^l$ and $\tilde Q^l$ coincide as functions on $D$, because each element in $D$ is a moment sequence of the discrete probability measure with atoms $\tfrac{1}{l}$ at points $d_1,\dots,d_l$. Therefore, polynomials $Q^l$ and $\tilde Q^l$ coincide.
\end{proof}

\section{Limits of the maps $\Tcm$ and $\Tmc$ as $\ga\to 0$ and $\ga\to\infty$}

\label{Section_semifree}

In this section we investigate the behavior of the $\gamma$-cumulants and $\gamma$--convolution as $\gamma\to 0$ and $\gamma\to\infty$. We will see that in the former case the conventional cumulants and conventional convolution appear, while in the latter case we link to the free probability counterparts.

\subsection{$\gamma\to 0$ limit}
\label{Section_limit_to_0}

Let us recall the definition of the classical cumulants.

\begin{definition} Given a sequence of moments $\{m_k\}_{k\ge 1}$ we define the corresponding cumulants $\{c_l\}_{l\ge 1}= \tilde T_{m\to c}(\{m_k\}_{k\ge 1})$ through the identity for the generating functions:
\begin{equation}
\label{eq_cumulants_gen}
C(z) = \ln (M(z)),\qquad  \text{where } \quad M(z) := 1 + \sum_{k=1}^\infty{\frac{m_k}{k!}z^k},\quad C(z) := \sum_{l=1}^\infty{\frac{c_l}{l!} z^l}.
\end{equation}
In the opposite direction, given a sequence of cumulants $\{c_l\}_{l\ge 1}$, we define the corresponding sequence of moments $\{m_k\}_{k\ge 1}=\tilde T_{c\to m}(\{c_l\}_{l\ge 1})$ through a combinatorial formula:
\begin{equation}\label{eq_cumulants_combinatorial}
m_k := \sum_{\pi=B_1\sqcup \dots\sqcup B_m \in\PP(k)}\, \prod_{i=1}^m{c_{|B_i|}},\quad k = 1, 2, \cdots.
\end{equation}
\end{definition}
The two definitions \eqref{eq_cumulants_gen} and \eqref{eq_cumulants_combinatorial} are well-known to be equivalent and the maps $\tilde T_{m\to c}$ and $\tilde T_{c\to m}$ are inverse to each other, see, e.g.,  \cite[Sections 1.1--1.2]{MS}.

\begin{thm}\label{Theorem_gamma_to_0_limit}
 Given a sequence of numbers $\{m_k\}_{k\ge 1}$, let
 $$
 \{\kappa^0_l\}_{l\ge 1}=\lim_{\gamma\to 0} \Tmc(\{m_k\}_{k\ge 1}), \qquad  \qquad \{c_l\}_{l\ge 1}= \tilde T_{m\to c}(\{m_k\}_{k\ge 1}).
 $$
Then for each $l=1,2,\dots$,  we  have
$\displaystyle \kappa^0_l= \tfrac{1}{(l-1)!}c_l$.
\end{thm}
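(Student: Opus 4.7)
My plan is to derive the claim directly from the generating--function reformulation \eqref{eq_cums_moments_2} of Theorem \ref{thm:mom_cums2}, tracking the $\gamma\to 0$ behavior of every quantity. Write $\kappa_l^{(\gamma)} := \Tmc(\{m_k\}_{k\ge 1})_l$, and let $\tilde c_n = \tilde c_n^{(\gamma)}$ denote the auxiliary sequence appearing in \eqref{eq_cums_moments_2}, so that
$$
\sum_{n=0}^\infty \tilde c_n z^n = \exp\!\left(\gamma \sum_{k=1}^\infty \frac{m_k\, z^k}{k}\right), \qquad \sum_{n=0}^\infty \frac{\tilde c_n}{(\gamma)_n} z^n = \exp\!\left(\sum_{l=1}^\infty \frac{\kappa_l^{(\gamma)}}{l} z^l\right).
$$
Before taking limits, I would note that each $\kappa_l^{(\gamma)}$ is a polynomial in $m_1, \dots, m_l$ whose coefficients are rational functions of $\gamma$ that are regular at $\gamma = 0$ (this is evident from the sample formulas in \eqref{ex_m_to_c}, and in general from the fact that the prefactor $(\gamma+1)_{l-1}$ in \eqref{eq_mom_through_cum} has no pole at $\gamma=0$). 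Consequently $\kappa_l^0 := \lim_{\gamma\to 0}\kappa_l^{(\gamma)}$ is well defined for every $l$.

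Next I would Taylor-expand in $\gamma$. From the first identity above, the coefficient extraction gives $\tilde c_0 = 1$ and, for $n\ge 1$,
$$
\tilde c_n = \gamma\cdot \frac{m_n}{n} + O(\gamma^2),
$$
since the only order-$\gamma$ contribution comes from the linear term in the exponential series. Combined with $(\gamma)_n = \gamma\cdot(n-1)! + O(\gamma^2)$ for $n\ge 1$, this yields
$$
\lim_{\gamma\to 0} \frac{\tilde c_n}{(\gamma)_n} = \frac{m_n}{n\,(n-1)!} = \frac{m_n}{n!}, \qquad n\ge 1,
$$
while the $n=0$ term stays at $1$. Term-by-term passage to the limit in the second identity (which is legitimate as an equality of formal power series in $z$, since each coefficient is rational in $\gamma$ and regular at $0$) therefore produces
$$
\exp\!\left(\sum_{l=1}^\infty \frac{\kappa_l^0}{l} z^l\right) = 1 + \sum_{n=1}^\infty \frac{m_n}{n!} z^n = M(z),
$$
with $M(z)$ as in \eqref{eq_cumulants_gen}.

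To finish, I would take logarithms and compare with the classical definition \eqref{eq_cumulants_gen}: this gives
$$
\sum_{l=1}^\infty \frac{\kappa_l^0}{l} z^l = \ln M(z) = C(z) = \sum_{l=1}^\infty \frac{c_l}{l!} z^l,
$$
so matching the coefficient of $z^l$ yields $\kappa_l^0/l = c_l/l!$, i.e.\ $\kappa_l^0 = c_l/(l-1)!$. The only subtle point is the justification that the $\gamma\to 0$ limit can be applied inside the generating function identity; this is the step I would take care over, but it reduces to the elementary observation that both sides of the second identity in \eqref{eq_cums_moments_2} are, coefficient-wise, rational functions of $\gamma$ with no pole at the origin.
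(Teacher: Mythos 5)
Your argument is correct, and it takes a genuinely different route from the paper's. The paper proves this by taking $\gamma\to 0$ directly in the combinatorial moment--cumulant formula of Theorem \ref{theorem_cumuls_moms}: it computes $\lim_{\gamma\to 0}\bigl[p(i)!\,(\gamma+p(i)+1)_{q(i)}\bigr] = (p(i)+q(i))! = (|B_i|-1)!$, so the sum over all set partitions with weight $W(\pi)$ degenerates to the classical moment--cumulant relation \eqref{eq_cumulants_combinatorial} with $c_l = (l-1)!\,\kappa^0_l$. You instead take $\gamma\to 0$ in the generating-function identity \eqref{eq_cums_moments_2} from Theorem \ref{thm:mom_cums2}, computing $\tilde c_n = \gamma\, m_n/n + O(\gamma^2)$ and $(\gamma)_n = \gamma\,(n-1)! + O(\gamma^2)$ to recover the exponential moment generating function $M(z)$ in the limit, and then comparing with $\ln M(z) = C(z)$. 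Both arguments are valid and roughly comparable in length; the paper's version has the merit of showing explicitly how the set-partition sum collapses to the classical Möbius-type formula, while your version is more analytical and has the small advantage of not requiring the reader to unwind the statistics $p(i), q(i)$. Your care in justifying the coefficient-wise $\gamma\to 0$ passage (each coefficient is rational in $\gamma$ and regular at $0$, which follows from \eqref{eq_mom_through_cum} and the computed limit of $\tilde c_n/(\gamma)_n$) is exactly the right observation and closes the only potential gap.
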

\begin{remark}\label{gamma_0_reform}
The second relation implies $\tilde{T}_{c\to m}(\{c_l\}_{l\ge 1}) = \{m_k\}_{k\ge 1}$.
The first one means that if we let $\Tmc(\{m_k\}_{k\ge 1}) =: \{ \ka_l^{\ga} \}_{l\ge 1}$ (the $\ka_l^{\ga}$'s depend on $\ga$), then $\lim_{\ga\to 0}{\ka_l^\ga} = \ka_l^0 = \tfrac{1}{(l-1)!}c_l$, $l\ge 1$.
As $\Tcm$ is the inverse of $\Tmc$, then $\{m_k\}_{k\ge 1} = \Tcm(\{ \ka_l^{\ga} \}_{l\ge 1})$.
Observe that $\Tcm(\{ \ka_l^{\ga} \}_{l\ge 1})$ is a sequence where each entry is a polynomial (the one from Theorem \ref{theorem_cumuls_moms}) in the variables $\ka_l^{\ga}$, $l\ge 1$, and the coefficients of these polynomials have limits as $\ga\to 0$, thus $\{m_k\}_{k\ge 1} = \lim_{\ga\to 0}\Tcm(\{ \ka_l^{\ga} \}_{l\ge 1}) = \lim_{\ga\to 0}\Tcm(\{ \tfrac{1}{(l-1)!}c_l \}_{l\ge 1})$. Hence we have two equations for $\{m_k\}_{k\ge 1}$:
$$
\tilde{T}_{c\to m}(\{c_l\}_{l\ge 1}) = \{m_k\}_{k\ge 1},\qquad
\lim_{\ga\to 0}\Tcm(\{ \tfrac{1}{(l-1)!}c_l \}_{l\ge 1}) = \{m_k\}_{k\ge 1}.
$$
In the same fashion, one can show that these two relations imply the ones in the theorem.
Hence the theorem is equivalent to the statement that these last two definitions of $\{m_k\}_{k\ge 1}$ coincide.
\end{remark}
\begin{proof}[Proof of Theorem \ref{Theorem_gamma_to_0_limit}] We use the description of the map $\Tmc$ of Theorem \ref{theorem_cumuls_moms} and send $\gamma\to 0$ in the weight $W(\pi)$ of \eqref{W_formula}. Note that
$$
 \lim_{\gamma\to 0} \bigl[ p(i)! (\gamma+p(i)+1)_{q(i)}\bigr]=(p(i)+q(i))!
$$
According to definitions of Section \ref{sec_Tcm}, $p(i)+q(i)+1=|B_i|$, i.e., the size of the $i$th block in $\pi$. Hence, the $\gamma\to 0$ limit of Theorem \ref{theorem_cumuls_moms} gives
\begin{equation}
\label{eq_x24}
 m_k=\sum_{\pi=B_1\sqcup\dots\sqcup B_m\in\PP(k)} \prod_{i=1}^m\left[(|B_i|-1)!\, \kappa^0_{|B_i|}\right], \qquad k=1,2,\dots.
\end{equation}
Comparing with \eqref{eq_cumulants_combinatorial} and noting that the relations \eqref{eq_x24} uniquely determine $\{\kappa^0_l\}_{l\ge 1}$, we conclude that $ \kappa^0_l= \tfrac{1}{(l-1)!}c_l$.
\end{proof}

As a corollary, we obtain the $\gamma\to 0$ behavior of the $\gamma$--convolution of Theorem \ref{Theorem_gamma_convolution}.
\begin{corollary}\label{Corollary_convolution_at_0}
 Take two sequences of real numbers $\{m_k^{\mathbf a}\}_{k\ge 1}$ and $\{m_k^{\mathbf b}\}_{k\ge 1}$. Define
 $$
   \{\tilde m_k\}_{k\ge 1}:=\lim_{\gamma\to 0}\left[ \{m_k^{\mathbf a}\}_{k\ge 1}\boxplus_\gamma \{m_k^{\mathbf b}\}_{k\ge 1} \right].
 $$
 Then with the agreement $m_0^{\mathbf a}=m_0^{\mathbf b}=1$, we have
 \begin{equation}
 \label{eq_sum_moments}
  \tilde m_k=\sum_{s=0}^k {k\choose s} m_s^{\mathbf a} m_{k-s}^{\mathbf b}, \quad k=1,2,\dots.
 \end{equation}
\end{corollary}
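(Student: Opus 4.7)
The plan is to reduce the corollary to the classical fact that cumulants add under convolution, using Theorem~\ref{Theorem_gamma_to_0_limit} as the bridge. First I would fix $\ga>0$ and set $\{\ka_l^{\mathbf a,\ga}\}_{l\ge 1}:=\Tmc(\{m_k^{\mathbf a}\}_{k\ge 1})$ and analogously $\{\ka_l^{\mathbf b,\ga}\}_{l\ge 1}$. By the definition of $\ga$-convolution embedded in Theorem~\ref{Theorem_gamma_convolution} (see \eqref{eq_convolution_cumulants}), the sequence $\{m_k^{\mathbf a}\}_{k\ge 1}\boxplus_\ga\{m_k^{\mathbf b}\}_{k\ge 1}$ has $l$-th $\ga$-cumulant equal to $\ka_l^{\mathbf a,\ga}+\ka_l^{\mathbf b,\ga}$. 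Since each $\ka_l^{\cdot,\ga}$ is a polynomial in $m_1^{\cdot},\dots,m_l^{\cdot}$ with coefficients rational in $\ga$ that have no pole at $\ga=0$ (cf.~\eqref{ex_m_to_c}), this yields well-defined limiting $\ga$-cumulants at $\ga=0$. Applying $\Tcm$ (whose coefficients are polynomials in $\ga$, cf.~\eqref{ex_c_to_m}), the moment sequence $\{m_k^{\mathbf a}\}\boxplus_\ga\{m_k^{\mathbf b}\}$ therefore has a coordinatewise $\ga\to 0$ limit, so that $\{\tilde m_k\}_{k\ge 1}$ is well-defined.

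Second, I would take $\ga\to 0$ on both sides of the additivity relation. By Theorem~\ref{Theorem_gamma_to_0_limit},
$$
\lim_{\ga\to 0}\ka_l^{\mathbf a,\ga}=\frac{c_l^{\mathbf a}}{(l-1)!},\qquad \lim_{\ga\to 0}\ka_l^{\mathbf b,\ga}=\frac{c_l^{\mathbf b}}{(l-1)!},
$$
where $c_l^{\mathbf a}$ and $c_l^{\mathbf b}$ are the classical cumulants of the moment sequences $\{m_k^{\mathbf a}\}$ and $\{m_k^{\mathbf b}\}$. Applying the same theorem to the moment sequence $\{\tilde m_k\}$ itself — and using the continuity argument sketched above to conclude that $\lim_{\ga\to 0}\Tmc$ commutes with the $\ga$-dependent input $\{m_k^{\mathbf a}\}\boxplus_\ga\{m_k^{\mathbf b}\}$ — I would identify the $\ga\to 0$ limit of the $l$-th $\ga$-cumulant of $\{m_k^{\mathbf a}\}\boxplus_\ga\{m_k^{\mathbf b}\}$ with $\tilde c_l/(l-1)!$, where $\tilde c_l$ is the $l$-th classical cumulant of $\{\tilde m_k\}$. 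Matching the two expressions and cancelling the factor $(l-1)!$ gives $\tilde c_l=c_l^{\mathbf a}+c_l^{\mathbf b}$ for every $l\ge 1$.

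Finally, I would invoke the classical moment-cumulant correspondence \eqref{eq_cumulants_gen}: the additivity of classical cumulants translates, through the exponential generating function relation $M(z)=\exp C(z)$, into the identity $\tilde M(z)=M^{\mathbf a}(z)\cdot M^{\mathbf b}(z)$ of formal power series; extracting the coefficient of $z^k/k!$ yields precisely \eqref{eq_sum_moments}. The only nontrivial (though mild) point in this plan is the bookkeeping of the interchange of limits $\lim_{\ga\to 0}\circ\Tmc$ when the input itself depends on $\ga$, and this is handled uniformly by the explicit polynomial/rational dependence on $\ga$ of the transition maps $\Tcm$ and $\Tmc$.
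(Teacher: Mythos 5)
Your proposal is correct and follows essentially the same route as the paper: apply the $\gamma$-cumulant additivity \eqref{eq_convolution_cumulants}, pass to the $\gamma\to 0$ limit via Theorem~\ref{Theorem_gamma_to_0_limit} to obtain additivity of classical cumulants, and then convert back to moments through the classical generating-function relation \eqref{eq_cumulants_gen}. You are a bit more explicit than the paper about the interchange of $\lim_{\gamma\to 0}$ with the $\gamma$-dependent input to $\Tmc$, but that is a sound and welcome elaboration rather than a different argument.
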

\begin{remark}
 Suppose that we are given two independent random variables $\mathbf a$ and $\mathbf b$, such that
 $$
   m_k^{\mathbf a} =\E \mathbf a^k, \quad m_k^{\mathbf b}=\E \mathbf b^k,\quad k=1,2,\dots.
 $$
 Then the formula \eqref{eq_sum_moments} says that $\tilde m_k=\E(\mathbf a+\mathbf b)^k$.
\end{remark}
\begin{proof}[Proof of Corollary \ref{Corollary_convolution_at_0}] By \eqref{eq_convolution_cumulants} we have for each $\gamma>0$
$$
 \Tmc(  \{m_k^{\mathbf a}\}_{k\ge 1}\boxplus_\gamma \{m_k^{\mathbf b}\}_{k\ge 1} )= \Tmc(  \{m_k^{\mathbf a}\}_{k\ge 1}) + \Tmc( \{m_k^{\mathbf b}\}_{k\ge 1}).
$$
Taking the limit $\gamma\to 0$ and using Theorem \ref{Theorem_gamma_to_0_limit}, we get
$$
 \tilde T_{m\to c} \bigl( \{\tilde m_k\}_{k\ge 1}\bigr)=  \tilde T_{m\to c}\bigl(  \{m_k^{\mathbf a}\}_{k\ge 1}\bigr) +  \tilde T_{m\to c}\bigl(  \{m_k^{\mathbf b}\}_{k\ge 1}\bigr).
$$
Taking into account \eqref{eq_cumulants_gen}, the last identity is equivalent to \eqref{eq_sum_moments}.
\end{proof}

\subsection{$\gamma\to\infty$ limit}
\label{Section_limit_to_infinity}

Let us recall the definition of the free cumulants.

A set partition $\pi\in\PP(k)$ is said to be \emph{crossing} if there exist two distinct blocks $B, B'\in\pi$ and integers $1\le x < y < z < w \le k$ such that $x, z\in B$ and $y, w\in B'$. A set partition $\pi\in\PP(k)$ is said to be a \emph{non-crossing set partition} if it is not crossing. In the notations of Section \ref{sec_Tcm}, the non-crossing set partitions are those which have $p(i)=0$ for all $i$; in other words, there should be no crossings of roofs and legs.
We denote the collection of all non-crossing set partitions of $[k]$ by $NC(k)$. For instance, out the seven partitions of $[4]$ with two blocks, mentioned in Section \ref{sec_Tcm}, the following six are the non-crossing ones:
$$
 \{1\}\sqcup \{2,3,4\},\quad \{1,3,4\}\sqcup \{2\},\quad \{1,2,4\}\sqcup\{3\},\quad \{1,2,3\}\sqcup\{4\},
$$
$$
 \{1,2\}\sqcup \{3,4\},\quad \{1,4\}\sqcup \{2,3\}.
$$

\begin{definition} Given a sequence of moments $\{m_k\}_{k\ge 1}$ we define the corresponding free cumulants $\{r_l\}_{l\ge 1}= \tilde T^\infty_{m\to r}(\{m_k\}_{k\ge 1})$ through the identity for the generating functions:
\begin{equation}
\label{eq_free_cumulants_gen}
G\bigl(R(z)+z^{-1}\bigr) = z,\qquad  \text{where } \quad G(z) := z^{-1} + \sum_{k=1}^\infty \frac{m_k}{z^{k+1}} ,\quad R(z) := \sum_{l=1}^\infty r_l z^{l-1}.
\end{equation}
In the opposite direction, given a sequence of free cumulants $\{r_l\}_{l\ge 1}$, we define the corresponding sequence of moments $\{m_k\}_{k\ge 1}=\tilde T^\infty_{r\to m}(\{r_l\}_{l\ge 1})$ through a combinatorial formula:
\begin{equation}\label{eq_free_cumulants_combinatorial}
m_k := \sum_{\pi=B_1\sqcup \dots\sqcup B_m \in NC(k)}\, \prod_{i=1}^m{r_{|B_i|}},\quad k = 1, 2, \cdots.
\end{equation}
\end{definition}
The relation $G\bigl(R(z)+z^{-1}\bigr) = z$ can be rewritten as $R(z)=G^{(-1)}(z)-z^{-1}$ and $R(z)$ defined in this way is called the \emph{Voiculescu $R$--transform.}
The equivalence between \eqref{eq_free_cumulants_gen} and \eqref{eq_free_cumulants_combinatorial} is explained in \cite[Sect. 2.4]{MS}.

\begin{thm}\label{Theorem_gamma_to_infinity_limit}
 Given a sequence of numbers $\{m_k\}_{k\ge 1}$ define
 $$
 \{\kappa^\gamma_l\}_{l\ge 1}= \Tmc(\{m_k\}_{k\ge 1}), \qquad  r_l=\lim_{\gamma\to\infty} \gamma^{l-1} \kappa^\gamma_l.
 $$
Then we  have
$\displaystyle \{r_l\}_{l\ge 1}= \tilde T^\infty_{m\to r} (\{m_k\}_{k\ge 1})$.
\end{thm}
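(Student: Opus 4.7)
The plan is to mirror the strategy used for Theorem \ref{Theorem_gamma_to_0_limit} by analyzing the $\gamma\to\infty$ asymptotics of the combinatorial formula of Theorem \ref{theorem_cumuls_moms}. Under the expected scaling $\kappa^\gamma_l\sim r_l/\gamma^{l-1}$, I anticipate that only non-crossing set partitions survive in the limit, thereby reproducing the defining identity \eqref{eq_free_cumulants_combinatorial} of free cumulants.

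The central computation concerns the weight $W(\pi) = \prod_{i=1}^m p(i)!\,(\gamma+p(i)+1)_{q(i)}$ for $\pi = B_1\sqcup\cdots\sqcup B_m\in\PP(k)$. Using the identities $p(i)+q(i)=|B_i|-1$ and $\sum_i(p(i)+q(i))=k-m$, one rewrites
\begin{equation*}
W(\pi)\prod_{i=1}^m\frac{1}{\gamma^{|B_i|-1}} \;=\; \prod_{i=1}^m\frac{p(i)!\,(\gamma+p(i)+1)_{q(i)}}{\gamma^{p(i)+q(i)}}.
\end{equation*}
Since $(\gamma+p(i)+1)_{q(i)}$ is monic in $\gamma$ of degree $q(i)$, each factor tends to $1$ when $p(i)=0$ and to $0$ when $p(i)\geq 1$. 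The condition $p(i)=0$ for every $i$ is precisely the non-crossing condition on $\pi$: a nonzero $p(i)$ records at least one intersection between a roof of the block $B_i$ and a leg of an earlier block, which is exactly what a crossing is. Hence the sum over $\pi\in\PP(k)$ collapses to a sum over $\pi\in NC(k)$ in the limit, and within $NC(k)$ each factor converges to $1$ since $(\gamma+1)_{|B_i|-1}/\gamma^{|B_i|-1}\to 1$.

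To secure both the existence of $\lim_{\gamma\to\infty}\gamma^{l-1}\kappa^\gamma_l$ and its identification as $r_l$, I would argue by induction on $l$, following the reformulation in Remark \ref{gamma_0_reform}. The base case $l=1$ is trivial since $\kappa^\gamma_1=m_1=r_1$. For the inductive step, isolate the contribution of the one-block partition $\pi=\{[l]\}$ in Theorem \ref{theorem_cumuls_moms} to write
\begin{equation*}
(\gamma+1)_{l-1}\,\kappa^\gamma_l \;=\; m_l - \sum_{\substack{\pi\in\PP(l)\\ \#\pi\geq 2}} W(\pi)\prod_{B\in\pi}\kappa^\gamma_{|B|}.
\end{equation*}
Multiplying by $\gamma^{l-1}/(\gamma+1)_{l-1}\to 1$, invoking the inductive hypothesis $\gamma^{|B|-1}\kappa^\gamma_{|B|}\to r_{|B|}$ for blocks of size $|B|<l$, and applying the asymptotic above block by block, the right-hand side converges to $m_l - \sum_{\pi\in NC(l),\,\#\pi\geq 2}\prod_{B\in\pi}r_{|B|}$, which by \eqref{eq_free_cumulants_combinatorial} equals $r_l$. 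The main obstacle is purely bookkeeping in the weight asymptotics and verifying that the polynomial-in-$\gamma$ structure of each $W(\pi)$ combines correctly; no analytic subtlety arises because everything in sight is a finite sum of rational functions of $\gamma$ with explicit leading terms.
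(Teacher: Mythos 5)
Your argument is correct and mirrors the paper's proof: the central computation is the same $\gamma\to\infty$ asymptotic of $\gamma^{-(p(i)+q(i))}\,p(i)!\,(\gamma+p(i)+1)_{q(i)}$, yielding the same conclusion that only the non-crossing set partitions (those with $p(i)=0$ for every block) survive in the limit. The paper packages the remaining bookkeeping through the reformulation of Remark \ref{Remark_inverse_infinity} rather than your explicit induction on $l$, but the substance of the two arguments is the same.
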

\begin{remark} \label{Remark_inverse_infinity}
Just like in Remark \ref{gamma_0_reform}, the theorem can be equivalently stated by saying that the following two definitions
$$
\{m_k\}_{k\ge 1} = \lim_{\gamma\to \infty} \Tcm(\{\gamma^{1-l}r_l\}_{l\ge 1}),\qquad
\{m_k\}_{k\ge 1} = \tilde T^\infty_{r\to m}(\{r_l\}_{l\ge 1}),
$$
of the sequence $\{m_k\}_{k\ge 1}$ coincide.
\end{remark}
\begin{proof}[Proof of Theorem \ref{Theorem_gamma_to_infinity_limit}] We use the reformulation of Remark \ref{Remark_inverse_infinity} together with the description of the map $\Tcm$ of Theorem \ref{theorem_cumuls_moms} and send $\gamma\to \infty$ in the weight $W(\pi)$ of \eqref{W_formula}. Note that
$$
 \lim_{\gamma\to \infty } \bigl[  \gamma^{-p(i)-q(i)} p(i)! (\gamma+p(i)+1)_{q(i)}\bigr]=\begin{cases} 1,& \text{if }p(i)=0,\\ 0, & \text{otherwise.} \end{cases}
$$
According to the definitions of Section \ref{sec_Tcm}, $p(i)+q(i)+1=|B_i|$, so $\ga^{1-|B_i|}r_{|B_i|} = \ga^{-p(i)-q(i)}r_{|B_i|}$. Hence, the $\gamma\to \infty$ limit of Theorem \ref{theorem_cumuls_moms} gives
\begin{multline}
\label{eq_x25}
 m_k=\lim_{\gamma\to\infty} \sum_{\pi=B_1\sqcup\dots\sqcup B_m\in\PP(k)} \prod_{i=1}^m\left[  \gamma^{-p(i)-q(i)} p(i)! (\gamma+p(i)+1)_{q(i)}   r_{|B_i|}\right]\\= \sum_{\pi=B_1\sqcup \dots\sqcup B_m\in NC(k)} \prod_{i=1}^m  r_{|B_i|}, \quad k=1,2,\dots.
\end{multline}
This is the same expression as \eqref{eq_free_cumulants_combinatorial}.
\end{proof}

As a corollary, we obtain the $\gamma\to \infty$ behavior of the $\gamma$--convolution of Theorem \ref{Theorem_gamma_convolution}.
\begin{corollary}\label{Corollary_convolution_at_infty}
 Take two sequences of real numbers $\{m_k^{\mathbf a}\}_{k\ge 1}$ and $\{m_k^{\mathbf b}\}_{k\ge 1}$. Define
 $$
   \{\tilde m_k\}_{k\ge 1}:=\lim_{\gamma\to \infty}\left[ \{m_k^{\mathbf a}\}_{k\ge 1}\boxplus_\gamma \{m_k^{\mathbf b}\}_{k\ge 1} \right].
 $$
 Then the sequence $\{\tilde m_k\}_{k\ge 1}$ is uniquely fixed by the identity
 \begin{equation}
 \label{eq_sum_free_moments}
   \tilde T^{\infty}_{m\to r}\bigl(\{\tilde m_k\}_{k\ge 1}\bigr)=  \tilde T^{\infty}_{m\to r}\bigl(\{m^{\mathbf a}_k\}_{k\ge 1}\bigr)+  \tilde T^{\infty}_{m\to r}\bigl( \{m^{\mathbf b}_k\}_{k\ge 1}\bigr)
 \end{equation}
\end{corollary}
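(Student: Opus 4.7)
The plan is to mirror the strategy used for Corollary \ref{Corollary_convolution_at_0} but using Theorem \ref{Theorem_gamma_to_infinity_limit} (the $\gamma\to\infty$ limit) in place of Theorem \ref{Theorem_gamma_to_0_limit}. Concretely, for each finite $\gamma>0$, the defining property \eqref{eq_convolution_cumulants} of the $\gamma$--convolution states that $\gamma$--cumulants are additive:
\begin{equation*}
 \Tmc\bigl(\{m_k^{\mathbf a}\}_{k\ge 1}\boxplus_\gamma \{m_k^{\mathbf b}\}_{k\ge 1}\bigr)
 = \Tmc\bigl(\{m_k^{\mathbf a}\}_{k\ge 1}\bigr) + \Tmc\bigl(\{m_k^{\mathbf b}\}_{k\ge 1}\bigr).
\end{equation*}
Denote these three sequences of $\gamma$--cumulants by $\{\kappa_l^{\mathbf a,\gamma}\}$, $\{\kappa_l^{\mathbf b,\gamma}\}$, and $\{\kappa_l^{\mathbf a\boxplus_\gamma\mathbf b,\gamma}\}$, respectively. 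I then multiply the $l$-th component by $\gamma^{l-1}$, obtaining $\gamma^{l-1}\kappa_l^{\mathbf a\boxplus_\gamma\mathbf b,\gamma} = \gamma^{l-1}\kappa_l^{\mathbf a,\gamma}+\gamma^{l-1}\kappa_l^{\mathbf b,\gamma}$ for each $l\ge 1$.

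Next I take $\gamma\to\infty$. By Theorem \ref{Theorem_gamma_to_infinity_limit}, the right-hand side converges termwise to the sum of the free cumulants associated to $\{m_k^{\mathbf a}\}_{k\ge 1}$ and $\{m_k^{\mathbf b}\}_{k\ge 1}$:
\begin{equation*}
 \lim_{\gamma\to\infty}\gamma^{l-1}\kappa_l^{\mathbf a,\gamma}
 = \bigl[\tilde T^\infty_{m\to r}(\{m_k^{\mathbf a}\}_{k\ge 1})\bigr]_l,
 \qquad
 \lim_{\gamma\to\infty}\gamma^{l-1}\kappa_l^{\mathbf b,\gamma}
 = \bigl[\tilde T^\infty_{m\to r}(\{m_k^{\mathbf b}\}_{k\ge 1})\bigr]_l.
\end{equation*}
Hence the left-hand side also admits a termwise limit, and applying Theorem \ref{Theorem_gamma_to_infinity_limit} once more (this time to the sequence $\{\tilde m_k\}_{k\ge 1} = \lim_{\gamma\to\infty}(\{m_k^{\mathbf a}\}\boxplus_\gamma\{m_k^{\mathbf b}\})$, which exists by Theorem \ref{theorem_cumuls_moms} since $\gamma$-moments are polynomials in $\gamma$-cumulants with coefficients polynomial in $\gamma$) identifies
\begin{equation*}
 \lim_{\gamma\to\infty}\gamma^{l-1}\kappa_l^{\mathbf a\boxplus_\gamma\mathbf b,\gamma}
 = \bigl[\tilde T^\infty_{m\to r}(\{\tilde m_k\}_{k\ge 1})\bigr]_l,
\end{equation*}
which proves \eqref{eq_sum_free_moments}. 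Uniqueness of $\{\tilde m_k\}_{k\ge 1}$ satisfying \eqref{eq_sum_free_moments} follows at once because $\tilde T^\infty_{m\to r}$ is invertible with inverse $\tilde T^\infty_{r\to m}$ given by the combinatorial formula \eqref{eq_free_cumulants_combinatorial}.

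The only subtle point in this plan, and the main thing to be careful about, is to justify that taking the $\gamma\to\infty$ limit commutes with the rescaling and with the application of $\tilde T^\infty_{m\to r}$ on $\{\tilde m_k\}$. This is straightforward because Theorem \ref{theorem_cumuls_moms} expresses each $m_k$ as a finite polynomial in $\kappa_1,\dots,\kappa_k$ whose coefficients are polynomials in $\gamma$; after the rescaling $\kappa_l\mapsto \gamma^{1-l}r_l$ the $\gamma\to\infty$ limit of each such polynomial is exactly the non-crossing sum appearing in \eqref{eq_free_cumulants_combinatorial} (as was already observed in the proof of Theorem \ref{Theorem_gamma_to_infinity_limit}). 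Thus all limits involved are just termwise limits of polynomials in a single real variable, and no analytic issues arise.
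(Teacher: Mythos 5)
Your proposal is correct and follows the same route as the paper's (very terse) proof: start from the additivity identity \eqref{eq_convolution_cumulants}, rescale the $l$-th component by $\gamma^{l-1}$, send $\gamma\to\infty$, and invoke Theorem \ref{Theorem_gamma_to_infinity_limit}. You additionally flag and address the one genuine subtlety that the paper glosses over, namely that on the left-hand side the argument of $T^{\gamma}_{m\to\ka}$ is itself $\gamma$-dependent, so Theorem \ref{Theorem_gamma_to_infinity_limit} cannot be applied verbatim; your observation that everything is a finite polynomial in the $\kappa_l$'s with coefficients polynomial in $\gamma$ (Theorem \ref{theorem_cumuls_moms}), together with Remark \ref{Remark_inverse_infinity}, is exactly what makes the limit interchange legitimate. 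One small expository suggestion: it is slightly cleaner to first conclude from the right-hand side that $\lim_{\gamma\to\infty}\gamma^{l-1}\kappa_l^{\mathbf a\boxplus_\gamma\mathbf b,\gamma} = r_l^{\mathbf a}+r_l^{\mathbf b}$ exists, then feed these limits into the explicit polynomial formula for $\Tcm$ to identify $\tilde m_k = \sum_{\pi\in NC(k)}\prod_{B\in\pi}(r_{|B|}^{\mathbf a}+r_{|B|}^{\mathbf b})$, and finally invoke invertibility of $\tilde T^\infty_{m\to r}$; this avoids the slight awkwardness of phrasing the key step as "applying Theorem \ref{Theorem_gamma_to_infinity_limit} to $\{\tilde m_k\}$," which as literally stated concerns $\Tmc$ of the fixed limit sequence rather than of the $\gamma$-dependent one.
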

\begin{remark}
 Suppose that we are given two random variables $\mathbf a$ and $\mathbf b$, such that
 $$
   m_k^{\mathbf a} =\E \mathbf a^k, \quad m_k^{\mathbf b}=\E \mathbf b^k,\quad k=1,2,\dots.
 $$
 Then the formula \eqref{eq_sum_free_moments} says that $\tilde m_k$ are the moments of the \emph{free convolution} of $\mathbf a$ and $\mathbf b$.
\end{remark}
\begin{proof}[Proof of Corollary \ref{Corollary_convolution_at_infty}] By \eqref{eq_convolution_cumulants} we have for each $\gamma>0$
$$
 \Tmc\bigl(  \{m_k^{\mathbf a}\}_{k\ge 1}\boxplus_\gamma \{m_k^{\mathbf b}\}_{k\ge 1} \bigr)= \Tmc\bigl(  \{m_k^{\mathbf a}\}_{k\ge 1}\bigr) + \Tmc\bigl( \{m_k^{\mathbf b}\}_{k\ge 1}\bigr).
$$
Taking the limit $\gamma\to \infty$ and using Theorem \ref{Theorem_gamma_to_infinity_limit}, we get \eqref{eq_sum_free_moments}.
\end{proof}

\section{Appendix: Law of Large Numbers for fixed temperature}

\label{Section_Appendix_LLN}

The aim of this Appendix is to probe the possibility of a version of Theorem \ref{thm_small_th} in which $\th > 0$ is fixed and is not changing with $N$.
The following claim is an analogue of \emph{one direction} of Theorem \ref{thm_small_th}.

\begin{claim}[LLN for finite temperature]\label{Claim_finite_th}
Let $\{\mu_N\}_{N \geq 1}$ be a sequence of exponentially decaying probability measures on tuples $a_1\le \cdots \le a_N$.
For each $N$, let $G_{N; \th}(x_1, \cdots, x_N) := G_\th(x_1, \cdots, x_N; \mu_N)$ be the BGF of $\mu_N$.
Assume that the sequence $\{G_{N; \th}\}_N$ satisfies the following conditions:
\begin{enumerate}[label=(\alph*)]
\item $\displaystyle \lim_{N\to\infty}\left.\frac{1}{N}\cdot\frac{\pa^l}{\pa x_i^l}\ln{(G_{N; \th})}\right|_{\setzeroes} =
(l - 1)!\cdot c_l$,\ for all $l\in\Z_{\geq 1}$.

\item $\displaystyle \lim_{N\to\infty}\left.\frac{1}{N}\cdot\frac{\partial}{\partial x_{i_1}}\cdots\frac{\partial}{\partial x_{i_r}}\ln{(G_{N; \th})}\right|_{\setzeroes} = 0$,\ for all $i_1, \dots, i_r\in\Z_{\geq 1}$ such that the set $\{i_1, \dots, i_r\}$ contains at least two distinct indices.
\end{enumerate}

\noindent Then the sequence $\{\mu_N\}_{N \geq 1}$ satisfies the following LLN (compare to Definition \ref{Definition_LLN_sat_ht}):
there exist real numbers $\{m_{k}\}_{k\geq 1}$ such that for any
$s=1,2,\dots$ and any $k_1, \dots, k_s\in\Z_{\geq 1}$, we have
$$\lim_{N\to\infty} \E_{\mu_N} \prod_{i=1}^s  \left( \frac{1}{N} \sum_{i=1}^N\left( \frac{a_i}{N} \right)^{k_i} \right) = \prod_{i=1}^{s}  m_{k_i}.$$

\medskip

If this occurs, $\{c_l\}_{l\geq 1}$ and $\{m_k\}_{k\ge 1}$ are related by either
\begin{equation*}
m_k = \sum_{\pi\in NC(k)}\, \prod_{B\in\pi}{\left( \theta^{|B| - 1}c_{|B|} \right)},\quad k = 1, 2, \cdots,
\end{equation*}
or, equivalently,
\begin{equation*}
m_k = \frac{1}{k+1} \cdot [z^{-1}] \left( \left( z^{-1} + \sum_{l = 1}^{\infty} \th^{l-1}c_l z^{l - 1} \right)^{\!\!k+1} \right)\!,\quad k\in\Z_{\geq 1}.
\end{equation*}
\end{claim}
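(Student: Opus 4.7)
The plan is to mimic the strategy of Theorem \ref{thm_small_th} (specifically its ``LLN--appropriate $\Rightarrow$ LLN'' direction), adapted to the regime of fixed $\theta$ and eigenvalues rescaled by $N$. By Proposition \ref{proposition_moments_through_operators},
\[
\E_{\mu_N}\prod_{i=1}^s\!\left(\frac{1}{N}\sum_{j=1}^N \Bigl(\frac{a_j}{N}\Bigr)^{k_i}\right) \;=\; \frac{1}{N^{s+k_1+\cdots+k_s}}\,\prod_{i=1}^s \P_{k_i}\,G_{N;\th}\bigr|_{\setzeroes},
\]
so the problem reduces to an asymptotic expansion of the right-hand side in terms of the Taylor coefficients of $F_N := \ln G_{N;\th}$. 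Writing $F_N = \sum_\lambda c^{\lambda}_{F_N}M_\lambda$, hypothesis (a) gives $c^{(l)}_{F_N} \sim c_l N/l$, while hypothesis (b) forces $c^{\lambda}_{F_N} = o(N)$ whenever $\ell(\lambda)\ge 2$.

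The technical core is a fixed-$\theta$ analogue of Theorem \ref{theorem_operators_expansion}. Expanding each Dunkl operator as $\partial_j + \theta\sum_{j'\ne j}\tfrac{1-s_{jj'}}{x_j-x_{j'}}$ and repeatedly applying the Leibniz rules \eqref{eq_Leibnitz}, \eqref{eq_Leibnitz_dif}, one writes $\prod_i \P_{k_i}\exp(F_N)\bigr|_0$ as a combinatorial sum whose terms are products of three types of factors: (i) $\theta(N-1)$ coming from summation over auxiliary indices in each divided-difference operation; (ii) diagonal derivatives $\partial_{x_1}^{l}F_N|_0 \sim (l-1)!c_l N$; and (iii) off-diagonal Taylor coefficients $c^{\lambda}_{F_N}$ with $\ell(\lambda)\ge 2$, which are $o(N)$ by hypothesis (b).

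The central combinatorial step is to isolate among these terms those contributing at the maximal order $N^{s+k_1+\cdots+k_s}$. Two features should emerge. First, at top order the leading configurations require the primary summation indices from distinct $\P_{k_i}$'s to be mutually distinct — so the contributions factor across $i$ and yield the claimed product $\prod_i m_{k_i}$ (a toy version of this factorization is visible in the computation $\P_1^2 G\bigr|_0 \sim N^2(\partial_{x_1} F|_0)^2 \sim (c_1 N)^2 N^2$, while the diagonal $j_1=j_2$ contribution is $N$ times smaller). Second, within a single $\P_k$ the leading terms should be indexed by non-crossing set partitions $\pi\in NC(k)$ of the $k$ Dunkl operator applications: a block $B$ of size $|B|$ corresponds to $|B|-1$ divided-difference ``merges'' that collect $|B|$ indices together (each producing a factor $\theta$) combined with a single diagonal derivative $\partial_{x_1}^{|B|}F_N|_0 \sim (|B|-1)!\,c_{|B|}\,N$; after bookkeeping of orderings of operations, the per-block contribution becomes $\theta^{|B|-1}c_{|B|}N^{|B|}$. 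Summing over $NC(k)$ yields the moment formula $m_k = \sum_{\pi\in NC(k)}\prod_B \theta^{|B|-1}c_{|B|}$; its equivalence with the Lagrange-inversion form stated in the Claim is standard (see \cite[Sect. 2.4]{MS}).

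The main obstacle will be the detailed combinatorial bookkeeping in the preceding step: proving that every ``crossing'' diagram is forced to involve either an off-diagonal Taylor coefficient $c^{\lambda}_{F_N}$ with $\ell(\lambda)\ge 2$, or a coincidence of summation indices costing a power of $N$, and is therefore strictly subleading. As in the proof of Theorem \ref{theorem_operators_expansion}, the heuristic size estimates on $c^{\lambda}_{F_N}$ would need to be upgraded to uniform remainder bounds, providing the precise fixed-$\theta$ analogue of \eqref{eq_operator_small_th_expansion} required to confirm that non-crossing configurations capture all the top-order contributions and nothing else.
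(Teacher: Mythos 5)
The paper does not actually prove Claim \ref{Claim_finite_th}: in Section \ref{Section_Appendix_LLN} it is stated without proof, with a remark that ``probably one can prove it with the same techniques that we have used in Section \ref{sec_proof_LLN}'' and that at $\theta=1$ one can alternatively degenerate \cite[Theorem~5.1]{BuG1} (see also \cite{NovakM}). Your proposal is precisely the programme the paper suggests, so there is no alternative ``paper proof'' to compare against. With that caveat, your sketch correctly identifies the reduction to Proposition~\ref{proposition_moments_through_operators}, the role of a fixed-$\theta$ analogue of Theorem~\ref{theorem_operators_expansion}, and the overall order-of-magnitude budget $N^{-(s+\sum k_i)}\cdot N^s\cdot\prod_i N^{k_i}$; I checked that the per-block accounting $(\theta(N-1))^{|B|-1}\cdot\bigl(|B|\,c_F^{(|B|)}\bigr)\sim\theta^{|B|-1}c_{|B|}N^{|B|}$ indeed reproduces $m_k=\sum_{NC(k)}\prod_B\theta^{|B|-1}c_{|B|}$ for $k=1,2,3$ by direct computation.

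Two points deserve emphasis, one of which you should tighten. First, the clean way to see why non-crossing partitions (and only they) survive is to run the replacement $\frac{\theta}{x_1-x_j}(1-s_{1j})\leadsto\theta\, d_1$ from the proof of Proposition~\ref{Proposition_LLN_leading_term} verbatim, arriving at $\D_1^k\exp(F_N)\bigr|_{\setzeroes}\approx[z^0]\bigl(\partial+\theta(N-1)d+*_g\bigr)^{k-1}g(z)$ with $g(z)=\sum_n n\,c^{(n)}_{F_N}z^{n-1}\sim N\tilde g(z)$, $\tilde g(z)=\sum_l c_l z^{l-1}$. Under conditions (a)--(b), both $\theta(N-1)\,d$ and $*_g$ contribute a factor $\sim N$ per application while the bare $\partial$ does not, so $\partial$ is subleading and one gets, after dividing out the $N$ powers, $m_k=[z^0]\bigl(\theta d+*_{\tilde g}\bigr)^{k-1}\tilde g(z)$. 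This is exactly the $\gamma\to\infty$ rescaled limit of Definition~\ref{def_R_map}, and Section~\ref{Section_limit_to_infinity} already converts it to the $NC(k)$ sum; so you do not need to re-derive the non-crossing combinatorics from scratch. Second, and this is where your sketch is a bit loose: you say each divided-difference merge produces ``a factor $\theta$,'' but the whole point of the finite-$\theta$ regime is that each merge produces $\theta(N-1)\to\infty$, while the diagonal higher Taylor coefficient $c^{(l)}_{F_N}\sim c_l N/l$ supplies only one factor of $N$ per block. It is precisely the contrast between these two rates that suppresses the crossing partitions (which, in the language of Definition~\ref{W_def}, have some $p(i)>0$, hence one fewer Pochhammer factor and a lost power of $\theta N$). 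The genuine work left — which the paper leaves open too — is to upgrade the Claim~A--style counting of Proposition~\ref{Proposition_highest_derivatives} into uniform remainder bounds that are $o(N^{k+1})$ in the fixed-$\theta$ normalization, rather than the $O(N^{-1})$ relative bounds sufficient when $\theta N\to\gamma$; this requires tracking that every coincidence of phantom indices, and every use of an off-diagonal $c^\nu_{F_N}$ with $\ell(\nu)\ge 2$, costs at least one power of $N$ after the product over blocks is formed.
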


\bigskip

We do not present a proof of the claim, but probably one can prove it with the same techniques that we have used in Section \ref{sec_proof_LLN}. At $\theta=1$, another approach is by a degeneration of \cite[Theorem 5.1]{BuG1}, see also \cite{NovakM}.
This claim would prove the one-sided implication
$$
\text{conditions (a) and (b) (fixed $\theta$ version of LLN--appropriateness)} \Longrightarrow \text{LLN--satisfaction}.
$$
Based on our Theorem \ref{thm_small_th}, on \cite[Theorem 2.6]{BuG3} (which studies the CLT at fixed $\theta = 1$), and on the classical theorem that relates the weak convergence of measures to the convergence of their characteristic functions, the reader may be inclined to believe that the reverse implication is also true and that this kind of ``if and only if'' results are always expected.

However, this turns out to be wrong. The naive analogue of Theorem \ref{thm_small_th} does not hold for fixed $\theta$: the ``expected if and only if statement" is false. Here is a counter-example.

Let us consider a sequence of probability measures $\mu_N$ such that a random $\mu_N$--distributed vector $a_1 \ge \cdots \ge a_N$ has $a_1 = \cdots = a_N$ almost surely and this common value $a$ is distributed according to a Gaussian measure of mean $0$ and variance $N$.
In this case, the random variable
$$
p_k^N = \frac{1}{N}\sum_{i=1}^N{\left( \frac{a_i}{N} \right)^k} = \left( \frac{a}{N} \right)^k
$$
is distributed as the $k$-th power of a Gaussian random variable of mean $0$ and variance $1/N$.
Consequently, the sequence $\{\mu_N\}_N$ satisfies a LLN, and all $m_k$'s are equal to zero.
On the other hand, by using $B_{(a, \dots, a)}(x_1, \dots, x_N; \th) = \exp(a\sum_{i=1}^N{x_i})$, it follows that the BGF of $\mu_N$ equals
$$
G_{N; \th}(x_1, \cdots, x_N) = \int_{-\infty}^{\infty}{\frac{e^{-a^2/(2N)}}{\sqrt{2\pi N}} B_{(a, \dots, a)}(x_1, \dots, x_N; \th) da}
= \exp\left( \frac{N}{2} \left(\sum_{i=1}^N{x_i}\right)^{\!\!\!2\,} \right).
$$
The sequence $\{G_{N; \th}\}_N$ of BGFs then satisfies
$$
\lim_{N\to\infty} \left. \frac{1}{N}\cdot\frac{\pa^2}{\pa x_1\pa x_2}\ln(G_{N;\th}) \right|_{\setzeroes} = 1,
$$
therefore contradicting condition (b) from Claim \ref{Claim_finite_th}.

\bigskip

A more refined question is whether \emph{some} ``if and only if for LLN'' statement holds, if one modifies somehow the conditions (a) and (b) of Claim \ref{Claim_finite_th}.
Based on small calculations (obtained when trying to reverse-engineer the proof of Theorem \ref{thm_small_th}), it is plausible that the answer is yes.

Indeed, based on Proposition \ref{proposition_moments_through_operators}, we must study the limits of the expressions
\begin{equation}\label{P_lambda}
N^{-|\la|-\ell(\la)}\cdot \left[\prod_{i=1}^{\ell(\la)}{\P_{\la_i}}\right]\! G_{N; \th},
\end{equation}
where $\la$ ranges over the set of all partitions of a given size $k$.
We have performed calculations for $k=2, 3$; they indicate that the conditions on second-order derivatives in (a) and (b) from Claim \ref{Claim_finite_th} should be replaced by:
\begin{align*}
\bullet& \lim_{N\to\infty} \left.\frac{1}{N}\left\{\frac{\pa^2}{\pa x_1^2} - \frac{\pa^2}{\pa x_1\pa x_2} \right\} \ln(G_{N;\th}) \right|_{\setzeroes} = \th^{-1}\cdot c_2,\\
\bullet& \lim_{N\to\infty} \left.\frac{1}{N^2} \frac{\pa^2 \ln(G_{N;\th})}{\pa x_1\pa x_2} \right|_{\setzeroes} = 0,
\end{align*}
and the conditions on third-order derivatives should be replaced by:
\begin{align*}
\bullet& \lim_{N\to\infty} \left.\frac{1}{N}\left\{\frac{1}{2}\cdot\frac{\pa^3}{\pa x_1^3} - \frac{3}{2}\cdot\frac{\pa^3}{\pa x_1^2\pa x_2} + \frac{\pa^3}{\pa x_1\pa x_2 \pa x_3} \right\} \ln(G_{N;\th}) \right|_{\setzeroes} = \th^{-2}\cdot c_3,\\
\bullet& \lim_{N\to\infty} \left.\frac{1}{N^2}\left\{ \frac{\pa^3}{\pa x_1^2\pa x_2} - \frac{\pa^3}{\pa x_1 \pa x_2 \pa x_3} \right\} \ln(G_{N;\th}) \right|_{\setzeroes} = 0,\\
\bullet& \lim_{N\to\infty} \left.\frac{1}{N^3}\frac{\pa^3 \ln(G_{N;\th})}{\pa x_1\pa x_2\pa x_3} \right|_{\setzeroes} = 0.
\end{align*}
These relations are much more involved than conditions (a) and (b) from Claim \ref{Claim_finite_th}, or than the conditions from Definition \ref{Definition_LLN_appr_ht}. What should be the correct ``if and only if'' relations for $k>3$? This is an interesting open question for future research.

\end{document}